\newtheoremstyle{note}
{\topsep/2}               
{\topsep/2}               
{}                      
{\parindent}            
{\itshape}              
{.}                     
{5pt plus 1pt minus 1pt}
{}
\theoremstyle{note}
\newtheorem{thm}{Theorem}
\newtheorem{lem}{Lemma}
\newtheorem{conjecture}{Conjecture}
\newtheorem{corollary}{Corollary}
\newtheorem{proposition}{Proposition}
\theoremstyle{definition}
\theoremstyle{remark}
\newtheorem{thm*}{Theorem}
\newcommand\csname thethm*default\endcsname{\thethm*}
\newcommand{\thmstarnum}[1]{\expandafter\gdef\csname thethm*\endcsname{#1*}}
\newtheorem{lem*}{Lemma}
\newcommand\csname thelem*default\endcsname{\thelem*}
\newcommand{\lemstarnum}[1]{\expandafter\gdef\csname thelem*\endcsname{#1*}}
\def\vec#1{\bm{#1}} 
\newcommand{\tr}{\operatorname{tr}}
\newcommand{\spa}{\operatorname{span}}
\newcommand{\imply}{\mathrel{\Rightarrow}}
\newcommand{\rmd}{\mathrm{d}}
\newcommand{\rme}{\mathrm{e}}
\newcommand{\rmi}{\mathrm{i}}
\newcommand{\rmP}{\mathrm{P}}
\newcommand{\rmT}{\mathrm{T}}
\newcommand{\iid}{\mathrm{iid}}
\newcommand{\sep}{\mathrm{sep}}
\newcommand{\CP}{\mathrm{CP}}
\newcommand{\caH}{\mathcal{H}}
\newcommand{\caQ}{\mathcal{Q}}
\newcommand{\caS}{\mathcal{S}}
\newcommand{\caT}{\mathcal{T}}
\newcommand{\caV}{\mathcal{V}}
\newcommand{\bbE}{\mathbb{E}}
\newcommand{\scrA}{\mathscr{A}}
\newcommand{\scrB}{\mathscr{B}}
\newcommand{\scrC}{\mathscr{C}}
\newcommand{\scrI}{\mathscr{I}}
\newcommand{\sic}{\mathrm{SIC}}
\newcommand{\mub}{\mathrm{MUB}}
\newcommand{\lsp}{\hspace{0.1em}}
\newcommand{\be}{\begin{equation}}
\newcommand{\ee}{\end{equation}}
\newcommand{\ba}{\begin{align}}
\newcommand{\ea}{\end{align}}
\def\<{\langle}  
\def\>{\rangle}  
\def\eqref#1{\textup{(\ref{#1})}}  
\newcommand{\eref}[1]{Eq.~\textup{(\ref{#1})}}
\newcommand{\Eref}[1]{Equation~\textup{(\ref{#1})}}
\newcommand{\esref}[2]{Eqs.~\textup{(\ref{#1})} and \textup{(\ref{#2})}}
\newcommand{\Esref}[2]{Equations~\textup{(\ref{#1})} and \textup{(\ref{#2})}}
\newcommand{\ecref}[2]{Eqs.~\textup{(\ref{#1})}-\textup{(\ref{#2})}}
\newcommand{\fref}[1]{Fig.~\ref{#1}}
\newcommand{\fsref}[1]{Figs.~\ref{#1}}
\newcommand{\tref}[1]{Table~\ref{#1}}
\newcommand{\sref}[1]{Sec.~\ref{#1}}
\newcommand{\Sref}[1]{Section~\ref{#1}}
\newcommand{\thref}[1]{Theorem~\ref{#1}}
\newcommand{\Thref}[1]{Theorem~\ref{#1}}
\newcommand{\thsref}[1]{Theorems~\ref{#1}}
\newcommand{\lref}[1]{Lemma~\ref{#1}}
\newcommand{\Lref}[1]{Lemma~\ref{#1}}
\newcommand{\lsref}[1]{Lemmas~\ref{#1}}
\newcommand{\pref}[1]{Proposition~\ref{#1}}
\newcommand{\Pref}[1]{Proposition~\ref{#1}}
\newcommand{\crref}[1]{Corollary~\ref{#1}}
\newcommand{\Crref}[1]{Corollary~\ref{#1}}
\newcommand{\crsref}[1]{Corollaries~\ref{#1}}
\newcommand{\Crsref}[1]{Corollaries~\ref{#1}}
\newcommand{\cref}[1]{Conjecture~\ref{#1}}
\newcommand{\Cref}[1]{Conjecture~\ref{#1}}
\newcommand{\aref}[1]{Appendix~\ref{#1}}
\newcommand{\rcite}[1]{Ref.~\cite{#1}}
\newcommand{\rscite}[1]{Refs.~\cite{#1}}
\begin{document}
\title{Quantum Measurements in the Light of  Quantum State Estimation}

\author{Huangjun Zhu}

\email{zhuhuangjun@fudan.edu.cn}

\affiliation{State Key Laboratory of Surface Physics and Department of Physics, Fudan University, Shanghai 200433, China}

\affiliation{Institute for Nanoelectronic Devices and Quantum Computing, Fudan University, Shanghai 200433, China}

\affiliation{Center for Field Theory and Particle Physics, Fudan University, Shanghai 200433, China}

\begin{abstract}
Starting from a simple estimation problem, here we propose a general approach for decoding quantum measurements from the perspective of information extraction. By virtue of the estimation fidelity only, we provide surprisingly simple characterizations of  rank-1 projective measurements, mutually unbiased measurements, and symmetric informationally complete measurements.  
Notably, our conclusions do not rely on
 any assumption on the rank, purity, or the number of measurement outcomes, and we do not need bases to start with. Our work demonstrates that all these  elementary quantum measurements are uniquely determined by their information-extraction capabilities, which are not even anticipated before. In addition, we offer a new perspective for understanding noncommutativity and incompatibility from tomographic performances, which also leads to a universal criterion for detecting quantum incompatibility.
Furthermore, we show that the estimation fidelity can be used to distinguish inequivalent mutually unbiased bases and symmetric informationally complete measurements.
 In the course of  study, we introduce the concept of (weighted complex projective) $1/2$-designs and show that all $1/2$-designs are tied to symmetric informationally complete measurements, and vice versa. 
\end{abstract}

\date{\today}
\maketitle


\section{Introduction} Quantum measurements are a basic tool for extracting information from quantum systems and a bridge for connecting the quantum world with the classical world \cite{Neum55,NielC10book,BuscLPY16}. They also play  indispensable  roles in almost all quantum information processing tasks, such as quantum computation, quantum communication, quantum metrology, quantum sensing,  quantum simulation, and quantum characterization, verification, and validation (QCVV). Although there are numerous works on quantum measurements, the mysteries about quantum measurements have never been fully explored, even for the simplest quantum measurements.

 Prominent examples of quantum measurements include \emph{rank-1 projective measurements}, \emph{mutually unbiased measurements} (MUMs) based on  \emph{mutually unbiased bases} (MUB) \cite{Schw60,Ivan81,WootF89,DurtEBZ10,BengZ17book}, and \emph{symmetric informationally complete measurements} (SICs for short) \cite{Zaun11,ReneBSC04,ScotG10,FuchHS17,BengZ17book}. These quantum  measurements stand out because of their crucial roles in foundational studies and practical quantum information processing. Notably, rank-1 projective measurements are the canonical quantum measurements discussed in most elementary textbooks on quantum mechanics. MUMs are tied to the complementarity principle \cite{Bohr28}, uncertainty relations \cite{Heis27,Robe29,BuscLW14,WehnW10,ColeBTW17}, and are useful in quantum state estimation \cite{Ivan81,WootF89,DurtEBZ10,RoyS07,Zhu14IOC,AdamS10} and quantum cryptography \cite{BennB84,DurtEBZ10,ColeBTW17}. SICs play a crucial role in connecting the Born rule with the law of total probability and in the Bayesian interpretation of quantum theory \cite{FuchS13,ApplFSZ17}; SICs are also useful in constructing quasiprobability representations with minimal negativity \cite{Zhu16Q} and  in quantum state estimation \cite{Scot06,ZhuE11,Zhu12the,ZhuH18U}. In addition, the rich mathematical structures underlying MUB and SICs are a source of inspiration and have  attracted the attention of numerous researchers; see \rscite{DurtEBZ10,FuchHS17,BengZ17book,HoroRZ22} for reviews.

All the quantum measurements mentioned above have very simple  algebraic descriptions in the language of positive operator-valued measures (POVMs) \cite{NielC10book,BuscLPY16}. However, such algebraic descriptions lack clear operational meanings beyond the Born rule. 
Notably, the information theoretical significance of these measurements is far from being clear despite the efforts of many researchers. This awkward situation is in sharp contrast with the rapid development of quantum information science. Now, it is natural to ask if these measurements can be characterized by  simple tasks in quantum information processing.
What is so special about rank-1 projective measurements from the perspective of information extraction? How about other elementary quantum measurements, such as MUMs and SICs?

In this work, we propose a general approach for decoding quantum measurements from a simple and well-studied estimation problem: estimation of Haar random pure states \cite{MassP95,DerkBE98,LatoPT98,Haya98,BrusM99,GisiP99,Mass00,AcinLP00,Bana01,BagaBM02,HayaHH05}. Here Haar random pure states can also be replaced by certain discrete sets, which are amenable to experiments. 
By virtue of tomographic performances as quantified by the estimation fidelity, we provide surprisingly simple operational characterizations of various typical and important quantum measurements, including  rank-1 projective measurements, MUMs, and SICs. Remarkably, our characterizations do not need any assumption on the rank, purity, or the number of measurement outcomes, and we do not need bases to start with. Our work demonstrates that all these elementary quantum measurements are uniquely determined by their information-extraction capabilities and therefore can be defined in purely information theoretic terms, in sharp contrast with traditional algebraic definitions, which lack clear operational meanings.

In addition, we offer a new perspective for understanding noncommuting and incompatible measurements \cite{Busc86,HeinoMZ16,BuscLPY16,GuhnHKP21} from tomographic performances. Notably, we show that incompatibility is a resource rather than a limitation to enhance the estimation fidelity. Moreover, we  prove a tight upper bound for the two-copy estimation fidelity based on compatible measurements, which reveals an intriguing connection between quantum incompatibility and SICs and also provides a universal criterion for detecting quantum incompatibility.
The connection with entropic uncertainty relations \cite{GhirMR03,WehnW10,ColeBTW17} is also discussed briefly. 
Furthermore, our work leads to a simple operational approach for distinguishing inequivalent MUB and SICs, which cannot be distinguished by inspecting pairwise overlaps alone.  The approach we introduce is also very useful to studying other discrete symmetric structures tied to the quantum state space. Moreover, all these results are amenable to experimental demonstration with current technologies.

In the course of study, we derive a number of results on quantum measurements and (weighted complex projective) $t$-designs \cite{Hogg82,Zaun11,ReneBSC04,Scot06}, which are of  interest beyond the main focus of this work. Notably, we introduce the concept of $1/2$-designs and show that SICs are essentially the only  $1/2$-designs. This result may shed some light on the search for general "fractional designs", although this is not the  focus of this work. In addition,
 we introduce the concept of cross frame potential, which is surprisingly useful to studying typical quantum measurements and discrete symmetric structures tied to the quantum state space. 
 Furthermore, we establish a simple connection between the estimation fidelity and the $t$th frame potential with $t=1/2$ and thereby clarifying  the operational significance of this frame potential. Our work may have implications for a number of active research areas, including quantum measurements, quantum estimation theory, geometry of quantum states, $t$-designs, and foundational studies on quantum incompatibility and steering.

 The rest of this paper is organized as follows. In \sref{sec:QM} we first introduce basic concepts on quantum measurements and an order relation based on data processing;  then we derive several results on rank-1 projective measurements and MUMs. In \sref{sec:POVMtdesign} we discuss the connections between $t$-designs and quantum measurements and explore the applications of a special frame potential. In \sref{sec:decoding} we propose a general approach for decoding quantum measurements based on a simple estimation problem. In \sref{sec:TypicalQM} by virtue of the estimation fidelity we offer surprisingly simple characterizations of various typical quantum measurements. In \sref{sec:incompatibility} we explore the connections between the estimation fidelity and quantum incompatibility. In \sref{sec:IneqMUBSIC} we provide an operational approach for distinguishing inequivalent MUB and SICs. \Sref{sec:summary} summarizes this paper. To streamline the presentation of the main results,  technical proofs are relegated to the appendices.

\section{\label{sec:QM}Quantum measurements}   
\subsection{Basic concepts}
Let $\caH$ be a  $d$-dimensional Hilbert space  associated with the quantum system under consideration. Quantum states on $\caH$ are usually represented by density operators, which are positive (semidefinite) operators of trace~1. Quantum measurements on $\caH$ are basic tools for extracting information from the quantum system as encoded in the quantum state. In this work we are interested in the information-extraction capabilities of quantum measurements, but not  the post-measurement quantum states. In this context,  a quantum measurement on $\caH$  can be described  by a POVM, which is composed of a set (or collection) of positive operators on $\caH$, usually called  POVM elements, that sum up to the identity operator \cite{NielC10book,BuscLPY16}. Here we use the same notation for the identity operator as the number~1 to simplify the notation; in addition, numbers in operator equations are implicitly multiplied by the identity operator. 
 
Let $\rho$ be a quantum state on $\caH$ and  $\scrA=\{A_j\}_j$ a POVM  on $\caH$. If we perform the POVM $\scrA$ on  $\rho$, then the probability $p_j$ of obtaining outcome $j$ reads $p_j=\tr(\rho A_j)$ according to the Born rule. The POVM $\scrA$ is informationally complete (IC) if its POVM elements span the whole operator space on $\caH$ \cite{Prug77,Scot06,ZhuE11}. This condition guarantees that  any quantum state on $\caH$ can be reconstructed accurately  from  frequencies of measurement outcomes as long as the POVM can be performed sufficiently many times. For comparison, the  POVM  $\scrA$ is trivial if all POVM elements are  proportional to the identity operator, in which case no information can be extracted by performing the POVM.  The POVM $\scrA$ is rank 1 if each POVM element is proportional to a  rank-1 projector. The POVM $\scrA$ is unbiased if all POVM elements have the same trace, in which case the completely mixed state will yield a flat probability distribution when the POVM is performed. Suppose $\scrA$ is an unbiased rank-1 POVM; then $\scrA$ is equiangular  if all the pairwise overlaps $\tr(A_j A_k)$ for $j\neq k$ are equal.

\subsection{\label{sec:Order}An order relation and simple POVMs} 
The idea of data  (information) processing leads to a natural order relation on POVMs. 
  Let $\scrA=\{A_j\}_j$ and $\{B_k\}_k$ be two POVMs on $\caH$. The POVM $\scrA$ is a \emph{coarse graining} of $\scrB$, denoted by $\scrA\preceq \scrB$ or $\scrB\succeq \scrA$,  if $\scrA$ can be constructed  from $\scrB$ by  data processing \cite{MartM90,Zhu15IC,ZhuHC16}. More specifically, $\scrA\preceq \scrB$ if the POVM elements of $\scrA$  can be expressed as 
  \begin{align}\label{eq:CoarseGraining}
A_j=\sum_k\Lambda_{jk}B_k\quad \forall j,
  \end{align}
where  $\Lambda$ is  a stochastic matrix, which satisfies the normalization condition $\sum_{j}\Lambda_{jk}=1$. In this case the measurement statistics of $\scrA$ can be simulated by performing $\scrB$ and then applying suitable data processing. Alternatively, we also say $\scrB$ \emph{refines} $\scrA$ or $\scrB$ is a \emph{refinement} of $\scrA$. Intuitively, coarse graining can never lead to information gain, while refinement can never lead to information loss.

Two POVMs are \emph{equivalent} if they are coarse graining of each other (note the distinction from unitary equivalence); such POVMs are essentially the same from the perspective of information extraction.
A coarse graining or refinement of a POVM $\scrA$ is trivial  (nontrivial) if it is (not) equivalent to $\scrA$. To clarify when a coarse graining is nontrivial, we need to introduce a special function on POVMs. The  \emph{purity} of a POVM $\scrA=\{A_j\}_j$ \cite{Zhu14T} is defined as 
\begin{align}
\wp(\scrA)=\sum_j\frac{1}{d}\frac{\tr(A_j^2)}{\tr A_j}=\sum_j\frac{\tr A_j}{d}\frac{\tr(A_j^2)}{(\tr A_j)^2},
\end{align}
where $d$ is the dimension of the underlying Hilbert space, and the summation runs over nonzero POVM elements in $\scrA$. From this definition it is easy to verify that 
\begin{align}
\frac{1}{d}\leq \wp(\scrA)\leq 1;
\end{align}
the lower bound is saturated iff all POVM elements are proportional to the identity operator, so that the POVM is trivial, while the upper bound is saturated iff all nonzero POVM elements are rank 1, so that the POVM is rank 1.

\begin{lem}\label{lem:EquivalentPOVM}
Suppose $\scrA=\{A_j\}_j$ is a coarse graining of $\scrB=\{B_k\}_k$ as defined in \eref{eq:CoarseGraining}. Then $\wp(\scrA)\leq  \wp(\scrB)$, and the following three statements are equivalent:
\begin{enumerate}
\item $\scrA$ is equivalent to $\scrB$;
\item  $\wp(\scrA)= \wp(\scrB)$;

\item $\Lambda_{jk}\Lambda_{jl}=0$ whenever $B_k,B_l$ are  linearly independent.
\end{enumerate}
\end{lem}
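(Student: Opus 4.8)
The plan is to first prove the monotonicity $\wp(\scrA)\le\wp(\scrB)$ together with a sharp equality criterion, deduce the equivalence (ii)$\Leftrightarrow$(iii) from it, and then settle (i)$\Leftrightarrow$(iii) separately.

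\textbf{Monotonicity and equality case.} Fix $j$ with $A_j\neq0$ and abbreviate $\lambda_k=\Lambda_{jk}$. Applying the Cauchy--Schwarz inequality for the Hilbert--Schmidt inner product to $\tr(A_j^2)=\sum_{k,l}\lambda_k\lambda_l\tr(B_kB_l)$ gives $\tr(A_j^2)\le\bigl(\sum_k\lambda_k\sqrt{\tr(B_k^2)}\bigr)^2$; then splitting $\lambda_k\sqrt{\tr(B_k^2)}=\sqrt{\lambda_k\tr B_k}\cdot\sqrt{\lambda_k\tr(B_k^2)/\tr B_k}$ and applying Cauchy--Schwarz a second time yields $\tr(A_j^2)\le(\tr A_j)\sum_k\lambda_k\,\tr(B_k^2)/\tr B_k$, i.e. $\tfrac1d\tr(A_j^2)/\tr A_j\le\sum_k\Lambda_{jk}\tfrac1d\tr(B_k^2)/\tr B_k$. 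Summing over $j$ and using $\sum_j\Lambda_{jk}=1$ gives $\wp(\scrA)\le\wp(\scrB)$; here one uses that $A_j=0$ forces $\Lambda_{jk}=0$ for every $k$, so summing the per-$j$ bounds only over the nonzero $A_j$ still reproduces $\wp(\scrB)$ on the right. For the equality case, recall that Cauchy--Schwarz is tight for positive operators exactly when they are proportional; hence equality in the first step for all $j$ holds iff $B_k\propto B_l$ whenever $\Lambda_{jk}\Lambda_{jl}>0$, which is precisely (iii), and once all the $B_k$ weighting a given $A_j$ are proportional, $\tr(B_k^2)/(\tr B_k)^2$ is automatically constant over them, so equality in the second step comes for free. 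Thus $\wp(\scrA)=\wp(\scrB)$ iff (iii), which is (ii)$\Leftrightarrow$(iii).

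\textbf{(i)$\Leftrightarrow$(iii).} The direction (i)$\Rightarrow$(ii) is immediate: apply the monotonicity to $\scrA\preceq\scrB$ and to $\scrB\preceq\scrA$. The substantive step is (iii)$\Rightarrow$(i), where I must exhibit $\scrB$ as a coarse graining of $\scrA$. By (iii) the POVM elements $B_k$ with $\Lambda_{jk}>0$ are mutually proportional, so each nonzero $A_j$ is a positive multiple of a single direction $P_m$, where $m$ runs over the proportionality classes of the nonzero $B_k$. Let $\scrC=\{C_m\}_m$, where $C_m$ is the sum of the $B_k$ in the $m$th class. Merging POVM elements is a coarse graining, and splitting $C_m$ back into its proportional pieces, $B_k=(\tr B_k/\tr C_m)\,C_m$ with weights summing to one, is again a coarse graining, so $\scrB\sim\scrC$. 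A short computation — using $\sum_j\Lambda_{jk}=1$ to see that $\Lambda_{jk}=0$ for $k$ in the $m$th class whenever $A_j\not\propto P_m$ — shows $\sum_{j:\,A_j\propto P_m}A_j=C_m$, so $\scrC$ is also a coarse graining of $\scrA$. Since $\scrA\preceq\scrB\preceq\scrC\preceq\scrA$, the POVMs $\scrA$, $\scrB$, $\scrC$ are pairwise equivalent, which proves (i).

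The main obstacle I anticipate is the bookkeeping in (iii)$\Rightarrow$(i): one must verify that the proportionality classes of the $A_j$'s coincide with those of the $B_k$'s, that each class is actually attained by some nonzero $A_j$, and that the per-class traces match (this is where $\sum_j\Lambda_{jk}=1$ enters), so that $\scrC$ genuinely sits between $\scrA$ and $\scrB$ in the order. Handling POVM elements equal to zero — which are excluded from $\wp$ but still constrained by the stochasticity of the mixing matrices — requires a little extra care but is routine.
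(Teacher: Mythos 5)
Your proof is correct and follows essentially the same route as the paper's: the same double Cauchy--Schwarz bound giving $\wp(\scrA)\leq\wp(\scrB)$ (the paper delegates the second application to a cited auxiliary lemma, which you write out per index $j$ and then sum using $\sum_j\Lambda_{jk}=1$), the same saturation analysis yielding (ii)$\Leftrightarrow$(iii), and the same observation behind (iii)$\Rightarrow$(i), namely that when each $A_j$ mixes only mutually proportional $B_k$'s the POVM $\scrB$ can be recovered from $\scrA$ by data processing---you merely make the paper's one-line assertion explicit via the merged POVM $\scrC$. The only slip is cosmetic: in your (iii)$\Rightarrow$(i) computation the vanishing of $\Lambda_{jk}$ for $k$ in the $m$th class when $A_j\not\propto P_m$ follows from condition (iii) itself (not from $\sum_j\Lambda_{jk}=1$); the stochasticity is what then makes the coefficients of the $B_k$ within each class sum to one, so the bookkeeping goes through exactly as you intend.
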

\Lref{lem:EquivalentPOVM} is proved in \aref{asec:POVMorder}. It shows that a coarse graining is trivial iff it only mixes POVM elements that are proportional to each other in addition to  the zero POVM element. \Lref{lem:EquivalentPOVM} also shows that the purity is a strict order-monotonic function \cite{ZhuHC16}. Such functions are useful not only  to studying quantum incompatibility, but also to studying quantum steering \cite{ZhuHC16,Zhu15IC,HeinJN22}.

A POVM  is \emph{simple} if no POVM element is proportional to another POVM element, that is, all POVM elements are pairwise linearly independent. By definition a simple POVM has no POVM element that is equal to the zero operator. The following result was originally proved in \rcite{MartM90} (see also \rcite{Kura15}); it is also a simple corollary of \lref{lem:EquivalentPOVM} as shown in \aref{asec:POVMorder}.

\begin{lem}\label{lem:SimplePOVM}
	Two simple POVMs are equivalent iff they are identical up to relabeling. 	
	Every POVM is equivalent to a unique simple POVM up to relabeling. 
\end{lem}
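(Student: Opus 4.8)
The plan is to derive both assertions from Lemma~\ref{lem:EquivalentPOVM}, using the order relation $\preceq$ plus a little bookkeeping. I will use freely that $\preceq$ is transitive (the product of two stochastic matrices is stochastic and the composed coarse graining works out), so that "equivalence" is a genuine equivalence relation.

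For the first assertion, the "if" direction is immediate: a relabeling is implemented by a permutation matrix, which is stochastic and invertible, so two POVMs related by relabeling are coarse grainings of each other. For the "only if" direction, let $\scrA=\{A_j\}_j$ and $\scrB=\{B_k\}_k$ be simple and equivalent, and write $A_j=\sum_k\Lambda_{jk}B_k$ with $\Lambda$ stochastic. Simplicity of $\scrB$ says every pair $B_k,B_l$ with $k\neq l$ is linearly independent, so statement~3 of Lemma~\ref{lem:EquivalentPOVM} forces each row of $\Lambda$ to have at most one nonzero entry; it must have exactly one, for otherwise $A_j=0$ would contradict simplicity of $\scrA$ (a simple POVM has no zero element). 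Hence $A_j=\Lambda_{j,k(j)}B_{k(j)}$ for a well-defined map $j\mapsto k(j)$ with positive weights. Simplicity of $\scrA$ (no two $A_j$ proportional) makes this map injective, and the column-sum condition $\sum_j\Lambda_{jk}=1$ makes it surjective, hence a bijection; feeding the bijectivity back into the column-sum condition collapses it to $\Lambda_{j,k(j)}=1$, so $A_j=B_{k(j)}$ and $\scrA$, $\scrB$ coincide up to the relabeling $k(\cdot)$.

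For existence in the second assertion, given $\scrB=\{B_k\}_k$ I would discard the zero POVM elements, partition the rest into classes of mutually proportional operators, and let $\scrA$ be the collection of the sums of the elements within each class. Then $\scrA$ is again a POVM (the class sums still add up to the identity), it is simple (distinct classes have linearly independent "directions"), and it is a coarse graining of $\scrB$ via the matrix $\Lambda$ that sends each $B_k$ to its class, with the zero elements assigned to any class arbitrarily. This $\Lambda$ only ever mixes operators that are pairwise linearly dependent (proportional nonzero ones, or a zero one with anything), so statement~3 of Lemma~\ref{lem:EquivalentPOVM} applies and $\scrA$ is equivalent to $\scrB$. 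For uniqueness, if $\scrA$ and $\scrA'$ are both simple and equivalent to $\scrB$, then by transitivity they are equivalent to each other, and the first assertion shows they are identical up to relabeling.

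I do not anticipate a real obstacle; the only points needing care are (i) tracking the zero POVM elements of $\scrB$, which are linearly dependent with everything and hence harmless for statement~3 but must be excluded when forming the simple POVM, and (ii) upgrading "the surviving weights $\Lambda_{j,k(j)}$ are positive" to "they equal $1$" once $k(\cdot)$ is known to be a bijection — both handled purely by the stochasticity of $\Lambda$.
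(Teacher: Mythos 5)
Your proof is correct and follows essentially the same route as the paper: Lemma~\ref{lem:EquivalentPOVM} (statement~3) forces each row of $\Lambda$ to have a single nonzero entry, simplicity of $\scrA$ plus the column-sum condition then makes $\Lambda$ a permutation matrix, and the simple representative is built by discarding zero elements and merging proportional ones, with uniqueness from the first assertion. Your extra bookkeeping (explicit bijection $j\mapsto k(j)$, verification of statement~3 for the merging map, handling of zero elements) only spells out steps the paper leaves implicit.
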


Restriction to simple POVMs is quite helpful to avoiding unnecessary complications, but usually does not cause any loss of generality. For example, all results on simple POVMs derived in this work can easily be extended to  general POVMs with minor modifications. Nevertheless, nonsimple POVMs are occasionally useful in technical analysis, so we do not assume that all POVMs are simple.  In the rest of this paper instead we take the weaker assumption that no POVM element is equal to the zero operator unless  stated otherwise.

A POVM is \emph{maximal} if every refinement is equivalent to itself. Suppose $\scrA$ and $\scrB$ are equivalent POVMs; then $\scrA$ is maximal iff $\scrB$ is maximal. The following proposition is a variant of a result proved in \rcite{MartM90}, which characterizes the set of rank-1 POVMs via the order relation based on data processing. It is also a simple corollary of \lref{lem:EquivalentPOVM}.
\begin{proposition}\label{pro:MaximalPOVM}
A POVM is maximal iff it is rank-1.
\end{proposition}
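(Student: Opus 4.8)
The plan is to derive both implications from \lref{lem:EquivalentPOVM}, using the fact (established just before that lemma) that $\wp(\scrA)=1$ if and only if $\scrA$ is rank-1.

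For the ``if'' direction, suppose $\scrA$ is rank-1, so $\wp(\scrA)=1$. Let $\scrB$ be any refinement of $\scrA$, that is, $\scrA\preceq\scrB$. Then \lref{lem:EquivalentPOVM} gives $1=\wp(\scrA)\leq\wp(\scrB)\leq 1$, hence $\wp(\scrA)=\wp(\scrB)$, and the equivalence of statements (1) and (2) in \lref{lem:EquivalentPOVM} shows that $\scrB$ is equivalent to $\scrA$. Thus every refinement of $\scrA$ is trivial, which means $\scrA$ is maximal.

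For the ``only if'' direction I would prove the contrapositive: if $\scrA=\{A_j\}_j$ is not rank-1, then it admits a nontrivial refinement. Since $\scrA$ is not rank-1, some nonzero element, say $A_1$, has rank at least $2$. Let $P$ be a spectral projector onto a one-dimensional eigenspace of $A_1$ with positive eigenvalue $\lambda$, and split $A_1=\lambda P+(A_1-\lambda P)$; both summands are nonzero positive operators, and $\lambda P$ is not proportional to $A_1$ (their ranks differ), so $\lambda P$ and $A_1-\lambda P$ are linearly independent. Replacing the element $A_1$ of $\scrA$ by this ordered pair while keeping all other elements unchanged yields a POVM $\scrB$ with $\scrA\preceq\scrB$; the corresponding stochastic matrix $\Lambda$ has a single row (labeled by $j=1$) whose two entries sitting over $\lambda P$ and $A_1-\lambda P$ both equal $1$. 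Because these two operators are linearly independent, statement (3) of \lref{lem:EquivalentPOVM} fails, so $\scrB$ is not equivalent to $\scrA$ (equivalently, $\wp(\scrB)>\wp(\scrA)$ by the strictness part of the lemma). Hence $\scrB$ is a nontrivial refinement and $\scrA$ is not maximal.

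I do not anticipate a genuine obstacle here: everything reduces to \lref{lem:EquivalentPOVM}. The only point requiring a little care is verifying that the exhibited splitting of $A_1$ produces linearly independent POVM elements together with a legitimate (column-stochastic) coarse-graining matrix, which is a short computation; one may, if convenient, pass to the unique simple POVM equivalent to $\scrA$ beforehand, since maximality is preserved under equivalence, but this is not necessary.
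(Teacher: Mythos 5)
Your proof is correct and follows essentially the route the paper intends: the paper states \pref{pro:MaximalPOVM} as a simple corollary of \lref{lem:EquivalentPOVM}, and your argument is exactly that derivation, using the purity characterization for the ``if'' direction and an explicit spectral-splitting refinement (ruled out as trivial by statement (3) of the lemma) for the ``only if'' direction. The details you supply, including the check that the splitting yields linearly independent elements and a legitimate stochastic matrix, are sound.
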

 \Lref{lem:EquivalentPOVM} and \pref{pro:MaximalPOVM} show that 	every refinement of a rank-1 POVM is equivalent to the POVM; in other words, any rank-1 POVM 	has no nontrivial refinement. In addition, a rank-1 POVM cannot be equivalent to any POVM that is not rank 1. These observations lead to  the following proposition.
\begin{proposition}\label{pro:CoarseGraining}
Suppose $\scrA$ is a coarse graining of a rank-1 POVM $\scrB$. Then  $\scrA$ is equivalent to (is a trivial coarse graining of) $\scrB$ iff $\scrA$ is rank 1.
\end{proposition}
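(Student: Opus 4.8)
The plan is to leverage Proposition~\ref{pro:MaximalPOVM} together with the order-theoretic content of Lemma~\ref{lem:EquivalentPOVM}. Since $\scrA$ is a coarse graining of the rank-1 POVM $\scrB$, one direction is almost immediate: if $\scrA$ is equivalent to $\scrB$, then because equivalence preserves the property of being rank 1 — a rank-1 POVM cannot be equivalent to a POVM that is not rank 1, as already noted after Proposition~\ref{pro:MaximalPOVM} — it follows that $\scrA$ is rank 1. (Concretely, equivalence forces $\wp(\scrA)=\wp(\scrB)=1$, and $\wp=1$ is saturated exactly by rank-1 POVMs.)

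For the converse, suppose $\scrA$ is rank 1. I would write the coarse-graining relation as $A_j=\sum_k\Lambda_{jk}B_k$ with $\Lambda$ a stochastic matrix, where each $B_k$ is a positive multiple of a rank-1 projector, say $B_k=b_k\outer{\psi_k}{\psi_k}$ with $b_k>0$ and $\inner{\psi_k}{\psi_k}=1$. The key observation is that a nonnegative combination of rank-1 positive operators is itself rank 1 only if all the operators that actually appear (those with $\Lambda_{jk}>0$) are proportional to one another; otherwise the sum has rank at least 2. So for each fixed $j$ with $A_j\neq 0$, the set $\{k:\Lambda_{jk}>0\}$ consists of indices $k$ with $\ket{\psi_k}$ all equal up to phase, i.e.\ the $B_k$ appearing in the sum for $A_j$ are pairwise proportional. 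This is precisely statement~(3) of Lemma~\ref{lem:EquivalentPOVM}: $\Lambda_{jk}\Lambda_{jl}=0$ whenever $B_k$ and $B_l$ are linearly independent. Invoking the equivalence of statements (1)--(3) in that lemma then gives that $\scrA$ is equivalent to $\scrB$, which is what "trivial coarse graining" means.

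The one point needing a little care is the rank argument: I should justify that if $\Lambda_{jk},\Lambda_{jl}>0$ and $\outer{\psi_k}{\psi_k}$, $\outer{\psi_l}{\psi_l}$ are not proportional, then $A_j$ has rank $\geq 2$ and hence is not rank 1. This is elementary — the sum of two noncollinear rank-1 positive operators is positive with two-dimensional support — but one must also handle the case where more than two distinct directions appear; induction on the number of distinct $\ket{\psi_k}$ directions, or simply noting that the support of $A_j$ contains the span of all participating $\ket{\psi_k}$, settles it. I expect this rank step, rather than the bookkeeping with $\Lambda$, to be the only mild obstacle; everything else is a direct appeal to Lemma~\ref{lem:EquivalentPOVM} and Proposition~\ref{pro:MaximalPOVM}. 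An even shorter route, which I would mention as an alternative, is purely via purity: $\scrA$ rank 1 $\Rightarrow \wp(\scrA)=1=\wp(\scrB)$, and then statement (2)$\Rightarrow$(1) of Lemma~\ref{lem:EquivalentPOVM} immediately yields equivalence, bypassing the rank computation entirely.
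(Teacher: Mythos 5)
Your proposal is correct. The only difference from the paper is in how the converse is obtained: the paper deduces Proposition~\ref{pro:CoarseGraining} from \pref{pro:MaximalPOVM} — if $\scrA$ is rank 1 it is maximal, and since $\scrB\succeq\scrA$ this forces $\scrB$ to be equivalent to $\scrA$ — together with the observation that a rank-1 POVM cannot be equivalent to a non-rank-1 POVM (your forward direction, which matches the paper's). You instead verify the equivalence criterion of \lref{lem:EquivalentPOVM} directly: either by the support argument showing that $A_j=\sum_k\Lambda_{jk}B_k$ rank 1 forces condition~3 (the participating $B_k$ pairwise proportional, since the support of a sum of positive operators is the span of their supports), or, more economically, by noting $\wp(\scrA)=1=\wp(\scrB)$ and invoking $(2)\Rightarrow(1)$. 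Both routes are sound and both ultimately rest on \lref{lem:EquivalentPOVM}, of which \pref{pro:MaximalPOVM} is itself a corollary; your purity shortcut is arguably the shortest self-contained argument, handling both directions with the single strict-monotonicity statement of the lemma, while the paper's maximality route buys a cleaner conceptual statement (rank-1 POVMs admit no nontrivial refinement) that it reuses elsewhere, e.g.\ in \pref{pro:CompatibleRank1}.
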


\subsection{Quantum incompatibility}
Let $\scrA$ and $\scrB$  be two arbitrary POVMs on $\caH$. Then $\scrA$ and $\scrB$
commute if all POVM elements in $\scrA$ commute with all POVM elements in $\scrB$.  This definition also applies to two sets of positive operators.
By contrast,  $\scrA$ and $\scrB$ are \emph{compatible} or jointly measurable if they admit a common refinement  \cite{HeinoMZ16,Busc86,GuhnHKP21,QuinVB14,UolaBGP15,Zhu15IC}. In that case, the measurement statistics of both $\scrA$ and $\scrB$ can be simulated by performing the common refinement.
Otherwise, $\scrA$ and $\scrB$ are \emph{incompatible}.
Generalizations to three or more POVMs are immediate.  Note that commuting POVMs are automatically compatible, but not vice versa in general.
By definition the compatibility relation is closely tied to the order relation discussed in \sref{sec:Order}. This connection is very useful to detecting quantum incompatibility \cite{HeinoMZ16,Zhu15IC,ZhuHC16,HeinJN22} (cf. \sref{sec:incompatibility}).

\begin{proposition}\label{pro:CompatibleRank1}
Suppose $\scrA$ and $\scrB$ are two POVMs on $\caH$ with $\scrB$ being  rank 1. Then $\scrA$ and $\scrB$ are compatible iff $\scrA$ is a coarse graining of $\scrB$.
\end{proposition}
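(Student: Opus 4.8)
The plan is to reduce everything to the maximality characterization of rank-1 POVMs, \pref{pro:MaximalPOVM}, together with transitivity of the coarse-graining order. The "if" direction is essentially free: if $\scrA$ is a coarse graining of $\scrB$, then $\scrB$ is itself a common refinement of $\scrA$ and $\scrB$, since every POVM refines itself ($\scrB\preceq\scrB$ via the identity stochastic matrix). Hence $\scrA$ and $\scrB$ admit a common refinement and are compatible by definition.

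For the "only if" direction, suppose $\scrA$ and $\scrB$ are compatible and fix a common refinement $\scrC$, so that $\scrA\preceq\scrC$ and $\scrB\preceq\scrC$. Because $\scrB$ is rank 1, \pref{pro:MaximalPOVM} tells us $\scrB$ is maximal, that is, every refinement of $\scrB$ is equivalent to $\scrB$. Applying this to $\scrC$ gives that $\scrC$ is equivalent to $\scrB$, and equivalence (being coarse graining of each other) in particular yields $\scrC\preceq\scrB$. Composing $\scrA\preceq\scrC$ with $\scrC\preceq\scrB$ and using that the product of two stochastic matrices is stochastic, we conclude $\scrA\preceq\scrB$, i.e.\ $\scrA$ is a coarse graining of $\scrB$, which is exactly the claim.

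I would then flag the routine points that still need a line each: transitivity of $\preceq$ (immediate from composition of stochastic matrices); the implication "$\scrC$ equivalent to $\scrB$ $\Rightarrow$ $\scrC\preceq\scrB$" (immediate from the definition of equivalence); and the harmless fact that any zero POVM elements appearing in the common refinement $\scrC$ may be discarded without affecting the argument, consistent with the paper's standing convention. I do not expect a genuine obstacle here — the content is really a two-step corollary of \pref{pro:MaximalPOVM}. The only place where one must be slightly careful is making sure the definitions of "common refinement" and "compatible" are used in the form $\scrA\preceq\scrC\succeq\scrB$, rather than some weaker simulation notion; once that is pinned down, the proof is short.

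If one preferred to avoid citing maximality as a black box, an equivalent route is to write $\scrB$'s elements as $B_k=\sum_m\Lambda_{km}C_m$ for a stochastic $\Lambda$ and feed this into \lref{lem:EquivalentPOVM}: since (after passing to the associated simple POVM via \lref{lem:SimplePOVM}) the nonzero elements of the rank-1 POVM $\scrB$ are pairwise linearly independent and $\wp(\scrB)=1$ is maximal, one again forces $\scrC$ to be equivalent to $\scrB$. This merely reproves \pref{pro:MaximalPOVM} en route, so the argument in the previous paragraph is the cleaner presentation.
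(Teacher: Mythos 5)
Your proof is correct and follows exactly the route the paper intends: the paper gives no separate argument but states that \pref{pro:CompatibleRank1} is a simple corollary of \pref{pro:MaximalPOVM}, and your two-step deduction (any common refinement of $\scrB$ is equivalent to $\scrB$ by maximality, then transitivity of $\preceq$ gives $\scrA\preceq\scrB$) is precisely that corollary spelled out. Nothing further is needed.
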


\begin{proposition}\label{pro:CompatibleRank11}
Two rank-1 POVMs are compatible iff they are equivalent. Two simple rank-1 POVMs are compatible iff they are identical up to relabeling. 
\end{proposition}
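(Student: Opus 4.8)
The plan is to obtain this proposition as an essentially immediate consequence of \pref{pro:CompatibleRank1}, with \lref{lem:SimplePOVM} invoked only at the very end to convert equivalence into "identical up to relabeling" for simple POVMs. So the proof will be a short chaining of results already in hand rather than anything new.

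For the first claim I would treat the two implications separately. The "if" direction is the easy one: if the rank-1 POVMs $\scrA$ and $\scrB$ are equivalent, then by definition of equivalence $\scrA$ is a coarse graining of $\scrB$, and since $\scrB$ is rank~1, \pref{pro:CompatibleRank1} immediately yields that $\scrA$ and $\scrB$ are compatible (here only $\scrB$ needs to be rank~1). For the "only if" direction, suppose $\scrA$ and $\scrB$ are compatible and both are rank~1. Applying \pref{pro:CompatibleRank1} to the pair $(\scrA,\scrB)$, using that $\scrB$ is rank~1, gives $\scrA\preceq\scrB$; applying it again to the pair $(\scrB,\scrA)$, using that compatibility is symmetric and that $\scrA$ is rank~1, gives $\scrB\preceq\scrA$. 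Hence $\scrA$ and $\scrB$ are coarse grainings of each other, i.e., equivalent. Alternatively, from $\scrA\preceq\scrB$ with $\scrB$ rank~1, \pref{pro:CoarseGraining} together with $\scrA$ being rank~1 already forces this coarse graining to be trivial, which is the same conclusion; I would probably state the symmetric route as the main argument since it uses only \pref{pro:CompatibleRank1}.

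The second claim then follows in one line: by what was just shown, two simple rank-1 POVMs are compatible iff they are equivalent, and by \lref{lem:SimplePOVM} two simple POVMs are equivalent iff they are identical up to relabeling; composing these two equivalences gives the assertion.

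I do not expect a genuine obstacle, as the statement is a corollary of \pref{pro:CompatibleRank1}, \pref{pro:CoarseGraining}, and \lref{lem:SimplePOVM}. The only point needing minor care is bookkeeping in the "only if" direction: one must apply \pref{pro:CompatibleRank1} twice, once with each POVM playing the role of the rank-1 member, and explicitly use the symmetry of both the compatibility relation and the equivalence relation so that the two one-sided coarse-graining statements combine into genuine equivalence.
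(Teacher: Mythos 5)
Your proof is correct and follows exactly the route the paper intends: Proposition~\ref{pro:CompatibleRank11} is stated there as a direct corollary of Proposition~\ref{pro:CompatibleRank1} (applied symmetrically to get mutual coarse graining, hence equivalence) together with Lemma~\ref{lem:SimplePOVM} for the simple case. No gaps; the bookkeeping you flag (applying Proposition~\ref{pro:CompatibleRank1} in both directions and using the symmetry of compatibility) is precisely what is needed.
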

\Pref{pro:CompatibleRank1} is a simple corollary of \pref{pro:MaximalPOVM}. \Pref{pro:CompatibleRank11} is a simple corollary of \pref{pro:CompatibleRank1} and \lref{lem:SimplePOVM}.

\subsection{Projective measurements and mutually unbiased measurements}
A POVM is \emph{reducible} if its POVM elements can be divided into two  groups such that each group contains at least one nonzero POVM element and
all POVM elements in one group are orthogonal to all POVM elements in the other group (cf. \rcite{Zhu21Z}). In this case, the POVM is a direct sum of two POVMs.
Notably, any POVM containing a projector that is not equal to the identity or the zero operator is reducible; note that such a projector is necessarily orthogonal to all other POVM elements.  A POVM  is \emph{irreducible} if it is not reducible; such a POVM cannot be expressed as a direct sum of two POVMs.

A projective measurement  (also known as a von Neumann measurement) is a special POVM in which all the  POVM elements are  mutually orthogonal projectors and is thus reducible except for the trivial projective measurement. It is usually  characterized by a Hermitian operator  via spectral decomposition. Rank-1 projective measurements are special projective measurements in which all POVM elements are mutually orthogonal rank-1 projectors. They are associated with nondegenerate
Hermitian operators and are the canonical example of quantum measurements as discussed in most textbooks \cite{Neum55}. In addition, they are in one-to-one correspondence with orthonormal bases if we identify bases that differ only by overall phase factors. In view of the crucial roles played by rank-1 projective measurements, here we summarize
their main characteristics  that are useful in the current study. The detailed proofs are relegated to \aref{asec:ProjectiveMU}. 
 \begin{lem}\label{lem:rank1Projective}
Any simple rank-1 POVM $\scrA$ on $\caH$ has at least $d$ POVM elements and satisfies the inequality $\dim(\spa(\scrA))\geq d$. 
	Each  bound is saturated iff $\scrA$ is a rank-1 projective measurement.
\end{lem}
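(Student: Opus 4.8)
The plan is to establish the two inequalities and then pin down the equality cases. Write $\scrA=\{A_j\}_j$ with $A_j=a_j\outer{\psi_j}{\psi_j}$ for unit vectors $\ket{\psi_j}$ and weights $a_j>0$ (possible since $\scrA$ is rank-1 with no zero element), and recall the normalization $\sum_j a_j\outer{\psi_j}{\psi_j}=1$, which upon taking the trace gives $\sum_j a_j=d$. The inequality $\dim(\spa(\scrA))\geq d$ is the crux: the span of $\scrA$ must contain the identity (it is the sum of the POVM elements), but more is needed. The cleanest route is to show that IC-ness is not required here — only that a rank-1 POVM resolving the identity must have rank-1 projectors spanning at least a $d$-dimensional space. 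I would argue as follows: if $V:=\spa\{\ket{\psi_j}\}\subseteq\caH$ had dimension $d'<d$, then every $A_j$ would be supported on $V$, hence $\sum_j A_j$ would be supported on $V$ and could not equal the identity on $\caH$; therefore the $\ket{\psi_j}$ span $\caH$, giving at least $d$ of them, which is the first claim ($|\scrA|\geq d$). For the second claim, $\dim(\spa(\scrA))\geq d$, note that $\spa(\scrA)=\spa\{\outer{\psi_j}{\psi_j}\}_j$ contains, for each $\ket{\psi_j}$, the projector $\outer{\psi_j}{\psi_j}$; since the $\ket{\psi_j}$ span $\caH$, a standard linear-algebra fact (the rank-one projectors onto a spanning set of vectors span a space of dimension at least $d$ — indeed one can extract a basis $\ket{\psi_{j_1}},\dots,\ket{\psi_{j_d}}$ and check their projectors are linearly independent using, e.g., that a nontrivial linear dependence $\sum_i c_i\outer{\psi_{j_i}}{\psi_{j_i}}=0$ forces $c_i=0$ by pairing with vectors dual to the basis) yields $\dim(\spa(\scrA))\geq d$.

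For the equality analysis, suppose $|\scrA|=d$. Then the $d$ unit vectors $\ket{\psi_j}$ span $\caH$, hence form a basis. From $\sum_j a_j\outer{\psi_j}{\psi_j}=1$ I would show this basis must be orthonormal and all $a_j=1$: act on $\ket{\psi_k}$ to get $a_k\ket{\psi_k}+\sum_{j\neq k}a_j\inner{\psi_j}{\psi_k}\ket{\psi_j}=\ket{\psi_k}$, and since the $\ket{\psi_j}$ are linearly independent this forces $a_k=1$ and $\inner{\psi_j}{\psi_k}=0$ for $j\neq k$; so $\scrA$ is the rank-1 projective measurement associated with this orthonormal basis. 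Conversely a rank-1 projective measurement on a $d$-dimensional space has exactly $d$ elements, so the bound $|\scrA|\geq d$ is saturated precisely in that case. Similarly, suppose $\dim(\spa(\scrA))=d$. Since the projectors $\{\outer{\psi_j}{\psi_j}\}_j$ all lie in a $d$-dimensional subspace of the operator space and already a $d$-element subcollection is independent, every remaining $\outer{\psi_k}{\psi_k}$ is a linear combination of those $d$; I would then argue — using positivity and the fact that a rank-1 projector lying in the span of $d$ mutually "generic" rank-1 projectors forces collinearity — that no genuinely new direction is allowed, forcing $|\scrA|=d$ and reducing to the previous case. Here it is cleaner to invoke \Lref{lem:SimplePOVM}/\Pref{pro:MaximalPOVM} only if needed, but the direct argument should suffice.

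The main obstacle I anticipate is the equality case for $\dim(\spa(\scrA))=d$: ruling out a rank-1 POVM with more than $d$ elements whose POVM elements nonetheless span only a $d$-dimensional operator subspace. The subtlety is that $d$ linearly independent rank-1 projectors generically span a space containing many other rank-1 projectors (e.g. in $d=2$, the Bloch sphere lives in a $3$-dimensional real operator space but a $2$-dimensional complex span is a line, which meets the set of pure states in at most the two endpoints — this is the toy model of the general phenomenon). I would handle this by showing that if $\outer{\psi_1}{\psi_1},\dots,\outer{\psi_d}{\psi_d}$ are independent and their span also contains $\outer{\phi}{\phi}$ for some unit $\ket\phi$ not proportional to any $\ket{\psi_i}$, then writing $\outer{\phi}{\phi}=\sum_i c_i\outer{\psi_i}{\psi_i}$ and comparing with the normalization constraint $\sum_j a_j\outer{\psi_j}{\psi_j}=1$ leads to a contradiction with $\sum_j a_j=d$ together with positivity of all coefficients — essentially, the extra element would over-count the trace. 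I expect this to go through with a short but slightly delicate computation; the detailed version is deferred to \aref{asec:ProjectiveMU} as the paper indicates.
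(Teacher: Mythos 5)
Most of your proposal is sound, and parts of it take a legitimately different route from the paper. Your spanning argument for $|\scrA|\geq d$, the dual-basis independence argument for $\dim(\spa(\scrA))\geq d$, and your treatment of the case $|\scrA|=d$ (acting with $\sum_j a_j\outer{\psi_j}{\psi_j}=1$ on each $\ket{\psi_k}$ and comparing coefficients in the linearly independent family $\{\ket{\psi_j}\}$) are all correct. The paper instead proves the count bound and its equality case by the trace computation $d=\sum_{j,k}\tr(A_jA_k)\geq\sum_j(\tr A_j)^2\geq d^2/m$ and reads the saturation conditions off the two inequalities; the only practical difference is that the paper's version does not need simplicity for that half, while yours is more elementary. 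For the bound $\dim(\spa(\scrA))\geq d$ you use essentially the paper's argument.

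The genuine gap is your equality analysis for $\dim(\spa(\scrA))=d$. The fact you need is exactly the ``collinearity'' statement you assert but never establish, and it has nothing to do with genericity or positivity: if $\ket{\psi_{j_1}},\dots,\ket{\psi_{j_d}}$ are linearly independent, then the rank of $\sum_i c_i\outer{\psi_{j_i}}{\psi_{j_i}}$ equals the number of nonzero $c_i$ (apply the operator to the dual vectors $\ket{\tilde\psi_{j_k}}$ with $\inner{\psi_{j_i}}{\tilde\psi_{j_k}}=\delta_{ik}$: the range contains each $c_k\ket{\psi_{j_k}}$ and lies in the span of the $\ket{\psi_{j_i}}$ with $c_i\neq 0$). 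Consequently any rank-1 POVM element lying in $\spa\{\outer{\psi_{j_i}}{\psi_{j_i}}\}_i$ must be proportional to one of these projectors, which contradicts simplicity unless $m=d$, and then you are back in the first equality case; this is precisely how the paper closes the argument. Your fallback sketch --- expanding $\outer{\phi}{\phi}=\sum_i c_i\outer{\psi_i}{\psi_i}$ and deriving a contradiction from $\sum_j a_j=d$ ``together with positivity of all coefficients'' --- does not go through: hermiticity only forces the $c_i$ to be real, not nonnegative, and trace bookkeeping alone cannot yield a contradiction, since having many rank-1 elements with positive weights summing to trace $d$ is perfectly consistent (a SIC has $d^2$ such elements); what must be violated is the span condition, and that can only be exploited through the rank fact above (or an equivalent). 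Once that one-line fact is supplied, your proof is complete.
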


\begin{lem}\label{lem:commutePOVMrank1Proj}
	Suppose  $\scrA$ is a simple rank-1 POVM on $\caH$ and $\scrB$ is a set of distinct rank-1 projectors on $\caH$. Then $\scrA$ and $\scrB$ commute iff $\scrB\subseteq\scrA$ and the projectors in $\scrB$ are mutually orthogonal. 	
\end{lem}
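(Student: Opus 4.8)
The plan is to prove the two implications separately; the "if" direction essentially restates an observation already recorded above, while the "only if" direction carries the real content and hinges on combining the completeness relation $\sum_j A_j=\id$ with simplicity.

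For the "if" direction, suppose $\scrB\subseteq\scrA$. Then every $P\in\scrB$ is a rank-1 projector occurring as a POVM element of $\scrA$, say $P=A_k$, and I would invoke the elementary fact already noted in the text (a POVM containing such a projector is reducible) that a projector appearing in a POVM is orthogonal to all the remaining POVM elements. A one-line justification: $\sum_{j\ne k}A_j=\id-P\ge 0$ implies $0\le A_j\le\id-P$ and hence $PA_jP=0$, so $A_jP=PA_j=0$ for every $j\ne k$; together with $PP=PP$ this shows $P$ commutes with every element of $\scrA$. Since this holds for each $P\in\scrB$, the POVMs $\scrA$ and $\scrB$ commute. (With this hypothesis the mutual orthogonality of the projectors in $\scrB$ is in fact automatic, but assuming it does no harm.)

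For the "only if" direction I would write $A_j=a_j\Pi_j$ with $a_j>0$ and $\Pi_j$ a rank-1 projector, noting that simplicity of $\scrA$ makes the $\Pi_j$ pairwise distinct. Fix $P\in\scrB$. Since two rank-1 projectors commute precisely when their ranges coincide or are orthogonal, the commutation hypothesis gives, for each $j$, either $\Pi_j=P$ or $\Pi_jP=0$. Multiplying the completeness relation on both sides by $P$ yields $P=P\bigl(\sum_jA_j\bigr)P=\bigl(\sum_{j:\Pi_j=P}a_j\bigr)P$, and taking the trace gives $\sum_{j:\Pi_j=P}a_j=1$; in particular the set $\{j:\Pi_j=P\}$ is nonempty, and simplicity forces it to be a singleton $\{k\}$ with $a_k=1$, so that $A_k=P$ and hence $P\in\scrA$. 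As $P$ ranges over $\scrB$ this gives $\scrB\subseteq\scrA$. Finally, for distinct $P,P'\in\scrB$ the inclusion just established puts $P'$ in $\scrA$, so by hypothesis $P$ commutes with $P'$; two distinct commuting rank-1 projectors are orthogonal, so the projectors in $\scrB$ are mutually orthogonal.

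The step I expect to be the crux, albeit a mild one, is the passage from "$P$ commutes with every $A_j$" to "$P$ is one of the $A_j$". This is where both hypotheses are genuinely needed: the normalization $\sum_jA_j=\id$ to conclude that $P$ is a combination of POVM elements with total weight $1$, and simplicity to collapse that combination to a single element of weight $1$. Dropping simplicity, one could only deduce that $P$ is a positive combination of POVM elements each proportional to $P$. The remaining ingredients are the standard commutation criterion for two rank-1 projectors and the already-cited orthogonality property of a projector sitting inside a POVM.
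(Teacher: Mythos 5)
Your proposal is correct and takes essentially the same route as the paper: both directions rest on the dichotomy that a rank-1 positive operator commuting with a rank-1 projector is either proportional or orthogonal to it, combined with the completeness relation $\sum_j A_j=1$ and simplicity to force each $P\in\scrB$ to coincide with a single POVM element of weight one. Your sandwich $P\bigl(\sum_j A_j\bigr)P=P$ plus the trace is just a compact packaging of the orthogonality relations the paper writes out before concluding $A_k=B_k$, and your remark that mutual orthogonality of $\scrB$ is automatic in the "if" direction is a correct minor refinement.
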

Note that every rank-1 positive operator on $\caH$ is proportional to a rank-1 projector. As an implication of  \lref{lem:commutePOVMrank1Proj} and this observation, if a simple rank-1 POVM $\scrA$ commutes with a nonempty set of pairwise linearly independent rank-1 positive operators (say some POVM elements in another simple rank-1 POVM), then these positive operators must be mutually orthogonal, and $\scrA$ contains a set of rank-1 projectors that are proportional to these rank-1 positive operators, respectively. In this case, the POVM $\scrA$  is a direct sum of a rank-1 projective measurement and another POVM and is thus  reducible, assuming that the underlying Hilbert space $\caH$  has dimension at least 2.

\begin{lem}\label{lem:commutePOVMrank1}
	Two simple rank-1 POVMs commute iff they are identical rank-1 projective measurements up to relabeling. 
\end{lem}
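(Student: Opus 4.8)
The plan is to deduce this as a corollary of \lref{lem:commutePOVMrank1Proj} — more precisely, of the observation stated in the paragraph immediately following it — together with \lref{lem:rank1Projective}. The ``if'' direction is immediate: in a rank-1 projective measurement the POVM elements are mutually orthogonal rank-1 projectors, and orthogonal projectors $P,Q$ satisfy $PQ=QP=0$, so two POVMs that are identical up to relabeling and of this form trivially commute.

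For the ``only if'' direction, let $\scrA=\{A_j\}_j$ and $\scrB=\{B_k\}_k$ be commuting simple rank-1 POVMs. First I would note that, because $\scrB$ is simple, rank 1, and has no zero element, the $B_k$ form a nonempty set of pairwise linearly independent rank-1 positive operators, each commuting with every element of the simple rank-1 POVM $\scrA$. The observation after \lref{lem:commutePOVMrank1Proj} then applies and yields two things at once: the operators $\{B_k\}_k$ are mutually orthogonal, and $\scrA$ contains a rank-1 projector proportional to each $B_k$. Symmetrically, the $A_j$ are mutually orthogonal and $\scrB$ contains a rank-1 projector proportional to each $A_j$.

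Next I would promote ``rank-1 positive operator'' to ``rank-1 projector'' for $\scrB$ itself. Mutual orthogonality of the $B_k$ forces their ranges to be mutually orthogonal lines, so $\scrB$ has at most $d$ elements; combined with the lower bound $|\scrB|\ge d$ of \lref{lem:rank1Projective}, we get $|\scrB|=d$, and since that bound is saturated, \lref{lem:rank1Projective} identifies $\scrB$ as a rank-1 projective measurement — in particular each $B_k$ is itself a rank-1 projector. (Equivalently one can substitute $B_k=\lambda_k\Pi_k$ into $\sum_k B_k=\id$ and read off $\lambda_k=1$ with the $\Pi_k$ spanning $\caH$.) The same argument makes $\scrA$ a rank-1 projective measurement with $|\scrA|=d$. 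Finally, each $B_k$ being a rank-1 projector and $\scrA$ containing a rank-1 projector proportional to it forces $B_k\in\scrA$; hence $\scrB\subseteq\scrA$ as sets of operators, and comparing the cardinalities $|\scrB|=|\scrA|=d$ gives $\scrA=\scrB$, i.e.\ the two POVMs are identical up to relabeling. (Once $\scrB$ is known to be a set of distinct rank-1 projectors one could alternatively invoke \lref{lem:commutePOVMrank1Proj} verbatim.) I do not expect a genuine obstacle here: the only step needing a little care is checking that the hypotheses of the cited observation are really met, and turning ``proportional to a projector'' into ``equal to a projector'', both of which are handled by the POVM normalization $\sum_k B_k=\id$.
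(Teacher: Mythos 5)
Your proof is correct, but it takes a different route from the one the paper actually writes out. The paper remarks that this lemma "is a simple corollary of \lref{lem:commutePOVMrank1Proj}" and you are essentially formalizing that remark: you feed the whole set $\{B_k\}_k$ into the observation following \lref{lem:commutePOVMrank1Proj} to get mutual orthogonality of the $B_k$ plus $\scrA$ containing projectors proportional to them, then use the counting bound of \lref{lem:rank1Projective} (at most $d$ mutually orthogonal lines versus at least $d$ POVM elements, with saturation characterizing rank-1 projective measurements) and the normalization $\sum_k B_k=1$ to upgrade everything to projectors and conclude $\scrB\subseteq\scrA$ and hence equality. The paper's own detailed proof is instead self-contained and more elementary: it starts from the dichotomy that two commuting rank-1 positive operators are either orthogonal or proportional, notes that no $A_j$ can be orthogonal to all $B_k$ (since $\sum_k B_k=1$) and uses simplicity to get a one-to-one correspondence $A_j\propto B_j$ with $A_j\perp B_k$ for $k\neq j$; mutual orthogonality of each POVM's elements then follows from the correspondence, and the normalization forces both POVMs to be identical rank-1 projective measurements, without invoking \lref{lem:rank1Projective} or any cardinality count. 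Both arguments are sound; yours buys brevity by leaning on the already-proved \lref{lem:commutePOVMrank1Proj} and the characterization in \lref{lem:rank1Projective}, while the paper's direct version makes the lemma logically independent of those counting results. Your hypotheses checks (pairwise linear independence of the $B_k$ from simplicity, and turning "proportional to a projector" into "equal to a projector" via $\sum_k B_k=1$ and trace one) are exactly the points that needed care, and you handle them correctly.
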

\Lref{lem:commutePOVMrank1}	is a simple corollary of 
\lref{lem:commutePOVMrank1Proj}; a direct proof is presented in  \aref{asec:ProjectiveMU}. As an implication of \lref{lem:commutePOVMrank1}, 
any simple rank-1 POVM that commutes with itself is a  rank-1 projective measurement.

Two orthonormal bases $\{|\psi_j\>\}_{j=1}^d$ and $\{|\varphi_k\>\}_{k=1}^d$ for $\caH$ 
are \emph{mutually unbiased} (MU) or complementary if all  the transition probabilities  $|\<\psi_j|\varphi_k\>|^2$ are equal to $1/d$. In this case,
the corresponding measurements are also  referred to as MU and are often regarded as maximally incompatible \cite{Schw60,Ivan81,WootF89,DurtEBZ10,BengZ17book,DesiSFB19}. 
Such measurements are quite useful in many tasks in quantum information processing, including quantum state estimation \cite{Ivan81,WootF89,DurtEBZ10,RoyS07,Zhu14IOC,AdamS10} and quantum cryptography \cite{BennB84,DurtEBZ10,ColeBTW17} in particular. 
It is known that the number of bases in any MUB cannot surpass $d+1$; when the upper bound is saturated, the MUB is called a complete set of MUB (CMUB), and the corresponding set of measurements is called a complete set of MUMs (CMUMs).

As a generalization, two positive operators $A$ and $B$ on $\caH$ are MU if $\tr(A B)=\tr(A)\tr(B)/d$.
Two POVMs $\{A_j\}_j$ and $\{B_k\}_k$ on $\caH$ are MU if each POVM element in $\scrA$ and each POVM element in $\scrB$ are MU, that is,  $\tr(A_j B_k)=\tr(A_j)\tr(B_k)/d$. The following theorem sets an upper bound for the number of rank-1 POVMs that are MU, which is reminiscent of the upper bound for MUB \cite{Ivan81,WootF89,DurtEBZ10}.
\begin{thm}\label{thm:MUPOVM}
Any set of MU simple rank-1  POVMs on $\caH$ contains at most $d+1$ POVMs. If the upper bound is saturated, then  all the POVMs in the set are rank-1 projective measurements, which form a CMUMs. 
\end{thm}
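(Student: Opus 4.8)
The plan is to exploit the "purity"-type quadratic invariant that is forced by the mutual-unbiasedness (MU) condition, in the same spirit as the standard Welch/frame-potential argument for MUB. Let $\scrA^{(1)},\dots,\scrA^{(m)}$ be pairwise MU simple rank-1 POVMs on $\caH$, with $\scrA^{(a)}=\{A^{(a)}_j\}_j$ and $A^{(a)}_j=w^{(a)}_j\outer{\psi^{(a)}_j}{\psi^{(a)}_j}$ for unit vectors $\ket{\psi^{(a)}_j}$ and weights $w^{(a)}_j=\tr A^{(a)}_j>0$ with $\sum_j w^{(a)}_j=d$ (since $\sum_j A^{(a)}_j=1$). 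The key object is the traceless part $\tilde A^{(a)}_j:=A^{(a)}_j-(w^{(a)}_j/d)\,1$, which lives in the $(d^2-1)$-dimensional real space of traceless Hermitian operators equipped with the Hilbert--Schmidt inner product. First I would record the three consequences of the hypotheses: (i) normalization gives $\sum_j \tilde A^{(a)}_j=0$ for each $a$; (ii) the rank-1 condition gives $\tr\!\big((A^{(a)}_j)^2\big)=(w^{(a)}_j)^2$, hence $\tr\!\big((\tilde A^{(a)}_j)^2\big)=(w^{(a)}_j)^2(1-1/d)$; (iii) the MU condition between distinct POVMs, $\tr(A^{(a)}_jA^{(b)}_k)=w^{(a)}_jw^{(b)}_k/d$, is exactly the statement $\tr(\tilde A^{(a)}_j\tilde A^{(b)}_k)=0$ for all $j,k$ whenever $a\neq b$. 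So the traceless parts coming from different POVMs are mutually orthogonal.

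Next I would bound the dimension. For each $a$ let $V_a:=\spa\{\tilde A^{(a)}_j\}_j$. By \lref{lem:rank1Projective} the raw span $\spa(\scrA^{(a)})$ has dimension at least $d$; since the $A^{(a)}_j$ sum to the identity, the identity lies in that span, and removing one dimension for it shows $\dim V_a\ge d-1$. By consequence (iii) the subspaces $V_1,\dots,V_m$ are pairwise orthogonal inside the $(d^2-1)$-dimensional traceless space, so
\begin{align}
m(d-1)\le\sum_{a=1}^m\dim V_a\le d^2-1=(d-1)(d+1),
\end{align}
which gives $m\le d+1$. This is the easy half.

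For the rigidity statement, suppose $m=d+1$. Then every inequality above is tight: $\dim V_a=d-1$ for every $a$, and the $V_a$ together span the whole traceless space, i.e. the operators $\{\tilde A^{(a)}_j\}$ form a "tight" configuration. From \lref{lem:rank1Projective}, $\dim\spa(\scrA^{(a)})=d$ forces $\scrA^{(a)}$ to be a rank-1 projective measurement; hence each $\scrA^{(a)}$ is an orthonormal basis (up to the identification of phases), so $w^{(a)}_j=1$ for all $j$ and all the $\ket{\psi^{(a)}_j}$ within a fixed $a$ are mutually orthogonal. Combined with the MU condition $|\inner{\psi^{(a)}_j}{\psi^{(b)}_k}|^2=1/d$ for $a\neq b$, the collection $\{\scrA^{(a)}\}_{a=1}^{d+1}$ is by definition a complete set of MUMs; so there is nothing more to prove once projectivity is established. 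The only subtlety is confirming $\dim\spa(\scrA^{(a)})=d$ rather than $>d$: if it were strictly larger for some $a$, then $\sum_b\dim V_b>(d+1)(d-1)=d^2-1$, contradicting orthogonality inside the traceless space. I would spell this last contradiction out carefully, since it is the linchpin of the rigidity argument, but it is a one-line counting step. The main obstacle is really bookkeeping — making sure the "$-1$ for the identity" is applied correctly and that \lref{lem:rank1Projective} is invoked with the simple-POVM hypothesis in force (which it is, by assumption) — rather than any deep difficulty; no frame-potential machinery beyond elementary linear algebra is needed.
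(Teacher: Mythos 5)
Your proposal is correct and follows essentially the same route as the paper: passing to the traceless parts $A_j-(\tr A_j/d)\lsp 1$, observing that mutual unbiasedness is exactly Hilbert--Schmidt orthogonality of these parts across different POVMs, counting dimensions (your count $m(d-1)\leq d^2-1$ is the same as the paper's $dg-g+1\leq d^2$), and invoking \lref{lem:rank1Projective} both for the bound $\dim(\spa(\scrA))\geq d$ and for its saturation condition to get projectivity in the case $m=d+1$. No substantive difference beyond bookkeeping, so nothing further is needed.
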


At this point, it is worth pointing out that the assumption of simplicity of POVMs in \lsref{lem:rank1Projective}-\ref{lem:commutePOVMrank1} and \thref{thm:MUPOVM} is convenient, but not essential, as pointed out in \sref{sec:Order}. Without this assumption,  these results still  hold after minor modifications, as presented below, given that every POVM is equivalent to a simple POVM according to \lref{lem:SimplePOVM}. Similar remarks apply to other results presented in this manuscript.

\lemstarnum{\ref{lem:rank1Projective}}

\begin{lem*}
	Any rank-1 POVM $\scrA$ on $\caH$ has at least $d$ POVM elements, and the lower bound is saturated iff $\scrA$ is a rank-1 projective measurement. Meanwhile, $\dim(\spa(\scrA))\geq d$, and the  bound is
	saturated iff $\scrA$ is equivalent to a rank-1 projective measurement.
\end{lem*}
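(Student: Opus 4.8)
The plan is to deduce the statement from the simple‑POVM version \lref{lem:rank1Projective}, using the fact that every POVM is equivalent to a simple one (\lref{lem:SimplePOVM}). Given a rank-1 POVM $\scrA=\{A_j\}_j$, group its elements into maximal families of pairwise proportional operators and replace each family by its sum, producing a POVM $\scrA'$. Since each $A_j$ is a positive multiple of a rank-1 projector, and proportional operators are multiples of the \emph{same} projector, every element of $\scrA'$ is again a positive multiple of a rank-1 projector; hence $\scrA'$ is a simple rank-1 POVM. The passage from $\scrA$ to $\scrA'$ is a coarse graining via a stochastic matrix with entries $0$ and $1$, and conversely $\scrA$ is recovered from $\scrA'$ by splitting each merged element among its constituents with the appropriate weights, so $\scrA$ and $\scrA'$ are equivalent (this also follows directly from \lref{lem:SimplePOVM}). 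Finally $\spa(\scrA)=\spa(\scrA')$, because merging and splitting proportional operators leaves the linear span unchanged.

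With $\scrA'$ in hand, both claims follow quickly from \lref{lem:rank1Projective}. For the outcome count, $\scrA$ has at least as many POVM elements as $\scrA'$ (none of which vanish, by the standing assumption), and \lref{lem:rank1Projective} gives $|\scrA'|\geq d$, whence $|\scrA|\geq d$. If $|\scrA|=d$, then no two elements of $\scrA$ can be proportional (otherwise $|\scrA'|\leq d-1$, contradicting $|\scrA'|\geq d$), so $\scrA$ is itself simple with exactly $d$ outcomes, and \lref{lem:rank1Projective} identifies it as a rank-1 projective measurement; the converse is immediate. For the span, $\dim(\spa(\scrA))=\dim(\spa(\scrA'))\geq d$ by \lref{lem:rank1Projective}, with equality iff $\scrA'$ is a rank-1 projective measurement.

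It then only remains to match the condition ``$\scrA'$ is a rank-1 projective measurement'' with ``$\scrA$ is equivalent to a rank-1 projective measurement''. One direction is just the equivalence of $\scrA$ and $\scrA'$ already established. For the other, if $\scrA$ is equivalent to some rank-1 projective measurement $\scrP$, then $\scrA'$ is equivalent to $\scrP$ by transitivity of equivalence; since $\scrA'$ and $\scrP$ are both simple (distinct orthogonal projectors being linearly independent), the uniqueness clause of \lref{lem:SimplePOVM} forces $\scrA'=\scrP$ up to relabeling, so $\scrA'$ is a rank-1 projective measurement. I do not expect any real obstacle here; the only thing requiring care is keeping the two equality conditions distinct — genuine coincidence with a rank-1 projective measurement for the outcome bound, but mere equivalence for the span bound — the discrepancy arising precisely because merging proportional POVM elements preserves the span while reducing the number of outcomes.
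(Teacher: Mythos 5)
Your proof is correct and takes essentially the approach the paper intends for this starred version: reduce to the simple case by passing to the equivalent simple POVM obtained by merging proportional elements (exactly the construction used in the proof of \lref{lem:SimplePOVM}), invoke \lref{lem:rank1Projective}, and keep track of the fact that merging preserves the span but lowers the outcome count, which is why the two saturation conditions differ. As a minor remark, the paper's appendix proof of the counting bound in \lref{lem:rank1Projective} explicitly notes that it never uses simplicity, so the first claim and its saturation condition also follow directly without the detour through $\scrA'$; your reduction is equally valid.
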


\lemstarnum{\ref{lem:commutePOVMrank1Proj}}

\begin{lem*}
Suppose  $\scrA$ is a  rank-1 POVM on $\caH$ and $\scrB$ is a set of rank-1 positive operators on $\caH$. Then $\scrA$ and $\scrB$ commute iff 
\begin{align}
\sum_{A\in \scrA\,|\, A\propto B}=\frac{B}{\tr B} \quad \forall B\in \scrB,
\end{align}	
and every two operators in $\scrB$ are either mutually orthogonal or proportional to each other.	
\end{lem*}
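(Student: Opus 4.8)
The plan is to reduce everything to the elementary fact that two nonzero rank-1 positive operators $A,B$ on $\caH$ commute if and only if either $A$ is a positive multiple of $B$ or $\tr(AB)=0$ (equivalently, their one-dimensional ranges coincide or are orthogonal); this follows at once by restricting to the two-dimensional subspace spanned by the two ranges. Throughout, write $\Pi_B:=B/\tr B$ for the rank-1 projector onto the range of $B\in\scrB$ and fix a unit vector $\ket{\phi_B}$ with $\outer{\phi_B}{\phi_B}=\Pi_B$ (this projector is phase-independent). The displayed condition is to be read as: the sum of all POVM elements of $\scrA$ that are proportional to $B$ equals $B/\tr B$.

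For the "only if" direction, assume $\scrA$ and $\scrB$ commute and fix $B\in\scrB$. By the elementary fact, each $A\in\scrA$ is either a positive multiple of $\Pi_B$ (i.e.\ $A\propto B$) or satisfies $\tr(AB)=0$, i.e.\ $\bra{\phi_B}A\ket{\phi_B}=0$. Sandwiching the resolution of identity $\sum_{A\in\scrA}A=\id$ between $\bra{\phi_B}$ and $\ket{\phi_B}$ kills every term with $A\not\propto B$ and gives $\sum_{A\in\scrA\,|\,A\propto B}\tr A=1$; since each such $A$ equals $(\tr A)\Pi_B$, we obtain $\sum_{A\in\scrA\,|\,A\propto B}A=\Pi_B=B/\tr B$, which is the first condition. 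For the second, take $B,B'\in\scrB$ that are not proportional. Then no POVM element can be proportional to both, so the index sets $\{A\in\scrA\,|\,A\propto B\}$ and $\{A\in\scrA\,|\,A\propto B'\}$ are disjoint, and the first condition yields $\Pi_B+\Pi_{B'}=\sum_{A\propto B\ \text{or}\ A\propto B'}A\leq\sum_{A\in\scrA}A=\id$. For two distinct rank-1 projectors this forces $\Pi_B\perp\Pi_{B'}$, i.e.\ $B$ and $B'$ are mutually orthogonal. (So the second condition is actually a consequence of the first, but it is natural to record it explicitly.)

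For the "if" direction, assume both conditions and fix $B\in\scrB$. Partitioning $\scrA$ by whether an element is proportional to $B$, the first condition gives $\sum_{A\in\scrA\,|\,A\not\propto B}A=\id-\Pi_B$, a projector orthogonal to $\Pi_B$. Hence every $A\in\scrA$ with $A\not\propto B$ obeys $0\leq A\leq\id-\Pi_B$, so $\bra{\phi_B}A\ket{\phi_B}=0$; being a positive rank-1 operator, $A$ then has range orthogonal to $\ket{\phi_B}$, so $\tr(AB)=0$ and $A$ commutes with $B$ by the elementary fact. The elements with $A\propto B$ commute with $B$ trivially. As $B\in\scrB$ was arbitrary, $\scrA$ and $\scrB$ commute.

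There is no serious obstacle; the argument is essentially the one behind \lref{lem:commutePOVMrank1Proj}, and the only points needing care are the bookkeeping peculiar to non-simple POVMs — several distinct POVM elements may be proportional to the same $B$, so "$A\propto B$" selects a subset of outcomes rather than a single one, which is exactly why "$\scrB\subseteq\scrA$" in the simple version becomes the summation condition here — and checking the two small linear-algebra facts invoked (the commutation criterion for rank-1 positive operators, and $\Pi+\Pi'\leq\id\Rightarrow\Pi\perp\Pi'$ for distinct rank-1 projectors), both of which reduce to a computation in a two-dimensional subspace. Note that simplicity of $\scrA$ is never used, so the statement holds verbatim for arbitrary rank-1 POVMs.
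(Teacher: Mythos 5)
Your proof is correct, but it reaches the starred statement by a different route than the paper. The paper never proves the non-simple version directly: it proves the simple version (\lref{lem:commutePOVMrank1Proj}) in \aref{asec:ProjectiveMU} — using commutation of rank-1 operators $\Rightarrow$ orthogonal or proportional, simplicity to get a \emph{unique} element of $\scrA$ proportional to each $B_k$, and then the normalization $\sum_j A_j=1$ to force $A_j=B_j$ — and then disposes of the general case by the remark that every POVM is equivalent to a simple one (\lref{lem:SimplePOVM}), i.e.\ by merging proportional POVM elements and invoking the simple version. You instead give a self-contained direct argument: the same elementary dichotomy (proportional or orthogonal) for rank-1 positive operators, but with the sandwich $\bra{\phi_B}\bigl(\sum_{A}A\bigr)\ket{\phi_B}=1$ replacing the uniqueness-from-simplicity step, so the summation condition $\sum_{A\propto B}A=B/\tr B$ comes out in one line without ever passing to an equivalent simple POVM; and your converse direction cleanly uses $0\leq A\leq 1-\Pi_B$ together with the fact that $\bra{\phi_B}A\ket{\phi_B}=0$ forces a positive rank-1 $A$ to be orthogonal to $B$. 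What the paper's route buys is reuse of the simple-case machinery and of the order-theoretic equivalence lemma already in place; what yours buys is a shorter, assumption-free proof of the general statement, plus the (correct) side observation that the mutual-orthogonality clause for non-proportional members of $\scrB$ is actually implied by the summation condition via $\Pi_B+\Pi_{B'}\leq 1$, so it need not be verified separately in the converse. The only bookkeeping points — disjointness of the index sets for non-proportional $B,B'$ (which needs the POVM elements to be nonzero, guaranteed here since they are rank 1) and the two-dimensional checks of the commutation criterion and of $\Pi+\Pi'\leq 1\Rightarrow\Pi\perp\Pi'$ — are all handled correctly.
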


\lemstarnum{\ref{lem:commutePOVMrank1}}

\begin{lem*}
	Two rank-1 POVMs commute iff they are equivalent to a same rank-1 projective measurement.
\end{lem*}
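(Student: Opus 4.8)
The plan is to reduce everything to the simple-POVM statement in \lref{lem:commutePOVMrank1}, exploiting \lref{lem:SimplePOVM}: every POVM is equivalent to a unique simple POVM up to relabeling. Write $\scrA=\{A_j\}_j$ and $\scrB=\{B_k\}_k$ for the two rank-1 POVMs, and let $\scrA^\circ$ and $\scrB^\circ$ be the associated simple POVMs, obtained by summing mutually proportional POVM elements within each of $\scrA$ and $\scrB$. Since a sum of mutually proportional rank-1 operators is again rank~1 (equivalently, by \lref{lem:EquivalentPOVM}, because the purity $\wp$ is preserved under equivalence and equals $1$ precisely for rank-1 POVMs), both $\scrA^\circ$ and $\scrB^\circ$ are again rank~1.

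For the "only if" direction, assume $\scrA$ and $\scrB$ commute. The one step that requires a genuine (if small) observation is that $\scrA^\circ$ and $\scrB^\circ$ still commute: each element of $\scrA^\circ$ is a nonnegative linear combination of elements of $\scrA$, and likewise each element of $\scrB^\circ$ of elements of $\scrB$, so commutation of every $A_j$ with every $B_k$ forces commutation of every element of $\scrA^\circ$ with every element of $\scrB^\circ$. Now $\scrA^\circ$ and $\scrB^\circ$ are commuting simple rank-1 POVMs, so \lref{lem:commutePOVMrank1} says they are the same rank-1 projective measurement $\scrP$ up to relabeling. Since $\scrA$ is equivalent to $\scrA^\circ$ and $\scrB$ is equivalent to $\scrB^\circ$, both $\scrA$ and $\scrB$ are equivalent to $\scrP$, as desired.

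For the "if" direction, assume $\scrA$ and $\scrB$ are both equivalent to a common rank-1 projective measurement $\scrP=\{P_i\}_{i=1}^{d}$; note $\scrP$ is itself simple because distinct orthogonal rank-1 projectors are pairwise linearly independent. I would then argue that every POVM equivalent to $\scrP$ has each of its elements proportional to one of the $P_i$: this follows from criterion~(3) of \lref{lem:EquivalentPOVM} applied to the coarse graining $\scrA\preceq\scrP$ --- the stochastic matrix has at most one nonzero entry in each row, since the $P_i$ are pairwise linearly independent --- or, more quickly, from uniqueness of the simple POVM in \lref{lem:SimplePOVM}. Hence every $A_j$ and every $B_k$ is a scalar multiple of some $P_i$, and since $P_iP_j=\delta_{ij}P_i$ the projectors mutually commute, so $\scrA$ and $\scrB$ commute. (Alternatively one may feed $\scrB$, viewed as a set of rank-1 positive operators, into the non-simple version of \lref{lem:commutePOVMrank1Proj}.)

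The only mild obstacle is the middle step of the "only if" direction --- checking that commutativity descends to the canonical simple representatives --- and, secondarily, keeping the relabeling bookkeeping straight; all of the substantive geometric content is already carried by \lref{lem:commutePOVMrank1} and \lref{lem:EquivalentPOVM}, so no new hard analysis is needed.
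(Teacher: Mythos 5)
Your proof is correct and follows essentially the route the paper intends: it reduces the general statement to the simple-POVM case via \lref{lem:SimplePOVM} (passing to the canonical simple representatives, checking commutativity and rank-1-ness survive the reduction) and then invokes \lref{lem:commutePOVMrank1}, with the converse handled by \lref{lem:EquivalentPOVM}. Nothing further is needed.
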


\thmstarnum{\ref{thm:MUPOVM}}

\begin{thm*}
	Any set of MU rank-1 POVMs on $\caH$ contains at most $d+1$ POVMs. If the upper bound is saturated, then  all the POVMs in the set are equivalent to rank-1 projective measurements, and the corresponding bases form a CMUB. 
\end{thm*}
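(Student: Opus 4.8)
The plan is to run a dimension count in the real vector space $\mathcal{B}_{\mathrm{sa}}(\caH)$ of Hermitian operators on $\caH$, which has dimension $d^2$, and inside it the subspace $\mathfrak{g}$ of traceless Hermitian operators, of dimension $d^2-1$. Let $\scrA^{(1)},\dots,\scrA^{(m)}$ be the given pairwise MU rank-1 POVMs and put $W_i:=\spa(\scrA^{(i)})$. Since the elements of $\scrA^{(i)}$ sum to $\id$ we have $\id\in W_i$, so $W_i=\bbR\id\oplus V_i$ where $V_i:=W_i\cap\mathfrak{g}$ is spanned by the operators $A-(\tr A/d)\id$ with $A\in\scrA^{(i)}$; in particular $\dim W_i=1+\dim V_i$. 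The first step is to observe that the MU relation makes $V_1,\dots,V_m$ mutually orthogonal under the Hilbert--Schmidt inner product: for $i\neq i'$, $A\in\scrA^{(i)}$, $B\in\scrA^{(i')}$ one computes $\tr\!\big[(A-\tfrac{\tr A}{d}\id)(B-\tfrac{\tr B}{d}\id)\big]=\tr(AB)-\tr(A)\tr(B)/d=0$, and bilinearity extends this to all of $V_i$ and $V_{i'}$.

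The second step is to feed in the starred \lref{lem:rank1Projective}, which gives $\dim W_i\ge d$ and hence $\dim V_i\ge d-1$ for every $i$. Since the $V_i$ are pairwise orthogonal subspaces of $\mathfrak{g}$, $\sum_{i=1}^m\dim V_i\le d^2-1=(d-1)(d+1)$, so $m(d-1)\le(d-1)(d+1)$ and thus $m\le d+1$ (we take $d\ge2$; the case $d=1$ is trivial). For the saturation case $m=d+1$, every inequality above must be tight, forcing $\dim V_i=d-1$ and hence $\dim W_i=d$ for all $i$; by the equality clause of the starred \lref{lem:rank1Projective} each $\scrA^{(i)}$ is then equivalent to a rank-1 projective measurement, determined by an orthonormal basis $\{\ket{\psi^{(i)}_k}\}_{k=1}^d$ with projectors $\Pi^{(i)}_k=\ket{\psi^{(i)}_k}\!\bra{\psi^{(i)}_k}$.

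It remains to show these $d+1$ bases are pairwise MU. Each element of $\scrA^{(i)}$ is a positive multiple of one of the $\Pi^{(i)}_k$, and the multiples belonging to a fixed $k$ sum to $\Pi^{(i)}_k$; summing the MU identities $\tr(AB)=\tr(A)\tr(B)/d$ over the multiples attached to a fixed pair $\Pi^{(i)}_k,\Pi^{(i')}_l$ with $i\neq i'$ then yields $\tr(\Pi^{(i)}_k\Pi^{(i')}_l)=1/d$, i.e. $|\inner{\psi^{(i)}_k}{\psi^{(i')}_l}|^2=1/d$, so the bases form a CMUB. An alternative organization is to first replace each $\scrA^{(i)}$ by its equivalent simple POVM via \lref{lem:SimplePOVM} — the MU property survives because it is additive in each argument — and then invoke the unstarred \lref{lem:rank1Projective} together with \thref{thm:MUPOVM}.

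The dimension count is routine; the points that need attention are (i) using the right bound from \lref{lem:rank1Projective}, namely the one on $\dim\spa(\scrA)$ rather than the one on the number of outcomes, since a non-simple rank-1 POVM can have arbitrarily many elements while still spanning only $d$ dimensions, and (ii) the bookkeeping in the saturation case, where one must verify that "equivalent to a rank-1 projective measurement" pins down a genuine orthonormal basis and that the MU relation between the original POVMs descends without loss to the MU relation between those bases. I expect (ii) to be the only real obstacle, and a fairly mild one.
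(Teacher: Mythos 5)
Your proof is correct and takes essentially the same route as the paper: the paper's proof of \thref{thm:MUPOVM} is precisely your dimension count on the Hilbert--Schmidt-orthogonal traceless spans $V_i$ combined with the $\dim(\spa(\scrA))\geq d$ clause of \lref{lem:rank1Projective}, and the starred version is obtained there by the same reduction you sketch at the end (\lref{lem:SimplePOVM} together with the equality clause of the starred \lref{lem:rank1Projective}). The only content you add is the explicit check that mutual unbiasedness descends to the underlying orthonormal bases, a step the paper leaves as a ``minor modification.''
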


\section{\label{sec:POVMtdesign}Quantum measurements and $t$-designs} 
\subsection{$t$-designs}  
Let $\caS=\{|\psi_j\>,w_j\}_{j=1}^m$ be a weighted set (or collection)  of states in $\caH$, where $w_j> 0$ and $\sum_j w_j=d$ (to avoid unnecessary complications, in this paper we assume that all weights are strictly positive unless stated otherwise;  \lref{lem:EAL} in \sref{sec:FPhalf} is an exception).  As in the discussion of orthonormal bases, here we identify weighted sets that differ  only by overall phase factors. Then a weighted set is also  regarded as a distribution on the set of all pure states, which forms the complex projective space $\CP^{d-1}$.
When the weights are not mentioned explicitly, we take the convention that all states have the same weight.

Given a positive integer $t$, the set $\caS$   is a (weighted complex projective) \emph{$t$-design} if $\sum_j w_j(|\psi_j\>\<\psi_j|)^{\otimes t}$ is proportional to the projector $P_t $ onto the symmetric subspace in $\caH^{\otimes t}$ \cite{Hogg82,Zaun11,ReneBSC04,Scot06} (see \rcite{CzarGGZ20} for mixed-state designs). In view of the normalization condition $\sum_j w_j=d$, the set $\caS$ is a $t$-design iff
\begin{align}
\sum_j w_j(|\psi_j\>\<\psi_j|)^{\otimes t}=\frac{dP_t   }{D_t   },
\end{align}
where $D_t=\tr(P_t)$ is the dimension of the $t$-partite symmetric subspace and its explicit expression reads
 \begin{align}\label{eq:Dt}
 D_t   =\binom{d+t-1}{t}. 
 \end{align}
By definition it is easy to verify that a $t$-design is also a $t'$-design for any positive integer $t'$ that is smaller than or equal to $t$, that is, $t'\leq t$. 

Given any pair of positive integers $d$ and $t$, one can construct a $t$-design in dimension $d$ with a finite number of elements \cite{SeymZ84}. 
To achieve this goal, nevertheless, the number of elements is at least \cite{Hogg82,Scot06}
\begin{align}\label{eq:designEleNumLB}
\binom{d+\lceil t/2\rceil-1}{\lceil t/2\rceil}\binom{d+\lfloor t/2\rfloor-1}{\lfloor t/2\rfloor};
\end{align}
the lower bound  is equal to $d,d^2, d^2(d+1)/2, d^2(d+1)^2/4$ for $t = 1,2,3,4$, respectively. An orthonormal basis (with uniform weights) is the simplest  1-design. Prominent examples of 2-designs include CMUB and  SICs. In particular, a SIC stands out as a minimal 2-design, which saturates the lower bound in \eref{eq:designEleNumLB} with $t=2$. 
Recall that a SIC in dimension $d$ is composed of $d^2$  quantum states $|\psi_1\>,|\psi_2\>, \ldots, |\psi_{d^2}\>$  with an equal pairwise fidelity of $1/(d+1)$ \cite{Zaun11,ReneBSC04,ScotG10,FuchHS17,BengZ17book}, that is, 
\begin{align}\label{eq:SIC}
|\<\psi_j|\psi_k\>|^2 =\frac{d\delta_{jk}+1}{d+1},\quad j,k=1,2,\ldots, d^2.
\end{align}
Here each state has weight $1/d$ according to
the current normalization convention, but we shall not mention this weight explicitly for simplicity when there is no danger of confusion. In addition, the set characterized by \eref{eq:SIC} and the corresponding POVM are both referred to as a SIC (cf. \sref{sec:POVMdesignCon}). When the dimension $d$ is a power of 2, any orbit of the Clifford group is a 3-design; in particular, 
the set of stabilizer states forms a 3-design \cite{KuenG13,Zhu17MC,Webb16}. In addition, special orbits of the Clifford group can form 4-designs  \cite{ZhuKGG16,GrosNW21}.

The $t$th \emph{frame potential} is an important tool for studying $t$-designs;  given the weighted set   $\caS=\{|\psi_j\>,w_j\}_{j=1}^m$, it is defined as \cite{Zaun11,ReneBSC04,Scot06}
\begin{align}\label{eq:FP}
\Phi_t(\caS):=\sum_{j,k}w_j w_k |\<\psi_j|\psi_k\>|^{2t}.
\end{align}  
It is well known that this frame potential  satisfies the following inequality
\begin{align}\label{eq:FPLB}
\Phi_t(\caS)\geq \frac{d^2}{D_t},
\end{align}
which  is saturated iff
$\caS$ is a  $t$-design. This inequality provides a simple criterion for determining whether a weighted set is a $t$-design.

To generalize the concept of frame potential mentioned above, let $\caS=\{|\psi_j\>,w_j\}_{j=1}^m$ and $\caT=\{|\varphi_k\>,w_k'\}_{k=1}^{n}$ be two weighted  sets   of states in $\caH$, which satisfy $w_j,w_k'> 0$ and $\sum_j w_j=\sum_k w_k'=d$. The $t$th \emph{cross frame potential} between $\caS$ and $\caT$ is defined as
\begin{align}\label{eq:crossFP}
\Phi_t(\caS,\caT)=\Phi_t(\caT,\caS):=\sum_{j,k}w_j w_k' |\<\psi_j|\varphi_k\>|^{2t}. 
\end{align}
Note that the definitions in \esref{eq:FP}{eq:crossFP} are applicable even if some weights $w_j,w_k'$ are equal to zero. In addition, $\Phi_t(\caS,\caS)=\Phi_t(\caS)$, so the frame potential $\Phi_t(\caS)$ can be regarded as the cross frame potential between $\caS$ and itself. 
The significance of the cross frame potential is highlighted in \sref{sec:FPhalf} and \ref{sec:FidCal}.

 Although the frame potential $\Phi_t$ was originally introduced when $t$ is a positive integer, the definition in \eref{eq:FP} applies to any positive real number $t$. Similar generalization applies to the cross frame potential defined in \eref{eq:crossFP}. However, it is not so easy to generalize the concept of $t$-designs in this way (the special case $t=1/2$ will be discussed in \sref{sec:FPhalf}). For example, 
the $t$th frame potentials of an orthonormal basis, SIC, and CMUB are respectively given by 
\begin{align}
\Phi_t(\mbox{basis})&=d,\label{eq:FPbasis}\\
\Phi_t(\mbox{SIC})&=1+\frac{d^2-1}{(d+1)^t},\label{eq:FPSIC}\\
\Phi_t(\mbox{CMUB})&=\frac{d+d^{3-t}}{d+1}.
\end{align}
In addition, the $t$th frame potential of Haar random pure states can be computed as follows,
\begin{align}
\Phi_t(\mbox{Haar})&=d^2\int_{\CP^{d-1}} |\<0|\psi\>|^{2t}\rmd \mu(\psi)\nonumber\\
&=\frac{d^2\int_{\theta=0}^{\pi/2} (\cos\theta)^{2t+1}(\sin\theta)^{2d-3} \rmd \theta}{\int_{\theta=0}^{\pi/2} \cos\theta(\sin\theta)^{2d-3}\rmd\theta}\nonumber\\
&=\frac{d^2\Gamma(d)\Gamma(t+1)}{\Gamma(d+t)}, \label{eq:HaarFP}
\end{align} 
where $\rmd \mu(\psi)$ denotes the normalized  measure on the complex projective space $\CP^{d-1}$ that is induced by the Haar measure on the unitary group. If $t$ is an integer, then the above equation yields
\begin{align}
\Phi_t(\mbox{Haar})=\frac{d^2t!}{d(d+1)\cdots (d+t-1)}=\frac{d^2}{D_t},
\end{align}
which saturates the lower bound in \eref{eq:FPLB}. So the ensemble of  Haar random pure states forms a $t$-design for any positive integer $t$ as expected. In view of this fact, the ensemble of Haar random pure states is regarded as an $\infty$-design.

\subsection{\label{sec:POVMdesignCon}Connection between $t$-designs and quantum measurements}

Given any $t$-design $\{|\psi_j\>,w_j\}_{j=1}^m$ with $t\geq 1$, we can construct a rank-1 POVM of the form $\{w_j|\psi_j\>\<\psi_j|\}_{j=1}^m$. Conversely, any rank-1 POVM determines a $t$-design up to irrelevant overall phase factors. Quantum measurements based on $t$-designs have  numerous applications in quantum information processing.
 Notably,  collective measurements based on $t$-designs are optimal for pure-state estimation \cite{HayaHH05,Zhu12the,ZhuH18U}.
Measurements constructed from 2-designs are  optimal for linear quantum state tomography \cite{Scot06,RoyS07,ZhuE11,Zhu12the} and quantum state verification \cite{ZhuH19O,LiHZ19,LiHZ20}.  Measurements constructed from 3-designs are useful in shadow estimation and entanglement detection \cite{HuanKP20,ElbeKHB20,ZhouZL20}. In addition, the quantum measurement constructed from Haar random pure states, referred to as the \emph{isotropic measurement} henceforth, is of special interest in quantum state estimation \cite{Zhu14IOC,Zhu12the} and discrimination \cite{MattWW09}.

Thanks to the connection mentioned above, some concepts defined for $t$-designs can be generalized to rank-1 POVMs, and vice versa. Notably,
the definitions of the $t$th frame  potential and cross frame  potential can be extended to rank-1 POVMs. To be concrete, let $\scrA=\{A_j\}_j$ and $\scrB=\{B_k\}_k$ be two rank-1 POVMs. Then the $t$th frame potential of $\scrA=\{A_j\}$  reads
\begin{align}\label{eq:POVMFP}
\Phi_t(\scrA)=\sum_{j,k}\frac{[\tr(A_jA_k)]^t}{[\tr(A_j)\tr(A_k)]^{t-1}},
\end{align}
which is applicable for any positive number $t$.
Similarly, the $t$th cross frame potential between $\scrA$  and $\scrB$ reads
\begin{align}\label{eq:POVMcrossFP}
\Phi_t(\scrA,\scrB)=\sum_{j,k}\frac{[\tr(A_jB_k)]^t}{[\tr(A_j)\tr(B_k)]^{t-1}}. 
\end{align}
The definitions of frame potential and cross frame potential in \esref{eq:POVMFP}{eq:POVMcrossFP} can be further generalized to POVMs that are not necessarily rank-1, although they are most useful when the POVMs are rank~1. In addition, these definitions are applicable even if some POVM elements are equal to the zero operator as long as the summations are restricted to POVM elements that are not equal to the zero operator. Note that equivalent POVMs have the same frame potential; similarly, equivalent pairs of POVMs have the same cross frame potential.

On the other hand, a $t$-design is called simple if the corresponding POVM is simple. 
Two $t$-designs are MU if the corresponding POVMs are MU. This definition reduces to the usual definition of MUB when each weighted set is an orthonormal basis with uniform weights.

\subsection{\label{sec:FPhalf}Applications of the frame potential $\Phi_{1/2}$}

\begin{figure}[b]
	\includegraphics[width=7.5cm]{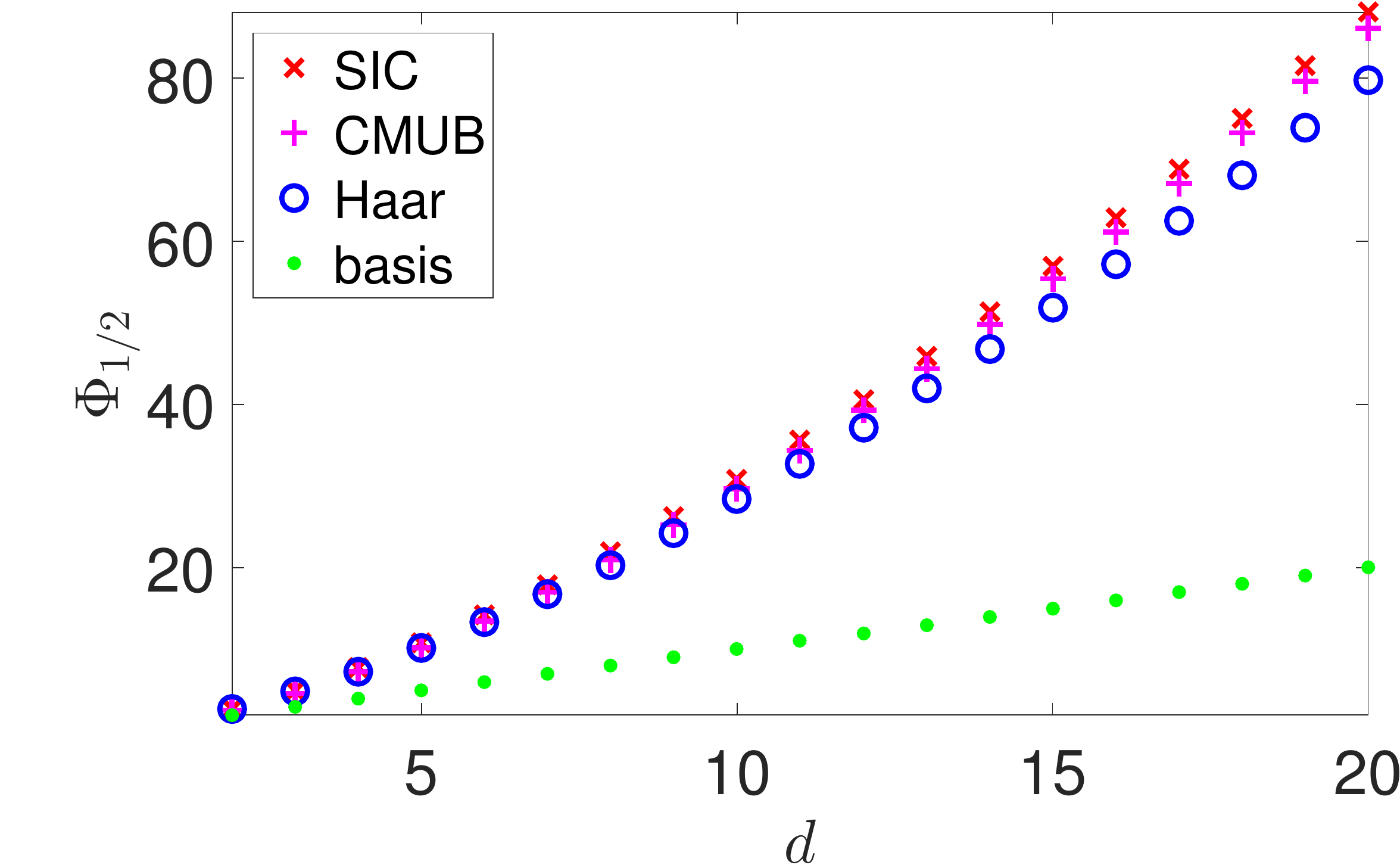}
	\caption{\label{fig:FP} Frame potentials $\Phi_{1/2}$ of an orthonormal basis, SIC, CMUB, and Haar random pure states, respectively.  Here $d$ is the dimension of the underlying Hilbert space $\caH$. 
	}
\end{figure}

Here we are particularly interested in  the (cross) frame potential $\Phi_t$ with  $t=1/2$, whose significance will become clear shortly. A   weighted set   of states in $\caH$ is a  \emph{$1/2$-design} if it is a  1-design and can attain the maximum frame potential $\Phi_{1/2}$ among all 1-designs.  Here we choose the maximum frame potential because the square-root function is concave rather than convex.
By definition a 1/2-design is automatically a 1-design; this requirement is motivated by our study on quantum measurements: any rank-1 measurement corresponds to a 1-design, and vice versa. By contrast, a 1-design is not necessarily a $1/2$-design, although a $(t+1)$-design is automatically a $t$-design when $t$  is a positive integer. The intuition about $t$-designs for an integer $t$ cannot be taken for granted in the current study. Incidentally, the search for "fractional designs" has been a long-standing open problem. So far it is still not clear how to define $t$-designs with  arbitrary real parameter $t$. This problem deserves further study, but is not crucial to the current work.

According to \ecref{eq:FPbasis}{eq:HaarFP}, the frame potentials $\Phi_{1/2}$
for an orthonormal basis, SIC, CMUB, and Haar random pure states are respectively given by 
\begin{align}
\Phi_{1/2}(\mbox{basis})&=d, \\
\Phi_{1/2}(\mbox{SIC})&=1+(d-1)\sqrt{d+1},\label{eq:FPhalfSIC}\\
\Phi_{1/2}(\mbox{CMUB})&=\frac{d+d^{5/2}}{d+1},\\
\Phi_{1/2}(\mbox{Haar})&=\frac{\sqrt{\pi}d^2\Gamma(d)}{2\Gamma\bigl(d+\frac{1}{2}\bigr)}. 
\end{align}
Quite unexpectedly, the frame potential attains its minimum at an orthonormal basis and its maximum at a SIC, as illustrated in \fref{fig:FP}.  What is more surprising is that this conclusion holds even if we consider
 all possible 1-designs, as shown in \lref{lem:FPhalfLBUB} below. 
\begin{lem}\label{lem:FPhalfLBUB}
	Any 1-design  $\caS$ in $\caH$ satisfies 
	\begin{align}\label{eq:FPhalfLBUB}
	d\leq \Phi_{1/2}(\caS)\leq 
	1+(d-1)\sqrt{d+1}.
	\end{align}	
	If $\caS$ is simple, then the lower bound  is saturated iff $\caS$ is an orthonormal basis, while
	the upper bound is saturated iff  $\caS$ is a SIC.
\end{lem}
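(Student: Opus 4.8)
The plan is to work with the Gram-matrix / frame-operator structure attached to a 1-design. Write $\caS=\{|\psi_j\>,w_j\}_{j=1}^m$ and set $x_{jk}:=|\<\psi_j|\psi_k\>|^2$, so that $x_{jj}=1$, $0\le x_{jk}\le 1$, and $\Phi_{1/2}(\caS)=\sum_{j,k}w_jw_k\sqrt{x_{jk}}$. The 1-design condition $\sum_j w_j|\psi_j\>\<\psi_j|=1$ gives, after squaring and taking the trace, the linear constraint $\sum_{j,k}w_jw_k\,x_{jk}=d$; together with $\sum_j w_j=d$ this isolates the diagonal contribution $\sum_j w_j^2 = d - \sum_{j\neq k}w_jw_k x_{jk}$. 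First I would split $\Phi_{1/2}(\caS)=\sum_j w_j^2 + \sum_{j\neq k}w_jw_k\sqrt{x_{jk}}$ and treat the off-diagonal sum with a two-sided estimate on $\sqrt{x}$ for $x\in[0,1]$.

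For the lower bound, the key inequality is simply $\sqrt{x}\ge x$ on $[0,1]$, so $\sum_{j\neq k}w_jw_k\sqrt{x_{jk}}\ge\sum_{j\neq k}w_jw_k x_{jk}=d-\sum_j w_j^2$, giving $\Phi_{1/2}(\caS)\ge d$ immediately. Equality forces $x_{jk}\in\{0,1\}$ for all $j\neq k$; combined with simplicity (so no two $|\psi_j\>$ are parallel, after absorbing phases) this forces $x_{jk}=0$ for $j\neq k$, i.e. the $|\psi_j\>$ are pairwise orthogonal, hence $m\le d$, and the 1-design condition then forces an orthonormal basis with unit weights. For the upper bound I would use concavity of $\sqrt{\cdot}$ to bound the off-diagonal sum by a function of its "first moment" $\sum_{j\neq k}w_jw_k x_{jk}=d-\sum_j w_j^2$ and its total weight $\sum_{j\neq k}w_jw_k = d^2-\sum_j w_j^2$. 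By Jensen applied to the probability weights $w_jw_k/(d^2-\sum_l w_l^2)$ on the index set $j\neq k$,
\begin{align}
\sum_{j\neq k}w_jw_k\sqrt{x_{jk}}\le \Bigl(d^2-\textstyle\sum_j w_j^2\Bigr)\sqrt{\frac{d-\sum_j w_j^2}{d^2-\sum_j w_j^2}}
=\sqrt{\bigl(d^2-S\bigr)\bigl(d-S\bigr)},
\end{align}
where $S:=\sum_j w_j^2$. Hence $\Phi_{1/2}(\caS)\le S+\sqrt{(d^2-S)(d-S)}=:f(S)$, and it remains to maximize $f$ over the admissible range of $S$. The Cauchy–Schwarz bound $S=\sum_j w_j^2\ge(\sum_j w_j)^2/m=d^2/m$ together with $m\ge d^2$ (which holds because a 1-design that is not an orthonormal basis must have at least $d^2$ elements — or, more robustly, I would instead derive $S\le d$ directly from $\sum_j w_j^2\le\sum_j w_j\cdot\max_j w_j$ is not quite enough, so the cleaner route is $S=\sum_j w_j\,w_j$ and... ) — here I expect the real work. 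A clean substitute: since each $|\psi_j\>\<\psi_j|\le 1$ we have $w_j|\psi_j\>\<\psi_j|\le w_j$, and summing the operator inequality against itself is awkward, so instead note $S=\tr\bigl((\sum_j w_j|\psi_j\>\<\psi_j|)\cdot\text{something}\bigr)$ is not directly $S$; the honest bound is $1\le S\le d$ with $S\le d$ coming from $\sum_j w_j^2\le d$, which follows because $\sum_j w_j^2\le(\max_j w_j)\sum_j w_j=d\max_j w_j$ and $\max_j w_j\le 1$ since $w_j|\psi_j\>\<\psi_j|\le\sum_k w_k|\psi_k\>\<\psi_k|=1$ forces $w_j\le 1$. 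Good — so $S\in[1,d]$. One checks $f$ is increasing on this interval (its derivative $1-\tfrac{1}{2}\bigl(\tfrac{d^2-S}{d-S}\bigr)^{1/2}-\tfrac12\bigl(\tfrac{d-S}{d^2-S}\bigr)^{1/2}\le 0$ on $[1,d]$ — so actually $f$ is decreasing, and the maximum is at $S=1$), giving $\Phi_{1/2}(\caS)\le 1+\sqrt{(d^2-1)(d-1)}=1+(d-1)\sqrt{d+1}$.

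The equality analysis is where I expect the main obstacle. Attaining the bound requires $S=1$, i.e. $\sum_j w_j^2=1$ with $\sum_j w_j=d$, which by the equality case of Cauchy–Schwarz forces all $m=d^2$ weights equal to $1/d$; and it requires equality in Jensen, i.e. all off-diagonal overlaps $x_{jk}$ ($j\neq k$) equal to a common value, which by the moment constraint must be $(d-1)/(d^2-1)=1/(d+1)$. That is exactly the defining condition \eqref{eq:SIC} of a SIC. Conversely the stated frame-potential value \eqref{eq:FPhalfSIC} confirms a SIC saturates the bound. I would need to double-check the monotonicity direction of $f$ on $[1,d]$ carefully (my derivative sketch above suggests $f$ is decreasing, so max at the left endpoint $S=1$, which is consistent with the answer), and I would need simplicity to rule out degenerate configurations in the lower-bound equality case; apart from that, the argument is a Jensen sandwich plus the two scalar constraints $\sum w_j=d$ and $\sum_{j,k}w_jw_kx_{jk}=d$ extracted from the 1-design property.
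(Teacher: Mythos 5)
Your lower-bound argument and its equality analysis are fine and coincide with the paper's. The upper bound, however, has a genuine gap exactly where you flagged "the real work": the claim $S:=\sum_j w_j^2\geq 1$ is never established, and it is false for general 1-designs. A 1-design can have arbitrarily many elements with small weights; e.g., a CMUB in $d=2$ consists of $6$ states of weight $1/3$, giving $S=2/3$, and more generally any 1-design with $m>d^2$ uniformly weighted elements has $S=d^2/m<1$ (the ensemble of Haar random states is the extreme case $S\to 0$). Since your function $f(S)=S+\sqrt{(d^2-S)(d-S)}$ is decreasing in $S$, for such designs your single-moment Jensen bound lands strictly above the target: for the $d=2$ CMUB, $f(2/3)=\tfrac{2}{3}+\tfrac{2}{3}\sqrt{10}\approx 2.775$, whereas $1+(d-1)\sqrt{d+1}=1+\sqrt{3}\approx 2.732$. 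So the argument cannot close as written; what you have actually re-derived is the number-constrained bound of \lref{lem:EAL} (your $S$ playing the role of $d^2/m$), which is weaker than \lref{lem:FPhalfLBUB} precisely when $S<1$, i.e.\ when the design is "spread out" over more than $d^2$ effective elements.

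The paper closes this gap with a second constraint that your splitting discards: for \emph{any} weighted set, the $t=2$ frame-potential bound \eqref{eq:FPLB} gives $\sum_{j,k}p_{jk}x_{jk}^2\geq \tfrac{2}{d(d+1)}$ with $p_{jk}=w_jw_k/d^2$, and then a two-moment refinement of Jensen (the auxiliary $1/2$-moment lemmas of \aref{asec:moment}, bounding $\bbE[\sqrt{X}]$ by $\zeta(a,b)$ given the first moment $a$ and second moment $b$, with $\zeta$ decreasing in $b$) yields $\Phi_{1/2}(\caS)\leq d^2\,\zeta\bigl(\tfrac{1}{d},\tfrac{2}{d(d+1)}\bigr)=1+(d-1)\sqrt{d+1}$ uniformly, without any hypothesis on $S$ or on the number of elements. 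The equality case then forces the second moment to saturate (so $\caS$ is a 2-design) and the overlaps to take only the two values $1$ and $1/(d+1)$, from which simplicity gives $m=d^2$, uniform weights $1/d$, and the SIC condition. If you want to salvage your route, you would need to import that second-moment information rather than only the diagonal weight $S$; concavity in the first moment alone is provably insufficient. (A minor secondary point: even granting $S=1$, your equality analysis should argue $m\geq d^2$ from Cauchy--Schwarz and $m\leq d^2$ from equiangularity before concluding uniform weights.)
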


\Lref{lem:FPhalfLBUB} reveals  intriguing connections between the frame potential $\Phi_{1/2}(\caS)$ and orthonormal bases as well as SICs. Notably, it shows that, among all 1-designs (including $t$-designs for any positive integer $t$), SICs are essentially the only $1/2$-designs; in other words, all $1/2$-designs are tied to SICs. Remarkably, SICs are uniquely characterized by the maximum frame potential, without any assumption even on the number of  elements.
These observations will have profound implications for understanding quantum measurements in the perspective of quantum state estimation, as we shall see later. The detailed proof of \lref{lem:FPhalfLBUB} is presented in \aref{asec:FP}. Here it is worth pointing out that the proof is tied to a surprising  result on the  $1/2$-moment in statistics, as formulated and proved in \aref{asec:moment}, which is of independent interest.

When the number of states is limited, the upper bound in \lref{lem:FPhalfLBUB} can be improved, and the maximum frame potential is tied to tight equiangular lines. Recall that  a set $\{|\psi_j\>\}_{j=1}^m$ composed of $m$ states is equiangular if all pairwise fidelities are equal \cite{LemmS73,Zaun11}.  The equiangular set is tight if  $\sum_j |\psi_j\>\<\psi_j|$ is proportional to the identity (automatically guaranteed for 1-designs), in which case the pairwise fidelities saturate the Welch bound \cite{Welc74}:
\begin{align}\label{eq:EAL}
|\<\psi_j|\psi_k\>|^2=\begin{cases}
1 & j=k,\\
\frac{m-d}{d(m-1)} & j\neq k. 
\end{cases}
\end{align}
Any set of equiangular states in dimension $d$ can contain at most $d^2$ states, and the upper bound is saturated iff the set is a SIC \cite{LemmS73,Zaun11,ApplFZ15G}. In \lref{lem:EAL} below we assume that all weights in $\caS$ are nonnegative, but not necessarily strictly positive.
\begin{lem}\label{lem:EAL}
	Any 1-design $\caS$ composed of  $m$ states satisfies
	\begin{align}\label{eq:EALfp}
	&\Phi_{1/2}(\caS)\leq  \frac{d^2}{m}+\frac{d}{m}\sqrt{d(m-1)(m-d)}.
	\end{align}
	When $m>d^2$, the upper bound cannot be saturated; when $d\leq m\leq d^2$, the  bound is saturated iff $\caS$ is composed of   $m$ equiangular states (with uniform weights).
\end{lem}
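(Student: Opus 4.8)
The plan is to reduce the bound to an estimate on the Gram matrix of the subnormalized frame vectors and then apply the Cauchy--Schwarz inequality twice, once to the off-diagonal overlaps and once to the weights. \textbf{Step 1 (Gram matrix as a projection).} First I would set $|\phi_j\>=\sqrt{w_j}\,|\psi_j\>$ and let $G$ be the $m\times m$ Gram matrix with entries $G_{jk}=\inner{\phi_j}{\phi_k}=\sqrt{w_j w_k}\,\inner{\psi_j}{\psi_k}$. The $1$-design condition $\sum_j|\phi_j\>\<\phi_j|=\id$ (here $D_1=d$) means the $|\phi_j\>$ form a normalized tight frame, so $G$ is an orthogonal projection of rank $d$; hence $\tr G=\sum_j w_j=d$ and $\tr(G^2)=\tr G=d$. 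Setting $S:=\sum_j w_j^2=\sum_j G_{jj}^2$, this gives the two identities I will need: $\sum_{j\neq k}|G_{jk}|^2=\tr(G^2)-S=d-S$ and $\sum_{j\neq k}w_j w_k=\bigl(\sum_j w_j\bigr)^2-S=d^2-S$.

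\textbf{Step 2 (double Cauchy--Schwarz and optimizing in $S$).} Since $\Phi_{1/2}(\caS)=\sum_{j,k}w_j w_k|\inner{\psi_j}{\psi_k}|=S+\sum_{j\neq k}\sqrt{w_j w_k}\,|G_{jk}|$, applying Cauchy--Schwarz to the off-diagonal sum together with Step~1 gives $\Phi_{1/2}(\caS)\leq S+\sqrt{(d^2-S)(d-S)}=:g(S)$. Using $0\leq\sum_{j\neq k}|G_{jk}|^2=d-S$ and the power-mean bound $S=\sum_j w_j^2\geq\bigl(\sum_j w_j\bigr)^2/m=d^2/m$, the admissible range is $d^2/m\leq S\leq d$; a short computation shows $g$ is non-increasing there (after clearing the square root the required inequality collapses to $d^2(d-1)^2\geq 0$). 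Hence $\Phi_{1/2}(\caS)\leq g(d^2/m)$, which simplifies to the asserted bound $\frac{d^2}{m}+\frac{d}{m}\sqrt{d(m-1)(m-d)}$.

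\textbf{Step 3 (equality and the case $m>d^2$).} Finally I would trace when the two inequalities are tight. Cauchy--Schwarz is an equality iff $|\inner{\psi_j}{\psi_k}|$ is constant over all $j\neq k$ with $w_j w_k>0$, i.e. the weighted set is equiangular on its support; the power-mean bound is tight iff all $m$ weights equal $d/m$, in particular are strictly positive. Together these force uniform weights and a genuinely equiangular configuration of $m$ unit vectors, whose common squared overlap is then automatically the Welch value $(m-d)/[d(m-1)]$ of \eref{eq:EAL}; conversely any $1$-design of this form saturates the bound by running Steps~1--2 in reverse. For $m>d^2$ saturation is therefore impossible, because a set of equiangular lines in $\bbC^d$ contains at most $d^2$ of them. \emph{The part requiring the most care is this equality analysis}: one must also rule out saturation when some weight vanishes---which forces $S>d^2/m$, hence $g(S)<g(d^2/m)$---and invoke the $d^2$ cap on equiangular lines for the $m>d^2$ statement. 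The inequality itself is routine once the projection identity $\sum_{j\neq k}|G_{jk}|^2=d-S$ is in hand.
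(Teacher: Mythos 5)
Your proposal is correct and follows essentially the same route as the paper's proof: splitting off the diagonal term $S=\sum_j w_j^2$, applying Cauchy--Schwarz to the off-diagonal sum together with the identity $\sum_{j,k}w_jw_k|\<\psi_j|\psi_k\>|^2=d$ (your Gram-matrix idempotence is just this fact in frame language), bounding $S\geq d^2/m$, using the strict monotonicity of $S+\sqrt{(d^2-S)(d-S)}$, and closing the equality case and the $m>d^2$ claim via the Welch value and the $d^2$ cap on equiangular lines. No gaps to report.
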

Note that the upper bound in \eref{eq:EALfp} is strictly monotonically increasing in $m$. 

Next, we turn to the cross frame potential $\Phi_{1/2}(\caS,\caT)$, which will play  important roles in studying MUMs and SICs. 
\begin{lem}\label{lem:crossFP}
	Any pair of 1-designs  $\caS$ and $\caT$ in $\caH$ satisfies 
	\begin{align}\label{eq:crossFPLBUB}
	d\leq \Phi_{1/2}(\caS,\caT)\leq d^{3/2}.
	\end{align}	
The upper bound is saturated iff $\caS$ and $\caT$ are MU. 	If $\caS$ and $\caT$ are simple, then the lower bound  is saturated iff $\caS$ and $\caT$ are identical orthonormal bases  up to relabeling.
\end{lem}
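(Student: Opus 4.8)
The plan is to reduce the whole statement to two elementary scalar inequalities applied to the overlaps $x_{jk}:=|\langle\psi_j|\varphi_k\rangle|^2\in[0,1]$, weighted by $w_jw_k'>0$, together with the two moment identities that the 1-design property supplies. Writing $\caS=\{|\psi_j\rangle,w_j\}_{j=1}^m$ and $\caT=\{|\varphi_k\rangle,w_k'\}_{k=1}^n$, the 1-design condition reads $\sum_j w_j|\psi_j\rangle\langle\psi_j|=\sum_k w_k'|\varphi_k\rangle\langle\varphi_k|=1$, whence $\sum_{j,k}w_jw_k'=d^2$ and $\sum_{j,k}w_jw_k'x_{jk}=\tr\!\big[(\sum_j w_j|\psi_j\rangle\langle\psi_j|)(\sum_k w_k'|\varphi_k\rangle\langle\varphi_k|)\big]=\tr(1)=d$. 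Since $\Phi_{1/2}(\caS,\caT)=\sum_{j,k}w_jw_k'\sqrt{x_{jk}}$, these are exactly the ingredients needed.

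For the upper bound I would apply the Cauchy--Schwarz inequality to the sum $\sum_{j,k}(\sqrt{w_jw_k'})(\sqrt{w_jw_k'\,x_{jk}})$, obtaining $\Phi_{1/2}(\caS,\caT)\le(\sum_{j,k}w_jw_k')^{1/2}(\sum_{j,k}w_jw_k'x_{jk})^{1/2}=(d^2)^{1/2}d^{1/2}=d^{3/2}$. Equality holds iff $x_{jk}$ is independent of $(j,k)$ over the support $w_jw_k'>0$, which by strict positivity of the weights is every pair; the moment identities then pin the common value to $1/d$, i.e.\ $|\langle\psi_j|\varphi_k\rangle|^2=1/d$ for all $j,k$, which is precisely the condition that the associated rank-1 POVMs are MU. For the lower bound I would instead use $\sqrt{x}\ge x$ for $x\in[0,1]$, which gives $\Phi_{1/2}(\caS,\caT)=\sum_{j,k}w_jw_k'\sqrt{x_{jk}}\ge\sum_{j,k}w_jw_k'x_{jk}=d$, with equality iff $x_{jk}\in\{0,1\}$ for every $(j,k)$, i.e.\ each $|\psi_j\rangle$ is either orthogonal to or equal (up to a phase) to each $|\varphi_k\rangle$.

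The one genuinely non-routine part, and thus the main obstacle, is the equality analysis of the lower bound under the simplicity hypothesis. Here I would argue as follows. From $1=\langle\varphi_k|1|\varphi_k\rangle=\sum_j w_j x_{jk}$ and $x_{jk}\in\{0,1\}$, the set of indices $j$ with $|\psi_j\rangle\propto|\varphi_k\rangle$ has total $w$-weight $1$; simplicity of $\caS$ forbids two POVM elements from being proportional, so this set is a single index $\iota(k)$ with $w_{\iota(k)}=1$. Simplicity of $\caT$ makes $\iota$ injective, and running the same argument with $\caS$ and $\caT$ interchanged produces an injective partner map $\kappa$ in the other direction; hence $m=n$ and $\iota,\kappa$ are mutually inverse bijections. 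Relabeling $\caT$ by $\kappa$, we get $|\psi_j\rangle\propto|\varphi_j\rangle$ with all weights equal to $1$, and $\sum_j w_j=d$ then forces $m=n=d$. Finally, for $j\ne j'$ simplicity of $\caS$ rules out $|\psi_j\rangle\propto|\psi_{j'}\rangle$, hence $|\psi_j\rangle\not\propto|\varphi_{j'}\rangle$, hence $x_{jj'}=0$ and $\langle\psi_j|\psi_{j'}\rangle=0$; so $\{|\psi_j\rangle\}_{j=1}^d$ is an orthonormal basis and $\caT$ is the same basis. The converse direction is immediate, since $\Phi_{1/2}$ of an orthonormal basis with itself equals $\sum_j 1=d$. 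All remaining steps are routine bookkeeping; the inequalities themselves are one-liners once the two 1-design moment identities are recorded.
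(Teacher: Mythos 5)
Your proof is correct and follows essentially the same route as the paper's: Cauchy--Schwarz together with the two 1-design moment identities for the upper bound, the elementary inequality $\sqrt{x}\geq x$ on $[0,1]$ for the lower bound, and the same 0/1-overlap analysis (uniqueness of matching elements from simplicity, weights forced to 1, hence $m=n=d$ and orthonormality). Your equality analysis merely spells out with explicit bijections what the paper states more tersely, so there is nothing to add.
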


 \begin{lem}\label{lem:crossFP2design}
Suppose $\caS$ and $\caT$ are 1-designs in $\caH$, and one of them is a 2-design; then 
	\begin{align}\label{eq:crossFP2design}
	\Phi_{1/2}(\caS,\caT)\leq 	1+(d-1)\sqrt{d+1}.
	\end{align}	
	If $\caS$ and $\caT$ are simple, then the upper bound  is saturated iff $\caS$ and $\caT$ are identical SICs up to relabeling. 
\end{lem}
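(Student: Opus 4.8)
\emph{Proof plan.}\quad The idea is to split $\Phi_{1/2}(\caS,\caT)$ over the index of the ensemble that need not be a $2$-design and, for each such index, to reduce to a one-variable moment inequality whose unique optimizer is the overlap pattern of a SIC. Assume without loss of generality that $\caT=\{|\varphi_k\>,w_k'\}$ is the $2$-design, write $\caS=\{|\psi_j\>,w_j\}$, and set $x_{jk}:=|\<\psi_j|\varphi_k\>|^2\in[0,1]$. By \eqref{eq:crossFP} with $t=1/2$ we have $\Phi_{1/2}(\caS,\caT)=\sum_j w_j\bigl(\sum_k w_k'\sqrt{x_{jk}}\bigr)$, and since $\sum_j w_j=d$ it suffices to prove, for each fixed $j$, that $\sum_k w_k'\sqrt{x_{jk}}\le\frac1d\bigl[1+(d-1)\sqrt{d+1}\bigr]$. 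Sandwiching the $1$-design identity $\sum_k w_k'|\varphi_k\>\<\varphi_k|=\id$ between $\<\psi_j|$ and $|\psi_j\>$, and the $2$-design identity $\sum_k w_k'(|\varphi_k\>\<\varphi_k|)^{\otimes2}=dP_2/D_2$ between $\<\psi_j|^{\otimes2}$ and $|\psi_j\>^{\otimes2}$, yields the moments
\begin{align}\label{eq:pf-moments}
\sum_k w_k'=d,\qquad \sum_k w_k'x_{jk}=1,\qquad \sum_k w_k'x_{jk}^2=\frac{d}{D_2}=\frac{2}{d+1},
\end{align}
which are exactly the zeroth, first, and second moments of the SIC overlap pattern (one value $1$ and $d^2-1$ values $1/(d+1)$, all of weight $1/d$); this is the source of the constant $1+(d-1)\sqrt{d+1}$.

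The crux is a pointwise ``sum of squares'' estimate: I would show that $\sqrt t\le q(t):=a+bt+ct^2$ for all $t\in[0,1]$, where $q$ is the unique quadratic tangent to $\sqrt{\,\cdot\,}$ at $t=1/(d+1)$ and equal to it at the endpoint $t=1$ (this choice is dictated by the support of the SIC pattern). Solving these three linear conditions gives, with $s:=\sqrt{d+1}$,
\begin{align}\label{eq:pf-quad}
a=\frac{s+2}{2(s+1)^2},\qquad b=\frac{s(s^2+2s+3)}{2(s+1)^2},\qquad c=-\frac{s^3}{2(s+1)^2}.
\end{align}
Granting the estimate, substituting $t=x_{jk}$ and summing against $w_k'$ gives, via \eqref{eq:pf-moments}, $\sum_k w_k'\sqrt{x_{jk}}\le ad+b+2c/(d+1)$, and a short computation confirms $ad+b+2c/(d+1)=\frac1d[1+(d-1)\sqrt{d+1}]$; summing over $j$ proves the bound. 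To establish $q(t)\ge\sqrt t$ on $[0,1]$ I would put $t=u^2$ and factor the quartic as $q(u^2)-u=c\,(u-1)(u-1/s)^2(u-r)$, where the double root $1/s$ encodes the tangency, $u=1$ is a root by construction, and the remaining root works out to $r=as^2/c=-(s+2)/s<0$; then for $u\in[0,1]$ the factors $c,\,(u-1),\,(u-1/s)^2,\,(u-r)$ carry signs $-,\,\le0,\,\ge0,\,+$, so the product is $\ge0$. I expect this sign bookkeeping (in particular checking $r\notin[0,1]$, equivalently $a>0$) to be the only genuinely fiddly step; everything else is forced by the moment identities.

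For the equality clause, suppose the bound is attained and $\caS,\caT$ are simple. Since all $w_j>0$, the per-index inequality must be tight for every $j$, and since all $w_k'>0$ this forces $\sqrt{x_{jk}}=q(x_{jk})$, hence $x_{jk}\in\{1/(d+1),1\}$, for every $j,k$ (the contact set of $q$ with $\sqrt{\,\cdot\,}$ on $[0,1]$). For each $j$, \eqref{eq:pf-moments} forces the total $w'$-weight of the indices with $x_{jk}=1$ to equal $1/d$; as $\caT$ is simple its states are pairwise non-proportional, so there is exactly one index $\sigma(j)$ with $|\varphi_{\sigma(j)}\>=|\psi_j\>$ (up to phase) and $w'_{\sigma(j)}=1/d$. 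Simplicity of $\caS$ makes $\sigma$ injective, and for $j\ne j'$ one gets $|\<\psi_j|\psi_{j'}\>|^2=x_{j'\sigma(j)}=1/(d+1)$, so $\caS$ is equiangular with common fidelity $1/(d+1)$. Since $\caS$ is also a $1$-design, the tight-frame identity $\tr[(\sum_j w_j|\psi_j\>\<\psi_j|)^2]=d$ gives $\sum_j w_j^2=1$, which with $\sum_j w_j=d$ and Cauchy--Schwarz forces $|\caS|\ge d^2$; together with the classical bound that at most $d^2$ states in $\caH$ can be pairwise equiangular (cf.\ the discussion preceding \lref{lem:EAL} and \eqref{eq:SIC}) this pins $|\caS|=d^2$, uniform weights $1/d$, and $\caS$ a SIC. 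Finally the $d^2$ states of $\caS$ already account for total $\caT$-weight $d^2\cdot(1/d)=d=\sum_k w_k'$, so $\caT$ has no further elements and $\caT=\caS$ up to relabeling. Conversely, a SIC is a $2$-design and $\Phi_{1/2}(\caS,\caS)=\Phi_{1/2}(\mathrm{SIC})=1+(d-1)\sqrt{d+1}$ by \eqref{eq:FPhalfSIC}, so the bound is attained there.
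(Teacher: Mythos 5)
Your argument is correct, and it reaches the bound by a genuinely different route than the paper. The paper works with the joint distribution $p_{jk}=w_jw_k'/d^2$, computes its first and second moments using both design properties, and then invokes the general sharp bound on the $1/2$-moment developed in \aref{asec:moment} (the function $\zeta(a,b)$ and \lsref{lem:zetaProperty} and \ref{lem:HalfMomBound}), whose saturation conditions then drive the equality analysis. You instead condition on $j$: the $2$-design property of $\caT$ alone yields the per-$j$ moments $\sum_k w_k'=d$, $\sum_k w_k'x_{jk}=1$, $\sum_k w_k'x_{jk}^2=2/(d+1)$, and you certify the resulting one-variable moment bound with an explicit quadratic majorant $q(t)\geq\sqrt{t}$ on $[0,1]$, tangent at $t=1/(d+1)$ and matching at $t=1$. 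I verified your coefficients \eqref{eq:pf-quad}, the factorization $q(u^2)-u=c\,(u-1)(u-1/s)^2(u-r)$ with $s=\sqrt{d+1}$ and $r=-(s+2)/s<0$ (the vanishing $u^3$-coefficient is consistent with the root sum $1+2/s+r=0$), and the evaluation $ad+b+2c/(d+1)=\bigl[1+(d-1)\sqrt{d+1}\bigr]/d$; all are correct. This route buys two things: the inequality part is self-contained (your tangent/sum-of-squares certificate is essentially the dual of the paper's extremal two-point distribution, so no appendix machinery is needed), and it uses only $\sum_j w_j=d$ rather than the full $1$-design property of $\caS$ for the upper bound. Your equality analysis—overlaps restricted to $\{1,1/(d+1)\}$, per-$j$ weight $1/d$ on the unit-overlap index, simplicity giving an injective matching $\sigma$, equiangularity of $\caS$, then $\Phi_1(\caS)=d$ together with Cauchy--Schwarz and the $d^2$ ceiling on equiangular sets forcing $m=d^2$ and uniform weights $1/d$, and finally the weight count showing $\caT$ has no elements outside the image of $\sigma$—is parallel in substance to the paper's, just organized around the matching $\sigma$ instead of the paper's direct derivation of the mutual fidelity pattern, and it is complete.
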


Here 
we identify weighted sets that differ only by overall phase factors as mentioned before.
Thanks to the connections between   1-designs and  POVMs, \lsref{lem:FPhalfLBUB}-\ref{lem:crossFP2design} above can be generalized to rank-1 POVMs immediately as summarized in \lsref{lem:POVMFP}-\ref{lem:POVMcrossFP2design} below.
 \begin{lem}\label{lem:POVMFP}
Any rank-1 POVM $\scrA$ satisfies
	\begin{align}
	d\leq \Phi_{1/2}(\scrA)\leq 
	1+(d-1)\sqrt{d+1}.
	\end{align}	
If $\scrA$ is simple, then 	the lower bound  is saturated iff $\scrA$ is a rank-1 projective measurement, while the upper bound is saturated iff $\scrA$ is a SIC.
\end{lem}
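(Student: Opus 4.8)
The plan is to read this off from \lref{lem:FPhalfLBUB} using the dictionary between rank-1 POVMs and 1-designs set up in \sref{sec:POVMdesignCon}. Given a rank-1 POVM $\scrA=\{A_j\}_j$, I would write $w_j:=\tr(A_j)>0$ and $|\psi_j\>\<\psi_j|:=A_j/w_j$, so that $A_j=w_j|\psi_j\>\<\psi_j|$. Summing over $j$ gives $\sum_j w_j=\tr\bigl(\sum_j A_j\bigr)=\tr(\id)=d$, and $\sum_j w_j|\psi_j\>\<\psi_j|=\sum_j A_j=\id=dP_1/D_1$ since $P_1=\id$ and $D_1=d$; hence $\caS:=\{|\psi_j\>,w_j\}_j$ is a weighted set with the required normalization $\sum_j w_j=d$ and is in fact a 1-design.

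Next I would check that the two frame potentials agree. Setting $t=1/2$ in \eref{eq:POVMFP} gives $\Phi_{1/2}(\scrA)=\sum_{j,k}[\tr(A_jA_k)]^{1/2}[\tr(A_j)\tr(A_k)]^{1/2}$; substituting $A_j=w_j|\psi_j\>\<\psi_j|$ and using $\tr(A_jA_k)=w_jw_k|\<\psi_j|\psi_k\>|^2$ together with $\tr(A_j)\tr(A_k)=w_jw_k$, this collapses to $\sum_{j,k}w_jw_k|\<\psi_j|\psi_k\>|=\Phi_{1/2}(\caS)$. Therefore the two-sided bound $d\leq\Phi_{1/2}(\caS)\leq 1+(d-1)\sqrt{d+1}$ of \eref{eq:FPhalfLBUB} transfers verbatim to $\Phi_{1/2}(\scrA)$, which proves the displayed inequality for every rank-1 POVM.

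For the equality cases I would use that $\scrA$ is simple exactly when $\caS$ is simple (this is the definition of a simple design in \sref{sec:POVMdesignCon}). Under that hypothesis, \lref{lem:FPhalfLBUB} says the lower bound is attained iff $\caS$ is an orthonormal basis, which under the dictionary means $\scrA$ is a rank-1 projective measurement, and the upper bound is attained iff $\caS$ is a SIC in the sense of \eref{eq:SIC}, i.e.\ iff $\scrA$ is a SIC POVM. This yields the stated characterizations.

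All the substance is already in \lref{lem:FPhalfLBUB}, whose proof (in \aref{asec:FP}) rests on the $1/2$-moment inequality of \aref{asec:moment}; the only genuinely new thing to verify here is the dictionary, in particular that the resolution-of-identity condition $\sum_j A_j=\id$ is precisely the 1-design condition and that $\Phi_{1/2}$ is preserved under the correspondence $A_j\leftrightarrow(|\psi_j\>,w_j)$. I do not anticipate a real obstacle. The only bookkeeping point is the treatment of zero or repeated POVM elements, handled as throughout the paper by restricting the sums to nonzero elements and allowing repeated states (with possibly distinct weights) in the associated weighted set.
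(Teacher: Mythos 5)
Your proposal is correct and is exactly the route the paper takes: it obtains \lref{lem:POVMFP} from \lref{lem:FPhalfLBUB} via the correspondence $A_j\leftrightarrow(|\psi_j\>,w_j)$ with $w_j=\tr(A_j)$, under which the resolution of the identity becomes the 1-design condition, $\Phi_{1/2}(\scrA)=\Phi_{1/2}(\caS)$, and simplicity and the equality cases (orthonormal basis $\leftrightarrow$ rank-1 projective measurement, SIC $\leftrightarrow$ SIC) transfer directly. Your bookkeeping remarks on zero and proportional POVM elements match the paper's conventions, so nothing is missing.
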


\begin{lem}\label{lem:POVMEAL} 
	Any  rank-1 POVM $\scrA$ composed of  $m$ POVM elements satisfies
	\begin{align}\label{eq:POVMEALfp}
	&\Phi_{1/2}(\scrA)\leq  \frac{d^2}{m}+\frac{d}{m}\sqrt{d(m-1)(m-d)}.
	\end{align}
	When $m>d^2$, the upper bound cannot be saturated; when $d\leq m\leq d^2$, the upper bound is saturated iff $\scrA$ is unbiased and equiangular. 
\end{lem}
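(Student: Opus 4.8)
The plan is to obtain \lref{lem:POVMEAL} as an immediate consequence of \lref{lem:EAL} through the standard dictionary between rank-1 POVMs and weighted $1$-designs. Given a rank-1 POVM $\scrA=\{A_j\}_{j=1}^m$ on $\caH$, write each POVM element as $A_j=w_j|\psi_j\>\<\psi_j|$, where $w_j=\tr A_j>0$ and $|\psi_j\>$ is a unit vector determined up to a phase. Since $\sum_j A_j=1$, one has $\sum_j w_j=d$ and $\sum_j w_j|\psi_j\>\<\psi_j|=1$, so $\caS:=\{|\psi_j\>,w_j\}_{j=1}^m$ is a $1$-design with exactly $m$ elements. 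The first step is to check that the two frame potentials coincide: substituting $\tr(A_jA_k)=w_jw_k|\<\psi_j|\psi_k\>|^2$ and $\tr A_j=w_j$ into the definition \eref{eq:POVMFP} at $t=1/2$ gives $\Phi_{1/2}(\scrA)=\sum_{j,k}\sqrt{w_jw_k|\<\psi_j|\psi_k\>|^2\cdot w_jw_k}=\sum_{j,k}w_jw_k|\<\psi_j|\psi_k\>|$, which is precisely $\Phi_{1/2}(\caS)$ as defined in \eref{eq:FP}. Consequently the inequality in \lref{lem:POVMEAL} is just \eref{eq:EALfp} applied to $\caS$, and the assertion that the bound cannot be saturated when $m>d^2$ transfers verbatim.

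The only point requiring a little care is the translation, in the regime $d\le m\le d^2$, of the equality condition. By \lref{lem:EAL}, $\Phi_{1/2}(\caS)$ meets the bound iff $\caS$ consists of $m$ equiangular states carrying uniform weights, i.e.\ $w_j=d/m$ for every $j$ together with $|\<\psi_j|\psi_k\>|^2$ independent of the pair $j\ne k$. The condition $w_j=d/m$ for all $j$ is exactly the statement that all traces $\tr A_j$ are equal, i.e.\ that $\scrA$ is unbiased. Assuming $\scrA$ is unbiased, $\tr(A_jA_k)=(d/m)^2|\<\psi_j|\psi_k\>|^2$, so the pairwise overlaps $\tr(A_jA_k)$ with $j\ne k$ are all equal iff the fidelities $|\<\psi_j|\psi_k\>|^2$ are all equal, which is exactly the defining property of an equiangular (unbiased rank-1) POVM. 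Hence the bound in \lref{lem:POVMEAL} is attained iff $\scrA$ is unbiased and equiangular, which completes the proof.

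Since \lref{lem:EAL} is already available, I do not anticipate any real obstacle here: the whole argument is bookkeeping, and the genuinely nontrivial ingredient — the sharp upper bound on the ``$1/2$-moment''-type sum $\sum_{j,k}w_jw_k|\<\psi_j|\psi_k\>|$ under the $1$-design constraint $\sum_j w_j|\psi_j\>\<\psi_j|=1$, together with the rigidity of its equality case — is exactly what \lref{lem:EAL} supplies. If one instead wanted a self-contained proof, the hard part would be re-deriving \lref{lem:EAL}: bounding $\sum_{j,k}w_jw_k\sqrt{|\<\psi_j|\psi_k\>|^2}$ from above via concavity of the square-root function applied to a suitable probability distribution built from the overlaps, and then extracting the equiangular/uniform-weight characterization from the equality case of Jensen's inequality. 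For the statement as posed, however, invoking \lref{lem:EAL} is the natural and fully rigorous route.
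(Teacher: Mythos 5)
Your proposal is correct and matches the paper's route: the paper obtains \lref{lem:POVMEAL} precisely as an immediate corollary of \lref{lem:EAL} via the dictionary $A_j=w_j|\psi_j\>\<\psi_j|$ between rank-1 POVMs and weighted $1$-designs, with the same identification $\Phi_{1/2}(\scrA)=\Phi_{1/2}(\caS)$ and the same translation of uniform weights and equal fidelities into the unbiased and equiangular conditions.
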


 \begin{lem}\label{lem:POVMcrossFP}
	Any pair of rank-1 POVMs $\scrA$ and $\scrB$ on $\caH$ satisfies 
	\begin{align}
	d\leq \Phi_{1/2}(\scrA, \scrB)\leq d^{3/2}.
	\end{align}	
	The upper bound is saturated iff $\scrA$ and $\scrB$ are MU. 	If $\scrA$ and $\scrB$ are simple, then the lower bound  is saturated iff $\scrA$ and $\scrB$ are identical rank-1 projective measurements up to relabeling.
\end{lem}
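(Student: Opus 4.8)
The plan is to deduce this statement from its weighted-set counterpart \Lref{lem:crossFP} through the standard dictionary between rank-1 POVMs and (weighted complex projective) 1-designs, so that there is essentially nothing new to prove. First I would set up the correspondence: to a rank-1 POVM $\scrA=\{A_j\}_j$ (with no zero element, per the convention of \sref{sec:Order}) I attach the weighted set $\caS=\{|\psi_j\>,w_j\}_j$ with $w_j:=\tr A_j>0$ and $|\psi_j\>$ the unit eigenvector satisfying $A_j=w_j\outer{\psi_j}{\psi_j}$ (defined up to a phase, which will not matter). The completeness relation $\sum_jA_j=\id$ is then literally $\sum_jw_j\outer{\psi_j}{\psi_j}=\id$, and its trace gives $\sum_jw_j=d$; hence $\caS$ is a 1-design, and conversely every 1-design arises from a rank-1 POVM in this way. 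By the definition recalled in \sref{sec:POVMdesignCon}, $\scrA$ is simple exactly when $\caS$ is simple, and a rank-1 projective measurement corresponds precisely to an orthonormal basis carrying uniform weights.

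Next I would check that the cross frame potentials match under this dictionary. Writing $B_k=w_k'\outer{\varphi_k}{\varphi_k}$ with $w_k'=\tr B_k$, one has $\tr(A_jB_k)=w_jw_k'|\inner{\psi_j}{\varphi_k}|^2$, $\tr A_j=w_j$, and $\tr B_k=w_k'$, so substituting into \eref{eq:POVMcrossFP} the powers of $w_jw_k'$ cancel and $\Phi_t(\scrA,\scrB)=\sum_{j,k}w_jw_k'|\inner{\psi_j}{\varphi_k}|^{2t}=\Phi_t(\caS,\caT)$ for every $t>0$, in particular $t=1/2$. The same substitution turns the condition $\tr(A_jB_k)=\tr(A_j)\tr(B_k)/d$ for all $j,k$ into $|\inner{\psi_j}{\varphi_k}|^2=1/d$ for all $j,k$, so $\scrA,\scrB$ are MU iff $\caS,\caT$ are MU; indeed this is precisely how MU of POVMs (and of designs) was defined.

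With these identifications the lemma follows at once from \Lref{lem:crossFP}: the bounds $d\le\Phi_{1/2}(\caS,\caT)\le d^{3/2}$ transfer verbatim to $\scrA,\scrB$; the upper bound is saturated iff $\caS,\caT$ are MU, hence iff $\scrA,\scrB$ are MU, with no simplicity hypothesis; and, assuming $\scrA,\scrB$ (equivalently $\caS,\caT$) are simple, the lower bound is saturated iff $\caS,\caT$ are identical orthonormal bases up to relabeling, hence iff $\scrA,\scrB$ are identical rank-1 projective measurements up to relabeling.

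I do not anticipate a genuine obstacle, since all the content resides in \Lref{lem:crossFP}. The only points needing a sentence of care are that the phase ambiguity in the choice of $|\psi_j\>$ is irrelevant (the frame potential, the MU condition, and the notion of being identical up to relabeling are all phase-invariant), and that excluding zero POVM elements keeps the dictionary a bijection. If a self-contained argument were preferred instead, the two inequalities can be proved directly: the lower bound from $|\inner{\psi_j}{\varphi_k}|\ge|\inner{\psi_j}{\varphi_k}|^2$ together with $\sum_kw_k'\outer{\varphi_k}{\varphi_k}=\id$ and $\sum_jw_j=d$ (equality iff every overlap is $0$ or $1$), and the upper bound by Cauchy--Schwarz over $k$ using $\sum_kw_k'=d$ and $\sum_kw_k'|\inner{\psi_j}{\varphi_k}|^2=1$, then summing over $j$ against $w_j$ (equality iff every $|\inner{\psi_j}{\varphi_k}|^2=1/d$), with the equality-case analysis for simple POVMs run as in \Lref{lem:crossFP}; but the reduction to \Lref{lem:crossFP} is the shortest route.
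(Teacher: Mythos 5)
Your reduction is exactly how the paper obtains \lref{lem:POVMcrossFP}: it is stated as an immediate translation of \lref{lem:crossFP} through the correspondence between rank-1 POVMs and weighted 1-designs, with the cross frame potentials, the MU condition, and the simplicity/orthonormal-basis identifications matching under $A_j=w_j\outer{\psi_j}{\psi_j}$, $w_j=\tr A_j$, just as you verify. The proposal is correct and takes essentially the same route, with all substantive content residing in \lref{lem:crossFP} (proved in \aref{asec:FP}).
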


 \begin{lem}\label{lem:POVMcrossFP2design}
Suppose $\scrA$ and $\scrB$ are rank-1 POVMs on $\caH$, and one of them is  	
constructed from a 2-design; then
	\begin{align}
	\Phi_{1/2}(\scrA,\scrB)\leq 	1+(d-1)\sqrt{d+1}.
	\end{align}	
	If $\scrA$ and $\scrB$ are simple, then the upper bound  is saturated iff $\scrA$ and $\scrB$ are identical SICs up to relabeling. 
\end{lem}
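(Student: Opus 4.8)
The plan is to obtain this result directly from \lref{lem:crossFP2design} by transporting it through the dictionary between rank-1 POVMs and (weighted complex projective) 1-designs recalled in \sref{sec:POVMdesignCon}; no genuinely new estimate is needed beyond that lemma.

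First I would spell out the dictionary in the present setting. Writing $\scrA=\{A_j\}_j$ and $\scrB=\{B_k\}_k$, rank-1-ness lets us set $A_j=w_j\outer{\psi_j}{\psi_j}$ and $B_k=w_k'\outer{\varphi_k}{\varphi_k}$ with $w_j=\tr A_j>0$ and $w_k'=\tr B_k>0$; the resolutions of the identity $\sum_j A_j=\sum_k B_k=1$ say exactly that $\caS=\{\ket{\psi_j},w_j\}_j$ and $\caT=\{\ket{\varphi_k},w_k'\}_k$ are 1-designs, with $\sum_j w_j=\sum_k w_k'=d$. Using $\tr(A_jB_k)=w_jw_k'|\inner{\psi_j}{\varphi_k}|^2$ together with the definition \eqref{eq:POVMcrossFP}, a one-line computation gives $\Phi_{1/2}(\scrA,\scrB)=\sum_{j,k}w_jw_k'|\inner{\psi_j}{\varphi_k}|=\Phi_{1/2}(\caS,\caT)$; this is precisely the $t=1/2$ instance of \eqref{eq:crossFP}. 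By the conventions fixed in \sref{sec:POVMdesignCon}: "$\scrB$ is constructed from a 2-design" means $\caT$ is a 2-design; $\scrA$ (resp.\ $\scrB$) is simple iff $\caS$ (resp.\ $\caT$) is simple; $\scrA$ is a SIC iff $\caS$ is a SIC; and, since the correspondence POVM $\leftrightarrow$ 1-design is a bijection once overall phase factors are identified, $\scrA$ and $\scrB$ are identical up to relabeling iff $\caS$ and $\caT$ are.

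Granting this, the inequality $\Phi_{1/2}(\scrA,\scrB)\le 1+(d-1)\sqrt{d+1}$ is nothing but \eqref{eq:crossFP2design} applied to the pair $(\caS,\caT)$, one of which is a 2-design. For the saturation statement I would add the hypothesis that $\scrA$ and $\scrB$ are simple, hence that $\caS$ and $\caT$ are simple, and invoke the equality clause of \lref{lem:crossFP2design}: equality forces $\caS$ and $\caT$ to coincide (up to relabeling) with a single SIC, which translates back to $\scrA$ and $\scrB$ being one and the same SIC POVM up to relabeling. Conversely, two identical SIC POVMs do saturate the bound, since $\Phi_{1/2}(\mathrm{SIC})=1+(d-1)\sqrt{d+1}$ by \eqref{eq:FPhalfSIC}.

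I do not expect a real obstacle here: all the substance sits in \lref{lem:crossFP2design} (and behind it \lref{lem:FPhalfLBUB} and the $1/2$-moment inequality of \aref{asec:moment}), while the step above is pure translation. The only places that deserve a moment's care are (i) verifying that the weights extracted from a rank-1 POVM are strictly positive and sum to $d$, so that the associated object is a 1-design in the exact sense of \sref{sec:POVMtdesign}, and (ii) being careful with the "up to relabeling" and overall-phase conventions when carrying the SIC characterization between the POVM and the design pictures; both are routine given \lref{lem:SimplePOVM} and the definitions in \sref{sec:POVMdesignCon}.
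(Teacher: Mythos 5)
Your proposal is correct and matches the paper's own route: the paper obtains \lref{lem:POVMcrossFP2design} precisely by the POVM--design dictionary of \sref{sec:POVMdesignCon}, noting that the cross frame potential of the rank-1 POVMs equals that of the associated weighted 1-designs and then invoking \lref{lem:crossFP2design} together with its saturation clause. Your two points of care (strict positivity and normalization of the weights, and the simplicity/relabeling/phase conventions) are exactly the translation details the paper relies on, so nothing is missing.
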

The above results demonstrate the significance of the (cross) frame potential $\Phi_{1/2}$ in characterizing typical quantum measurements. These results are the stepping stones for understanding quantum measurements in the perspective of quantum state estimation, as we shall see shortly.

\section{\label{sec:decoding}Decoding quantum measurements from a simple estimation problem}    
\subsection{\label{sec:problem}Reexamination of a  simple estimation problem} 
Suppose a quantum device can prepare a random  pure quantum state $\rho$ on $\caH$ according to the normalized Haar measure and we are asked to estimate the identity of $\rho$ based on suitable quantum measurements. A natural figure of merit is the fidelity averaged over the measurement outcome and the random pure state. 
Given $N$ copies of the pure state $\rho$, then what average fidelity can we achieve? This problem is  well known in the quantum information community and  has been studied by many eminent researchers \cite{MassP95,DerkBE98,LatoPT98,Haya98,BrusM99,GisiP99,Mass00,AcinLP00,Bana01,BagaBM02,HayaHH05}, whose works have greatly improved our understanding about information extraction from quantum systems. 
However, the significance of this problem in decoding quantum measurements has not been fully appreciated.

If we perform the POVM $\scrA=\{A_j\}_j$ on $\rho^{\otimes N}$, then the probability of obtaining outcome $A_j$ is $p_j=\tr(\rho^{\otimes N}A_j)$. Let $\hat{\rho}_j$ be the estimator corresponding to outcome $j$. Then the average  fidelity reads
\begin{align}\label{eq:FidelityUB}
\bar{F}&=\sum_j \int_{\CP^{d-1}} \rmd \mu(\psi) p_j \tr(\rho\hat{\rho}_j)\nonumber\\
& =\sum_j \int_{\CP^{d-1}} \rmd \mu(\psi) \tr\bigl[\rho^{\otimes N+1}(A_j \otimes \hat{\rho}_j)\bigr]\nonumber \\
&=\frac{1}{D_{N+1}}\sum_j  \tr[P_{N+1}(A_j \otimes \hat{\rho}_j)]
\nonumber\\
&=\frac{1}{D_{N+1}}\sum_j  \tr[\tilde{\caQ}(A_j) \hat{\rho}_j],
\end{align}
 where $\rmd \mu(\psi)$ denotes the normalized  measure on the complex projective space $\CP^{d-1}$ that is induced by the Haar measure on the unitary group, $\rho=|\psi\>\<\psi|$,  and
\begin{align} \tilde{\caQ}(A_j):=\tr_{1,\ldots,N}[P_{N+1}(A_j\otimes 1)].
\end{align}
Note that 
\begin{align}
\tr[\tilde{\caQ}(A_j) \hat{\rho}_j]\leq \|\tilde{\caQ}(A_j)\|=\max_{\rho} \tr[P_{N+1}(A_j\otimes \rho)],
\end{align}
where $\|\tilde{\caQ}(A_j)\|$ denotes the operator norm of $\tilde{\caQ}(A_j)$ and
the maximization is taken over all normalized pure states on $\caH$. In addition, the upper bound is saturated if the estimator  $\hat{\rho}_j$ is supported in the eigenspace of $\tilde{\caQ}(A_j)$ corresponding to  the largest eigenvalue and only then.

The \emph{estimation fidelity} of $\scrA$ is defined as the maximum fidelity that can be achieved by the POVM $\scrA$ (with a judicial choice of each estimator $\hat{\rho}_j$)  and is given by 
\begin{align}\label{eq:FidPOVM}
F(\scrA)=\sum_j \frac{\|\tilde{\caQ}(A_j)\|}{D_{N+1}}. 
\end{align}
Define
\begin{equation}
\caQ(O):=(N+1)!\tilde{\caQ}(O) 
\end{equation}
for  any linear operator $O$ acting on $\caH^{\otimes N}$. Then the estimation fidelity in \eref{eq:FidPOVM} can be expressed as 
\begin{align}\label{eq:FidPOVM2}
F(\scrA)=\sum_j \frac{\|\caQ(A_j)\|}{(N+1)! D_{N+1}}=\sum_j \frac{\|\caQ(A_j)\|}{d(d+1)\cdots (d+N)}. 
\end{align}
As we shall see shortly, $F(\scrA)$ encodes valuable information about the POVM $\scrA$. To facilitate the following discussions, here we summarize a number of simple but instructive facts. 
Let $\scrI$ be the trivial POVM that is composed of the identity on $\caH$ as the only POVM element. 
\begin{lem}\label{lem:FidPOVMbasic}
	Suppose $\scrA,\scrB$ are POVMs on $\caH^{\otimes N}$, $\scrC$ is a POVM on $\caH^{\otimes k}$, and $U$ is a unitary operator on $\caH$. Then 
	\begin{gather}
	F\Bigl(U^{\otimes N}\scrA {U^\dag}^{\otimes N}\Bigr)= F(\scrA),\label{eq:fidUnitaryInvariance}\\
	F(\scrA\otimes \scrI^{\otimes k})= F(\scrA), \label{eq:fidTensorTrivialPOVM}\\
	F(\scrA)\leq F(\scrB) \quad \mbox{if } \scrA\preceq\scrB, \label{eq:fidCoarseGraining}\\
F(\scrC\otimes \scrA)=F(\scrA\otimes \scrC)\geq  \max\{F(\scrA),F(\scrC)\}. \label{eq:fidProductComponent}
	\end{gather}
\end{lem}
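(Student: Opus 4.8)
The plan is to deduce all four identities from the representation $F(\scrA)=\sum_j\|\tilde{\caQ}(A_j)\|/D_{N+1}$ in \eqref{eq:FidPOVM}, together with two elementary properties of the map $O\mapsto\tilde{\caQ}(O)=\tr_{1,\ldots,N}[P_{N+1}(O\otimes 1)]$ on operators $O$ on $\caH^{\otimes N}$: it is linear in $O$, and, using the symmetric-subspace identity $\int_{\CP^{d-1}}\rmd\mu(\psi)\,(|\psi\>\<\psi|)^{\otimes(N+1)}=P_{N+1}/D_{N+1}$ that already underlies \eqref{eq:FidelityUB}, it admits the integral form
\begin{equation}
\tilde{\caQ}(O)=D_{N+1}\int_{\CP^{d-1}}\rmd\mu(\psi)\,\tr\bigl[(|\psi\>\<\psi|)^{\otimes N}O\bigr]\,|\psi\>\<\psi| ,
\end{equation}
which in particular shows $\tilde{\caQ}(O)\geq 0$ when $O\geq 0$. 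I would treat the four statements in the order shown, deriving the last from the first three.

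For \eqref{eq:fidUnitaryInvariance}, I would substitute $|\psi\>\to U^\dag|\psi\>$ in the integral formula and invoke the unitary invariance of $\rmd\mu$; this gives $\tilde{\caQ}(U^{\otimes N}A_j {U^\dag}^{\otimes N})=U\,\tilde{\caQ}(A_j)\,U^\dag$, whose operator norm equals $\|\tilde{\caQ}(A_j)\|$, so summing over $j$ yields $F(U^{\otimes N}\scrA {U^\dag}^{\otimes N})=F(\scrA)$. (Equivalently: $[P_{N+1},U^{\otimes(N+1)}]=0$ and $U^{\otimes N}A_j{U^\dag}^{\otimes N}\otimes 1=U^{\otimes(N+1)}(A_j\otimes 1)U^{\dag\,\otimes(N+1)}$, together with cyclicity of the partial trace over the first $N$ factors.) For \eqref{eq:fidTensorTrivialPOVM}, I would apply the integral formula with $N+k$ copies to a POVM element $A_j\otimes 1_{\caH^{\otimes k}}$ of $\scrA\otimes\scrI^{\otimes k}$: the integrand factorizes into $\tr[(|\psi\>\<\psi|)^{\otimes N}A_j]\cdot(\<\psi|\psi\>)^k=\tr[(|\psi\>\<\psi|)^{\otimes N}A_j]$, so $\tilde{\caQ}(A_j\otimes 1_{\caH^{\otimes k}})=(D_{N+k+1}/D_{N+1})\,\tilde{\caQ}(A_j)$, and this prefactor exactly cancels the replacement $D_{N+1}\to D_{N+k+1}$ in the definition of $F$, giving $F(\scrA\otimes\scrI^{\otimes k})=F(\scrA)$.

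For \eqref{eq:fidCoarseGraining}, if $A_j=\sum_k\Lambda_{jk}B_k$ with $\Lambda$ stochastic, then linearity gives $\tilde{\caQ}(A_j)=\sum_k\Lambda_{jk}\tilde{\caQ}(B_k)$, hence by the triangle inequality (with $\Lambda_{jk}\geq 0$) $\|\tilde{\caQ}(A_j)\|\leq\sum_k\Lambda_{jk}\|\tilde{\caQ}(B_k)\|$; summing over $j$ and using $\sum_j\Lambda_{jk}=1$ gives $F(\scrA)\leq F(\scrB)$. For \eqref{eq:fidProductComponent}, the equality $F(\scrC\otimes\scrA)=F(\scrA\otimes\scrC)$ follows exactly as in \eqref{eq:fidUnitaryInvariance}, since $\scrA\otimes\scrC$ and $\scrC\otimes\scrA$ differ by a permutation of the first $N+k$ tensor factors interchanging the two blocks, $P_{N+k+1}$ (equivalently $(|\psi\>\<\psi|)^{\otimes(N+k)}$) is invariant under such permutations, and the permutation does not touch the estimator slot; the inequality then follows because coarse graining $\scrA\otimes\scrC$ over the outcomes of $\scrC$ produces $A_j\otimes 1_{\caH^{\otimes k}}$, i.e.\ $\scrA\otimes\scrI^{\otimes k}\preceq\scrA\otimes\scrC$, so \eqref{eq:fidCoarseGraining} and then \eqref{eq:fidTensorTrivialPOVM} give $F(\scrA\otimes\scrC)\geq F(\scrA\otimes\scrI^{\otimes k})=F(\scrA)$, and symmetrically $F(\scrC\otimes\scrA)\geq F(\scrC)$; combined with the equality this is the claim.

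The only ingredient beyond formal manipulation is the symmetric-subspace identity $\int\rmd\mu(\psi)\,(|\psi\>\<\psi|)^{\otimes n}=P_n/D_n$, which is already available; the one spot demanding a little care is bookkeeping the dimension factors $D_{N+1}$ versus $D_{N+k+1}$ in \eqref{eq:fidTensorTrivialPOVM}. I expect the main (and quite mild) obstacle to be making the partial-trace and cyclicity steps in \eqref{eq:fidUnitaryInvariance} and in the equality half of \eqref{eq:fidProductComponent} fully rigorous; the integral representation renders these essentially automatic, so no genuine difficulty is anticipated.
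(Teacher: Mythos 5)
Your proposal is correct and follows essentially the same route as the paper: reduce everything to the norms $\|\tilde{\caQ}(A_j)\|$, prove \eqref{eq:fidUnitaryInvariance} and \eqref{eq:fidTensorTrivialPOVM} from the covariance and scaling identities $\tilde{\caQ}(U^{\otimes N}A\lsp U^{\dag\otimes N})=U\tilde{\caQ}(A)U^\dag$ and $\tilde{\caQ}(A\otimes 1^{\otimes k})=(D_{N+k+1}/D_{N+1})\tilde{\caQ}(A)$, get \eqref{eq:fidCoarseGraining} from linearity plus the triangle inequality with the stochastic matrix, and obtain \eqref{eq:fidProductComponent} from the swap symmetry together with $\scrA\otimes\scrI^{\otimes k}\preceq\scrA\otimes\scrC$. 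The only cosmetic difference is that you derive these identities via the Haar-integral representation of $\tilde{\caQ}$, whereas the paper manipulates $P_{N+1}$ and $\max_\rho\tr[P_{N+1}(A\otimes\rho)]$ directly; both are sound.
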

Here the notation $\scrA\preceq\scrB$ means $\scrA$ is a coarse graining of $\scrB$ as defined in \sref{sec:Order}. \Lref{lem:FidPOVMbasic} in particular implies that equivalent POVMs can achieve the same estimation fidelity as expected.

At this point, it is worth pointing out that the above results still apply if the ensemble of  Haar random pure states involved in the estimation problem is replaced by any ensemble of pure states that forms a $t$-design with $t=N+1$. This observation is quite helpful in devising experiments for  demonstrating these results.

\subsection{\label{sec:FidCal}Calculation of the estimation fidelity}
Next, we derive  a number of basic results that are useful to computing the estimation fidelity in \eref{eq:FidPOVM2}, especially for product measurements. Let $A, B, C $ be positive semidefinite operators on $\caH$. Straightforward calculation shows that
\begin{widetext}
\begin{align}
\caQ(A)&=\tr(A)+A,  \label{eq:QA}\\
\caQ(A\otimes B)&=\tr(A)\tr(B)+\tr(AB)+\tr(B)A+\tr(A)B
+AB+BA \label{eq:QAB},\\
\caQ(A\otimes B\otimes C)&=\tr(A)\tr(B)\tr(C)+\tr(AB)\tr(C)+\tr(BC)\tr(A)+\tr(CA)\tr(B)+\tr(ABC)+\tr(ACB)\nonumber\\
&+\tr(B)\tr(C)A+\tr(C)\tr(A)B+\tr(A)\tr(B)C+\tr(BC)A+\tr(CA)B+\tr(AB)C+\tr(C)(AB+BA)\nonumber\\
&+\tr(B)(AC+CA)+\tr(A)(BC+CB)+ABC+ACB+BCA+BAC+CAB+CBA. \label{eq:QABC}
\end{align}
Here numbers in operator equations, such as $\tr(A)$, are implicitly multiplied by the identity operator; a similar convention applies to other equations in this paper. 
When $\tr(A)=\tr(B)=\tr(C)=1$, Eqs.~\eqref{eq:QA}-\eqref{eq:QABC} simplify to 
\begin{align}
\caQ(A)=&1+A, \label{eq:QAtrace1}\\
\caQ(A\otimes B)=&1+f+A+B+AB+BA,\label{eq:QABtrace1}\\
\caQ(A\otimes B\otimes C)=&1+f_{12}+f_{23}+f_{31}+f_{123}+f_{132}+(1+f_{23})A+(1+f_{31}) B+(1+f_{12})C+AB+BA\nonumber\\
&+AC+CA+BC+CB
+ABC+ACB+BCA+BAC+CAB+CBA,  \label{eq:QABCtrace1}
\end{align}	
where $f=f_{12}=\tr(AB)$, $f_{23}=\tr(BC)$, $f_{31}=\tr(CA)$, $f_{123}=\tr(ABC)$, and $f_{132}=\tr(ACB)$. 
These equations indicate that the three-copy estimation fidelity may depend on the triple products of POVM elements in addition to pairwise overlaps (cf. \sref{sec:IneqMUBSIC}).
\end{widetext}

\Eref{eq:QA} implies that  
\begin{align}\label{eq:QAnorm}
\|\caQ(A)\|=\tr(A)+\|A\|,
\end{align}
from which we can derive the following lemma. 
\begin{lem}\label{lem:OneCopyFidLBUB}
	The estimation fidelity $F(\scrA)$ of any POVM  $\scrA=\{A_j\}_j$ on $\caH$ satisfies
	\begin{equation}
	\frac{1}{d}\leq F(\scrA)=\frac{1}{d+1}+\frac{1}{d(d+1)}\sum_j \|A_j\|\leq \frac{2}{d+1}, \label{eq:OneCopyFidUB}
	\end{equation}
	and the lower bound is saturated iff $\scrA$ is trivial, while the upper bound is saturated iff $\scrA$ is  rank-1. 
\end{lem}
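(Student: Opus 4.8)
The plan is to specialize the general formula \eqref{eq:FidPOVM2} to the one-copy case $N=1$ and then reduce everything to an elementary eigenvalue estimate. First I would substitute $N=1$ into \eqref{eq:FidPOVM2}, which gives $F(\scrA)=\sum_j\|\caQ(A_j)\|/[d(d+1)]$, and then invoke \eqref{eq:QAnorm}, namely $\|\caQ(A_j)\|=\tr(A_j)+\|A_j\|$, to obtain
\begin{align}
F(\scrA)=\frac{1}{d(d+1)}\sum_j\bigl(\tr(A_j)+\|A_j\|\bigr).\nonumber
\end{align}
Since $\scrA$ is a POVM we have $\sum_j A_j=1$, hence $\sum_j\tr(A_j)=\tr(1)=d$; plugging this in yields the claimed closed form $F(\scrA)=\tfrac{1}{d+1}+\tfrac{1}{d(d+1)}\sum_j\|A_j\|$.

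With this identity in hand, both inequalities become statements about the single quantity $S:=\sum_j\|A_j\|$: a short rearrangement shows $F(\scrA)\geq 1/d$ is equivalent to $S\geq 1$, and $F(\scrA)\leq 2/(d+1)$ is equivalent to $S\leq d$. I would then bound each term: for any nonzero positive semidefinite operator $A_j$ on a $d$-dimensional space, the largest eigenvalue satisfies $\tr(A_j)/d\leq\|A_j\|\leq\tr(A_j)$, the first inequality because the maximal eigenvalue dominates the average of the $d$ eigenvalues (counted with zeros), the second because all eigenvalues are nonnegative. Summing over $j$ and again using $\sum_j\tr(A_j)=d$ gives $1\leq S\leq d$, which establishes both bounds simultaneously.

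Finally I would treat the equality cases. The lower bound is saturated iff $\|A_j\|=\tr(A_j)/d$ for every $j$, i.e.\ every $A_j$ has all eigenvalues equal, i.e.\ every $A_j$ is proportional to the identity, which is precisely the definition of a trivial POVM. The upper bound is saturated iff $\|A_j\|=\tr(A_j)$ for every $j$, i.e.\ every (nonzero) $A_j$ has exactly one nonzero eigenvalue, i.e.\ is rank~1; invoking the standing convention that no POVM element is the zero operator, this says exactly that $\scrA$ is rank-1.

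I do not anticipate a genuine obstacle here: the argument is a direct computation once \eqref{eq:QAnorm} and the POVM normalization are combined. The only points requiring a little care are getting the elementary inequalities $\tr(A_j)/d\leq\|A_j\|\leq\tr(A_j)$ and their equality conditions right, and being explicit that the rank-1 conclusion relies on the paper's convention that POVM elements are nonzero.
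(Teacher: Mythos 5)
Your proof is correct and follows essentially the same route as the paper's: plug $N=1$ into \eref{eq:FidPOVM2}, apply \eref{eq:QAnorm} together with $\sum_j\tr(A_j)=d$ to get the closed form, and then use $\tr(A_j)/d\leq\|A_j\|\leq\tr(A_j)$ with the equality cases (proportional to the identity, resp. rank 1) to settle both bounds and their saturation conditions. No gaps.
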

The equality in \eref{eq:OneCopyFidUB}
was originally derived in \rcite{Bana01} and play an important role in studying information-disturbance relations. The upper bound in  \eref{eq:OneCopyFidUB} was known even earlier in the context of optimal quantum state estimation \cite{BrusM99,AcinLP00}. Here we are interested in \lref{lem:OneCopyFidLBUB} because it clarifies the estimation fidelities of single-copy measurements and highlights the significance of rank-1 measurements (cf. \pref{pro:MaximalPOVM}). 
If $\scrA$ is a POVM with $m$ POVM elements, then \lref{lem:OneCopyFidLBUB} yields
\begin{equation} F(\scrA)\leq \frac{1}{d+1}+\frac{m}{d(d+1)},
\end{equation}
given that $\|A_j\|\leq 1$ for any POVM element $A_j$. 
Here the inequality is saturated when $\scrA$ is a projective measurement, in which case
 the estimation fidelity is completely determined by the number of measurement outcomes.

The following lemma is a stepping stone for studying two-copy estimation fidelities. 
\begin{lem}\label{lem:Sym2ProjNorm}
	Suppose $A$ and $B$ are nonzero positive operators on $\caH$ and $f=\tr(AB)/[\tr(A)\tr(B)]$. Then 
	\begin{align}\label{eq:SymProjNorm}
	\|\caQ(A\otimes B)\|\leq 2\tr(A)\tr(B)\bigl(1+f+\sqrt{f}\lsp\bigr). 
	\end{align}
	The upper bound is saturated iff one of the two conditions holds:
\begin{enumerate}
	\item  both $A$ and $B$ are rank 1;
	\item  $A$ and $B$ have orthogonal supports and one of them is rank 1. 
\end{enumerate}	 
\end{lem}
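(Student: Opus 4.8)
\emph{Proof plan.} The plan is to turn the operator‑norm bound into a two‑dimensional eigenvalue problem by a convexity reduction, and then to extract the equality cases by examining exactly when that reduction is tight. First I would rescale: since $O\mapsto\caQ(O)$ is multilinear in the tensor factors and $f$ is unchanged under $A\mapsto sA$, $B\mapsto tB$, we may assume $\tr(A)=\tr(B)=1$, so $f=\tr(AB)$ and, by \eref{eq:QABtrace1}, $\caQ(A\otimes B)=1+f+T$ with $T:=A+B+AB+BA$. Writing $T=(A+B)^2+A(1-A)+B(1-B)$ and using $0\le A,B\le 1$ (because $\|A\|\le\tr A=1$, and likewise for $B$) gives $T\ge0$, hence $\|\caQ(A\otimes B)\|=(1+f)+\|T\|$, so the claimed bound is equivalent to $\|T\|\le(1+\sqrt f)^2$.

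Next I would fix $B$ and regard $A$ as ranging over the density operators on $\caH$. Since $A\mapsto T$ is affine and $\|\cdot\|$ is a norm, $A\mapsto\|T\|$ is convex; since $A\mapsto\tr(AB)$ is affine and $t\mapsto-(1+\sqrt t)^2$ is convex on $[0,\infty)$, the function $A\mapsto\|T\|-\bigl(1+\sqrt{\tr(AB)}\bigr)^2$ is convex, so its maximum over the (compact, convex) set of density operators is attained at a rank‑one projector. Running the same argument in the second slot reduces the inequality to $A=\outer aa$, $B=\outer bb$, where $T$ annihilates $\spa\{\ket a,\ket b\}^\perp$ and, on $\spa\{\ket a,\ket b\}$, has nonzero eigenvalues equal to those of $\bigl(\begin{smallmatrix}1+f&2\inner ab\\ 2\inner ba&1+f\end{smallmatrix}\bigr)$ with $f=|\inner ab|^2$, namely $1+f\pm2\sqrt f$; thus $\|T\|=(1+\sqrt f)^2$ \emph{exactly}, which proves the inequality and also shows that any pair of rank‑one $A,B$ saturates it.

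For the equality characterization I would split on $f$. If $f=0$, then $\tr(AB)=0$ forces $AB=0$, so $A,B$ have orthogonal supports, $\caQ(A\otimes B)=1+A+B$ is block diagonal, and $\|\caQ(A\otimes B)\|=1+\max\{\|A\|,\|B\|\}=2$ iff $\|A\|=1$ or $\|B\|=1$, i.e., iff one of $A,B$ (being unit trace) is a rank‑one projector: the second condition. If $f>0$ and $\|T\|=(1+\sqrt f)^2$, write $A=\sum_i\lambda_i\outer{a_i}{a_i}$ with $\lambda_i>0$. Equality forces equality in Jensen's inequality for the convex function above along this decomposition; strict concavity of $\sqrt{\cdot}$ on $(0,\infty)$ pins down $\bra{a_i}B\ket{a_i}=f$ for all $i$, and since $\|T\|=\max_{\|\psi\|=1}\bra\psi T\ket\psi$ has Jensen equality only when the summands share a common top eigenvector, there is a unit $\ket\psi$ with $T(\outer{a_i}{a_i},B)\ket\psi=(1+\sqrt f)^2\ket\psi$ for all $i$. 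Each pair $(\outer{a_i}{a_i},B)$ then saturates the bound, so the same reasoning in the second slot gives $|\inner{a_i}{b_j}|^2=f$ for all $i,j$ in the spectral decompositions and forces the \emph{same} $\ket\psi$ to be a top eigenvector of every $T(\outer{a_i}{a_i},\outer{b_j}{b_j})$. But the computation above identifies that eigenvector uniquely up to scalars, $\ket\psi\propto\inner{a_i}{b_j}\ket{a_i}+\sqrt f\,\ket{b_j}$; were $A$ not rank one, equating this for two distinct eigenvectors $\ket{a_1},\ket{a_2}$ against a common $\ket{b_1}$ and pairing with $\ket{a_1}$ and $\ket{a_2}$ would make the proportionality constant equal both $(1+\sqrt f)/\sqrt f$ and $\sqrt f/(1+\sqrt f)$, i.e., $2\sqrt f+1=0$, a contradiction. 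Hence $A$, and by symmetry $B$, is rank one: the first condition. The converse is the saturation already observed.

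I expect the equality analysis for $f>0$ to be the main obstacle: one must propagate equality carefully through \emph{both} layers of the convex reduction — exploiting the strictness of the concavity of $\sqrt{\cdot}$ and the uniqueness of the extremal top eigenvector — to rule out any spreading of the spectrum of $A$ or of $B$. The inequality itself, by contrast, is essentially the two‑dimensional eigenvalue computation promoted by convexity.
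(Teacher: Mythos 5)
Your proposal is correct, and I checked the computational claims: $T=A+B+AB+BA=(A+B)^2+A(1-A)+B(1-B)\ge0$ so $\|\caQ(A\otimes B)\|=(1+f)+\|T\|$; for unit vectors the matrix of $T$ in the basis $\{\ket a,\ket b\}$ is indeed $\bigl(\begin{smallmatrix}1+f&2\inner ab\\ 2\inner ba&1+f\end{smallmatrix}\bigr)$ by similarity, with top eigenvalue $(1+\sqrt f)^2$ and (for $f>0$) unique top eigenvector $\propto\inner ab\ket a+\sqrt f\ket b$; and the final pairing argument does give $\kappa=(1+\sqrt f)/\sqrt f=\sqrt f/(1+\sqrt f)$, a contradiction. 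For the inequality your argument is the paper's in different packaging: the paper decomposes $A,B$ into pure states, applies the triangle inequality for the operator norm and Jensen for $\sqrt{\cdot}$, and verifies the rank-one case by a Bloch-sphere computation, whereas you obtain the rank-one reduction from convexity of $A\mapsto\|T\|-(1+\sqrt{\tr(AB)})^2$ and extreme points, and verify the rank-one case by the $2\times2$ eigenvalue problem; these are interchangeable. The genuine divergence is the equality analysis for $f>0$: the paper exploits that Jensen saturation must hold for \emph{every} convex decomposition, concluding $|\inner\psi\varphi|^2=f$ for all unit vectors in the two supports, which immediately forces rank one or orthogonal supports; you instead stay with the spectral decompositions, use additivity of the norm of a sum of positive operators to extract a common top eigenvector of all blocks $T(\outer{a_i}{a_i},\outer{b_j}{b_j})$ (this step is correct but terse — the justification is that a top eigenvector of the sum must saturate each summand's quadratic form once the norms add), and then use uniqueness of the rank-one top eigenvector to reach a contradiction when $\rk A\ge2$. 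The paper's variant is shorter and avoids eigenvector formulas; yours needs only the spectral decomposition and makes the extremal eigenvector explicit (which is the optimal estimator direction, so it carries some extra information), at the price of the heavier bookkeeping you yourself anticipated. The $f=0$ case is handled the same way in both, and the resulting saturation conditions agree with the paper's two conditions.
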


When $A=|\psi\>\<\psi|$ and $B=|\varphi\>\<\varphi|$ are rank-1 projectors, \lref{lem:Sym2ProjNorm} yields
\begin{align}
\|\caQ(|\psi\>\<\psi|\otimes |\varphi\>\<\varphi|)\|=2(1+|\<\psi|\varphi\>|^2+|\<\psi|\varphi\>|),
\end{align}
which in turn implies that
\begin{align}
\bigl\|\caQ(|j\>\<j|\otimes |k\>\<k|)\bigr\|=&\begin{cases}
6 & j=k,\\
2 &j\neq k. 
\end{cases}\label{eq:Qjk}
\end{align}
Note that \eref{eq:Qjk} also follows from  \eref{eq:QABtrace1}. By contrast, \eref{eq:QABCtrace1} implies that
\begin{align}\label{eq:Qjkl}
\bigl\|\caQ(|j\>\<j|\otimes |k\>\<k|\otimes |l\>\<l|)\bigr\|=&\begin{cases}
24 & j=k=l,\\
6 &j=k\neq l,\\
2 &j\neq k, k\neq l, j\neq l.
\end{cases}
\end{align}
Here $|j\>,|k\>,|l\>$ denote basis states in the computational basis.

By virtue of \eref{eq:FidPOVM2} and \lref{lem:Sym2ProjNorm}, we can derive a general upper bound for the two-copy estimation fidelity of  any product measurement. 
\begin{lem}\label{lem:FidprodPOVM}
Let $\scrA$ and $\scrB$ be two POVMs on $\caH$. Then  the two-copy estimation fidelity $F(\scrA\otimes \scrB)$ satisfies
	\begin{align}\label{eq:FidProdUBgen}
	&F(\scrA\otimes \scrB)\leq\frac{2d(d+1)+2\Phi_{1/2}(\scrA,\scrB)}{d(d+1)(d+2)},
	\end{align}		
	and the upper bound is saturated iff   $\scrA$ and $\scrB$ are  rank~1. 
\end{lem}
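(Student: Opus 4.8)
The plan is to combine the exact expression for the estimation fidelity in \eref{eq:FidPOVM2}, specialized to $N=2$, with the operator-norm estimate of \Lref{lem:Sym2ProjNorm}, and then to recognize the remaining sum as a cross frame potential. Write $\scrA=\{A_j\}_j$ and $\scrB=\{B_k\}_k$; we may assume (following the standing convention) that no POVM element vanishes, so that \Lref{lem:Sym2ProjNorm} applies to each pair. Specializing \eref{eq:FidPOVM2} to $N=2$ gives
\begin{align}
F(\scrA\otimes\scrB)=\frac{1}{d(d+1)(d+2)}\sum_{j,k}\|\caQ(A_j\otimes B_k)\|.
\end{align}

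First I would apply \Lref{lem:Sym2ProjNorm} term by term and expand its right-hand side, obtaining for every pair $(j,k)$
\begin{align}
\|\caQ(A_j\otimes B_k)\|\leq 2\tr(A_j)\tr(B_k)+2\tr(A_jB_k)+2\sqrt{\tr(A_j)\tr(B_k)\tr(A_jB_k)}.
\end{align}
Next I would sum over $j,k$; by linearity and the normalizations $\sum_j A_j=\sum_k B_k=1$, the first two sums reduce to $\sum_{j,k}\tr(A_j)\tr(B_k)=[\tr(1)]^2=d^2$ and $\sum_{j,k}\tr(A_jB_k)=\tr(1)=d$, while the third sum is exactly $\Phi_{1/2}(\scrA,\scrB)$ by \eref{eq:POVMcrossFP} with $t=1/2$. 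This yields $\sum_{j,k}\|\caQ(A_j\otimes B_k)\|\leq 2d(d+1)+2\Phi_{1/2}(\scrA,\scrB)$, and dividing by $d(d+1)(d+2)$ gives \eref{eq:FidProdUBgen}.

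For the equality case, the bound is tight precisely when every term-by-term use of \Lref{lem:Sym2ProjNorm} is tight, i.e.\ when for each pair $(j,k)$ either (i) $A_j$ and $B_k$ are both rank~1, or (ii) $A_j$ and $B_k$ have orthogonal supports with one of them rank~1. If $\scrA$ and $\scrB$ are both rank~1, (i) holds for all pairs and the bound is saturated. Conversely, assume the bound is saturated but some $A_{j_0}$ has $\rk(A_{j_0})\geq 2$. Then for every $k$ the pair $(j_0,k)$ must satisfy (ii), so each $B_k$ is rank~1 with support orthogonal to $\supp(A_{j_0})$; hence $\supp(A_{j_0})$ is orthogonal to $\spa\bigl(\bigcup_k\supp(B_k)\bigr)=\caH$ (the last equality because $\sum_k B_k=1$ has full range), forcing $A_{j_0}=0$, a contradiction. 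So every element of $\scrA$ is rank~1, and by the same argument so is every element of $\scrB$.

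The computation itself is short, and the only step needing a little care is the equality analysis, where one must exclude the "mixed" scenario in which some POVM element has rank $\geq2$; the support-plus-normalization argument above handles this cleanly. Essentially all the analytic content is already packaged in \Lref{lem:Sym2ProjNorm} and \eref{eq:FidPOVM2}, so I do not anticipate a genuine obstacle.
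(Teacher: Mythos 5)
Your proposal is correct and follows essentially the same route as the paper: specialize \eref{eq:FidPOVM2} to $N=2$, apply \lref{lem:Sym2ProjNorm} term by term, use the normalizations $\sum_j A_j=\sum_k B_k=1$ to evaluate the first two sums and identify the remaining sum with $\Phi_{1/2}(\scrA,\scrB)$, and handle equality via the saturation conditions of \lref{lem:Sym2ProjNorm}. Your equality analysis (ruling out an element of rank $\geq 2$ by noting all $B_k$ would then have to be rank~1 and orthogonal to its support, contradicting $\sum_k B_k=1$) is exactly the paper's contradiction argument, just spelled out in slightly more detail.
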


The following lemma is an immediate corollary of  \lref{lem:FidprodPOVM}  with $\scrB=\scrA$. 
\begin{lem}\label{lem:Fid2copyFP}
Let $\scrA$ be a POVM on  $\caH$. 
Then the two-copy estimation fidelity $F(\scrA^{\otimes 2})$ satisfies 
\begin{equation}\label{eq:Fid2copyFP}
F(\scrA^{\otimes 2})\leq \frac{2d(d+1)+2\Phi_{1/2}(\scrA)}{d(d+1)(d+2)},
\end{equation}	
and the upper bound is saturated iff  $\scrA$ is rank~1. 	
\end{lem}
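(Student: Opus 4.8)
The plan is to read this off \lref{lem:FidprodPOVM} by setting $\scrB=\scrA$. By definition $F(\scrA\otimes\scrA)=F(\scrA^{\otimes 2})$, and the cross frame potential of a pair collapses to the ordinary frame potential, $\Phi_{1/2}(\scrA,\scrA)=\Phi_{1/2}(\scrA)$ (with $B_k=A_k$ the defining sum in \eref{eq:POVMcrossFP} is exactly that in \eref{eq:POVMFP}). Substituting these two identities into \eref{eq:FidProdUBgen} produces \eref{eq:Fid2copyFP} verbatim. Likewise, the saturation condition in \lref{lem:FidprodPOVM}---that both $\scrA$ and $\scrB$ be rank~1---becomes simply the requirement that $\scrA$ be rank~1.

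Alternatively, one can derive the bound directly, which is perhaps more transparent. From \eref{eq:FidPOVM2} with $N=2$ one has $F(\scrA^{\otimes 2})=[d(d+1)(d+2)]^{-1}\sum_{j,k}\|\caQ(A_j\otimes A_k)\|$. Bounding each summand by \lref{lem:Sym2ProjNorm} gives $\|\caQ(A_j\otimes A_k)\|\leq 2[\tr(A_j)\tr(A_k)+\tr(A_jA_k)+\sqrt{\tr(A_j)\tr(A_k)\tr(A_jA_k)}\lsp]$. Summing over $j,k$ and using $\sum_j\tr A_j=d$ (so $\sum_{j,k}\tr(A_j)\tr(A_k)=d^2$), $\sum_{j,k}\tr(A_jA_k)=\tr[(\sum_j A_j)^2]=\tr(1)=d$, and $\sum_{j,k}\sqrt{\tr(A_j)\tr(A_k)\tr(A_jA_k)}=\Phi_{1/2}(\scrA)$ (the $t=1/2$ case of \eref{eq:POVMFP}), the total numerator becomes $2[d(d+1)+\Phi_{1/2}(\scrA)]$, which gives \eref{eq:Fid2copyFP}.

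The one place that needs a moment of care is the equality analysis in the direct approach. Equality forces every summand to saturate \lref{lem:Sym2ProjNorm}; inspecting the diagonal terms $j=k$, condition~(2) there (orthogonal supports with one operator rank~1) is impossible for a single nonzero $A_j$, so condition~(1) must hold and each $A_j$ is rank~1, i.e.\ $\scrA$ is rank~1. Conversely, if $\scrA$ is rank~1 then every pair $(A_j,A_k)$ meets condition~(1) and all summands saturate. I do not expect a genuine obstacle here: the substantive content lives in \lref{lem:Sym2ProjNorm} and \lref{lem:FidprodPOVM}, and the present lemma is essentially a one-line specialization whose only subtlety is noticing that the diagonal terms already pin down rank-1-ness in the equality case.
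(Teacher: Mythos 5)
Your first route is exactly the paper's proof: the paper obtains this lemma as an immediate corollary of \lref{lem:FidprodPOVM} with $\scrB=\scrA$, using $\Phi_{1/2}(\scrA,\scrA)=\Phi_{1/2}(\scrA)$, and the saturation condition specializes just as you say. Your alternative direct derivation is also sound---it essentially reruns the proof of \lref{lem:FidprodPOVM} in the case $\scrB=\scrA$, and your diagonal-term observation (condition~2 of \lref{lem:Sym2ProjNorm} is impossible for a pair $(A_j,A_j)$ with $A_j\neq 0$) is a clean shortcut for the equality analysis.
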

 \Lref{lem:Fid2copyFP}  establishes a precise connection between the two-copy estimation fidelity $F(\scrA^{\otimes 2})$ and the frame potential $\Phi_{1/2}(\scrA)$ [instead of the frame potential $\Phi_2(\scrA)$ as one may naively expect] and thereby  endowing the frame potential with a concrete operational meaning in the context of quantum state estimation. Similarly, \lref{lem:FidprodPOVM}
endows the cross frame potential with a concrete operational meaning.

\subsection{Impact of coarse graining}
Suppose $\scrA$ is a rank-1 POVM on $\caH$; then any nontrivial coarse graining of $\scrA$ can only achieve a smaller estimation fidelity according to \lref{lem:OneCopyFidLBUB}. The situation is a bit more complicated for the two-copy estimation fidelity.
Here we try to shed some light on this problem, which will be useful to studying the connection between the  estimation fidelity and quantum incompatibility, as we shall see in \sref{sec:incompatibility}. 

\begin{lem}\label{lem:SymProjNormSum}
	Suppose $A, B_1, B_2$ are rank-1 positive operators on $\caH$ and $B=B_1 +B_2$. Then
	\begin{align}\label{eq:SymProjNormSum}
	\|\caQ(A\otimes B)\|\leq  \|\caQ(A\otimes B_1 )\|+\|\caQ(A\otimes B_2)\|,
	\end{align}
	and the inequality is saturated iff one of the following three conditions holds:
	\begin{enumerate}
		\item  $B_2$ is proportional to $B_1$;
		\item $A$ is orthogonal to both $B_1$ and $B_2$;
		\item  $A, B_1, B_2$ are all supported in a common two-dimensional subspace of $\caH$ and $A$ is orthogonal to $B_1$ or $B_2$.
	\end{enumerate}
\end{lem}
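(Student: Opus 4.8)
The plan is to reduce the operator-norm inequality to an eigenvector argument and then analyze the equality case by a case distinction on the geometry of the three rank-1 operators. First I would write $A=a\outer{\alpha}{\alpha}$, $B_1=b_1\outer{\beta_1}{\beta_1}$, $B_2=b_2\outer{\beta_2}{\beta_2}$ with unit vectors and positive coefficients, and recall from \eref{eq:QAB} that $\caQ(A\otimes B)$ is a sum of six terms, each linear in $B$ (three of them, namely $\tr(B)A$, $\tr(AB)$, and — grouped together — $AB+BA$, together with $\tr(A)B$ and $\tr(A)\tr(B)$). Since $B=B_1+B_2$, linearity gives $\caQ(A\otimes B)=\caQ(A\otimes B_1)+\caQ(A\otimes B_2)$ exactly. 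Hence the inequality \eqref{eq:SymProjNormSum} is nothing but the triangle inequality for the operator norm, $\|X+Y\|\le\|X\|+\|Y\|$ with $X=\caQ(A\otimes B_1)$ and $Y=\caQ(A\otimes B_2)$, both of which are positive semidefinite operators on $\caH^{\otimes 2}$. The first part of the lemma is then immediate.

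The substance is the equality condition. For positive operators $X,Y$ one has $\|X+Y\|=\|X\|+\|Y\|$ if and only if there is a common unit vector $\ket{v}$ with $X\ket{v}=\|X\|\ket{v}$ and $Y\ket{v}=\|Y\|\ket{v}$; equivalently, the top eigenspaces of $X$ and $Y$ intersect. So the task is to identify, for $X=\caQ(A\otimes B_1)$, its top eigenspace. Here I would invoke \lref{lem:Sym2ProjNorm} (and its proof, which presumably exhibits the maximizing vector): $\|\caQ(A\otimes B_1)\|=2a b_1(1+f_1+\sqrt{f_1})$ with $f_1=|\inner{\alpha}{\beta_1}|^2$, and the maximizing vector lies in the two-dimensional subspace $\spa\{\ket{\alpha},\ket{\beta_1}\}$ (it is essentially the symmetrization of a suitable superposition of $\ket{\alpha}$ and $\ket{\beta_1}$, reducing to $\ket{\alpha\alpha}$ when $f_1=1$ and to the symmetric Bell-like vector in that plane when $f_1=0$). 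The key structural fact I will extract and use is: the top eigenspace of $\caQ(A\otimes B_1)$ is contained in $\spa\{\ket{\alpha},\ket{\beta_1}\}^{\otimes 2}\cap(\text{symmetric subspace})$, and it is one-dimensional unless $f_1\in\{0,1\}$ forces a degeneracy — the cases $A\perp B_1$ and $A\propto B_1$ needing separate bookkeeping. The same holds with $B_1$ replaced by $B_2$.

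Then the equality condition becomes a question about when these two (at most two-dimensional-ambient, generically one-dimensional) top eigenspaces can share a vector. I would organize the case analysis as follows. If $\ket{\beta_2}\propto\ket{\beta_1}$ (condition 1), then $B_2\propto B_1$ and the two operators $X,Y$ are positive multiples of the same operator, so equality is trivial. If $A$ is orthogonal to both $B_1$ and $B_2$ (condition 2), then $f_1=f_2=0$; here one checks that in each plane the top eigenvector is the symmetric combination of $\ket\alpha$ with $\ket{\beta_i}$, and that the common choice $\ket\alpha\otimes(\cdots)$... more precisely one verifies directly (using \eref{eq:QABtrace1}-type expansion) that the vector realizing the max for $X$ also realizes it for $Y$ exactly when the orthogonality holds — this is the delicate computational check. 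The genuinely constraining case is condition 3: if $f_1,f_2$ are not both in $\{0,1\}$, then each top eigenspace is one-dimensional and lies in its own plane $\spa\{\ket\alpha,\ket{\beta_i}\}^{\otimes2}$; for the two lines to coincide one needs the two planes to coincide, i.e. $\ket{\alpha},\ket{\beta_1},\ket{\beta_2}$ all lie in one two-dimensional subspace $V\subset\caH$, and then within the single symmetric subspace $\mathrm{Sym}^2(V)$ (which is three-dimensional) one computes the two top eigenvectors explicitly and asks when they agree — the answer, to be verified, is that this forces $\ket\alpha\perp\ket{\beta_1}$ or $\ket\alpha\perp\ket{\beta_2}$. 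I expect the main obstacle to be precisely this last explicit diagonalization inside $\mathrm{Sym}^2(V)$: one must parametrize $\ket{\beta_1},\ket{\beta_2}$ by their overlaps with $\ket\alpha$ (and a relative phase), write out the two $\caQ$ operators as $3\times3$ matrices, find their Perron eigenvectors, and determine the locus where these coincide — a finite but somewhat intricate computation, best handled by restricting to the plane from the start so that everything is at most $3\times3$, and by exploiting the reflection symmetry $\ket\alpha\mapsto$ (reflection) to cut down the algebra. The remaining degenerate subcases ($f_i=1$, i.e. some $B_i\propto A$, or $f_i=0$ with the other nonzero) I would dispatch separately, checking that they fall under conditions 1–3 as stated (e.g. $B_1\propto A$ with $f_2\notin\{0,1\}$ cannot give equality, consistent with the list).
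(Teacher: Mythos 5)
Your overall skeleton coincides with the paper's proof: $\caQ(A\otimes B)=\caQ(A\otimes B_1)+\caQ(A\otimes B_2)$ by linearity, the triangle inequality for positive operators, saturation iff the two summands share a top eigenvector, and then a geometric case analysis. The concrete problem is your identification of the object being analyzed: $\caQ(A\otimes B_j)$ is \emph{not} an operator on $\caH^{\otimes 2}$. By the definition $\caQ(O)=(N+1)!\,\tr_{1,\ldots,N}[P_{N+1}(O\otimes 1)]$ and by \eref{eq:QAB}, $\caQ(A\otimes B)=\tr(A)\tr(B)+\tr(AB)+\tr(B)A+\tr(A)B+AB+BA$ acts on a \emph{single} copy $\caH$. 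Hence your structural claims about its top eigenspace — that it sits inside $\spa\{\ket{\alpha},\ket{\beta_j}\}^{\otimes 2}$ intersected with the symmetric subspace, that the maximizer is $\ket{\alpha\alpha}$ or a symmetric Bell-like two-copy vector, and that degeneracy occurs when $f_j\in\{0,1\}$ — refer to the wrong operator, and the degeneracy claim is false for the actual one: normalizing $A,B_j$ to trace-one projectors, \eref{eq:QABtrace1} gives $\caQ(A\otimes B_j)=(1+f_j)+(A+B_j)^2$ (up to the factor $\tr(A)\tr(B_j)$), so the top eigenvector is that of $A+B_j$; it is \emph{nondegenerate} whenever $f_j>0$ (the "bisector" in the plane $\spa\{\ket{\alpha},\ket{\beta_j}\}$, equal to $\ket{\alpha}$ when $f_j=1$), and the top eigenspace is the whole plane precisely when $f_j=0$, i.e. when $A\perp B_j$. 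The paper's saturation argument hinges on exactly this pattern: conditions 2 and 3 then give a common top eigenvector immediately ($\ket{\alpha}$, respectively any top eigenvector of the nondegenerate factor, which lies in the common plane), and when none of conditions 1–3 holds either the two nondegenerate bisector lines live in distinct planes meeting only along $\bbC\ket{\alpha}$ (which is a top eigenvector only when $f_j\in\{0,1\}$), or they live in one plane and a two-dimensional Bloch-vector computation [cf.~\eref{eq:QABqubit}] shows they coincide only if $B_1\propto B_2$.

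So the $3\times 3$ diagonalization inside $\mathrm{Sym}^2(V)$ that you single out as the main obstacle is not needed, and if carried out as written it would diagonalize the wrong operators; once the single-copy nature of $\caQ$ is restored, your plan collapses to the paper's $2\times 2$ argument and goes through. Two smaller repairs: the "delicate computational check" you anticipate for condition 2 is trivial in the correct picture (when $A\perp B_j$ the top eigenspace is the whole plane, which contains $\ket{\alpha}$); and the inference "for the two one-dimensional top eigenspaces to coincide the two planes must coincide" has a loophole (both lines could equal $\bbC\ket{\alpha}$ inside the intersection of distinct planes), which is precisely the $f_j\in\{0,1\}$ bookkeeping you defer to the end and must fold back into conditions 1–3.
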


\begin{lem}\label{lem:SymProjNormSumPOVM}
	Suppose  $\scrA=\{A_j\}_{j=1}^m$ is a simple rank-1 POVM on $\caH$; let  $B_1, B_2,\ldots, B_n$ be $n\geq 2$ rank-1 positive operators on $\caH$ that are pairwise linearly independent,  and let $B=\sum_{k=1}^n B_k$. Then
	\begin{align}\label{eq:SymProjNormSumPOVM}
	\sum_{j=1}^m\|\caQ(A_j \otimes B)\|\leq  \sum_{j=1}^m \sum_{k=1}^n\|\caQ(A_j\otimes B_k )\|,
	\end{align}
	and the inequality is saturated iff
	$B_1, B_2,\ldots, B_n$ are mutually orthogonal
	and they commute with all POVM elements in $\scrA$. In that case, $\scrA$ contains $n$ rank-1 projectors that are proportional to $B_1, B_2, \ldots, B_n$, respectively, and is thus reducible. 	
\end{lem}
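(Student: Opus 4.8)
The plan is to bootstrap from the single-operator inequality in \lref{lem:SymProjNormSum}. First I would write $B=\sum_{k=1}^n B_k$ and apply \lref{lem:SymProjNormSum} iteratively (peeling off one $B_k$ at a time, or more cleanly by induction on $n$) to obtain, for each fixed $j$, the pointwise bound $\|\caQ(A_j\otimes B)\|\leq\sum_{k=1}^n\|\caQ(A_j\otimes B_k)\|$; summing over $j$ yields \eref{eq:SymProjNormSumPOVM}. The subtle point in this step is that iterating \lref{lem:SymProjNormSum} requires applying it with the second argument replaced by a partial sum $B_1+\cdots+B_{k}$, which is rank $\leq k$ rather than rank~1; I would instead induct so that at each stage the \emph{newly split-off} summand is a single rank-1 operator and the accumulated remainder is the other term, which is exactly the form \lref{lem:SymProjNormSum} handles once one checks that the statement there only used rank-1-ness of the operator being split (here $B_k$) and not of the remainder. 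If \lref{lem:SymProjNormSum} as stated genuinely needs all three of $A,B_1,B_2$ rank~1, I would instead prove the general subadditivity $\|\caQ(A\otimes(X+Y))\|\le\|\caQ(A\otimes X)\|+\|\caQ(A\otimes Y)\|$ directly from triangle inequality for the operator norm together with linearity of $\caQ$ in its argument (which is visible from \eref{eq:QAB}), since $\caQ(A\otimes B)$ is linear in $B$; that gives the inequality with no rank assumption on the remainder at all, and the rank-1 hypotheses are then only needed for the equality analysis.

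Next comes the equality characterization, which is the heart of the lemma. Equality in \eref{eq:SymProjNormSumPOVM} forces equality in $\|\caQ(A_j\otimes B)\|\leq\sum_k\|\caQ(A_j\otimes B_k)\|$ for \emph{every} $j$. For each $j$ I would unfold this chain of triangle inequalities: equality in a sum of norms $\|\sum_k M_{jk}\|=\sum_k\|M_{jk}\|$ (with $M_{jk}=\caQ(A_j\otimes B_k)$, all positive semidefinite by \eref{eq:QAB}) forces all the $M_{jk}$ to share a common leading eigenvector $|v_j\>$ — i.e., $|v_j\>$ is a top eigenvector of each $\caQ(A_j\otimes B_k)$ simultaneously. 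Now I invoke the equality conditions of \lref{lem:SymProjNormSum} for each pair $(A_j,B_k)$: since $\scrA$ is simple rank~1, each $A_j$ is rank~1, so for the pair $(A_j,B_k)$ to attain its bound we need either $B_k$ rank~1 (automatic, given) — wait, that's the "both rank 1" case, which is always available — so \lref{lem:SymProjNormSum} by itself does not pin anything down. The real constraint is the \emph{joint} one: the same $|v_j\>$ must be the top eigenvector of $\caQ(A_j\otimes B_k)$ for all $k$ at once. Using \eref{eq:QABtrace1} (after normalizing traces) one computes that the top eigenvector of $\caQ(A_j\otimes B_k)$ for two rank-1 operators lies in $\spa\{|a_j\>,|b_k\>\}$ and is determined by the overlap; requiring one vector $|v_j\>$ to work for all $k=1,\dots,n$ with the $B_k$ pairwise linearly independent should force either $A_j\perp B_k$ for all $k$, or all the $B_k$ (and $A_j$) to lie in one common 2-plane with $A_j$ orthogonal to all but at most one of them. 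Then the hypothesis $n\ge 2$ plus pairwise linear independence rules out the degenerate sub-case and forces the $B_k$ pairwise orthogonal and each $A_j$ commuting with each $B_k$. Running this over all $j$ gives: every POVM element of $\scrA$ commutes with the set $\{B_k\}$, which is a set of pairwise-orthogonal rank-1 positive operators; by \lref{lem:commutePOVMrank1Proj} (and the remark following it) $\scrA$ then contains rank-1 projectors proportional to $B_1,\dots,B_n$ and is reducible. Conversely, if the $B_k$ are mutually orthogonal and commute with all $A_j$, one checks directly (diagonalizing everything simultaneously, using \eref{eq:QABtrace1} in the common eigenbasis) that each $\|\caQ(A_j\otimes B)\|$ splits exactly into $\sum_k\|\caQ(A_j\otimes B_k)\|$, giving equality.

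The main obstacle I anticipate is the simultaneous-top-eigenvector analysis in the equality case: extracting from "$|v_j\>$ is a common leading eigenvector of $\caQ(A_j\otimes B_1),\dots,\caQ(A_j\otimes B_n)$" the clean conclusion "$A_j$ commutes with each $B_k$ and the $B_k$ are mutually orthogonal" requires a careful case split, because two rank-1 operators can meet the bound in \lref{lem:SymProjNormSum} in two inequivalent ways (both rank~1; or orthogonal supports with one rank~1), and the 2-plane sub-case (condition~3 there) is exactly where spurious non-reducible configurations could hide. The role of the hypothesis $n\ge2$ is to eliminate that loophole — with at least two linearly independent $B_k$'s forced into a common 2-plane, the orthogonality constraints become overdetermined — so I would spell out that step explicitly rather than treat it as routine. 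Everything else (the subadditivity, the converse direction, and the appeal to \lref{lem:commutePOVMrank1Proj} for the reducibility conclusion) is bookkeeping by comparison.
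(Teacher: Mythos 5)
Your treatment of the inequality and of the "if" direction is fine and matches the paper: \eref{eq:SymProjNormSumPOVM} indeed follows from linearity of $\caQ$ plus the triangle inequality (no rank assumptions needed), and saturation in the mutually orthogonal, commuting case is a direct check via \eref{eq:QAB}. The gap is in the "only if" direction. Your reduction of per-$j$ saturation to a common leading eigenvector $|v_j\>$ of the positive operators $\caQ(A_j\otimes B_k)$ is correct, but the case analysis you build on it does not hold. First, the stated dichotomy is not exhaustive: the genuinely saturating configurations have $A_j$ proportional to one $B_k$ and orthogonal to the rest, so for $n\ge 3$ the operators $A_j,B_1,\ldots,B_n$ span an $n$-dimensional space rather than a common 2-plane, yet $|v_j\>=|a_j\>$ works. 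Second, and fatally, the per-$j$ condition cannot "force the $B_k$ pairwise orthogonal": if a particular $A_j$ is orthogonal to every $B_k$, then $|a_j\>$ lies in the (two-fold degenerate) top eigenspace of every $\caQ(A_j\otimes B_k)$, so that $j$ saturates its inequality no matter what the angles between the $B_k$ are. Mutual orthogonality is therefore not a per-$j$ consequence at all; it can only emerge when the conditions for all $j$ are combined with the completeness relation $\sum_j A_j=1$, which your argument never invokes (the brief appeal to the equality conditions of \lref{lem:SymProjNormSum} "for each pair $(A_j,B_k)$" is also a mix-up, as you noticed mid-proof, since that lemma is about splitting the second tensor factor).

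That global step is exactly where the paper's proof does the work: it reduces to one pair $(B_1,B_2)$ at a time by writing $C=B-B_1-B_2$, applies the saturation conditions of \lref{lem:SymProjNormSum} to each $A_j$, and concludes that every $A_j$ is supported either in the two-dimensional subspace $\caV_{12}$ spanned by the supports of $B_1,B_2$ (and then orthogonal to $B_1$ or $B_2$) or in $\caV_{12}^\bot$; the elements of the first kind must then sum to the projector $P_{12}$ onto $\caV_{12}$, which is impossible unless $B_1\perp B_2$, and this is repeated for every pair $(B_k,B_l)$ before invoking \lref{lem:commutePOVMrank1Proj} for the containment and reducibility claims. Your skeleton (common top eigenvector, then \lref{lem:commutePOVMrank1Proj}) could be salvaged, but only by adding an argument of this completeness type, e.g., using $\sum_j\tr(A_jB_k)=\tr(B_k)>0$ to locate elements non-orthogonal to each $B_k$ and then exploiting uniqueness of the top eigenvector in the non-orthogonal case together with $\sum_j A_j=1$; as written, the step your sketch labels "should force" is precisely the unproved heart of the lemma.
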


Note that the inequality in \eref{eq:SymProjNormSumPOVM} cannot be saturated if $\scrA$ is irreducible or $B_1, B_2,\ldots, B_n$ are not mutually orthogonal. This observation leads to the following lemma.

\begin{lem}\label{lem:FidCoarsegraining}
	Suppose $\scrA$ and $\scrC$ are two  rank-1  POVMs on $\caH$ and  $\scrB$ is a coarse graining of $\scrC$. Suppose $\scrA$ is irreducible or $\scrC$ contains no two POVM elements that are mutually orthogonal. Then 
	\begin{align}\label{eq:ABAC}
	F(\scrA\otimes \scrB)\leq F(\scrA\otimes \scrC),
	\end{align}
	and the inequality is saturated iff $\scrB$ is equivalent to $\scrC$. 
\end{lem}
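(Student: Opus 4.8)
The plan is to compute everything from the closed form \eqref{eq:FidPOVM2}: for a two-copy product POVM it reads $F(\scrA\otimes\scrB)=\bigl[d(d+1)(d+2)\bigr]^{-1}\sum_{j,i}\|\caQ(A_j\otimes B_i)\|$, so the whole statement reduces to comparing $\sum_{j,i}\|\caQ(A_j\otimes B_i)\|$ with $\sum_{j,k}\|\caQ(A_j\otimes C_k)\|$. The case $d=1$ is trivial (all POVMs are equivalent and $F$ is constant), so assume $d\ge 2$. First I would reduce to the case where $\scrA$ and $\scrC$ are simple: every POVM is equivalent to a simple one (\lref{lem:SimplePOVM}), equivalent POVMs — hence tensor products of equivalent POVMs — achieve the same estimation fidelity (\lref{lem:FidPOVMbasic}), and passing to the simple version of a rank-1 POVM changes neither the irreducibility of $\scrA$, nor the absence of an orthogonal pair in $\scrC$, nor the relation "$\scrB$ is equivalent to $\scrC$"; moreover $\scrB$ stays a coarse graining of the simplified $\scrC$ by transitivity of $\preceq$. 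Write $B_i=\sum_k\Lambda_{ik}C_k$ with $\Lambda$ column-stochastic, and set $S_i=\{k:\Lambda_{ik}>0\}$, which is nonempty because $B_i\ne 0$.

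For the inequality, fix $i$. If $|S_i|=1$ then $\|\caQ(A_j\otimes B_i)\|=\Lambda_{ik}\|\caQ(A_j\otimes C_k)\|$ trivially for the unique $k\in S_i$. If $|S_i|\ge 2$, apply \lref{lem:SymProjNormSumPOVM} to the rank-1 operators $\{\Lambda_{ik}C_k\}_{k\in S_i}$ (pairwise linearly independent since $\scrC$ is simple), whose sum is $B_i$, to get $\sum_j\|\caQ(A_j\otimes B_i)\|\le\sum_j\sum_{k\in S_i}\Lambda_{ik}\|\caQ(A_j\otimes C_k)\|$. Summing over $i$ and regrouping, the column sums $\sum_i\Lambda_{ik}=1$ collapse the right-hand side to $\sum_{j,k}\|\caQ(A_j\otimes C_k)\|$, which is \eqref{eq:ABAC}.

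For equality, observe that the bound is tight iff it is tight term by term in the $i$-sum, hence iff for every $i$ with $|S_i|\ge 2$ equality holds in the corresponding instance of \lref{lem:SymProjNormSumPOVM}; by that lemma this forces $\{C_k\}_{k\in S_i}$ to be mutually orthogonal \emph{and} to commute with every POVM element of $\scrA$. The first alternative contradicts the hypothesis that $\scrC$ has no two mutually orthogonal POVM elements; the second exhibits a nonempty set of pairwise linearly independent rank-1 positive operators commuting with the simple rank-1 POVM $\scrA$, which by the remark following \lref{lem:commutePOVMrank1Proj} (using $d\ge2$) forces $\scrA$ to be reducible, contradicting the other hypothesis. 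So under either hypothesis no $i$ has $|S_i|\ge2$; then $\Lambda_{ik}\Lambda_{il}=0$ for all $k\ne l$, and since distinct $C_k$ are linearly independent, \lref{lem:EquivalentPOVM} (equivalence of its conditions 1 and 3) gives that $\scrB$ is equivalent to $\scrC$. Conversely, if $\scrB$ is equivalent to $\scrC$, tensoring a coarse-graining matrix witnessing $\scrB\preceq\scrC$ with the trivial relabeling on $\scrA$ (and symmetrically) shows $\scrA\otimes\scrB$ is equivalent to $\scrA\otimes\scrC$, so they have equal estimation fidelity by \lref{lem:FidPOVMbasic}.

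The only step that needs genuine care is the equality analysis: one must \emph{not} apply \lref{lem:SymProjNormSumPOVM} to all of $\scrC$ at once, since the groups $S_i$ may overlap, but run it group by group, where the termwise structure of the $i$-sum makes the equality conditions decouple; the structural ingredient that turns "$\{C_k\}_{k\in S_i}$ commutes with $\scrA$" into "$\scrA$ is reducible" is exactly the remark after \lref{lem:commutePOVMrank1Proj}. Everything else is bookkeeping, given that the hard estimates are already packaged in \lsref{lem:SymProjNormSum} and \ref{lem:SymProjNormSumPOVM}.
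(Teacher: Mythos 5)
Your proposal is correct and follows essentially the same route as the paper's proof: per-row norm-subadditivity bounds summed using the stochasticity of $\Lambda$, with saturation analyzed through the equality condition of \lref{lem:SymProjNormSumPOVM} played off against the irreducibility/no-orthogonal-pair hypothesis, and equivalence of $\scrB$ and $\scrC$ concluded via \lref{lem:EquivalentPOVM}. The only difference is presentational: the paper argues the equality direction contrapositively (non-equivalence yields a rank-$\geq 2$ element of $\scrB$, making the corresponding row strictly suboptimal), whereas you argue directly that equality forces every row of $\Lambda$ to have a single nonzero entry; the substance is identical.
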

\Lref{lem:FidCoarsegraining} shows that the two-copy estimation fidelity $F(\scrA\otimes \scrC)$ can only decrease when $\scrC$ is replaced by a nontrivial coarse graining, assuming that $\scrA$ is irreducible or $\scrC$ contains no two POVM elements that are mutually orthogonal. Nevertheless, this conclusion no longer holds if the underlying assumption is dropped; cf. \eref{eq:FidProjective} in the next section.

\section{\label{sec:TypicalQM}Typical quantum measurements in the light of quantum state estimation}   
\Lref{lem:OneCopyFidLBUB} in \sref{sec:FidCal} offers a succinct characterization of rank-1 measurements as optimal single-copy measurements. Here we further demonstrate that the estimation fidelity  in \esref{eq:FidPOVM}{eq:FidPOVM2} is a powerful tool for characterizing  typical quantum measurements, including rank-1 projective measurements, MUMs, SICs, and measurements based on tight equiangular lines. Notably, all these elementary quantum measurements are uniquely characterized by extremal one-copy and two-copy  estimation fidelities as summarized in \tref{tab:POVMtypical}. In other words, all these   measurements are uniquely determined by their information-extraction capabilities. To achieve our goal, we shall completely characterize all quantum measurements that can attain certain extremal estimation fidelities in a number of natural settings. Note that it is not enough to determine one optimal measurement.

\subsection{Optimal collective measurements and $t$-designs}

As a generalization of \lref{lem:OneCopyFidLBUB}, the following theorem determines tight lower bound and upper bound for the estimation fidelity $F(\scrA)$ for $N$-copy measurements, assuming that all collective measurements are accessible. 
In addition, optimal $N$-copy measurements are clarified. 
\begin{thm}\label{thm:nCopyFidUB}
Let $\scrA=\{A_j\}_j$ be any POVM 	
 on $\caH^{\otimes N}$. Then the $N$-copy estimation fidelity  $F(\scrA)$ satisfies 
	\begin{equation}
\frac{1}{d}\leq	F(\scrA)\leq \frac{N+1}{N+d}, \label{eq:nCopyFidLBUB}
	\end{equation}	
and the lower bound is saturated iff $\tilde{\caQ}(A_j)$ for each $j$ is proportional to the identity, while	
	 the upper bound is saturated iff $P_N A_j P_N$ for each $j$ is  proportional to the $N$th tensor power of a pure state. 
\end{thm}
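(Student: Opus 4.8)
The plan is to first obtain a closed expression for $\tilde{\caQ}(A_j)$ in terms of the symmetric compression $\tilde{A}_j:=P_N A_j P_N$, and then read off both bounds and their saturation conditions from elementary operator inequalities. Since $P_{N+1}=(P_N\otimes1)P_{N+1}(P_N\otimes1)$, cyclicity of the trace gives $\tilde{\caQ}(A_j)=\tilde{\caQ}(\tilde{A}_j)$, so only $\tilde{A}_j$ matters --- which already explains why the theorem phrases its conditions in terms of $P_N A_j P_N$. The key identity I would establish is
\begin{align}
\tilde{\caQ}(\tilde{A}_j)=\frac{1}{N+1}\bigl[\tr(\tilde{A}_j)\,1+N\,\tr_{2,\ldots,N}(\tilde{A}_j)\bigr],
\end{align}
which specializes to \eref{eq:QA} when $N=1$. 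To prove it I would write $(N+1)P_{N+1}=(P_N\otimes1)\bigl(1+\sum_{k=1}^{N}S_{k,N+1}\bigr)(P_N\otimes1)$ (the coset decomposition of $S_{N+1}$ over the stabilizer of the last factor, $S_{k,N+1}$ being the transposition of factors $k$ and $N+1$), evaluate the quadratic form $\bra{\phi}\tilde{\caQ}(\tilde{A}_j)\ket{\phi}=\tr[P_{N+1}(\tilde{A}_j\otimes\outer{\phi}{\phi})]$, apply the swap identity $\tr[S_{k,N+1}(\tilde{A}_j\otimes\outer{\phi}{\phi})]=\tr[\tilde{A}_j(1^{\otimes(k-1)}\otimes\outer{\phi}{\phi}\otimes1^{\otimes(N-k)})]$, and use that $\tilde{A}_j$ commutes with every permutation of the first $N$ factors (it is supported on $\mrm{Sym}^N(\caH)$, where permutations act trivially) to collapse the $k$-sum; recognizing $\tr[\tilde{A}_j(\outer{\phi}{\phi}\otimes1^{\otimes(N-1)})]=\bra{\phi}\tr_{2,\ldots,N}(\tilde{A}_j)\ket{\phi}$ then yields the displayed formula.

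Both summands in that identity are positive semidefinite, so $\tilde{\caQ}(A_j)\ge0$ for every $j$. For the lower bound I would use $\|\tilde{\caQ}(A_j)\|\ge\tr[\tilde{\caQ}(A_j)]/d$ (the top eigenvalue dominates the mean), sum over $j$, and note $\sum_j\tilde{\caQ}(A_j)=\tilde{\caQ}(1)=\tr_{1,\ldots,N}(P_{N+1})=(D_{N+1}/d)\,1$ by Schur's lemma and a trace count; this gives $F(\scrA)\ge1/d$, with equality iff all eigenvalues of each $\tilde{\caQ}(A_j)$ coincide, i.e.\ $\tilde{\caQ}(A_j)\propto1$ for every $j$. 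For the upper bound, the identity gives $\|\tilde{\caQ}(A_j)\|\le\frac{1}{N+1}\bigl[\tr(\tilde{A}_j)+N\|\tr_{2,\ldots,N}(\tilde{A}_j)\|\bigr]$, and since $\|\tr_{2,\ldots,N}(\tilde{A}_j)\|=\max_{\phi}\tr[\tilde{A}_j(\outer{\phi}{\phi}\otimes1^{\otimes(N-1)})]\le\tr(\tilde{A}_j)$ this collapses to $\|\tilde{\caQ}(A_j)\|\le\tr(\tilde{A}_j)$; summing and using $\sum_j\tilde{A}_j=P_N$, $\tr(P_N)=D_N$, we get $F(\scrA)\le D_N/D_{N+1}=(N+1)/(N+d)$.

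For saturation of the upper bound, equality for a given $j$ with $\tilde{A}_j\ne0$ forces $\tr[\tilde{A}_j((1-\outer{\phi_j}{\phi_j})\otimes1^{\otimes(N-1)})]=0$ for some unit vector $\ket{\phi_j}$, i.e.\ $\supp(\tilde{A}_j)\subseteq\ket{\phi_j}\otimes\caH^{\otimes(N-1)}$; intersecting this with $\mrm{Sym}^N(\caH)$ and invoking the small geometric fact that a permutation-symmetric vector lying in $\ket{\phi_j}\otimes\caH^{\otimes(N-1)}$ is fixed by $\outer{\phi_j}{\phi_j}$ acting on any single tensor factor --- hence proportional to $\ket{\phi_j}^{\otimes N}$ --- shows $P_N A_j P_N\propto(\outer{\phi_j}{\phi_j})^{\otimes N}$, and the converse is a one-line check from the closed form. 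I expect the main obstacle to be establishing that closed-form identity: getting the coset/swap decomposition of $P_{N+1}$ and the partial-trace bookkeeping exactly right, together with the geometric lemma just mentioned. Once the identity is in hand, the two bounds and both equality cases are routine.
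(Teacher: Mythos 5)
Your proposal is correct, and both equality characterizations come out right; it follows the paper's overall skeleton (bound each $\|\tilde{\caQ}(A_j)\|$ between $\tfrac{N+d}{d(N+1)}\tr(P_N A_j)$ and $\tr(P_N A_j)$, then sum using $\sum_j A_j=1$), but it replaces the paper's central technical lemma by a different device. The lower-bound half is essentially identical to the paper's (operator norm at least trace over $d$, with equality iff $\tilde{\caQ}(A_j)\propto 1$). For the upper bound, the paper does not compute $\tilde{\caQ}$ at all: it writes $\|\tilde{\caQ}(A_j)\|=\max_\rho\tr[P_{N+1}(A_j\otimes\rho)]$, sandwiches with $P_{N+1}$ to get $\tr[P_{N+1}(P_N A_j P_N\otimes\rho)P_{N+1}]\leq\tr(P_N A_j)\tr(\rho)$, and extracts the equality condition by noting that saturation forces $P_N A_j P_N\otimes\rho$ to be supported in the symmetric subspace of $\caH^{\otimes(N+1)}$, whence a partial-trace/support argument gives $P_N A_j P_N\propto\rho^{\otimes N}$. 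You instead prove the closed form $\tilde{\caQ}(\tilde A_j)=\tfrac{1}{N+1}[\tr(\tilde A_j)\,1+N\tr_{2,\ldots,N}(\tilde A_j)]$ for $\tilde A_j=P_N A_j P_N$ via the coset decomposition $(N+1)P_{N+1}=(P_N\otimes 1)\bigl(1+\sum_k S_{k,N+1}\bigr)(P_N\otimes 1)$ and the swap trick (correct, and consistent with the $N=1$ case $\caQ(A)=\tr(A)+A$), then read off $\|\tilde{\caQ}(A_j)\|\leq\tr(\tilde A_j)$ from $\|\tr_{2,\ldots,N}(\tilde A_j)\|\leq\tr(\tilde A_j)$; your equality analysis, via $\supp(\tilde A_j)\subseteq\ket{\phi_j}\otimes\caH^{\otimes(N-1)}$ intersected with the symmetric subspace, is a clean substitute for the paper's partial-trace argument (just record the trivial case $\tilde A_j=0$ explicitly). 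What each route buys: the paper's argument is shorter and avoids any permutation bookkeeping, while yours yields a reusable explicit formula for $\tilde{\caQ}$ on symmetrically compressed operators from which the norm, both bounds, and both saturation conditions are immediate — in fact your inequality $\|\tilde{\caQ}(A_j)\|\leq\tfrac{1}{N+1}[\tr(\tilde A_j)+N\|\tr_{2,\ldots,N}(\tilde A_j)\|]$ is an equality, since adding a multiple of the identity shifts the top eigenvalue.
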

The lower bound in \eref{eq:nCopyFidLBUB} corresponds to the performance of a random guess; the upper bound is well known in the context of optimal quantum state estimation \cite{MassP95,BrusM99,AcinLP00,HayaHH05}.   A self-contained proof of \thref{thm:nCopyFidUB} is presented in \aref{asec:OptCollProof}. 
When restricted to the $N$-partite symmetric subspace, a POVM is  optimal iff it has the form
\begin{align}\label{eq:OptimalCollPOVM}
\Bigl\{\frac{D_N }{d}w_j (|\psi_j\>\<\psi_j|)^{\otimes N}\Bigr\}_j, 
\end{align}
where $D_N$ is determined by  \eref{eq:Dt} and $\{|\psi_j\>,w_j\}_j$
 forms a  $t$-design with $t=N$ \cite{HayaHH05}. This observation establishes a simple connection between optimal collective measurements and quantum measurements based on $t$-designs.

 Incidentally,  prominent examples of 2-designs include CMUB \cite{Schw60,Ivan81,WootF89,DurtEBZ10,BengZ17book} and  SICs \cite{Zaun11,ReneBSC04,ScotG10,FuchHS17,BengZ17book}.  When the dimension $d$ is a power of 2, any orbit of the Clifford group is a 3-design. In the case of a qubit, the vertices of the regular tetrahedron, octahedron, cube, icosahedron, and dodecahedron inscribed  on the Bloch sphere form $t$-designs with $t=2,3,3,5,5$, respectively.  The vertices of the octahedron also correspond to a CMUB.
 These  platonic solids can be used to construct
 optimal collective measurements according to \eref{eq:OptimalCollPOVM} \cite{LatoPT98}. 
 Note that $D_t=t+1$ when $d=2$ by \eref{eq:Dt}.
 Suppose a platonic solid forms a $t$-design and let $\{\vec{r}_j\}_{j=1}^m$  be the set of unit vectors that specify its vertices on the Bloch sphere. Then the corresponding optimal collective measurements on the symmetric subspace of $\caH^{\otimes t}$ can be constructed as follows,
 \begin{align}
 \biggl\{\frac{t+1 }{m} \Bigl(\frac{1+\vec{r}_j\cdot\vec{\sigma}}{2}\Bigr)^{\otimes t}\biggr\}_{j=1}^m, 
 \end{align}
where $\vec{\sigma}=(\sigma_x,\sigma_y,\sigma_z)$ is the vector composed of the three Pauli operators.  The optimal two-copy collective measurement based on the regular tetrahedron has already been realized in photonic experiments \cite{HouTSZ18}. Optimal collective measurements 
based on other platonic solids might also be realized in the near future.

\subsection{Rank-1 projective measurements and SICs}

From now on we focus on the estimation fidelities of individual measurements, which are more instructive to understanding quantum  measurements on $\caH$. Given a POVM $\scrA$  on $\caH$, recall that $F(\scrA^{\otimes k})$ denotes the $k$-copy estimation fidelity achieved by the product POVM $\scrA^{\otimes k}$ built from  $\scrA$. Suppose $\scrA$ is a rank-1 projective measurement on $\caH$; by virtue of \lref{lem:OneCopyFidLBUB} and Eqs.~\eqref{eq:Qjk},  \eqref{eq:Qjkl}, it is straightforward to verify that 
\begin{align}\label{eq:FidProjective}
F(\scrA^{\otimes 2})=F(\scrA)=\frac{2}{d+1},\quad F(\scrA^{\otimes 3})
=\frac{2(d+5)}{(d+2)(d+3)}.
\end{align}
Interestingly,  identical projective measurements on two copies can only achieve the same estimation fidelity as a single-copy projective measurement, but  identical projective measurements on three copies can enhance the estimation fidelity.

\begin{figure}
	\includegraphics[width=7cm]{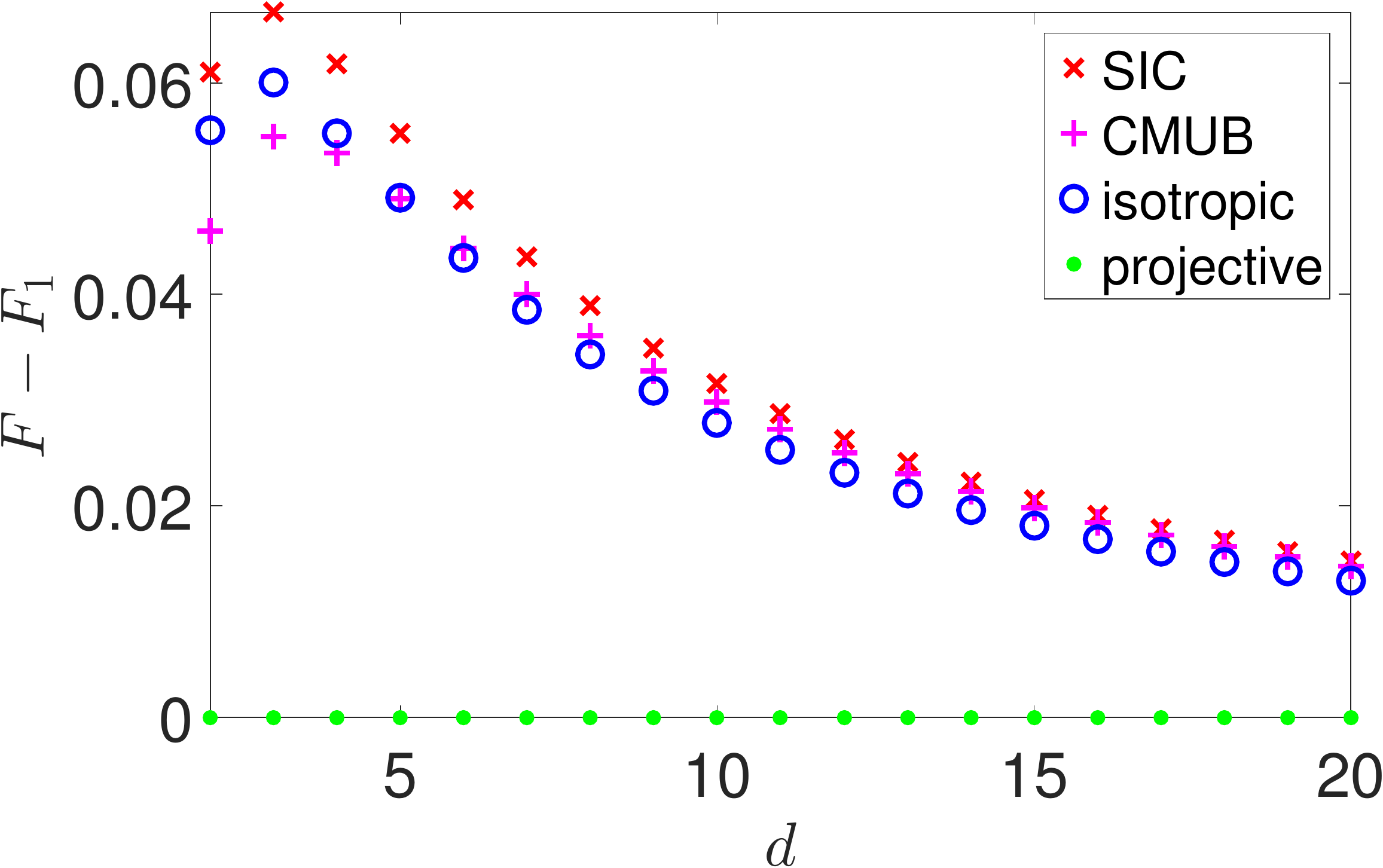}
	\caption{\label{fig:fidiid}
	Shifted two-copy estimation fidelities $F=F(\scrA^{\otimes 2})$, where $\scrA$ is a measurement constructed from an orthonormal basis (projective), SIC, CMUB, or Haar random pure states (isotropic).  Here $F_1=2/(d+1)$ is the one-copy estimation fidelity achieved by a rank-1 projective measurement (or any rank-1 measurement). 
	}
\end{figure}

\begin{table*}
	\caption{\label{tab:POVMtypical}
Operational characterizations of typical quantum measurements in terms of extremal one-copy and two-copy estimation fidelities. Each measurement (pair of measurements) in the left column is completely determined by the estimation fidelities in the bold font in the corresponding row (at most three estimation fidelities are required in each case). Here $\scrA$ and $\scrB$ are POVMs on $\caH$; $F(\scrA)$ and $F(\scrB)$ are one-copy estimation fidelities, while $F(\scrA^{\otimes 2})$, $F(\scrB^{\otimes 2})$, and $F(\scrA\otimes \scrB)$ are two-copy estimation fidelities. The estimation fidelities $F_2^{\iid}$ and $F_2^{\sep}$ are defined in \esref{eq:F2iid}{eq:FidMUB}, respectively.
	}	
	\begin{math}
	\begin{array}{c|ccccc}
	\hline\hline
	\mbox{Quantum measurements}	&F(\scrA) &F(\scrA^{\otimes 2}) & F(\scrB) & F(\scrB^{\otimes 2}) & F(\scrA\otimes \scrB)\\[0.5ex]
	\hline
	\mbox{rank-1}& \bm{\frac{2}{d+1}} &- &- & -& -  \\[0.5ex]	
	\mbox{rank-1 projective}& \bm{\frac{2}{d+1}} &\bm{\frac{2}{d+1}} &- & -& -  \\[0.5ex]
		
	\mbox{identical rank-1 projective}& \bm{\frac{2}{d+1}} &\frac{2}{d+1} & \bm{\frac{2}{d+1}}& \frac{2}{d+1} & \bm{\frac{2}{d+1}}  \\[0.5ex]	
	
	\mbox{SIC}& \frac{2}{d+1} &\bm{F_2^{\iid}} & -&- & -  \\[0.5ex]
	\mbox{identical SICs}& \frac{2}{d+1} &\bm{F_2^{\iid}} &\frac{2}{d+1} & F_2^{\iid} & \bm{F_2^{\iid}}  \\[0.5ex]	
	
\mbox{MU rank-1 projective}& \frac{2}{d+1} &\bm{\frac{2}{d+1}} & \frac{2}{d+1}& \bm{\frac{2}{d+1}}&  \bm{F_2^{\sep}} \\[0.5ex]
	\hline\hline
	\end{array}	
	\end{math}	
\end{table*}

When $\scrA$ is a SIC, by virtue of \lref{lem:OneCopyFidLBUB}, \eref{eq:FPhalfSIC}, and \lref{lem:Fid2copyFP}, we can deduce that
\begin{align}\label{eq:FidSIC}
F(\scrA)=\frac{2}{d+1},\quad F(\scrA^{\otimes 2})=F_2^{\iid},
\end{align}
where
\begin{gather}\label{eq:F2iid}
F_2^{\iid}:=\frac{2[d^2+d+1 +(d-1)\sqrt{d+1}\lsp]}{d(d+1)(d+2)}. 
\end{gather}
On the other hand, $F(\scrA^{\otimes 3})$ depends on the specific SIC under consideration, as we shall see in \sref{sec:IneqSIC}.
Although rank-1 projective measurements and SICs share the same single-copy estimation fidelity, their two-copy estimation fidelities are quite different, as illustrated in \fref{fig:fidiid}. What is remarkable is that both rank-1 projective measurements and SICs are completely characterized by  one-copy and two-copy estimation fidelities, as shown in \thref{thm:Fid2copyProjSIC} and its corollaries below (cf. \tref{tab:POVMtypical}).
\begin{thm}\label{thm:Fid2copyProjSIC}
Let $\scrA$ be any rank-1 POVM 	
on $\caH$.	Then the two-copy estimation fidelity $F(\scrA^{\otimes 2})$ satisfies 
\begin{equation}\label{eq:Fid2copyLBUB}
\frac{2}{d+1}\leq F(\scrA^{\otimes 2})\leq F_2^{\iid}.
\end{equation}
If $\scrA$ is a simple rank-1 POVM, then the lower bound is saturated iff $\scrA$ is a rank-1 projective measurement, while 
the upper bound is saturated iff $\scrA$ is a SIC.			
\end{thm}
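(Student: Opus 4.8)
The plan is to reduce \thref{thm:Fid2copyProjSIC} to the already-established machinery about the frame potential $\Phi_{1/2}$. The starting point is \lref{lem:Fid2copyFP}, which gives, for any POVM $\scrA$ on $\caH$,
\begin{equation}
F(\scrA^{\otimes 2})\leq \frac{2d(d+1)+2\Phi_{1/2}(\scrA)}{d(d+1)(d+2)},
\end{equation}
with equality iff $\scrA$ is rank~1. Since we are assuming $\scrA$ is rank~1 throughout, $F(\scrA^{\otimes 2})$ is \emph{exactly} an increasing affine function of $\Phi_{1/2}(\scrA)$. Therefore the two-copy estimation fidelity inherits, verbatim, the bounds and the saturation conditions for $\Phi_{1/2}$ proved in \lref{lem:POVMFP}: namely $d\leq \Phi_{1/2}(\scrA)\leq 1+(d-1)\sqrt{d+1}$, with the lower bound saturated (for simple $\scrA$) iff $\scrA$ is a rank-1 projective measurement and the upper bound saturated (for simple $\scrA$) iff $\scrA$ is a SIC.

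Concretely, the steps would be: (i) substitute the extreme values of $\Phi_{1/2}(\scrA)$ from \lref{lem:POVMFP} into the exact identity $F(\scrA^{\otimes 2})=[2d(d+1)+2\Phi_{1/2}(\scrA)]/[d(d+1)(d+2)]$; plugging in $\Phi_{1/2}=d$ gives $2(d^2+d+d)/[d(d+1)(d+2)] = 2(d+2)/[(d+1)(d+2)]=2/(d+1)$, the claimed lower bound; plugging in $\Phi_{1/2}=1+(d-1)\sqrt{d+1}$ gives precisely $F_2^{\iid}$ as defined in \eref{eq:F2iid}. (ii) Since the affine map $x\mapsto [2d(d+1)+2x]/[d(d+1)(d+2)]$ is strictly increasing, $F(\scrA^{\otimes 2})$ attains its minimum (resp.\ maximum) over rank-1 POVMs exactly when $\Phi_{1/2}(\scrA)$ does, and the equality characterizations transfer directly: for simple rank-1 $\scrA$, $F(\scrA^{\otimes 2})=2/(d+1)$ iff $\scrA$ is a rank-1 projective measurement, and $F(\scrA^{\otimes 2})=F_2^{\iid}$ iff $\scrA$ is a SIC. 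One should also double-check the consistency with \eref{eq:FidProjective} and \eref{eq:FidSIC}, which already record these two boundary values directly.

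There is essentially no obstacle in this particular statement: all the real work — the sharp two-sided bound on $\Phi_{1/2}$ of a 1-design, its delicate saturation analysis (tied to the $1/2$-moment inequality relegated to the appendix), and the precise relation \lref{lem:Fid2copyFP} between $F(\scrA^{\otimes 2})$ and $\Phi_{1/2}(\scrA)$ for rank-1 POVMs — has been done in the preceding lemmas. The only mild care needed is bookkeeping: making sure the "simple" hypothesis is carried along exactly where \lref{lem:POVMFP} needs it (the non-simple case then follows by passing to the equivalent simple POVM via \lref{lem:SimplePOVM}, using that equivalent POVMs have the same frame potential and the same estimation fidelity by \lref{lem:FidPOVMbasic}), and verifying the arithmetic that the upper endpoint really equals the expression $F_2^{\iid}$ in \eref{eq:F2iid}. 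So I would present this as a short derivation chaining \lref{lem:Fid2copyFP} with \lref{lem:POVMFP}, rather than anything requiring new ideas.
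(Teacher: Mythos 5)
Your proposal is correct and follows exactly the paper's route: the paper proves \thref{thm:Fid2copyProjSIC} as an immediate corollary of \lref{lem:POVMFP} and \lref{lem:Fid2copyFP}, just as you do, and your arithmetic check that the endpoints of $\Phi_{1/2}$ map to $2/(d+1)$ and $F_2^{\iid}$ is right. Nothing further is needed.
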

\Thref{thm:Fid2copyProjSIC} follows from \lsref{lem:POVMFP} and  \ref{lem:Fid2copyFP}. It highlights  special and intriguing roles played by rank-1 projective measurements and SICs in quantum state estimation. Note that rank-1 projective measurements are completely characterized by the lower bound for the two-copy estimation fidelity $F(\scrA^{\otimes 2})$ as presented in \eref{eq:Fid2copyLBUB}, while SICs are completely characterized by the upper bound. Furthermore, the "rank-1" assumption in \thref{thm:Fid2copyProjSIC} can be dropped thanks to \crsref{cor:12copyProj} and \ref{cor:Fid2copyUBmix} below. 
\begin{corollary}\label{cor:12copyProj}
	A simple POVM $\scrA$   on $\caH$ is a rank-1 projective measurement iff it satisfies the following condition $F(\scrA^{\otimes 2})=F(\scrA)=2/(d+1)$. 
\end{corollary}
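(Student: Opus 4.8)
The plan is to obtain this corollary as an immediate consequence of \lref{lem:OneCopyFidLBUB} and \thref{thm:Fid2copyProjSIC}, so there is essentially no new computation to perform; the only thing to get right is the logical ordering. For the "only if" direction, if $\scrA$ is a rank-1 projective measurement then \eref{eq:FidProjective} already records $F(\scrA^{\otimes 2})=F(\scrA)=2/(d+1)$, so that half is done.

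For the converse, suppose $\scrA$ is a simple POVM on $\caH$ with $F(\scrA)=2/(d+1)$ and $F(\scrA^{\otimes 2})=2/(d+1)$. First I would invoke \lref{lem:OneCopyFidLBUB}: the one-copy estimation fidelity of any POVM on $\caH$ is bounded above by $2/(d+1)$, and this bound is saturated precisely when the POVM is rank-1. Hence the hypothesis $F(\scrA)=2/(d+1)$ forces $\scrA$ to be a rank-1 POVM, and by assumption it is simple.

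Now that $\scrA$ is known to be a simple rank-1 POVM, I would apply the lower-bound half of \thref{thm:Fid2copyProjSIC}: for a simple rank-1 POVM one has $F(\scrA^{\otimes 2})\geq 2/(d+1)$, with equality iff $\scrA$ is a rank-1 projective measurement. The remaining hypothesis $F(\scrA^{\otimes 2})=2/(d+1)$ therefore identifies $\scrA$ as a rank-1 projective measurement, completing the proof.

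The only point deserving care is exactly this ordering: one must use the one-copy condition to reduce to the rank-1 case before the equality analysis of \thref{thm:Fid2copyProjSIC} is applicable, since that theorem's characterization of the lower bound presupposes a simple rank-1 POVM. There is no genuine obstacle here; the corollary is a bookkeeping consequence of the two cited results, and its content is just the observation that the pair of extremal values $F(\scrA)=F(\scrA^{\otimes 2})=2/(d+1)$ pins down rank-1 projective measurements among all simple POVMs, with no a priori assumption on the rank or the number of outcomes.
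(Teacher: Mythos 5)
Your argument is correct and follows essentially the same route as the paper: the forward direction via \eref{eq:FidProjective}, and the converse by first using \lref{lem:OneCopyFidLBUB} to force $\scrA$ to be rank 1 and then invoking the lower-bound saturation condition of \thref{thm:Fid2copyProjSIC} to conclude it is a rank-1 projective measurement. Your remark about the ordering (reducing to the rank-1 case before applying the theorem) is exactly the implicit structure of the paper's proof.
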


\begin{corollary}\label{cor:rank1Projidentical}
	Two simple POVMs  $\scrA$ and $\scrB$ on $\caH$ are identical rank-1 projective measurements up to relabeling iff	they satisfy
	$F(\scrA\otimes \scrB)=F(\scrB)=F(\scrA)= 2/(d+1)$. 
\end{corollary}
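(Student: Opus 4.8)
The plan is to obtain both directions directly from the frame-potential machinery already assembled, so that essentially no new computation is needed.

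For the forward implication, suppose $\scrA$ and $\scrB$ are identical rank-1 projective measurements up to relabeling. Then $F(\scrA)=F(\scrB)=2/(d+1)$ is read off from \eqref{eq:FidProjective}. Since $\scrA\otimes\scrB$ is then equivalent to $\scrA^{\otimes2}$ (relabeling commutes with forming the product POVM), \Lref{lem:FidPOVMbasic} gives $F(\scrA\otimes\scrB)=F(\scrA^{\otimes2})=2/(d+1)$, again by \eqref{eq:FidProjective}.

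For the converse, assume $F(\scrA\otimes\scrB)=F(\scrB)=F(\scrA)=2/(d+1)$. First I would use $F(\scrA)=2/(d+1)$, which saturates the upper bound in \Lref{lem:OneCopyFidLBUB}, to conclude that $\scrA$ is rank~1; the same argument applied to $\scrB$ shows $\scrB$ is rank~1. Now that both POVMs are rank~1, the inequality \eqref{eq:FidProdUBgen} of \Lref{lem:FidprodPOVM} is saturated, so
\[
\frac{2}{d+1}=F(\scrA\otimes\scrB)=\frac{2d(d+1)+2\Phi_{1/2}(\scrA,\scrB)}{d(d+1)(d+2)},
\]
which rearranges to $\Phi_{1/2}(\scrA,\scrB)=d$. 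Finally, $\scrA$ and $\scrB$ are simple rank-1 POVMs attaining the lower bound in \Lref{lem:POVMcrossFP}, so that lemma forces $\scrA$ and $\scrB$ to be identical rank-1 projective measurements up to relabeling, completing the proof.

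I do not expect a genuine obstacle here: the substantive content is packaged in \Lsref{lem:OneCopyFidLBUB}, \ref{lem:FidprodPOVM}, and \ref{lem:POVMcrossFP}. The one point to be careful about is that \Lref{lem:FidprodPOVM} yields only an inequality in general, so turning $F(\scrA\otimes\scrB)=2/(d+1)$ into the identity $\Phi_{1/2}(\scrA,\scrB)=d$ genuinely requires first knowing that $\scrA$ and $\scrB$ are rank~1 --- which is precisely why the hypotheses $F(\scrA)=F(\scrB)=2/(d+1)$ are retained rather than just $F(\scrA\otimes\scrB)=2/(d+1)$. (One could instead route through \Pref{pro:CompatibleRank11} and \Lref{lem:commutePOVMrank1}, viewing this as the joint analogue of \Crref{cor:12copyProj}, but the cross-frame-potential argument is the cleaner one.) The only remaining risk is arithmetic bookkeeping, e.g. verifying $2/(d+1)=2d(d+2)/[d(d+1)(d+2)]$ when solving for $\Phi_{1/2}(\scrA,\scrB)$.
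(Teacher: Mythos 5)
Your proposal is correct and follows essentially the same route as the paper's own proof: the forward direction via \eqref{eq:FidProjective}, and the converse via saturation of \Lref{lem:OneCopyFidLBUB} to get rank~1, saturation of \Lref{lem:FidprodPOVM} to get $\Phi_{1/2}(\scrA,\scrB)=d$, and then \Lref{lem:POVMcrossFP} to conclude identical rank-1 projective measurements up to relabeling. The arithmetic checks out, since $2d(d+1)+2d=2d(d+2)$ gives $F(\scrA\otimes\scrB)=2/(d+1)$ exactly when $\Phi_{1/2}(\scrA,\scrB)=d$.
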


\begin{corollary}\label{cor:Fid2copyUBmix}
	Any POVM $\scrA$ on $\caH$ satisfies the inequality $F(\scrA^{\otimes 2})\leq F_2^{\iid}$.	
	If $\scrA$ is simple, then the upper bound 
	 is saturated iff $\scrA$ is a SIC. 
\end{corollary}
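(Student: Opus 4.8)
The plan is to reduce the statement to the simple rank-1 case already settled in \thref{thm:Fid2copyProjSIC} by passing to a rank-1 refinement. Given an arbitrary POVM $\scrA$ on $\caH$, the spectral decomposition of each POVM element produces a rank-1 POVM that refines $\scrA$; replacing it by the equivalent simple rank-1 POVM guaranteed by \lref{lem:SimplePOVM}, and using transitivity of the order relation $\preceq$, we obtain a simple rank-1 POVM $\scrC$ with $\scrA\preceq\scrC$. First I would record the elementary fact that coarse graining is compatible with tensor products, i.e.\ $\scrA'\preceq\scrB'$ implies $\scrA'\otimes\scrD\preceq\scrB'\otimes\scrD$ for any POVM $\scrD$, the point being that the induced transition matrix remains stochastic (it is a Kronecker product of a stochastic matrix with an identity matrix). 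Applying this twice gives $\scrA^{\otimes2}\preceq\scrC\otimes\scrA\preceq\scrC^{\otimes2}$, so by the monotonicity of the estimation fidelity under coarse graining, \eref{eq:fidCoarseGraining}, together with \thref{thm:Fid2copyProjSIC} applied to the rank-1 POVM $\scrC$, we get $F(\scrA^{\otimes2})\leq F(\scrC\otimes\scrA)\leq F(\scrC^{\otimes2})\leq F_2^{\iid}$, which establishes the inequality for every POVM.

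For the saturation clause, suppose $\scrA$ is simple and $F(\scrA^{\otimes2})=F_2^{\iid}$. Then every inequality in the chain above is an equality; in particular $F(\scrC^{\otimes2})=F_2^{\iid}$, so $\scrC$ is a SIC by the simple rank-1 part of \thref{thm:Fid2copyProjSIC}. A SIC has pairwise fidelity $1/(d+1)\neq0$, hence contains no two mutually orthogonal POVM elements, so \lref{lem:FidCoarsegraining} applies to the coarse graining $\scrA\preceq\scrC$, with the SIC $\scrC$ playing the role of the first (fixed) factor: the middle equality $F(\scrC\otimes\scrA)=F(\scrC^{\otimes2})$ then forces $\scrA$ to be equivalent to $\scrC$. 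Since two simple equivalent POVMs are identical up to relabeling (\lref{lem:SimplePOVM}), $\scrA$ is a SIC. The converse is immediate from \eref{eq:FidSIC} (equivalently, from \lref{lem:Fid2copyFP} combined with \lref{lem:POVMFP}): every SIC attains $F(\scrA^{\otimes2})=F_2^{\iid}$.

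All of the substance is borrowed: \thref{thm:Fid2copyProjSIC} for the rank-1 case, \lref{lem:FidCoarsegraining} to pin down the equality, and \lref{lem:SimplePOVM} to pass from "equivalent" to "identical". What needs care is purely bookkeeping: ensuring the rank-1 refinement can be taken simple without destroying $\scrA\preceq\scrC$ (handled by \lref{lem:SimplePOVM} and transitivity of $\preceq$), and checking that the equality $F(\scrA^{\otimes2})=F_2^{\iid}$ propagates through all three links of $\scrA^{\otimes2}\preceq\scrC\otimes\scrA\preceq\scrC^{\otimes2}$ so that \lref{lem:FidCoarsegraining} can be invoked on the middle link with its no-orthogonality hypothesis verified on the SIC $\scrC$. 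I do not expect a real obstacle here; the only mild subtlety worth spelling out is the tensor-compatibility of the order relation $\preceq$, which the paper does not state explicitly but which follows immediately from the definition in \eref{eq:CoarseGraining}.
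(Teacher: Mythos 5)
Your proof is correct, and for the saturation direction it takes a genuinely different route from the paper. The inequality itself is handled the same way in both: pass to a (simple) rank-1 refinement $\scrC\succeq\scrA$, use monotonicity of $F$ under coarse graining, and invoke \thref{thm:Fid2copyProjSIC}. For the equality case, however, the paper argues that if the bound were attained by a non-rank-1 $\scrA$, then \emph{every} rank-1 refinement of $\scrA$ would have to be equivalent to a SIC, which it then dismisses as impossible; only afterwards does it apply the simple rank-1 case of \thref{thm:Fid2copyProjSIC}. You instead interpose the middle term in the chain $\scrA^{\otimes2}\preceq\scrC\otimes\scrA\preceq\scrC^{\otimes2}$, deduce $F(\scrC\otimes\scrA)=F(\scrC^{\otimes2})=F_2^{\iid}$, identify $\scrC$ as a SIC, verify the no-orthogonality hypothesis of \lref{lem:FidCoarsegraining} on $\scrC$, and let that lemma force $\scrA\sim\scrC$, hence (by \lref{lem:SimplePOVM}) $\scrA$ is a SIC. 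This is exactly the mechanism the paper reserves for the proof of \thref{thm:FidJMPOVM}, and applying it here buys you something: you never have to reason about \emph{all} rank-1 refinements of a non-rank-1 POVM (the step the paper leaves rather terse), at the modest cost of invoking \lref{lem:FidCoarsegraining} and the tensor-compatibility of $\preceq$, which, as you note, follows immediately from \eref{eq:CoarseGraining} since the induced transition matrix is again stochastic. All the bookkeeping you flag (simplicity and rank-1-ness of $\scrC$ surviving the passage via \lref{lem:SimplePOVM}, transitivity of $\preceq$, propagation of equality through the three links, and the converse via \eref{eq:FidSIC}) checks out.
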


\begin{corollary}\label{cor:fidgSIC}
	Suppose $\scrA_1,\scrA_2,\ldots, \scrA_g$ are $g$  POVMs on $\caH$. Then 
	\begin{align}\label{eq:fidgSIC}
	\sum_{r,s} F(\scrA_r\otimes \scrA_s)\leq g^2 F_2^{\iid}.
	\end{align}
	If in addition these POVMs are simple, then 
 the upper bound is saturated iff $\scrA_1,\scrA_2,\ldots, \scrA_g$ are  identical SICs up to relabeling. 
\end{corollary}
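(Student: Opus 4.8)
\emph{Proof proposal.} The plan is to collapse the $g$ POVMs into a single POVM and then invoke the single‑POVM bound of \crref{cor:Fid2copyUBmix}. Concretely, I would define the \emph{aggregated} POVM $\scrA$ on $\caH$ whose POVM elements are $\tfrac{1}{g}A^{(r)}_j$, where $\{A^{(r)}_j\}_j$ are the elements of $\scrA_r$ and $r$ ranges over $1,\dots,g$; this is a genuine POVM since $\sum_{r,j}\tfrac{1}{g}A^{(r)}_j=\tfrac{1}{g}\sum_r\id=\id$. Because $\caQ$ is linear, so that $\caQ(cO)=c\,\caQ(O)$ and $\|\caQ(cO)\|=c\,\|\caQ(O)\|$ for $c>0$, \eref{eq:FidPOVM2} with $N=2$ immediately yields the key identity
\begin{align}
F(\scrA^{\otimes 2})=\frac{1}{g^2}\sum_{r,s}F(\scrA_r\otimes\scrA_s).
\end{align}
Feeding this into the inequality $F(\scrA^{\otimes 2})\le F_2^{\iid}$ of \crref{cor:Fid2copyUBmix} gives \eref{eq:fidgSIC} for arbitrary POVMs $\scrA_r$, with no rank or simplicity assumption.

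For the saturation condition, the converse direction is easy: if each $\scrA_r$ equals one fixed SIC $\scrS$ up to relabeling, then every $F(\scrA_r\otimes\scrA_s)$ equals $F(\scrS^{\otimes2})=F_2^{\iid}$ by \eref{eq:FidSIC}, using that relabelings are equivalences and hence leave the estimation fidelity unchanged (\lref{lem:FidPOVMbasic}); summing over $r,s$ attains the bound. For the forward direction, assume all $\scrA_r$ are simple and equality holds; then the identity above gives $F(\scrA^{\otimes 2})=F_2^{\iid}$. Here $\scrA$ itself need not be simple, so I would pass to the unique simple POVM $\scrA'$ equivalent to $\scrA$ (\lref{lem:SimplePOVM}). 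One checks routinely that a tensor product of equivalent POVMs is equivalent (tensor the underlying stochastic matrices), so $\scrA^{\otimes 2}$ and $(\scrA')^{\otimes 2}$ are equivalent and share the same estimation fidelity by \lref{lem:FidPOVMbasic}; thus $F((\scrA')^{\otimes 2})=F_2^{\iid}$, and the equality clause of \crref{cor:Fid2copyUBmix} forces $\scrA'$ to be a SIC, say $\scrS=\{\tfrac{1}{d}\outer{\psi_i}{\psi_i}\}_{i=1}^{d^2}$.

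The remaining and most delicate step is to push this conclusion back onto each individual $\scrA_r$. Simplifying $\scrA$ to $\scrA'$ only merges mutually proportional elements, so each $\tfrac{1}{g}A^{(r)}_j$ is proportional to one of the SIC projectors $\outer{\psi_i}{\psi_i}$; in particular every $\scrA_r$ is a rank‑1 POVM supported on SIC directions. Now I would use that the $d^2$ SIC projectors are linearly independent, so the identity has the unique expansion $\id=\sum_i\tfrac{1}{d}\outer{\psi_i}{\psi_i}$; matching this against $\sum_j A^{(r)}_j=\id$ shows that in $\scrA_r$ each SIC direction occurs with total weight exactly $1/d$. Hence the unique simple POVM equivalent to $\scrA_r$ is precisely $\scrS$, and since $\scrA_r$ is itself simple it equals $\scrS$ up to relabeling by \lref{lem:SimplePOVM}. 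As the same $\scrS$ works for every $r$, the $\scrA_r$ are identical SICs up to relabeling, which closes the argument. I expect this last paragraph to be the main obstacle: everything hinges on cleanly transferring "the aggregate is a SIC" to "each constituent is that SIC", which relies on the linear independence of SIC projectors together with careful bookkeeping of how simplification interacts with both the aggregation and the tensor power. The inequality itself, by contrast, is essentially immediate once the aggregation identity is in place.
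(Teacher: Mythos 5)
Your proposal is correct and follows essentially the same route as the paper: you form the aggregated POVM $\bigcup_r \scrA_r/g$, use the identity $\sum_{r,s}F(\scrA_r\otimes\scrA_s)=g^2F(\scrA^{\otimes 2})$ together with \crref{cor:Fid2copyUBmix}, and settle the equality case via the simple POVM equivalent to $\scrA$ being a SIC plus the linear independence of its $d^2$ elements and the simplicity of each $\scrA_r$. Your extra care about passing to $\scrA'$ and about the unique expansion of the identity just makes explicit steps the paper leaves implicit.
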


\begin{corollary}\label{cor:SICidentical}
Two simple POVMs  $\scrA$ and $\scrB$ on $\caH$ are identical SICs up to relabeling iff	they satisfy the condition
	$F(\scrA\otimes \scrB)=F(\scrA^{\otimes 2})= F_2^{\iid}$. 
\end{corollary}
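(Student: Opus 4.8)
The plan is to prove both implications using the machinery already assembled, treating the "if" direction as the substantive one. The "only if" direction is immediate: if $\scrA$ and $\scrB$ are identical SICs up to relabeling, then $\scrA^{\otimes 2}$ and $\scrA\otimes\scrB$ are the same product POVM up to relabeling, so they have the same estimation fidelity, and that common value equals $F_2^{\iid}$ by \eref{eq:FidSIC} (or by \crref{cor:Fid2copyUBmix} applied to the SIC $\scrA$). So it remains to show that the two equalities $F(\scrA\otimes\scrB)=F(\scrA^{\otimes 2})=F_2^{\iid}$ force $\scrA$ and $\scrB$ to be identical SICs (after relabeling).

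First I would extract from $F(\scrA^{\otimes 2})=F_2^{\iid}$, via \crref{cor:Fid2copyUBmix}, that $\scrA$ is a SIC; in particular $\scrA$ is rank~1 and simple. By the symmetry $F(\scrA\otimes\scrB)=F(\scrB\otimes\scrA)$ I would like to conclude the same for $\scrB$, but the hypothesis only gives $F(\scrA\otimes\scrB)=F_2^{\iid}$, not $F(\scrB^{\otimes 2})=F_2^{\iid}$; so instead I invoke \lref{lem:FidprodPOVM} directly. That lemma says $F(\scrA\otimes\scrB)\le \bigl[2d(d+1)+2\Phi_{1/2}(\scrA,\scrB)\bigr]/[d(d+1)(d+2)]$ with equality iff both $\scrA$ and $\scrB$ are rank~1. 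Since the left side equals $F_2^{\iid}$, which from \eref{eq:F2iid} is exactly the value of the right side when $\Phi_{1/2}(\scrA,\scrB)=1+(d-1)\sqrt{d+1}$, and since $\Phi_{1/2}(\scrA,\scrB)$ can never exceed $1+(d-1)\sqrt{d+1}$ when one of the two arguments is a 2-design (here $\scrA$ is a SIC, hence a 2-design) by \lref{lem:POVMcrossFP2design}, the only way to reach $F_2^{\iid}$ is to have the equality case of \lref{lem:FidprodPOVM} — forcing $\scrB$ rank~1 — together with $\Phi_{1/2}(\scrA,\scrB)=1+(d-1)\sqrt{d+1}$. Once $\scrB$ is rank~1 and $\Phi_{1/2}(\scrA,\scrB)$ attains its maximum with $\scrA$ a 2-design, I apply the equality clause of \lref{lem:POVMcrossFP2design}: if in addition $\scrB$ is simple — which I still need to verify — then $\scrA$ and $\scrB$ are identical SICs up to relabeling, and we are done.

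The gap to close is that \lref{lem:FidprodPOVM} gives me "$\scrB$ rank~1" but the equality clause of \lref{lem:POVMcrossFP2design} wants "$\scrB$ simple." I would handle this by passing to the unique simple POVM $\scrB'$ equivalent to $\scrB$ (\lref{lem:SimplePOVM}): equivalent POVMs share the same cross frame potential and the same estimation fidelities (\lref{lem:FidPOVMbasic}), and $\scrB'$ is still rank~1 by \pref{pro:CoarseGraining}, so the hypotheses transfer to $\scrB'$, the equality clause applies to the pair $(\scrA,\scrB')$, and I conclude $\scrA=\scrB'$ up to relabeling; then $\scrB$ is a nonsimple refinement-or-coarsening of a SIC, but since $\scrB$ is rank~1 and equivalent to the rank-1 projective... no — I must be careful: a rank-1 SIC is already simple (equiangular rank-1 sets are automatically simple because $1/(d+1)\neq 1$), so $\scrB'$ being a SIC and $\scrB$ equivalent to $\scrB'$ with $\scrB$ rank~1 forces $\scrB=\scrB'$ up to relabeling. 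That last bookkeeping step is the only real subtlety; the main obstacle, more conceptually, is recognizing that one should route through \lref{lem:FidprodPOVM} and \lref{lem:POVMcrossFP2design} rather than trying to establish $F(\scrB^{\otimes 2})=F_2^{\iid}$ first, which the hypothesis does not hand us.
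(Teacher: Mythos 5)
Your "only if" direction and the first step of the converse (using \crref{cor:Fid2copyUBmix} to conclude that $\scrA$ is a SIC) agree with the paper, but the key step of your converse has a genuine gap: you invoke \lref{lem:POVMcrossFP2design} to claim $\Phi_{1/2}(\scrA,\scrB)\leq 1+(d-1)\sqrt{d+1}$ \emph{before} knowing that $\scrB$ is rank~1. That lemma assumes both POVMs are rank-1, and its bound is actually false without this assumption: taking $\scrB$ to be the trivial POVM $\{1\}$ gives $\Phi_{1/2}(\scrA,\scrB)=d^{3/2}$, which exceeds $1+(d-1)\sqrt{d+1}$ for every $d\geq 2$. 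Consequently the sandwich you rely on --- $F_2^{\iid}=F(\scrA\otimes\scrB)\leq$ the upper bound in \eref{eq:FidProdUBgen} $\leq$ the value of that bound at $\Phi_{1/2}=1+(d-1)\sqrt{d+1}$ --- never gets started, and you cannot force equality in \lref{lem:FidprodPOVM} to conclude that $\scrB$ is rank~1. All that \lref{lem:FidprodPOVM} legitimately yields at this stage is $\Phi_{1/2}(\scrA,\scrB)\geq 1+(d-1)\sqrt{d+1}$, which on its own proves nothing, since for non-rank-1 $\scrB$ a large cross frame potential is compatible with the bound in \eref{eq:FidProdUBgen} being loose.

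The paper closes exactly this hole with a refinement argument. Let $\scrB'$ be any simple rank-1 POVM refining $\scrB$; monotonicity of the fidelity under refinement (\lref{lem:FidPOVMbasic}) gives $F(\scrA\otimes\scrB')\geq F(\scrA\otimes\scrB)=F_2^{\iid}$, then \lref{lem:FidprodPOVM} gives $\Phi_{1/2}(\scrA,\scrB')\geq 1+(d-1)\sqrt{d+1}$, and now \lref{lem:POVMcrossFP2design} applies legitimately to the rank-1 pair $(\scrA,\scrB')$, so $\scrB'$ must be identical to the SIC $\scrA$ up to relabeling. Since this holds for \emph{every} simple rank-1 refinement of $\scrB$, the POVM $\scrB$ cannot contain an element of rank at least~2 (such an element admits rank-1 decompositions whose components are not proportional to SIC elements), so $\scrB$ is rank~1 and hence itself identical to $\scrA$ up to relabeling. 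To repair your proof you need this (or an equivalent) detour through rank-1 refinements; the direct application of \lref{lem:POVMcrossFP2design} to $(\scrA,\scrB)$ is not available. A minor additional remark: your worry about the simplicity of $\scrB$ is moot, since the corollary assumes both $\scrA$ and $\scrB$ are simple, so the bookkeeping with an equivalent simple POVM is unnecessary.
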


\Crref{cor:12copyProj} follows from \lref{lem:OneCopyFidLBUB} and  \thref{thm:Fid2copyProjSIC}, while \crref{cor:rank1Projidentical} follows from 
\lsref{lem:POVMcrossFP}, \ref{lem:OneCopyFidLBUB}, and  \ref{lem:FidprodPOVM} (cf. \thref{thm:FidProdPOVMLB} below). The two corollaries  offer  succinct operational characterizations of  rank-1 projective measurements and identical  rank-1 projective measurements via one-copy and two-copy estimation fidelities.
\Crsref{cor:Fid2copyUBmix}-\ref{cor:SICidentical}
also follow from  \thref{thm:Fid2copyProjSIC}   as shown in \aref{asec:Fid2copyProof}; they   offer  even more succinct operational characterizations of  SICs and identical SICs  via  two-copy estimation fidelities. Surprisingly, here we do not need any assumption on the rank, purity, or the number of POVM elements. 
Note that the  isotropic measurement and measurements based on CMUB are suboptimal as illustrated in \fref{fig:fidiid}, although they have (many) more outcomes, in sharp contrast with the conclusion presented in \thref{thm:nCopyFidUB} and the results on traditional quantum state tomography \cite{Scot06,ZhuE11,Zhu14IOC}, in which measurements based on higher $t$-designs cannot perform worse. In addition, here the characterization of SICs is much simpler than most known   alternatives, including conventional characterizations based on maximal equiangular lines or minimal 2-designs \cite{Zaun11,ReneBSC04,Scot06,ApplFZ15G}.

The above results demonstrate that both rank-1 projective measurements and SICs  are uniquely determined by their information-extraction capabilities. In other words, these elementary quantum measurements can be defined in purely information theoretic terms, in sharp contrast with traditional algebraic definitions, which lack clear operational meanings. As far as we know, similar results have never been derived or even anticipated before.

\subsection{\label{sec:EAL}Measurements based on tight equiangular lines}
Here we show that measurements based on tight equiangular lines stand out as optimal measurements when the number of outcomes is limited, which generalizes the optimality result on SICs as stated in \thref{thm:Fid2copyProjSIC}.

If $\scrA=\{A_j\}_{j=1}^{m}$ is a POVM  on $\caH$ that is constructed from a set of $m$ tight equiangular lines, then the POVM elements $A_j$ of $\scrA$ have the form $A_j=d|\psi_j\>\<\psi_j|/m$, where $|\psi_j\>$ satisfy \eref{eq:EAL}. 
The estimation fidelity of $\scrA^{\otimes 2}$ can be derived by  virtue of  \lsref{lem:POVMEAL} and  \ref{lem:Fid2copyFP} [cf. \eref{eq:FidPOVM2} and \lref{lem:Sym2ProjNorm}], with the result
\begin{align}
F(\scrA^{\otimes 2})
&=\frac{2}{d+2}+\frac{2d+2\sqrt{d(m-1)(m-d)}
}{m(d+1)(d+2)}.
\end{align}
Moreover, such a POVM is optimal among all rank-1 POVMs with $m$ POVM elements, as shown in the following corollary, which is an immediate consequence of  \lsref{lem:POVMEAL} and  \ref{lem:Fid2copyFP}. 
\begin{corollary}\label{cor:MaxFidEAL}
	Suppose  $\scrA$ is a rank-1 POVM on $\caH$ that is composed of $m$ POVM elements; then the two-copy estimation fidelity $F(\scrA^{\otimes 2})$ satisfies 
	\begin{align}\label{eq:MaxFidEAL}
	F(\scrA^{\otimes 2})&\leq \frac{2}{d+2}+\frac{2d+2\sqrt{d(m-1)(m-d)}
	}{m(d+1)(d+2)}.
	\end{align}
	The upper bound is saturated iff $d\leq m\leq d^2$ and  the POVM  $\scrA$ is unbiased and equiangular. 
\end{corollary}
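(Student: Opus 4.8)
The plan is to chain together the two cited lemmas, treating the two-copy estimation fidelity as a monotone function of the frame potential $\Phi_{1/2}(\scrA)$. First I would invoke \lref{lem:Fid2copyFP}: since $\scrA$ is a rank-1 POVM, the inequality stated there is saturated, so
\begin{equation}
F(\scrA^{\otimes 2})=\frac{2d(d+1)+2\Phi_{1/2}(\scrA)}{d(d+1)(d+2)}.
\end{equation}
Hence $F(\scrA^{\otimes 2})$ is an increasing affine function of $\Phi_{1/2}(\scrA)$, and bounding the left-hand side from above is equivalent to bounding $\Phi_{1/2}(\scrA)$ from above.

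Next I would apply \lref{lem:POVMEAL}, which supplies exactly such an estimate for a rank-1 POVM with $m$ elements, namely $\Phi_{1/2}(\scrA)\leq d^2/m+(d/m)\sqrt{d(m-1)(m-d)}$. Substituting this into the identity above and simplifying --- separating off the term $2d(d+1)/[d(d+1)(d+2)]=2/(d+2)$ and collecting the remaining pieces over the common denominator $m(d+1)(d+2)$ --- I expect to land on precisely the bound \eref{eq:MaxFidEAL}. As a sanity check, setting $m=d^2$ collapses the right-hand side to $F_2^{\iid}$, recovering the SIC optimality of \thref{thm:Fid2copyProjSIC}.

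The equality analysis then follows for free. Because $\scrA$ is rank 1, the step from \lref{lem:Fid2copyFP} is an exact identity and introduces no slack, so $F(\scrA^{\otimes 2})$ meets the bound in \eref{eq:MaxFidEAL} if and only if $\Phi_{1/2}(\scrA)$ meets the bound in \lref{lem:POVMEAL}. By that lemma this is impossible when $m>d^2$, and for $d\leq m\leq d^2$ it occurs exactly when $\scrA$ is unbiased and equiangular, i.e.\ when $\scrA$ is built from $m$ tight equiangular lines.

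There is no genuine obstacle in this argument: the substantive content has already been absorbed into \lsref{lem:POVMEAL} and \ref{lem:Fid2copyFP} (the latter resting in turn on the operator-norm estimate of \lref{lem:Sym2ProjNorm}), and what remains is the bookkeeping of combining them and of tracking the saturation conditions. The one point deserving explicit care is that the ``rank-1'' hypothesis is precisely what turns the first inequality into an equality, which is what allows the equality case of \lref{lem:POVMEAL} to pass through to \eref{eq:MaxFidEAL} without any loss.
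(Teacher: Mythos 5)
Your proposal is correct and follows exactly the route the paper takes: the paper presents \crref{cor:MaxFidEAL} as an immediate consequence of \lsref{lem:POVMEAL} and \ref{lem:Fid2copyFP}, with the rank-1 hypothesis turning \eref{eq:Fid2copyFP} into an identity so that the saturation conditions of \lref{lem:POVMEAL} carry over unchanged. Your algebraic simplification and the $m=d^2$ consistency check with $F_2^{\iid}$ are both accurate.
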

Note that the upper bound in \eref{eq:MaxFidEAL} is strictly monotonically increasing in $m$ as expected. 
As an implication of \lref{lem:OneCopyFidLBUB} and \crref{cor:MaxFidEAL}, an $m$-outcome POVM $\scrA$ is an unbiased rank-1 equiangular POVM iff $F(\scrA)=2/(d+1)$ and the inequality in \eref{eq:MaxFidEAL} is saturated.

\subsection{Mutually unbiased measurements}
When $\scrA$ and $\scrB$ are MU rank-1 projective measurements, the two-copy estimation fidelity $F(\scrA\otimes\scrB)$ can be computed using  \lsref{lem:POVMcrossFP} and \ref{lem:FidprodPOVM}, with the result
\begin{align}\label{eq:FidMUB}
F(\scrA\otimes\scrB)=F_2^{\sep}:=\frac{2(d+1+\sqrt{d}\lsp)}{(d+1)(d+2)}.
\end{align}
As illustrated in \fref{fig:fid4settings}, this estimation fidelity is very close to the estimation fidelity achieved by independent and identical measurements based on a SIC. Nevertheless, it turns out this is the maximum estimation fidelity achievable by separable measurements,  including all measurements realized by local operations and classical communication (LOCC), as manifested in the notation $F_2^{\sep}$. Recall that a POVM is separable if each POVM element is proportional to a separable density operator. 
\begin{lem}\label{lem:FidSepUB}
Let $\scrA=\{A_j\}_j$ be any separable POVM  on $\caH^{\otimes2}$. Then the two-copy estimation fidelity $F(\scrA)$  satisfies $F(\scrA)\leq F_2^{\sep}$.
If $\scrA$ is  rank-1, then the upper bound is saturated iff each POVM element $A_j$ satisfies
the condition 
$d\tr(WA_j)=\tr(A_j)$, where $W=2P_2-1$ is the swap operator. 
\end{lem}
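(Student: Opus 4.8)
The plan is to reduce everything to \emph{rank-1} separable POVMs by data processing, and then to squeeze the estimation fidelity using \lref{lem:Sym2ProjNorm}, the POVM normalization, and a single Cauchy--Schwarz step. By \eref{eq:FidPOVM2} with $N=2$, a POVM $\scrA=\{A_j\}_j$ on $\caH^{\otimes2}$ has $F(\scrA)=\sum_j\|\caQ(A_j)\|/[d(d+1)(d+2)]$, so the whole task is to bound $\sum_j\|\caQ(A_j)\|$ from above by $2d\bigl(d+1+\sqrt d\,\bigr)$.

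First I would reduce to the rank-1 case. Each element $A_j$ of a separable POVM is proportional to a separable state, so by Carath\'{e}odory's theorem it can be written as a finite sum $A_j=\sum_\alpha c_{j\alpha}\,|\phi_{j\alpha}\>\<\phi_{j\alpha}|\otimes|\chi_{j\alpha}\>\<\chi_{j\alpha}|$ with $c_{j\alpha}>0$. The refined collection $\scrB=\{c_{j\alpha}\,|\phi_{j\alpha}\>\<\phi_{j\alpha}|\otimes|\chi_{j\alpha}\>\<\chi_{j\alpha}|\}_{j,\alpha}$ is a rank-1 separable POVM of which $\scrA$ is a coarse graining, so $F(\scrA)\le F(\scrB)$ by the monotonicity in \lref{lem:FidPOVMbasic}. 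Hence it suffices to show $F(\scrB)\le F_2^{\sep}$ for an arbitrary rank-1 separable POVM $\scrB=\{B_j\}_j$ with $B_j=b_j\,|\phi_j\>\<\phi_j|\otimes|\chi_j\>\<\chi_j|$, $b_j=\tr(B_j)>0$; and when $\scrA$ is already rank 1 no refinement is needed, which is what will yield the equality condition.

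For the core estimate, set $f_j=|\<\phi_j|\chi_j\>|^2$. Applying \lref{lem:Sym2ProjNorm} to $B_j=(b_j|\phi_j\>\<\phi_j|)\otimes|\chi_j\>\<\chi_j|$ in its equality case (both tensor factors are rank 1) gives $\|\caQ(B_j)\|=2b_j\bigl(1+f_j+\sqrt{f_j}\,\bigr)$. I then read off two moments of $\{f_j\}$ from the normalization $\sum_j B_j=1_{\caH^{\otimes2}}$: taking the trace gives $\sum_j b_j=d^2$, and pairing with the swap $W=2P_2-1$ together with the identity $\tr[(X\otimes Y)W]=\tr(XY)$ gives $\sum_j b_j f_j=\tr W=d$. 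The only remaining term is $\sum_j b_j\sqrt{f_j}$, which Cauchy--Schwarz controls: $\sum_j b_j\sqrt{f_j}=\sum_j\sqrt{b_j}\,\sqrt{b_j f_j}\le\bigl(\sum_j b_j\bigr)^{1/2}\bigl(\sum_j b_j f_j\bigr)^{1/2}=d^{3/2}$. Therefore $\sum_j\|\caQ(B_j)\|=2d^2+2d+2\sum_j b_j\sqrt{f_j}\le 2d\bigl(d+1+\sqrt d\,\bigr)$, and dividing by $d(d+1)(d+2)$ gives $F(\scrB)\le F_2^{\sep}$, hence $F(\scrA)\le F_2^{\sep}$ for every separable POVM.

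For the equality clause I would take $\scrA$ itself rank 1, so the estimate applies verbatim with $B_j=A_j$. Then $F(\scrA)=F_2^{\sep}$ forces equality in the Cauchy--Schwarz step, i.e.\ $f_j$ is independent of $j$; combined with $\sum_j b_j f_j=d$ and $\sum_j b_j=d^2$ this forces $f_j=1/d$ for all $j$, which is precisely the stated condition $d\tr(WA_j)=\tr(A_j)$ since $\tr(WA_j)=b_j f_j$ and $\tr(A_j)=b_j$; conversely, $f_j\equiv1/d$ makes every inequality above an equality. The one point that needs care is the reduction step — making sure that passing to the product-pure-state refinement can only \emph{lose} fidelity, which is exactly the coarse-graining monotonicity already in hand — while the one genuinely clever move is recognizing that the awkward square-root contributions $\sum_j b_j\sqrt{f_j}$ are wedged between the zeroth and first overlap moments $\sum_j b_j$ and $\sum_j b_j f_j$, both pinned down by POVM normalization.
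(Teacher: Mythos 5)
Your proposal is correct and follows essentially the same route as the paper's own proof: reduce to rank-1 separable POVMs via coarse-graining monotonicity (\lref{lem:FidPOVMbasic}), apply the equality case of \lref{lem:Sym2ProjNorm} to get $\|\caQ(A_j)\|=2a_j\bigl(1+f_j+\sqrt{f_j}\,\bigr)$, pin down $\sum_j a_j=d^2$ and $\sum_j a_j f_j=\tr W=d$ from normalization, and close with Cauchy--Schwarz, whose saturation gives $f_j\equiv 1/d$, i.e.\ $d\tr(WA_j)=\tr(A_j)$. Your explicit Carath\'eodory-style refinement and the use of $\tr[(X\otimes Y)W]=\tr(XY)$ just spell out steps the paper leaves implicit.
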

When the POVM element $A_j$ is a tensor product, the equality $d\tr(WA_j)=\tr(A_j)$ means  $A_j$ is a tensor product of two MU positive operators, which is  reminiscent of the definition of MUMs.   If we only consider product measurements, then  only MU  rank-1 POVMs can attain the maximum estimation fidelity $F_2^{\sep}$.

\begin{thm}\label{thm:FidProdUB}
Suppose	$\scrA$ and $\scrB$ are two POVMs  on $\caH$. Then  the two-copy estimation fidelity $F(\scrA\otimes\scrB)$ satisfies $F(\scrA\otimes \scrB)\leq F_2^{\sep} $,
	and the inequality is saturated iff  $\scrA$ and $\scrB$ are MU rank-1 POVMs. 
\end{thm}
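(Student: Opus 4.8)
The plan is to combine Lemma~\ref{lem:FidprodPOVM} with an elementary Cauchy--Schwarz bound on the cross frame potential $\Phi_{1/2}(\scrA,\scrB)$ that holds for \emph{arbitrary} POVMs, not only rank-1 ones (Lemma~\ref{lem:POVMcrossFP} only covers the rank-1 case). Writing $\Phi_{1/2}(\scrA,\scrB)=\sum_{j,k}\sqrt{\tr(A_jB_k)}\,\sqrt{\tr(A_j)\tr(B_k)}$ for $\scrA=\{A_j\}_j$ and $\scrB=\{B_k\}_k$, I would apply Cauchy--Schwarz to the two factors to obtain
\begin{align}
\Phi_{1/2}(\scrA,\scrB)&\leq\Bigl(\sum_{j,k}\tr(A_jB_k)\Bigr)^{1/2}\Bigl(\sum_{j,k}\tr(A_j)\tr(B_k)\Bigr)^{1/2}\nonumber\\
&=\sqrt{\tr(1\cdot1)}\,\sqrt{(\tr 1)^2}=d^{3/2},
\end{align}
where I used $\sum_jA_j=\sum_kB_k=1$ and $\tr 1=d$. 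Equality holds iff all ratios $\tr(A_jB_k)/[\tr(A_j)\tr(B_k)]$ coincide, in which case summing over $j,k$ forces the common value to be $1/d$; that is, the Cauchy--Schwarz step is tight iff $\scrA$ and $\scrB$ are MU.

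Granted this, the inequality $F(\scrA\otimes\scrB)\leq F_2^{\sep}$ is immediate from Lemma~\ref{lem:FidprodPOVM}, which gives $F(\scrA\otimes\scrB)\leq[2d(d+1)+2\Phi_{1/2}(\scrA,\scrB)]/[d(d+1)(d+2)]$: substituting $\Phi_{1/2}(\scrA,\scrB)\leq d^{3/2}$ and using $[2d(d+1)+2d^{3/2}]/[d(d+1)(d+2)]=2(d+1+\sqrt{d})/[(d+1)(d+2)]=F_2^{\sep}$ finishes it. (Alternatively, since $\scrA\otimes\scrB$ is a separable --- indeed product --- POVM on $\caH^{\otimes2}$, the same bound follows directly from Lemma~\ref{lem:FidSepUB}.)

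For the equality case, the forward implication combines Lemmas~\ref{lem:POVMcrossFP} and~\ref{lem:FidprodPOVM}: if $\scrA$ and $\scrB$ are MU rank-1 POVMs then $\Phi_{1/2}(\scrA,\scrB)=d^{3/2}$, and since both are rank-1 the bound of Lemma~\ref{lem:FidprodPOVM} is saturated, so $F(\scrA\otimes\scrB)=F_2^{\sep}$ [this recovers \eref{eq:FidMUB}]. For the converse, assume $F(\scrA\otimes\scrB)=F_2^{\sep}$; then the chain $F_2^{\sep}=F(\scrA\otimes\scrB)\leq[2d(d+1)+2\Phi_{1/2}(\scrA,\scrB)]/[d(d+1)(d+2)]\leq F_2^{\sep}$ collapses to equalities throughout. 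Equality in the first inequality forces $\scrA$ and $\scrB$ to be rank-1 by the saturation clause of Lemma~\ref{lem:FidprodPOVM}; equality in the second is exactly the Cauchy--Schwarz equality case above, so $\scrA$ and $\scrB$ are MU. Hence $\scrA$ and $\scrB$ are MU rank-1 POVMs, as claimed.

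The only nonroutine ingredient is the cross-frame-potential bound $\Phi_{1/2}(\scrA,\scrB)\leq d^{3/2}$ together with its equality condition in the \emph{general} (not necessarily rank-1) case, which is needed so that it can be chained with Lemma~\ref{lem:FidprodPOVM}, whose bound is already stated for arbitrary POVMs; the short argument above supplies it. One should also verify that the two equality conditions extracted along the chain --- rank-1 from Lemma~\ref{lem:FidprodPOVM} and MU from Cauchy--Schwarz --- are logically independent and jointly reproduce the conjunction in the theorem; the pair $\scrA=\scrB=\{1\}$, which is MU but not rank-1 and satisfies $F(\scrA\otimes\scrB)<F_2^{\sep}$, is a useful sanity check.
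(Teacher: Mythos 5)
Your proposal is correct, and it reaches the conclusion by a genuinely different route than the paper. The paper proves the inequality by observing that $\scrA\otimes\scrB$ is a separable POVM and invoking \lref{lem:FidSepUB} (whose saturation condition is phrased via the swap operator, $d\tr(WA_j)=\tr(A_j)$); for the "only if" direction it then argues that if $\scrA$ or $\scrB$ is not rank~1, any rank-1 refinements would have to be MU by \lsref{lem:FidPOVMbasic} and \ref{lem:FidSepUB}, forcing every state vector in the support of each element of $\scrA$ to have fidelity $1/d$ with every state vector in the support of each element of $\scrB$, which is impossible unless both POVMs are rank~1. You instead chain \lref{lem:FidprodPOVM} (whose bound and rank-1 saturation clause the paper indeed proves for arbitrary POVMs) with a self-contained Cauchy--Schwarz extension of the bound $\Phi_{1/2}(\scrA,\scrB)\leq d^{3/2}$ of \lref{lem:POVMcrossFP} to general POVMs, including the equality condition $\tr(A_jB_k)=\tr(A_j)\tr(B_k)/d$ for all $j,k$, i.e.\ mutual unbiasedness in the paper's general sense; your Cauchy--Schwarz argument and the constant-ratio equality analysis are sound (all entries $\tr(A_j)\tr(B_k)$ are strictly positive under the paper's standing assumption that no POVM element vanishes), and your computation $[2d(d+1)+2d^{3/2}]/[d(d+1)(d+2)]=F_2^{\sep}$ is right. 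What your route buys is modularity: the two saturation conditions (rank~1 from \lref{lem:FidprodPOVM}, MU from Cauchy--Schwarz) fall out independently, and you avoid the refinement/support argument entirely; what the paper's route buys is that it reuses the stronger separable-measurement bound \lref{lem:FidSepUB}, which is needed elsewhere anyway and covers nonproduct separable POVMs. Your parenthetical remark that the inequality alone also follows from \lref{lem:FidSepUB} matches the paper, and your sanity check with the trivial POVM correctly illustrates why MU alone does not suffice.
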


Although optimal product POVMs on $\caH^{\otimes 2}$ are necessarily rank 1, it should be noted that optimal separable POVMs are not subjected to this constraint.  In the case of a qubit for example, an  optimal separable POVM can be constructed from the following four rank-2 operators,
\begin{equation}
\begin{aligned}
\frac{|0+\rangle\<{0+}|+|{+0}\rangle\<+0|}{2}, \quad \frac{|0-\rangle\<{0-}|+|{-0}\rangle\<-0|}{2},\\
\frac{|1+\rangle\<{1+}|+|{+1}\rangle\<+1|}{2}, \quad \frac{|1-\rangle\<{1-}|+|{-1}\rangle\<-1|}{2},
\end{aligned}
\end{equation}
where $|\pm\>=(|0\>\pm |1\>)/\sqrt{2}$.

\begin{figure}
	\includegraphics[width=7cm]{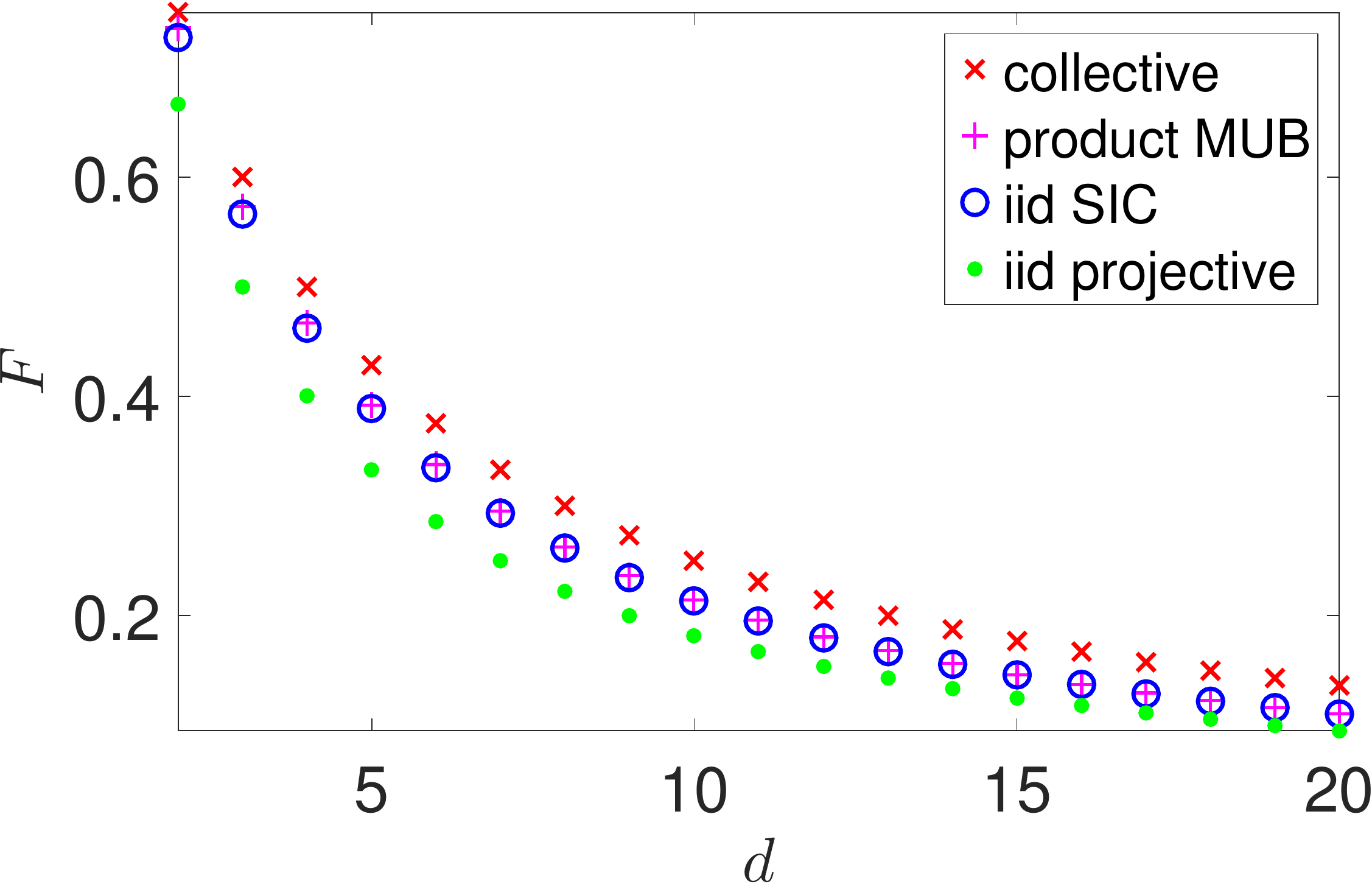}
	\caption{\label{fig:fid4settings}
		Two-copy estimation fidelities achieved by optimal collective measurements, optimal product measurements (based on MUB), and optimal independent and identical  measurements (based on a SIC), respectively. 	
		As a benchmark, the figure also shows the estimation fidelity achieved by  independent and identical rank-1 projective measurements. 
	}
\end{figure}

\begin{corollary}\label{cor:MaxFidMinProdPOVM}
	Suppose $\scrA$ and $\scrB$ are two POVMs on $\caH$ that satisfy the condition $F(\scrA\otimes \scrB)= F_2^{\sep}$. Then $\scrA\otimes \scrB$ has at least $d^2$ POVM elements, and the lower bound is saturated iff $\scrA$ and $\scrB$ are MU rank-1 projective measurements. 
\end{corollary}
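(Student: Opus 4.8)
The plan is to package two earlier results: \thref{thm:FidProdUB} and the "number of POVM elements" clause of \lref{lem:rank1Projective}. First I would invoke \thref{thm:FidProdUB}. Its saturation condition is exactly the hypothesis $F(\scrA\otimes\scrB)=F_2^{\sep}$, so $\scrA$ and $\scrB$ are forced to be mutually unbiased rank-1 POVMs. In particular each of $\scrA$ and $\scrB$ is a rank-1 POVM on $\caH$, so the (starred) \lref{lem:rank1Projective} applies: if $\scrA$ has $m$ POVM elements and $\scrB$ has $n$, then $m\geq d$ and $n\geq d$. Since the product POVM $\scrA\otimes\scrB=\{A_j\otimes B_k\}_{j,k}$ has exactly $mn$ POVM elements — its outcomes are indexed by the pairs $(j,k)$, with no collapsing of labels even if some operators happen to coincide — we conclude $mn\geq d^2$, which is the first assertion.

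For the equality case, observe that $mn=d^2$ together with $m\geq d$ and $n\geq d$ forces $m=n=d$, because $m>d$ would give $mn>dn\geq d^2$. Now the saturation clause of \lref{lem:rank1Projective} says that a rank-1 POVM on $\caH$ with exactly $d$ POVM elements is a rank-1 projective measurement; hence $m=d$ makes $\scrA$ a rank-1 projective measurement and $n=d$ makes $\scrB$ one. Combining this with the mutual unbiasedness already extracted from \thref{thm:FidProdUB}, saturation of the bound forces $\scrA$ and $\scrB$ to be MU rank-1 projective measurements. The converse is immediate: two MU rank-1 projective measurements each have exactly $d$ POVM elements, so $\scrA\otimes\scrB$ has exactly $d^2$, and $F(\scrA\otimes\scrB)=F_2^{\sep}$ by \eref{eq:FidMUB} (equivalently, by the "if" direction of \thref{thm:FidProdUB}).

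There is essentially no hard step here; the statement is a repackaging of \thref{thm:FidProdUB} and \lref{lem:rank1Projective}. The single point that needs a moment's care is to use the correct clause of \lref{lem:rank1Projective}: the one bounding the \emph{number} of POVM elements (which, when tight, yields that $\scrA$ \emph{is} a rank-1 projective measurement), rather than the one bounding $\dim(\spa(\scrA))$ (which, when tight, yields only \emph{equivalence} to a rank-1 projective measurement). Since the corollary concerns the number of outcomes of $\scrA\otimes\scrB$, the former is precisely what is required, and no genuine obstacle arises.
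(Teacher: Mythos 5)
Your proof is correct and follows essentially the same route as the paper: invoke \thref{thm:FidProdUB} to get that $\scrA$ and $\scrB$ are MU rank-1 POVMs, then apply the element-count clause of \lref{lem:rank1Projective} (in its starred, no-simplicity-needed form) to get $m,n\geq d$, and in the equality case $m=n=d$, forcing both to be rank-1 projective measurements, with the converse immediate from \eref{eq:FidMUB}. No issues.
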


\begin{corollary}\label{cor:FidToMUB}
Suppose	$\scrA$ and $\scrB$ are two simple POVMs  on $\caH$. Then $\scrA$ and $\scrB$
are MU rank-1 projective measurements iff   they satisfy the conditions $F(\scrA\otimes \scrB)= F_2^{\sep}$ and $F(\scrA^{\otimes 2})=F(\scrB^{\otimes 2})=2/(d+1)$. 
\end{corollary}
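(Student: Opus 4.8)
The plan is to prove the biconditional by treating the two directions separately, delegating essentially all of the work to \thref{thm:FidProdUB} and \thref{thm:Fid2copyProjSIC}.

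For the forward direction I would assume $\scrA$ and $\scrB$ are MU rank-1 projective measurements and simply read off the claimed fidelity values. The identity $F(\scrA^{\otimes 2})=F(\scrB^{\otimes 2})=2/(d+1)$ is exactly \eref{eq:FidProjective}, and $F(\scrA\otimes\scrB)=F_2^{\sep}$ is \eref{eq:FidMUB}; both of these are already established in the text (ultimately from \lsref{lem:OneCopyFidLBUB}, \ref{lem:POVMcrossFP}, and \ref{lem:FidprodPOVM}), so this direction is immediate.

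For the converse, the first move is to exploit the hypothesis $F(\scrA\otimes\scrB)=F_2^{\sep}$ alone: by the saturation clause of \thref{thm:FidProdUB}, this equality forces $\scrA$ and $\scrB$ to be MU rank-1 POVMs. Since they are also simple by assumption, $\scrA$ is then a simple rank-1 POVM with $F(\scrA^{\otimes 2})=2/(d+1)$, so the lower-bound saturation condition in \thref{thm:Fid2copyProjSIC} forces $\scrA$ to be a rank-1 projective measurement; the identical argument applies to $\scrB$. Combining this with the mutual unbiasedness already obtained finishes the converse. An equivalent route: once rank-1 is known, \lref{lem:OneCopyFidLBUB} yields $F(\scrA)=2/(d+1)$, after which \crref{cor:12copyProj} applies directly.

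I do not expect a genuine obstacle here; the one point that requires care is the order of operations. One cannot start from \crref{cor:12copyProj}, since that statement also needs the one-copy fidelity $F(\scrA)$, which is not among the hypotheses — so the rank-1 property must first be extracted from the cross-fidelity condition via \thref{thm:FidProdUB}, and only afterwards is ``rank-1'' upgraded to ``rank-1 projective'' using the single-system two-copy hypotheses. It is also worth remarking that neither family of conditions is redundant: $F(\scrA^{\otimes 2})=F(\scrB^{\otimes 2})=2/(d+1)$ alone is consistent with $\scrA=\scrB$ being a single rank-1 projective measurement (not a MU pair), while $F(\scrA\otimes\scrB)=F_2^{\sep}$ alone only yields MU rank-1 POVMs by \thref{thm:FidProdUB}, which need not be projective.
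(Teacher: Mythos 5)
Your proposal is correct and follows essentially the same route as the paper: the forward direction is read off from \eref{eq:FidProjective} and \eref{eq:FidMUB}, and the converse first invokes the saturation clause of \thref{thm:FidProdUB} to obtain MU rank-1 POVMs and then the lower-bound saturation in \thref{thm:Fid2copyProjSIC} to upgrade each to a rank-1 projective measurement. Your remarks on the necessary order of operations and on the non-redundancy of the hypotheses are accurate but not needed beyond the paper's argument.
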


\begin{corollary}\label{cor:MaxFMUB}
	Suppose $\scrA_1,\scrA_2,\ldots, \scrA_g$ are $g$  POVMs on $\caH$. Then 
	\begin{align}\label{eq:FidCMUB}
	\sum_{r\neq s} F(\scrA_r\otimes \scrA_s)\leq g(g-1) F_2^{\sep},
	\end{align}
	and the  upper bound is saturated iff $\scrA_1,\scrA_2,\ldots, \scrA_{g}$ are MU rank-1 POVMs. If in addition  these POVMs are simple and
	 $g=d+1$, then the upper bound is saturated iff
 $\scrA_1,\scrA_2,\ldots, \scrA_{d+1}$ 
 are rank-1 projective measurements, which form a CMUMs.  
\end{corollary}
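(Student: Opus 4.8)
The plan is to deduce Corollary~\ref{cor:MaxFMUB} directly from \thref{thm:FidProdUB}. For every ordered pair $(r,s)$ with $r\neq s$, \thref{thm:FidProdUB} gives $F(\scrA_r\otimes\scrA_s)\leq F_2^{\sep}$. There are exactly $g(g-1)$ such ordered pairs, so summing termwise yields $\sum_{r\neq s}F(\scrA_r\otimes\scrA_s)\leq g(g-1)F_2^{\sep}$, which is the claimed bound. (If $g\leq 1$ the sum is empty and there is nothing to prove, so we may assume $g\geq 2$.)

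For the saturation condition, observe that a sum of $g(g-1)$ quantities, each at most $F_2^{\sep}$, equals $g(g-1)F_2^{\sep}$ if and only if every summand equals $F_2^{\sep}$. By the equality clause of \thref{thm:FidProdUB}, $F(\scrA_r\otimes\scrA_s)=F_2^{\sep}$ holds iff $\scrA_r$ and $\scrA_s$ are MU rank-1 POVMs; since $g\geq 2$, imposing this for all pairs $r\neq s$ forces each $\scrA_r$ to be rank~1 (take any $s\neq r$) and any two of them to be mutually unbiased. Hence the bound is saturated iff $\scrA_1,\dots,\scrA_g$ form a set of pairwise MU rank-1 POVMs; the converse is immediate since, by the equality clause of \thref{thm:FidProdUB} (see also \eref{eq:FidMUB}), every MU rank-1 pair contributes exactly $F_2^{\sep}$.

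It remains to treat the case $g=d+1$ with all $\scrA_r$ simple. The saturation condition just obtained then says that $\scrA_1,\dots,\scrA_{d+1}$ is a family of $d+1$ MU \emph{simple} rank-1 POVMs on $\caH$, which is precisely the extremal configuration addressed by \thref{thm:MUPOVM}: its saturation clause forces all of them to be rank-1 projective measurements whose bases form a complete set of MUB, i.e.\ a CMUMs. Conversely, a CMUMs is in particular a family of $d+1$ pairwise MU rank-1 projective measurements, so by the previous paragraph it saturates the bound.

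I expect no real obstacle here: the substantive content is already carried by \thref{thm:FidProdUB} (for the inequality and the generic equality condition) and by \thref{thm:MUPOVM} (for upgrading "pairwise MU simple rank-1 POVMs" to "CMUMs"). The only points requiring a little care are the elementary observation that a sum of terms each bounded by a common constant is maximal precisely when every term attains that constant, and the bookkeeping of which auxiliary hypotheses ($g=d+1$ and simplicity) are needed in order to invoke \thref{thm:MUPOVM} rather than a weaker statement.
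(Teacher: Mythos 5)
Your proposal is correct and follows essentially the same route as the paper: the bound and its generic saturation condition come from termwise application of \thref{thm:FidProdUB} (including its equality clause), and the $g=d+1$ case with simple POVMs is upgraded to a CMUMs via the saturation clause of \thref{thm:MUPOVM}. The only additions beyond the paper's argument are harmless bookkeeping remarks (the empty-sum case $g\leq 1$ and the observation that pairwise equality forces each individual POVM to be rank 1).
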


Thanks to \thref{thm:FidProdUB} and its corollaries, MU rank-1 projective measurements are completely characterized by  two-copy estimation fidelities as summarized in \tref{tab:POVMtypical}. Remarkably,
 we do not even need bases to start with; orthonormal bases appear naturally once the estimation fidelities reach certain extremal values. 
 In addition, \crsref{cor:FidToMUB} and \ref{cor:MaxFMUB} do not require any assumption on the rank, purity, or the  number of POVM elements. Such simple operational characterizations are not even anticipated in the literature as far as we know. These results are of intrinsic interest to studying quantum estimation theory, quantum measurements, and the complementarity principle.

At this point, it is instructive to compare \crref{cor:fidgSIC} with \crref{cor:MaxFMUB}. The former characterizes identical SICs via the maximum of $\sum_{r, s} F(\scrA_r\otimes \scrA_s)$, which is a sum of two-copy estimation fidelities, while the latter characterizes (complete sets of) MUMs via the maximum of $\sum_{r\neq s} F(\scrA_r\otimes \scrA_s)$. The only difference in the latter summation is that the diagonal terms are absent. Quite surprisingly, this minor change in the summation leads to a jump from SICs to MUMs.

\section{\label{sec:incompatibility}Decoding quantum incompatibility}    
In addition to characterizing typical quantum measurements, the estimation fidelity encodes valuable information about noncommutativity and incompatibility of quantum measurements. Traditionally, incompatibility is usually understood as  a limitation, as embodied in the complementarity principle \cite{Bohr28} and uncertainty relations
\cite{Heis27,Robe29,BuscLW14,WehnW10,ColeBTW17}. With the development of quantum information science, incompatibility is also recognized as a resource \cite{HeinoMZ16,GuhnHKP21}. 
As a byproduct, here we show that incompatibility is a useful resource to enhance the estimation fidelity, thereby offering additional insight on this topic.

\subsection{Quantum incompatibility and the estimation fidelity}

\begin{thm}\label{thm:FidCommPOVM}
Suppose $\scrA$ and $\scrB$ are two  commuting POVMs on $\caH$. Then the two-copy estimation fidelity $F(\scrA\otimes\scrB)$	satisfies $F(\scrA\otimes \scrB)\leq 2/(d+1)$. 
\end{thm}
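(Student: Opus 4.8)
\emph{Strategy.} By \eqref{eq:FidPOVM2} the two‑copy estimation fidelity is $F(\scrA\otimes\scrB)=\bigl[d(d+1)(d+2)\bigr]^{-1}\sum_{j,k}\|\caQ(A_j\otimes B_k)\|$, and $\tfrac{2}{d+1}=\tfrac{2d(d+2)}{d(d+1)(d+2)}$, so the claim is equivalent to $\sum_{j,k}\|\caQ(A_j\otimes B_k)\|\le 2d(d+2)$. The first step is to exploit commutativity in \eqref{eq:QAB}: since $A_jB_k=B_kA_j$ we may write
\[
\caQ(A_j\otimes B_k)=\tr(A_j)\tr(B_k)+\tr(A_jB_k)+E_{jk},\qquad E_{jk}:=\tr(B_k)A_j+\tr(A_j)B_k+2A_jB_k ,
\]
where $E_{jk}\ge0$ and, because $A_j$, $B_k$, $A_jB_k$ all commute, $E_{jk}$ commutes with the two scalar terms, so $\|\caQ(A_j\otimes B_k)\|=\tr(A_j)\tr(B_k)+\tr(A_jB_k)+\|E_{jk}\|$. (Equivalently one can use $\caQ(A_j\otimes B_k)=\caQ(A_j)\caQ(B_k)+\caQ(A_jB_k)$, a sum of pairwise‑commuting positive operators, and evaluate the norm in a joint eigenbasis of $A_j$ and $B_k$.) Since $\sum_{j,k}\tr(A_j)\tr(B_k)=d^2$ and $\sum_{j,k}\tr(A_jB_k)=\tr\id=d$, the whole statement reduces to the spectral estimate $\sum_{j,k}\|E_{jk}\|\le d(d+3)$.

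\emph{A free half.} By the triangle inequality $\|E_{jk}\|\le\|\tr(B_k)A_j+\tr(A_j)B_k\|+2\|A_jB_k\|$. The cross term is harmless: $A_jB_k\ge0$ and $\sum_{j,k}A_jB_k=(\sum_jA_j)(\sum_kB_k)=\id$, hence $\sum_{j,k}\|A_jB_k\|\le\sum_{j,k}\tr(A_jB_k)=d$. So it remains to prove
\[
\textstyle\sum_{j,k}\bigl\|\tr(B_k)A_j+\tr(A_j)B_k\bigr\|\le d(d+1),
\]
which, added to $2d$, gives $d(d+3)$ as required; indeed $\{A_jB_k\}_{j,k}$ is itself a single‑copy POVM refining both $\scrA$ and $\scrB$, so conceptually this is the statement that two copies measured by commuting $\scrA,\scrB$ cannot beat the optimal single‑copy rank‑1 bound $2/(d+1)$ of \Lref{lem:OneCopyFidLBUB}.

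\emph{The main obstacle.} Proving this last bound is the hard part, and it genuinely requires the common eigenstructure rather than norm inequalities. Writing $\|\tr(B_k)A_j+\tr(A_j)B_k\|=\tr(B_k)\langle v_{jk}|A_j|v_{jk}\rangle+\tr(A_j)\langle v_{jk}|B_k|v_{jk}\rangle$ for a common eigenvector $v_{jk}$ of $A_j,B_k$ achieving the norm, the naive estimates $\langle v_{jk}|A_j|v_{jk}\rangle\le\|A_j\|$, together with $\sum_j\|A_j\|\le\sum_j\tr A_j=d$, only yield $d\sum_j\|A_j\|+d\sum_k\|B_k\|\le2d^2$, which is too weak. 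The real content is that the vectors $v_{jk}$ cannot be aligned with the top eigenspace of $A_j$ for every $k$ simultaneously (and symmetrically in $\scrB$), and that this must be paid for against the resolutions of identity $\sum_jA_j=\sum_kB_k=\id$. A workable route is: fix $j$, use the spectral decomposition $A_j=\sum_s\mu_s\Pi_s$; the projectors $\Pi_s$ commute with every $B_k$, so the operator block‑diagonalizes over the eigenspaces of $A_j$, and within each block one must control sums of the form $\sum_k\|\Pi_sB_k\Pi_s\|$ using $\sum_k\Pi_sB_k\Pi_s=\Pi_s$. The delicate point is to make this balancing quantitatively sharp: the bound $d(d+1)$ is attained exactly when $\scrA$ and $\scrB$ are one and the same rank‑1 projective measurement (cf. \eqref{eq:FidProjective} and \eqref{eq:Qjk}), so there is no slack left for a crude relaxation, and I expect this tightness to be where the proof must be most careful. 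Given the reduced inequality, the theorem follows immediately by reassembling the chain above.
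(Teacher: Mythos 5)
Your reduction is sound as far as it goes: using commutativity to write $\caQ(A_j\otimes B_k)=\tr(A_j)\tr(B_k)+\tr(A_jB_k)+E_{jk}$ with $E_{jk}\geq 0$, absorbing the scalar parts, bounding $\sum_{j,k}\|A_jB_k\|\leq\sum_{j,k}\tr(A_jB_k)=d$, and thereby reducing the theorem to the inequality $\sum_{j,k}\|\tr(B_k)A_j+\tr(A_j)B_k\|\leq d(d+1)$; your tightness check against identical rank-1 projective measurements is also correct, so no slack has been discarded. But the proposal stops exactly at the step that carries all the content: that last inequality is asserted and motivated, not proved. The sketched route (block-diagonalize over the eigenspaces of $A_j$ and control $\sum_k\|\Pi_sB_k\Pi_s\|$ via $\sum_k\Pi_sB_k\Pi_s=\Pi_s$) does not close as stated, because $\|\tr(B_k)A_j+\tr(A_j)B_k\|=\max_s[\mu_s\tr(B_k)+\tr(A_j)\|\Pi_sB_k\Pi_s\|]$ and the block achieving this maximum varies with $k$; bounding the two terms separately lands you back at the $2d^2$-type estimate you already identified as too weak. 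As written, the theorem is therefore not established.

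The missing inequality is true, and it can be closed with precisely the devices the paper uses for its key lemma (\lref{lem:QABkSumUB}, which shows $\sum_{B\in\scrB}\|\caQ(A\otimes B)\|\leq 2(d+2)\tr(A)$ whenever $A$ commutes with $\scrB$, after which the theorem follows by summing over $j$): (i) by linearity and the triangle inequality reduce to spectral projectors, since $\tr(B)A_j+\tr(A_j)B=\sum_s\mu_s[\tr(B)\Pi_s+\tr(\Pi_s)B]$; (ii) replace $\scrB$ by a rank-1 refinement commuting with the $\Pi_s$ (this only increases the sum), so each refined element $B$ satisfies $\Pi_sB=B$ or $\Pi_sB=0$, giving $\|\tr(B)\Pi_s+\tr(\Pi_s)B\|$ equal to $(r_s+1)\tr(B)$ or $r_s\tr(B)$ with $r_s=\tr\Pi_s$; (iii) sum exactly using $\sum_BB=1$ and $\sum_{B:\Pi_sB=B}B=\Pi_s$ to get $r_s(d+1)$ per projector, hence $(d+1)\tr(A_j)$ per $j$ and $d(d+1)$ overall. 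Note that this is essentially the paper's own argument transplanted into your $E_{jk}$ decomposition; the paper skips your preliminary splitting and runs the projector-plus-rank-1-refinement computation directly on $\|\caQ(A\otimes B)\|$, which is slightly cleaner. Either way, the projector reduction and the exact count against the resolutions of identity are the ingredients your proposal still needs to supply.
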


\Thref{thm:FidCommPOVM} shows that  two-copy measurements based on two commuting POVMs cannot provide any advantage over one-copy measurements. In other words,  noncommutativity is necessary to go beyond the single-copy limit on the estimation fidelity. \Thref{thm:FidProdPOVMLB} below further shows that noncommutativity is also  sufficient to achieve this goal when one POVM is rank 1, as illustrated in \fref{fig:fidQubitPOVM}.

\begin{thm}\label{thm:FidJMPOVM}
Suppose  $\scrA$ and $\scrB$ are two compatible POVMs on $\caH$.	Then the two-copy estimation fidelity $F(\scrA\otimes \scrB)$ satisfies $F(\scrA\otimes\scrB )\leq F_2^{\iid}$. If in addition both $\scrA$ and $\scrB$ are simple, then the upper bound is saturated iff $\scrA$ and $\scrB$ are identical SICs up to relabeling. 			
\end{thm}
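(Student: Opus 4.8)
The plan is to use compatibility only through the single structural fact it provides — the existence of a common refinement — and then to reduce everything to results already established for two copies of one POVM.

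\textbf{Reducing the bound to the i.i.d.\ SIC case.} Since $\scrA$ and $\scrB$ are compatible, they admit a common refinement $\scrC=\{C_\ell\}_\ell$, that is, $\scrA\preceq\scrC$ and $\scrB\preceq\scrC$; replacing $\scrC$ by its simple version via \lref{lem:SimplePOVM} and using transitivity of the order relation of \sref{sec:Order}, I may assume $\scrC$ is simple. Writing $A_j=\sum_\ell\Lambda^A_{j\ell}C_\ell$ and $B_k=\sum_\ell\Lambda^B_{k\ell}C_\ell$ with $\Lambda^A,\Lambda^B$ stochastic, one has
\begin{equation}
A_j\otimes B_k=\sum_{\ell,m}\Lambda^A_{j\ell}\Lambda^B_{km}\,C_\ell\otimes C_m,
\end{equation}
and the matrix $(j,k)\mapsto\Lambda^A_{j\ell}\Lambda^B_{km}$ is again stochastic, so $\scrA\otimes\scrB\preceq\scrC\otimes\scrC=\scrC^{\otimes2}$. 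Hence, by \eqref{eq:fidCoarseGraining} and \crref{cor:Fid2copyUBmix},
\begin{equation}
F(\scrA\otimes\scrB)\le F(\scrC^{\otimes2})\le F_2^{\iid},
\end{equation}
which is the asserted inequality.

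\textbf{Rigidity.} Now assume $\scrA$ and $\scrB$ are simple and $F(\scrA\otimes\scrB)=F_2^{\iid}$. The display above forces $F(\scrC^{\otimes2})=F_2^{\iid}$, so the equality clause of \crref{cor:Fid2copyUBmix}, applicable because $\scrC$ is simple, tells us $\scrC$ is a SIC. Next, inserting $\scrA\otimes\scrC$ between $\scrA\otimes\scrB$ and $\scrC^{\otimes2}$ (tensoring a stochastic matrix with the identity is stochastic, so $\scrA\otimes\scrB\preceq\scrA\otimes\scrC\preceq\scrC^{\otimes2}$), the chain $F_2^{\iid}=F(\scrA\otimes\scrB)\le F(\scrA\otimes\scrC)\le F(\scrC^{\otimes2})=F_2^{\iid}$ pins down $F(\scrC\otimes\scrA)=F_2^{\iid}$. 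Applying \crref{cor:SICidentical} to the pair $(\scrC,\scrA)$ — both simple, with $F(\scrC\otimes\scrA)=F(\scrC^{\otimes2})=F_2^{\iid}$ — shows that $\scrA$ and $\scrC$ are identical SICs up to relabeling; running the same argument with $\scrB$ in place of $\scrA$ shows the same for $\scrB$ and $\scrC$, hence $\scrA$ and $\scrB$ are identical SICs up to relabeling. Conversely, if $\scrA$ and $\scrB$ are identical SICs up to relabeling then $\scrA$ is itself a common refinement of both (so they are compatible) and $F(\scrA\otimes\scrB)=F(\scrA^{\otimes2})=F_2^{\iid}$ by \eqref{eq:FidSIC}, closing the equivalence.

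\textbf{Where the work sits.} The only genuinely new input is the elementary observation $\scrA\otimes\scrB\preceq\scrC^{\otimes2}$, which converts a statement about a compatible pair into a statement about two copies of a single POVM; all the quantitative content — the value $F_2^{\iid}$ and the characterization of its extremizers — is then carried by \crsref{cor:Fid2copyUBmix} and \ref{cor:SICidentical}, already proved. I expect no real obstacle: the only points requiring care are the reduction to a \emph{simple} common refinement (needed so that the equality clauses of those corollaries apply) and the squeeze argument isolating $F(\scrA\otimes\scrC)$, both of which are routine bookkeeping.
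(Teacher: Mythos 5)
Your proof is correct, and the inequality part coincides with the paper's (common refinement $\scrC$, monotonicity under coarse graining, then \crref{cor:Fid2copyUBmix}); the explicit check that $\scrA\otimes\scrB\preceq\scrC^{\otimes 2}$ via the product of stochastic matrices, and the reduction to a \emph{simple} $\scrC$ by \lref{lem:SimplePOVM} and transitivity, are routine details the paper leaves implicit. Where you genuinely diverge is the saturation analysis. The paper notes that equality forces all three inequalities in $F(\scrA\otimes\scrB)\leq F(\scrC\otimes\scrB)\leq F(\scrC^{\otimes 2})\leq F_2^{\iid}$ to be tight, identifies $\scrC$ as a SIC (hence irreducible), and then invokes \lref{lem:FidCoarsegraining} twice — the strict-decrease-under-nontrivial-coarse-graining result, whose proof rests on the operator-norm subadditivity machinery of \lsref{lem:SymProjNormSum} and \ref{lem:SymProjNormSumPOVM} and needs the irreducibility of $\scrC$ — to conclude that $\scrB$ and then $\scrA$ are equivalent to $\scrC$, finishing with \lref{lem:SimplePOVM}. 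You instead squeeze the mixed fidelities $F(\scrA\otimes\scrC)$ and $F(\scrB\otimes\scrC)$ between $F(\scrA\otimes\scrB)$ and $F(\scrC^{\otimes 2})$ and apply \crref{cor:SICidentical} twice, so the rigidity is carried by the cross-frame-potential bound of \lref{lem:POVMcrossFP2design} rather than by \lref{lem:FidCoarsegraining}; there is no circularity, since \crref{cor:SICidentical} is proved independently of this theorem. Your route is arguably leaner here because \crref{cor:SICidentical} already packages the asymmetric saturation condition, while the paper's route via \lref{lem:FidCoarsegraining} is the more flexible tool (it does not presuppose that the refined fidelity saturates a SIC-type extremum), which is presumably why the paper uses it. Either argument is acceptable.
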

Here $F_2^{\iid}$ is  defined in \eref{eq:F2iid}. \Thref{thm:FidJMPOVM} shows that  two-copy measurements based on two compatible POVMs cannot go beyond the estimation fidelity achieved by product measurements based on identical  SICs. It offers another operational characterization of SICs that does not rely on any assumption on the rank, purity, or the number of measurement outcomes; meanwhile, it provides a universal criterion for detecting incompatibility of two POVMs. Such universal criteria are quite rare in the literature \cite{Zhu15IC,ZhuHC16,HeinJN22}.
These  results highlight the intriguing connection between SICs and quantum incompatibility, which is of intrinsic interest to foundational studies. 

Suppose $\scrA_1, \scrA_2, \ldots, A_g$ are $g$ compatible POVMs. Then \thref{thm:FidJMPOVM} implies that 
\begin{align}
\sum_{r\neq s}^g F(\scrA_r\otimes \scrA_s)\leq g(g-1)F_2^{\iid}.
\end{align}
In addition, since $\scrA_1, \scrA_2, \ldots, A_g$ admit a common refinement, say $\scrB$, we can also deduce that 
\begin{align}
F(\scrA_1\otimes \scrA_2 \otimes \cdots \otimes \scrA_r)\leq F(\scrB^{\otimes g})\leq F_g^{\iid},
\end{align}
where $F_g^{\iid}$ denotes the maximum estimation fidelity achieved by identical and independent measurements on $\caH^{\otimes g}$. This result provides a universal criterion for detecting   incompatibility of $g$ arbitrary POVMs.  Unfortunately, it is not easy to determine $F_g^{\iid}$ for $g\geq 3$; this problem deserves further study.

\begin{thm}\label{thm:FidProdPOVMLB}
	Suppose  $\scrA$ and $\scrB$ are  POVMs on $\caH$ with $\scrA$ being rank 1. Then the two-copy estimation fidelity $F(\scrA\otimes\scrB)$	satisfies $F(\scrA\otimes \scrB)\geq F(\scrA)=2/(d+1)$, and the inequality is saturated iff $\scrA$ commutes with $\scrB$.  If $\scrA$ and $\scrB$ are simple rank-1 POVMs,  then the inequality is saturated iff $\scrA$ and $\scrB$ are identical rank-1 projective measurements up to relabeling. 
\end{thm}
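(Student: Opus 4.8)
The plan is to compute $F(\scrA\otimes\scrB)$ directly from \eqref{eq:FidPOVM2} with $N=2$, namely $F(\scrA\otimes\scrB)=\sum_{j,k}\|\caQ(A_j\otimes B_k)\|\big/[d(d+1)(d+2)]$, together with the explicit formula \eqref{eq:QAB} for $\caQ(A\otimes B)$. Since $\scrA$ is rank~1 I would write $A_j=a_j|\psi_j\>\<\psi_j|$ with $a_j=\tr A_j$, so $\sum_j a_j=d$. Because $\caQ$ maps positive operators to positive operators (it is a constant times $\tr_{1,2}[P_3(\,\cdot\,\otimes 1)]$, and conjugation by $P_3$ together with the partial trace preserves positivity), one has $\|\caQ(A_j\otimes B_k)\|\geq\<\psi_j|\caQ(A_j\otimes B_k)|\psi_j\>$, with equality iff $|\psi_j\>$ is a leading eigenvector of $\caQ(A_j\otimes B_k)$. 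A one-line evaluation of \eqref{eq:QAB} gives $\<\psi_j|\caQ(A_j\otimes B_k)|\psi_j\>=2a_j\tr B_k+4a_j\<\psi_j|B_k|\psi_j\>$; summing over $k$ with $\sum_k B_k=1$ and then over $j$ with $\sum_j a_j=d$ collapses the total to $2d^2+4d=2d(d+2)$, which yields $F(\scrA\otimes\scrB)\geq 2/(d+1)=F(\scrA)$, the value of $F(\scrA)$ supplied by \lref{lem:OneCopyFidLBUB}. (The inequality $F(\scrA\otimes\scrB)\geq F(\scrA)$ is also immediate from \eqref{eq:fidProductComponent}, but the direct computation is what controls the equality case.)

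For the equality conditions I would argue that saturation is equivalent to $|\psi_j\>$ being a leading eigenvector of $\caQ(A_j\otimes B_k)$ for every $j,k$. Evaluating $\caQ(A_j\otimes B_k)|\psi_j\>$ from \eqref{eq:QAB} gives $2a_j\bigl(\tr B_k+\<\psi_j|B_k|\psi_j\>\bigr)|\psi_j\>+2a_j B_k|\psi_j\>$, which is proportional to $|\psi_j\>$ iff $B_k|\psi_j\>\in\spa\{|\psi_j\>\}$, i.e.\ iff $[A_j,B_k]=0$; hence equality forces $\scrA$ and $\scrB$ to commute. Conversely, assuming $[A_j,B_k]=0$ for all $j,k$, I would diagonalize each $B_k$ in an orthonormal basis whose first vector is $|\psi_j\>$ and read off from \eqref{eq:QAB} that $|\psi_j\>$ is an eigenvector of $\caQ(A_j\otimes B_k)$ with eigenvalue $2a_j\tr B_k+4a_j\beta$, where $\beta=\<\psi_j|B_k|\psi_j\>$, while the remaining eigenvectors carry eigenvalues $a_j(\tr B_k+\beta+\beta')$ with $0\leq\beta'\leq\tr B_k$; since $(2\tr B_k+4\beta)-(\tr B_k+\beta+\beta')=\tr B_k-\beta'+3\beta\geq0$, the vector $|\psi_j\>$ is a leading eigenvector and equality is restored. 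Finally, if $\scrA$ and $\scrB$ are both simple rank-1 POVMs, \lref{lem:commutePOVMrank1} converts ``$\scrA$ commutes with $\scrB$'' into ``$\scrA$ and $\scrB$ are identical rank-1 projective measurements up to relabeling''.

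The main obstacle will be the converse half of the equality statement: having $|\psi_j\>$ be merely an eigenvector of $\caQ(A_j\otimes B_k)$ is not enough --- it must belong to the top eigenvalue --- so the eigenvalue comparison in the commuting case is genuinely needed and not a formality. The positivity of $\caQ(A_j\otimes B_k)$, which makes the operator norm equal to the largest eigenvalue, should also be stated explicitly, although it is in the same spirit as the fact $\|\caQ(A)\|=\tr A+\|A\|$ already used after \eqref{eq:QA}. The remaining steps are routine manipulations with \eqref{eq:QAB} and the POVM normalization conditions.
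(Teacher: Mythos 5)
Your proof is correct and follows essentially the same route as the paper's: both reduce saturation to the per-outcome condition that $|\psi_j\>$ be a top eigenvector of $\caQ(A_j\otimes B_k)$, extract commutativity of $A_j$ with $B_k$ from the explicit formula \eref{eq:QAB}, and invoke \lref{lem:commutePOVMrank1} to pass to identical rank-1 projective measurements in the simple case. The only cosmetic differences are that you establish the "commuting $\Rightarrow$ saturation" direction by an explicit eigenvalue comparison, whereas the paper obtains it from \thref{thm:FidCommPOVM} (and then sketches the same direct computation), and your parenthetical justification of positivity of $\caQ$ is loosely worded but harmless, since Hermiticity of $\caQ(A_j\otimes B_k)$ already gives $\|\caQ(A_j\otimes B_k)\|\geq\<\psi_j|\caQ(A_j\otimes B_k)|\psi_j\>$.
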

The last statement in \thref{thm:FidProdPOVMLB} is tied to the characterization of identical rank-1 projective measurements presented in \crref{cor:rank1Projidentical}. 
As an implication of \thsref{thm:FidProdUB} and \ref{thm:FidProdPOVMLB}, any pair of rank-1 POVMs $\scrA$  and $\scrB$ on $\caH$ satisfies 
\begin{align}
\frac{2}{d+1}\leq F(\scrA\otimes \scrB)\leq F_2^{\sep}. 
\end{align}
The lower bound is saturated iff $\scrA$ and $\scrB$ commute, which means they are equivalent to a same rank-1 projective measurement, while the upper bound is saturated iff $\scrA$ and $\scrB$ are MU. Note that MU measurements are often regarded as maximally incompatible measurements \cite{Schw60,Ivan81,WootF89,DurtEBZ10,DesiSFB19}. 
The above results show that incompatibility is a resource to enhance the estimation fidelity. Although the significance of incompatibility as a resource has been recognized before, results like \thref{thm:FidProdPOVMLB} are still quite rare because it is not easy to establish conditions that are both necessary and sufficient.

\subsection{Concrete examples}

As an illustration, let us consider two  binary POVMs $\scrA=\{A_+, A_-\}$ and $\scrB=\{B_+, B_-\}$ acting on  a qubit ($d=2$), where $A_-=1-A_+$ and $B_-=1-B_+$. 
Note that the two POVMs are completely determined by the two  effect operators $A_+$ and $B_+$, respectively,  which satisfy the condition $0\leq A_+, B_+\leq 1$. Without loss of generality, we can assume that $1\leq \tr(A_+),\tr(B_+)<2$. Then 
 $A_\pm$ and $B_\pm$ can be expressed as 
\begin{align}
A_\pm=\frac{1\pm\alpha\pm\vec{a}\cdot \vec{\sigma}}{2},\quad B_\pm=\frac{1\pm\beta\pm\vec{b}\cdot \vec{\sigma}}{2},
\end{align}
where $\vec{\sigma}=(\sigma_x,\sigma_y,\sigma_z)$ is the vector composed of the three Pauli operators and the parameters $\alpha, \beta, \vec{a},\vec{b}$ satisfy the conditions
\begin{align}
0\leq \alpha,\beta< 1,\quad  |\vec{a}|\leq 1-\alpha, \quad |\vec{b}|\leq 1-\beta. 
\end{align}
The parameter $\alpha$ ($\beta$) characterizes the bias of the POVM $\scrA$ ($\scrB$), while the parameter $|\vec{a}|$ ($|\vec{b}|$) characterizes the sharpness of  $\scrA$ ($\scrB$). 
Notably,  $\scrA$ is unbiased iff $\alpha=0$, while
$\scrA$ is rank 1 iff $\alpha=0$ and $|\vec{a}|=1$. Similarly, $\scrB$ is unbiased iff $\beta=0$, 
while $\scrB$ is rank 1 iff $\beta=0$ and $|\vec{b}|=1$.
In addition, $\scrA$ and $\scrB$ are MU iff $\vec{a}\cdot \vec{b}=0$.

The single-copy estimation fidelity $F(\scrA)$ can be computed by virtue of \lref{lem:OneCopyFidLBUB}, with the result
\begin{align}
F(\scrA)=\frac{3+|\vec{a}|}{6}. 
\end{align}
The two-copy estimation fidelity $F(\scrA^{\otimes 2})$  can be computed  by virtue of \esref{eq:FidPOVM2}{eq:QAB}, with the result
\begin{align}
F(\scrA^{\otimes 2})=\frac{3+|\vec{a}|+\alpha |\vec{a}|}{6}.
\end{align}
Interestingly, the single-copy estimation fidelity of a  binary POVM on a qubit is completely determined by its sharpness, while the two-copy estimation fidelity  depends on both    sharpness and bias. In addition, the parameters $\alpha$ and $|\vec{a}|$ are completely determined by $F(\scrA)$ and $F(\scrA^{\otimes 2})$.
Notably, the inequality $F(\scrA^{\otimes 2})\geq F(\scrA)$ is saturated iff $\scrA$ is unbiased ($\alpha=0$) or trivial ($|\vec{a}|=0$).

To determine the estimation fidelity of the product POVM $\scrA\otimes \scrB$, we need to compute $\caQ(A_\pm\otimes B_\pm)$.  By virtue of \eref{eq:QAB} we can derive the following result,
\begin{align}
\caQ(A_+\otimes B_+)&= 3(1+\alpha)(1+\beta)+\vec{a}\cdot \vec{b}+(1+\beta)\vec{a}\cdot \vec{\sigma} \nonumber\\
&\quad +(1+\alpha) \vec{b}\cdot \vec{\sigma},
\end{align}
which implies that
\begin{align}
\|\caQ(A_+\otimes B_+)\|&=3(1+\alpha)(1+\beta)+\vec{a}\cdot \vec{b}\nonumber\\
&\quad+|(1+\beta)\vec{a} +(1+\alpha) \vec{b}|. 
\end{align}
The norm $\|\caQ(A_+\otimes B_-)\|$ can be derived by replacing $\beta$ and $\vec{b}$ with $-\beta$ and $-\vec{b}$, respectively; a similar recipe applies to $\|\caQ(A_-\otimes B_\pm)\|$.  Now the estimation fidelity $F(\scrA\otimes \scrB)$ can be calculated using  \eref{eq:FidPOVM2}, with the result
\begin{align}
F(\scrA\otimes \scrB)&=\frac{1}{2}+\frac{1}{24}\bigl[\lsp|(1+\beta)\vec{a} +(1+\alpha) \vec{b}|\nonumber\\
&\quad+|(1-\beta)\vec{a} -(1+\alpha) \vec{b}|\nonumber\\
&\quad +|(1+\beta)\vec{a} -(1-\alpha) \vec{b}|\nonumber\\
&\quad +|(1-\beta)\vec{a} +(1-\alpha) \vec{b}|\lsp\bigr]. \label{eq:FidPOVMbinary}
\end{align}

\begin{figure}
	\includegraphics[width=7.5cm]{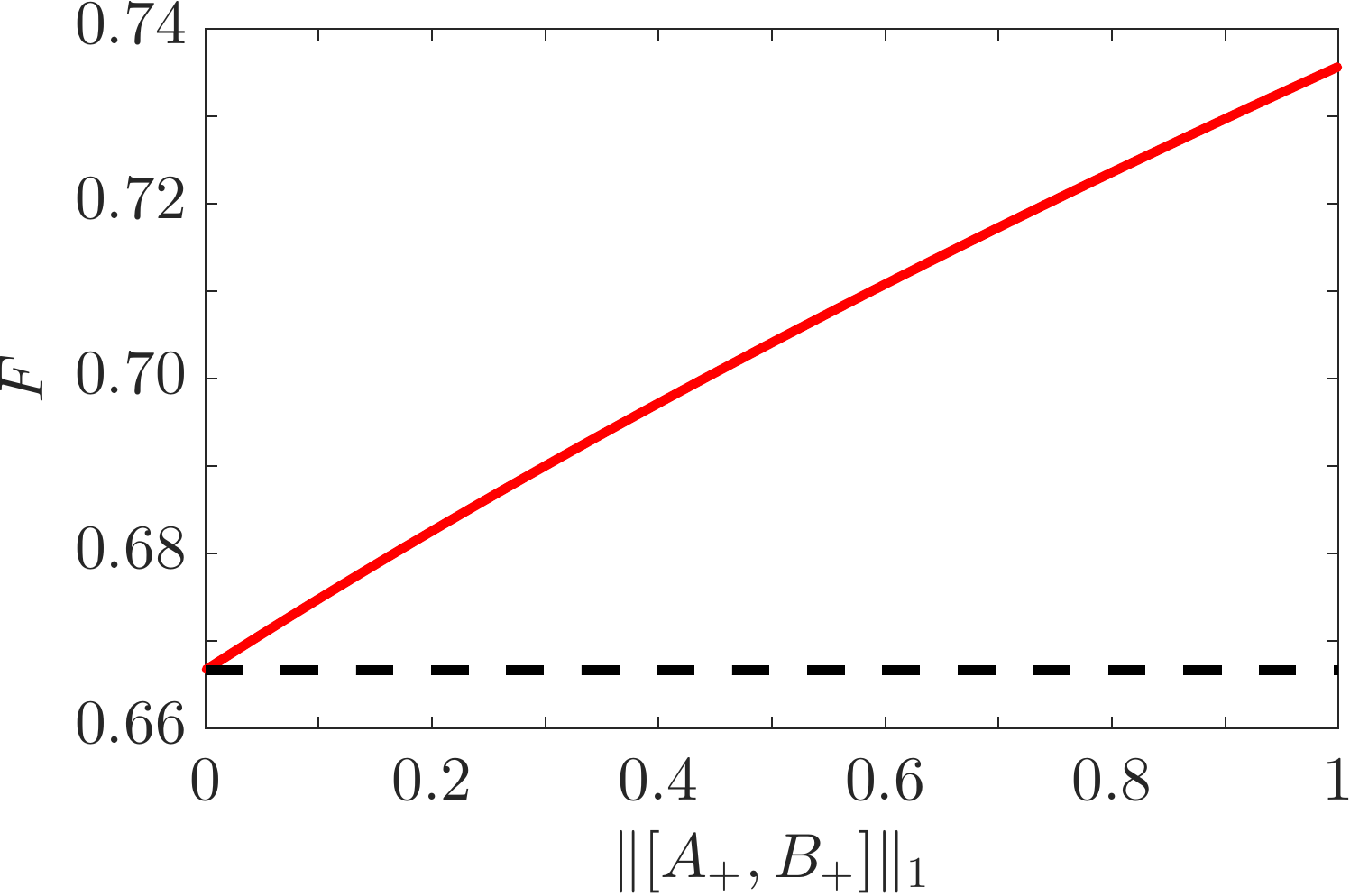}
	\caption{\label{fig:fidQubitPOVM}
		Relation between the  two-copy estimation fidelity 	and the commutator of POVM elements. Here $F=F(\scrA\otimes\scrB)$ is the estimation fidelity achieved  by the tensor product of two rank-1 projective measurements as shown in \eref{eq:Fab}, while $\|[A_+,B_+]\|_1$ is the  1-norm of the commutator $[A_+,B_+]$. 
		As a benchmark, the black dashed line denotes the one-copy estimation fidelity achieved by a  rank-1 projective measurement.
	}
\end{figure}

If $\scrA$ and $\scrB$ are unbiased, which means $\alpha=\beta=0$, then \eref{eq:FidPOVMbinary} reduces to 
\begin{align}
&F(\scrA\otimes \scrB)=\frac{1}{2}+\frac{|\vec{a}+\vec{b}| +|\vec{a}-\vec{b}| }{12}.
\end{align}
Note that $F(\scrA\otimes \scrB)\leq 2/3$ iff $|\vec{a}+\vec{b}| +|\vec{a}-\vec{b}|\leq 2$.  It is well known that the latter condition  holds iff $\scrA$  and $\scrB$ are compatible (jointly measurable) \cite{Busc86,StanRH08,BuscS10,YuLLO10}. Therefore,  incompatibility is both necessary and sufficient to enhance the estimation fidelity beyond the single-copy limit in this special case.  In addition, we have $F(\scrA\otimes \scrB)=\max\{F(\scrA),F(\scrB)\}$
when $\vec{a}$ and $\vec{b}$ are parallel or antiparallel, which is consistent with \thsref{thm:FidCommPOVM} and \ref{thm:FidProdPOVMLB}.

If $\scrA$ and $\scrB$ are rank-1 projective measurements, which means $\alpha=\beta=0$ and $|\vec{a}|=|\vec{b}|=1$, then   $F(\scrA\otimes \scrB)$ can  be expressed as 
\begin{align}
F(\scrA\otimes \scrB)
&=\frac{3+\sqrt{1+|\vec{a}\times \vec{b}|}}{6}=\frac{3+\sqrt{1+|\sin\phi|}}{6}
\nonumber\\
&=\frac{3+\sqrt{1+\|[A_+,B_+]\|_1}}{6}\label{eq:Fab}
, 
\end{align}
where $\phi$ is the angle between $\vec{a}$ and $\vec{b}$, and $\|[A_+,B_+]\|_1$ is the Schatten 1-norm (or trace norm) of the commutator  $[A_+,B_+]:=A_+ B_+ - B_+ A_+$. Note that 
\begin{align}
[A_+,B_+]=[A_-,B_-]=-[A_+,B_-]=-[A_-,B_+]. 
\end{align}
Therefore, $F(\scrA\otimes \scrB)\geq 2/3$, and the lower bound is saturated iff $\vec{a}$ and $\vec{b}$ are parallel or antiparallel, in which case  $\scrA$ and $\scrB$ commute and are identical rank-1 projective measurements up to relabeling, as shown in \thref{thm:FidProdPOVMLB} and illustrated in \fref{fig:fidQubitPOVM}.

\subsection{Connection with entropic uncertainty relations}

Entropic uncertainty relations are another important manifestation of quantum incompatibility \cite{WehnW10,ColeBTW17}. Given two rank-1 projective measurements $\scrA$ and $\scrB$ on $\caH$, is there any connection between  the two-copy estimation fidelity $F(\scrA\otimes \scrB)$ and entropic uncertainty relations between $\scrA$ and $\scrB$? Here we shall reveal a precise connection in the case of a qubit. 
Denote by $H_\rho(\scrA)$ [$H_\rho(\scrB)$] the entropy of measurement outcomes when the projective measurement $\scrA$ ($\scrB$) is performed on a given state $\rho$. Then the entropy sum $H_\rho(\scrA)+H_\rho(\scrB)$ satisfies a state-independent entropic uncertainty relation \cite{GhirMR03,WehnW10,ColeBTW17}, 
\begin{gather}
H_\rho(\scrA)+H_\rho(\scrB)\geq H_{\mathrm{mes}}(\scrA, \scrB)= H_{\mathrm{mes}}(\phi),  \label{eq:EntropicUR}
\end{gather}
where $H_{\mathrm{mes}}(\scrA\otimes \scrB)$ denotes the minimum entropy sum associated with the two projective measurements $\scrA, \scrB$, and $\phi$ is the angle between $\vec{a}$ and $\vec{b}$ as in \eref{eq:Fab}. Note that $H_{\mathrm{mes}}(\scrA,\scrB)$ is completely determined by $\phi$, so we can  write $H_{\mathrm{mes}}(\phi)$ in its place.

\begin{figure}[b]
	\includegraphics[width=7.5cm]{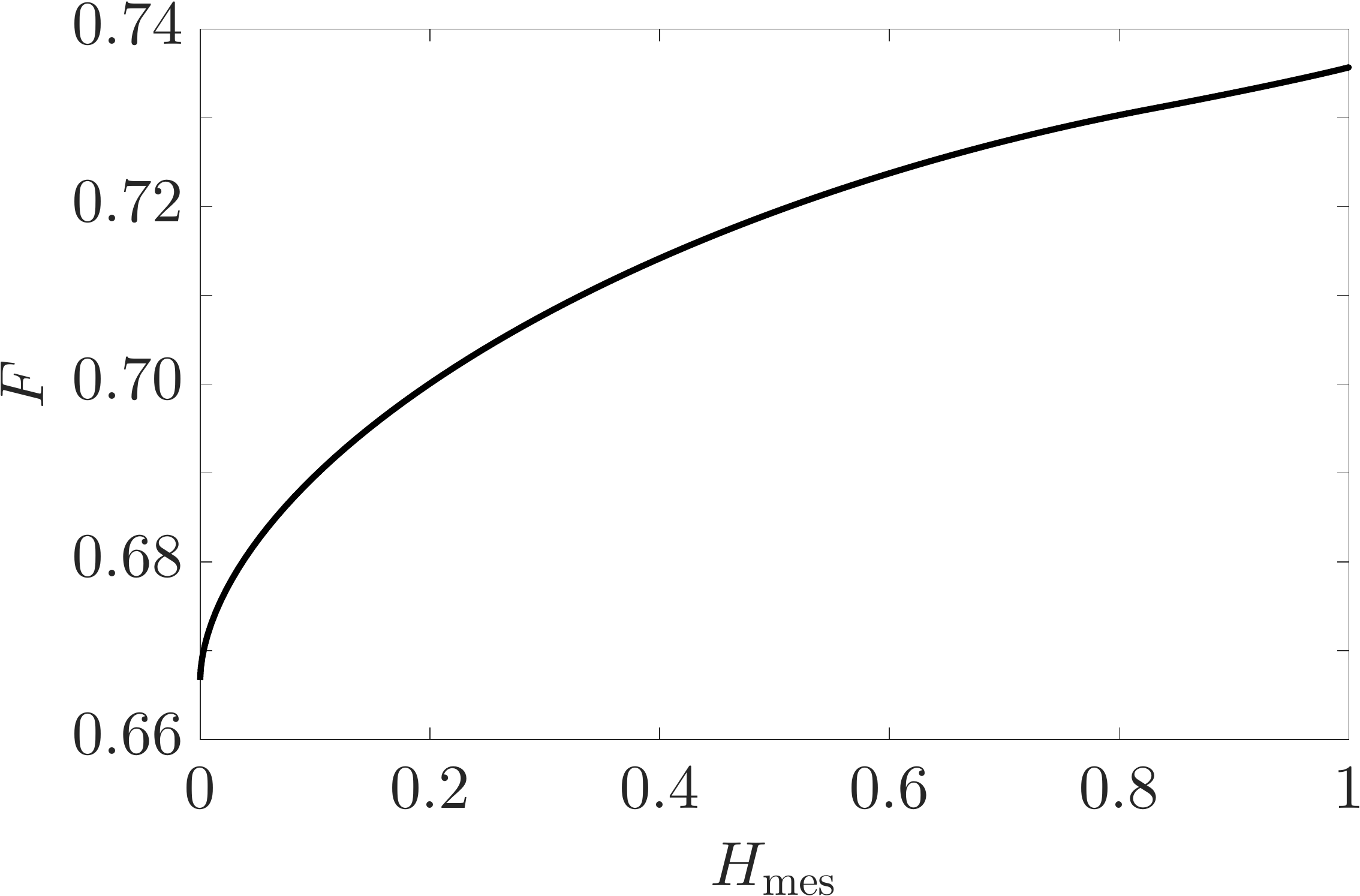}
	\caption{\label{fig:MesFid}
		Relation between the  two-copy estimation fidelity  and the minimum entropy sum. Here $F$ is the two-copy estimation fidelity associated with two rank-1  projective measurements on a qubit as shown in \eref{eq:Fab}, and $H_{\mathrm{mes}}$ is the  minimum entropy sum	presented in \eref{eq:MES}, which characterizes  the entropic uncertainty relation in \eref{eq:EntropicUR}.
	}
\end{figure}

Let 
\begin{gather}
p=\frac{1+\cos\theta }{2}, \quad q=\frac{1+\cos(\theta-\phi) }{2}.
\end{gather}
Then $H_{\mathrm{mes}}(\phi)$ can be expressed as  \cite{GhirMR03,ColeBTW17}
\begin{gather}
H_{\mathrm{mes}}(\phi)=\min_{0\leq \theta<2\pi}[h_{\mathrm{bin}}(p)+h_{\mathrm{bin}}(q)], \label{eq:MES}
\end{gather}
where $h_{\mathrm{bin}}(p)$ is the binary Shannon entropy defined as 
\begin{align}
h_{\mathrm{bin}}(p):=-p\log_2 p-(1-p)\log_2(1-p).
\end{align}
When $0\leq \phi\leq \phi_1$ with $\phi_1\approx 1.17056$, the minimum  in  \eref{eq:MES} is attained at $\theta=\phi/2$ \cite{GhirMR03}; in general, the minimum can be determined by numerical calculation. In addition, it is easy to verify that 
\begin{align}
H_{\mathrm{mes}}(-\phi)=H_{\mathrm{mes}}(\pi+\phi)=H_{\mathrm{mes}}(\phi).
\end{align}
So the value of $H_{\mathrm{mes}}(\phi)$ is determined by $|\sin\phi|$. Moreover, it is not difficult to show that $H_{\mathrm{mes}}(\phi)$ is monotonically increasing in  $|\sin\phi|$, just like  the two-copy estimation fidelity $F(\scrA\otimes \scrB)$ in  \eref{eq:Fab}.  Therefore, the minimum entropy sum  $H_{\mathrm{mes}}(\scrA,\scrB)$ associated with two rank-1 projective measurements on a qubit 
is determined by the two-copy estimation fidelity $F(\scrA\otimes \scrB)$, and vice versa, as illustrated in \fref{fig:MesFid}.  This observation reveals a surprising connection between quantum state estimation and entropic uncertainty relations.

\section{\label{sec:IneqMUBSIC}Distinguishing inequivalent MUB and SICs}  
In this section we show that the three-copy estimation fidelity can be used to distinguish inequivalent discrete symmetric structures tied to the quantum state space, including MUB and SICs in particular. 
This capability is rooted in the fact that  the three-copy estimation fidelity encodes valuable information about the triple products of POVM elements, which play important roles in studying SICs \cite{Zhu10,ApplFF11,ApplDF14} and discrete Wigner functions \cite{Woot87}. 
Note that such information cannot be retrieved by considering pairwise overlaps alone.

\subsection{\label{sec:IneqMUB}Operational distinction between  inequivalent  MUB} 

\begin{figure}[b]
	\includegraphics[width=7.5cm]{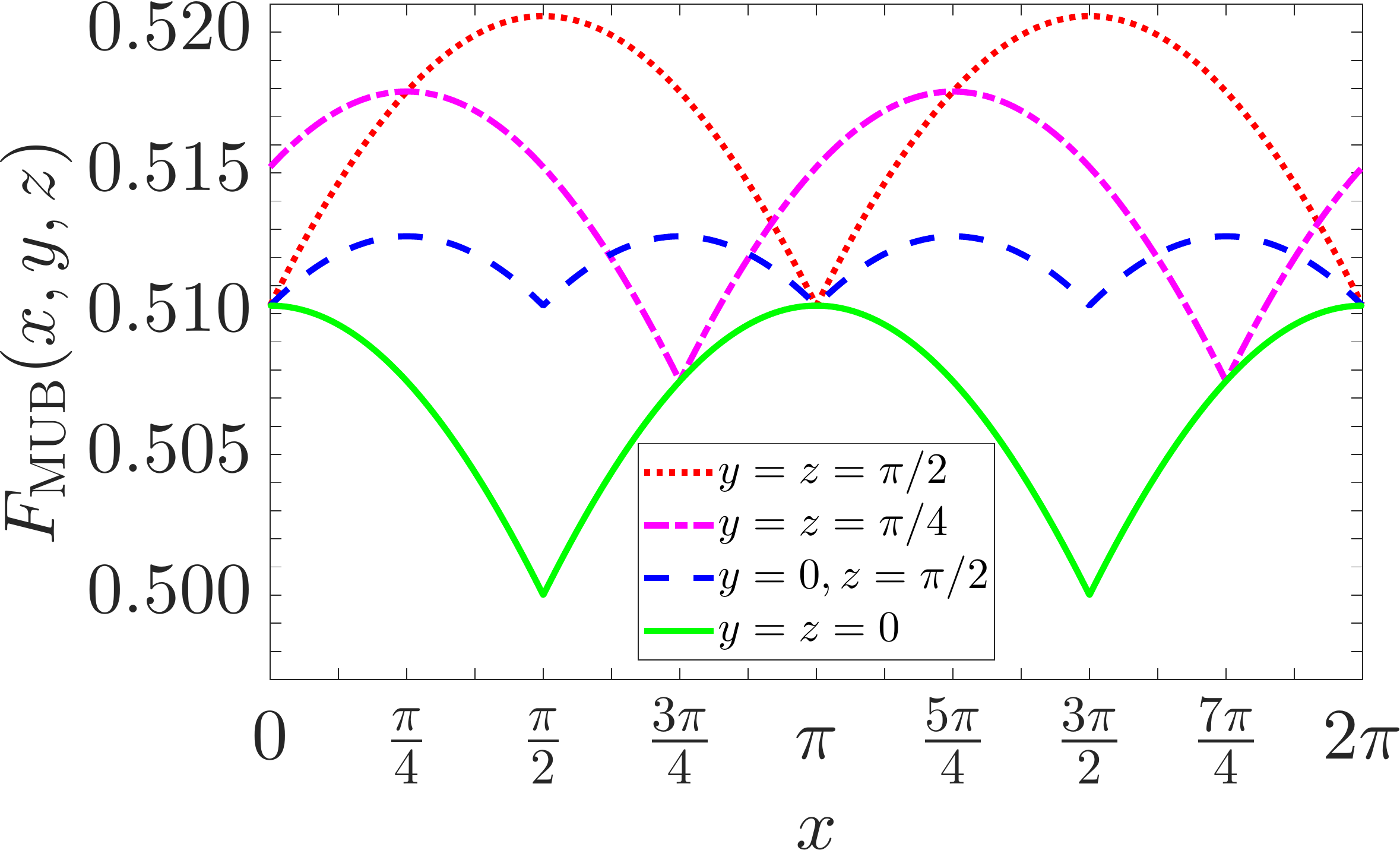}
	\caption{\label{fig:fidMUB1}
		Three-copy estimation fidelity  $F_{\mub}(x,y,z)$   achieved by the product projective measurement based on a triple of MUB as determined in  \eref{eq:Fid3MUB}. 
	}
\end{figure}

\begin{figure*}
	\includegraphics[width=14cm]{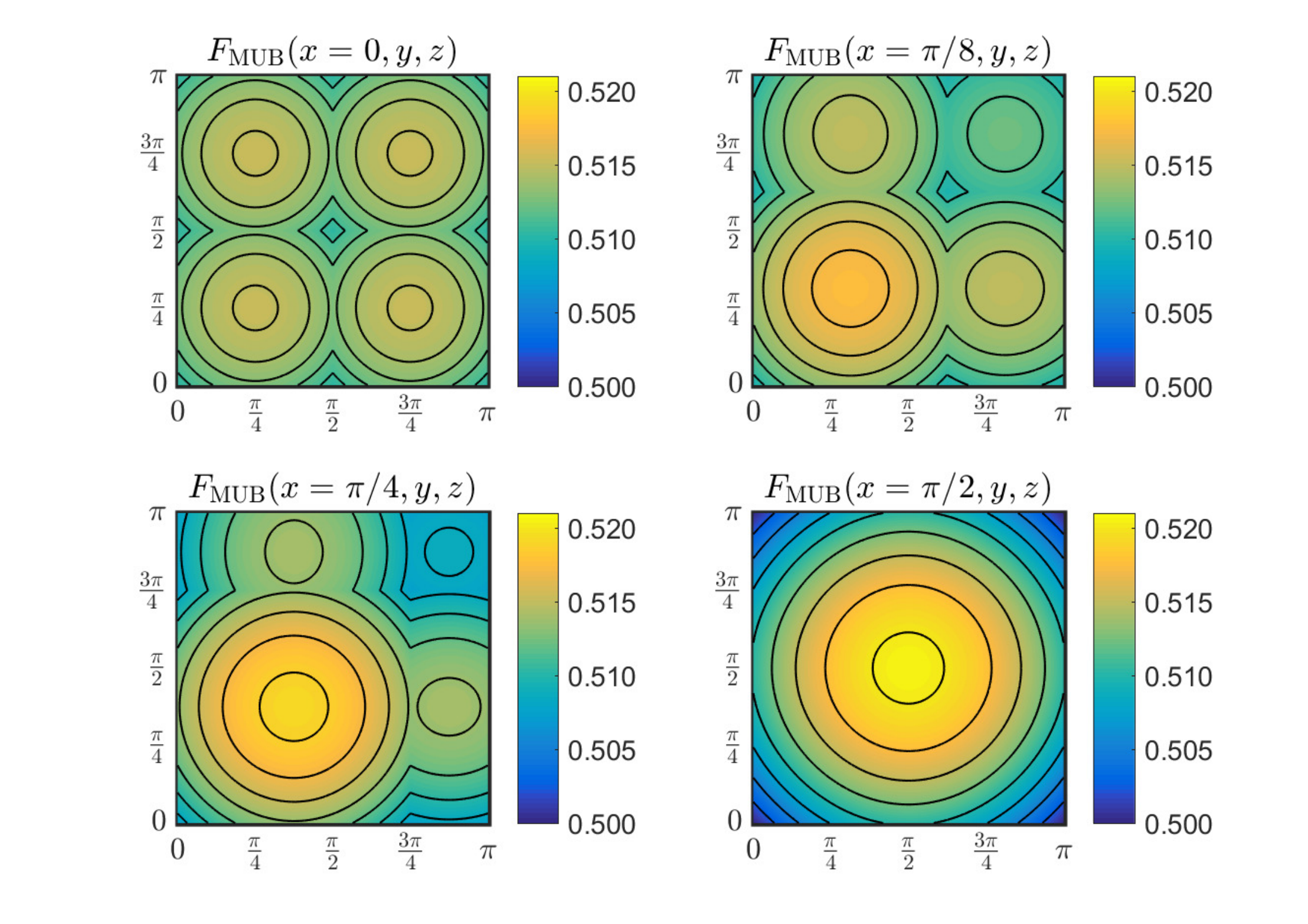}
	\caption{\label{fig:fidMUB2}Contour and color plots of the  three-copy estimation fidelity $F_{\mub}(x,y,z)$ [cf. \eref{eq:Fid3MUB}] for four cross sections that are parallel to the $yz$-plane as specified by  four different values of $x$. 
	}
\end{figure*}

Let $\{|\alpha_j\>\}_j$, $\{|\beta_k\>\}_k$, and $\{|\gamma_l\>\}_l$ be three orthonormal bases in $\caH$; let 
$\scrA=\{|\alpha_j\>\<\alpha_j|\}_j$, $\scrB=\{|\beta_k\>\<\beta_k|\}_k$, and  $\scrC=\{|\gamma_l\>\<\gamma_l|\}_l$ be the corresponding rank-1 projective measurements. In this section we take the convention that $j,k,l\in \{0,1,\ldots, d-1\}$. According to \esref{eq:FidPOVM2}{eq:QABCtrace1}, the three-copy estimation fidelity $F(\scrA\otimes \scrB\otimes \scrC)$ may depend on the triple products
\begin{align}
f_{jkl}:=\tr(|\alpha_j\>\<\alpha_j|\beta_k\>\<\beta_k|\gamma_l\>\<\gamma_l|)
\end{align}
in addition to the pairwise fidelities (transition probabilities) between the basis elements. This dependence can be used to distinguish inequivalent triples of bases that share the same pairwise fidelities.

As an illustration, here we consider triples of MUB in dimension 4 ($d=4$). In this case, according to \rcite{BrieWB10}, there exists a three-parameter family of tripes of MUB. The first basis is chosen to be the computational basis by convention; the second and third bases correspond to  the columns of  the two Hadamard matrices 
\begin{equation}\label{eq:Hadamard4}
\frac{1}{2}\begin{pmatrix}
1 & 1 & 1&1\\
1&1&-1&-1\\
1&-1& \rmi \rme^{\rmi x} &-\rmi \rme^{\rmi x}\\
1&-1& -\rmi \rme^{\rmi x} &\rmi \rme^{\rmi x}
\end{pmatrix},\quad 
\frac{1}{2}\begin{pmatrix}
1 & 1 & 1&1\\
1&1&-1&-1\\
- \rme^{\rmi y} &\rme^{\rmi y}& \rme^{\rmi z} &- \rme^{\rmi z}\\
 \rme^{\rmi y} &- \rme^{\rmi y}& \rme^{\rmi z} &- \rme^{\rmi z}
\end{pmatrix},
\end{equation}
where $x,y,z\in [0,2\pi)$ are three real parameters. Note that the transformation $x\mapsto x+\pi$ amounts to  the permutation of the last two columns of the first Hadamard matrix; similarly, the transformations $y\mapsto y+\pi$ and  $z\mapsto z+\pi$ amount to the permutations of the columns of the second Hadamard matrix. So it suffices to consider the parameter range $x,y,z\in [0,\pi)$.

Now suppose $\{|\alpha_j\>\}_j$ coincides with the computational basis $\{|j\>\}_j$, while $\{|\beta_k\>\}_k$ and $\{|\gamma_l\>\}_l$ are determined by the columns of the two Hadamard matrices  in \eref{eq:Hadamard4}, respectively; here the dependences on the parameters $x,y,z$ are suppressed  to simplify the notation.
Then the three bases $\{|\alpha_j\>\}_j$, $\{|\beta_k\>\}_k$, and $\{|\gamma_l\>\}_l$ are MU, that is,
\begin{align}
&\tr(|\alpha_j\>\<\alpha_j|\beta_k\>\<\beta_k|)=\tr(|\beta_k\>\<\beta_k|\gamma_l\>\<\gamma_l|)\nonumber\\
&=\tr(|\alpha_j\>\<\alpha_j|\gamma_l\>\<\gamma_l|)=\frac{1}{4}\quad \forall j,k,l=0,1,2,3. 
\end{align}
By  \esref{eq:FidPOVM2}{eq:QABCtrace1}, the three-copy estimation fidelity of $\scrA\otimes \scrB\otimes\scrC$, denoted by   $F_{\mub}(x,y,z):=F(\scrA\otimes \scrB\otimes\scrC)$ henceforth, can be computed as 
\begin{align}\label{eq:Fid3MUB}
F_{\mub}(x,y,z)\!=\!\frac{\sum_{j,k,l}\|\caQ(|j\>\<j|\otimes |\beta_k\>\<\beta_k |\otimes |\gamma_l\>\<\gamma_l|)\|}{840}.
\end{align}
The variation of the estimation fidelity $F_{\mub}(x,y,z)$ with $x,y,z$ is shown in \fsref{fig:fidMUB1} and \ref{fig:fidMUB2}.

In general, $F_{\mub}(x,y,z)$ does not have a simple  expression because  $\|\caQ(|j\>\<j|\otimes |\beta_k\>\<\beta_k |\otimes |\gamma_l\>\<\gamma_l|)\|$ depends on the triple product $f_{jkl}$ and in general does not have a simple analytical expression. Nevertheless, analytical formulas for $\|\caQ(|j\>\<j|\otimes |\beta_k\>\<\beta_k |\otimes |\gamma_l\>\<\gamma_l|)\|$ can be derived in a few special cases of interest,
 \begin{align}
 &\bigl\|\caQ(|j\>\<j|\otimes |\beta_k\>\<\beta_k|\otimes |\gamma_l\>\<\gamma_l|)\bigr\|\nonumber\\
 &=\begin{cases}
 \frac{15}{2} & \mbox{if } f_{jkl}=\frac{1}{8},\\[0.3ex]
 \frac{15}{4} & \mbox{if } f_{jkl}=-\frac{1}{8},\\[0.3ex]
4+\frac{5}{4}\sqrt{3} & \mbox{if } f_{jkl}=\pm \frac{\rmi}{8}.
 \end{cases}
 \end{align}
In the case $x=\pi/2$ and $y=z=0$, calculation shows that 48 of the triple products  $f_{jkl}$ are equal to $1/8$, while the remaining 16 triple products are equal to $-1/8$. Therefore,
\begin{align}
F_{\mub}(x,y,z)&=\frac{1}{840}\Bigl(\frac{15}{2}\times 48+\frac{15}{4}\times 16\Bigr)=\frac{1}{2}.
\end{align}
In the case $x=y=z=\pi/2$, calculation shows that 
32 of the triple products  $f_{jkl}$ are equal to $1/8$, while the remaining 32 triple products are equal to  $\rmi/8$ or $-\rmi/8$. Therefore,
\begin{align}
F_{\mub}(x,y,z)&=\frac{1}{840}\Bigl(\frac{15}{2}\times 32+\frac{16+5\sqrt{3}}{4}\times 32\Bigr)\nonumber\\
&=\frac{46+5\sqrt{3}}{105}\approx 0.5206.
\end{align}

Numerical calculation indicates that $1/2$ is the minimum of $F_{\mub}(x,y,z)$, while $(46+5\sqrt{3})/105$ is the maximum of $F_{\mub}(x,y,z)$ (cf. \fsref{fig:fidMUB1} and \ref{fig:fidMUB2}). The difference is about 4.1\%, which is quite significant and is amenable to experimental demonstration.  Here Haar random pure states involved in the estimation problem can be replaced by any ensemble of pure states that forms a 4-design, which can be constructed from a suitable Clifford orbit as described in \rcite{ZhuKGG16}.

\subsection{\label{sec:IneqSIC}Operational distinction between inequivalent SICs}

The estimation fidelity can also be used to distinguish inequivalent SICs. As an illustration, here we consider SICs in dimension 3. It is known that all SICs in dimension 3 are covariant with respect to the 
Heisenberg-Weyl group with respect to a suitable basis \cite{Zaun11,ReneBSC04,ScotG10,FuchHS17,Appl05,Zhu10,Szol14,HughS16}. The standard Heisenberg-Weyl group
 is generated by  the cyclic-shift operator $X$ and  phase operator $Z$ shown below
\begin{align}
X:=\begin{pmatrix}
0&0& 1\\
1&0&0\\
0&1&0
\end{pmatrix},\quad Z:=\begin{pmatrix}
1 & 0&0\\
&\rme^{2\pi\rmi/3}&0\\
0&0&\rme^{4\pi\rmi/3} 
\end{pmatrix}.
\end{align}
Let
\begin{equation}\label{eq:SICfiducial}
|\psi(\phi)\rangle:=\frac{1}{\sqrt{2}}(0,1,-\rme^{\rmi \phi})^\rmT,\quad 0\leq \phi< 2\pi;
\end{equation}
for each choice of the phase $\phi$, a SIC can be constructed as follows \cite{Zaun11,Appl05},
\begin{align}\label{eq:SIC3}
\scrA_{\sic}(\phi):=\Bigl\{\frac{1}{3 }X^j Z^k|\psi(\phi)\>\<\psi(\phi)|\bigl(X^j Z^k\bigr)^\dag\Bigr\}_{j,k=0,1,2}.
\end{align}
Note that $\scrA_{\sic}(\phi+(2\pi/3))$ and $\scrA_{\sic}(\phi)$ are identical up to relabeling. 

Moreover, any SIC in dimension 3 is unitarily equivalent to $\scrA_{\sic}(\phi)$ for $\phi\in [0,\pi/9]$; given $0\leq \phi_1\leq \phi_2\leq \pi/9$, then $\scrA_{\sic}(\phi_1)$ and $\scrA_{\sic}(\phi_2)$ are unitarily  equivalent iff $\phi_1=\phi_2$ \cite{Zhu10,Zhu12the}. The two SICs $\scrA_{\sic}(\phi=0)$ and $\scrA_{\sic}(\phi=\pi/9)$
are exceptional in the sense that they have larger symmetry groups compared with a generic SIC $\scrA_{\sic}(\phi)$ with $0<\phi<\pi/9$. In particular, the SIC $\scrA_{\sic}(\phi=0)$ has the largest symmetry group and can be regarded as the most symmetric SIC \cite{Appl05,Zhu10,Zhu15S}. However, it is not clear if inequivalent SICs have different operational implications before the current study. 

\begin{figure}[t]
	\includegraphics[width=7.5cm]{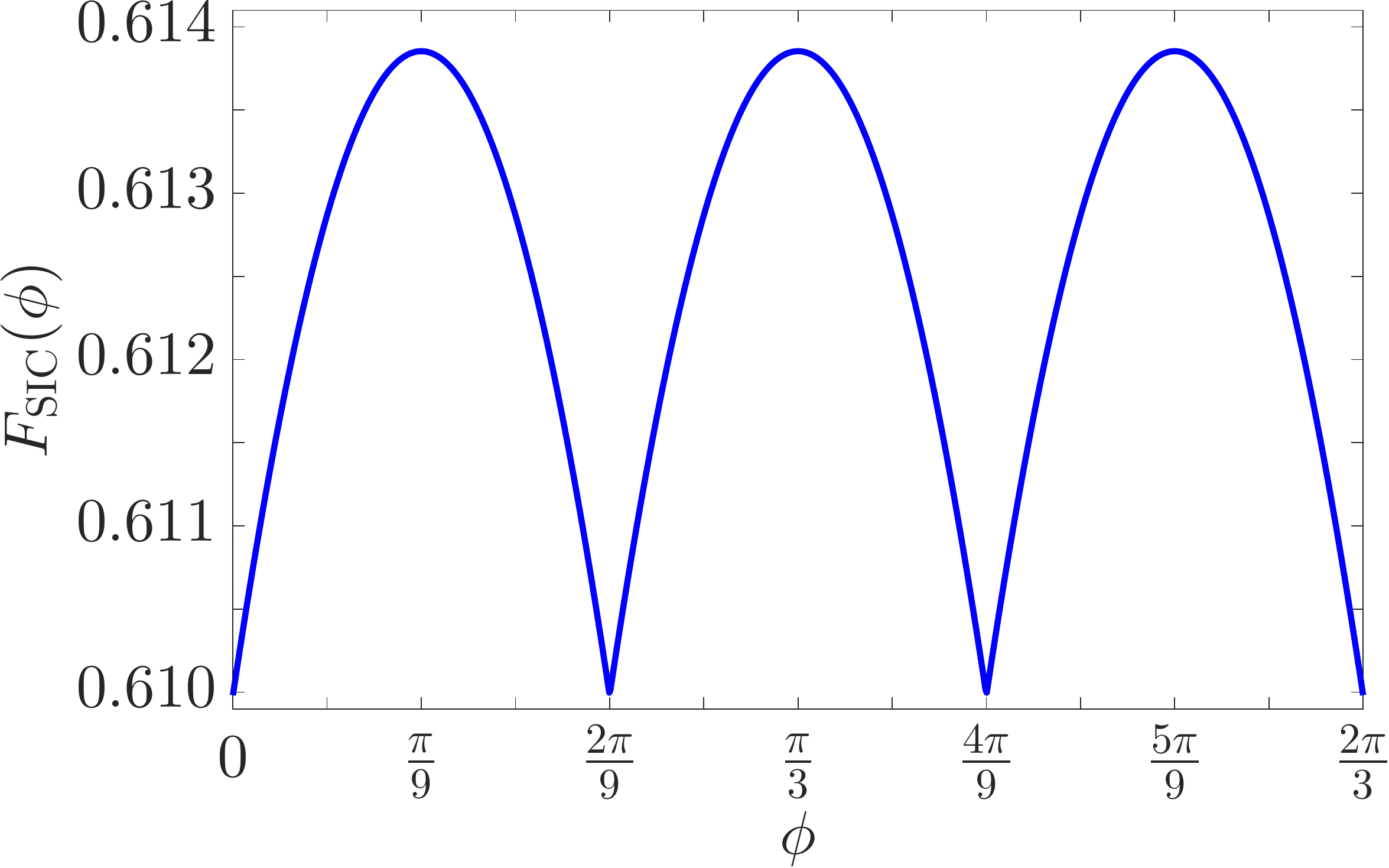}
	\caption{\label{fig:fidSIC}Three-copy estimation fidelity $F_\sic(\phi)$ achieved by the POVM $\scrA_{\sic}(\phi)^{\otimes 3}$, where $\scrA_{\sic}(\phi)$ is the SIC defined in \eref{eq:SIC3}.  
	}
\end{figure}

Here we are interested in the estimation fidelity of $\scrA_{\sic}(\phi)^{\otimes 3}$,  denoted by   $F_\sic(\phi):=F(\scrA_{\sic}(\phi)^{\otimes 3})$ henceforth. The analytical expression for $F_\sic(\phi)$ is too complicated to be informative, but it is easy to compute its value numerically by virtue of \esref{eq:FidPOVM2}{eq:QABCtrace1}. 
The dependence of  $F_\sic(\phi)$ on $\phi$ is illustrated in 
\fref{fig:fidSIC}, which indicates that $F_\sic(\phi)$ is periodic in $\phi$ with period $2\pi/9$. In addition,
$F_\sic(\phi)$ increases monotonically with $\phi$  for $\phi\in [0,\pi/9]$, but decreases monotonically for  $\phi\in [\pi/9,2\pi/9]$.  Notably, $F_\sic(\phi)$ attains its maximum when $\phi=\pi/9$, which corresponds to the exceptional SIC with intermediate symmetry; by contrast, $F_\sic(\phi)$ attains its minimum when $\phi=0$, which corresponds to the most symmetric  SIC. This conclusion seems quite unexpected, and  a simple explanation is yet to be found. The estimation fidelities achieved by generic SICs interpolate between the two extreme cases. In conjunction with known results on the equivalent classes of SICs under unitary transformations \cite{Appl05,Zhu10,Zhu15S}, \fref{fig:fidSIC} provides strong evidence for the following conjecture. 
\begin{conjecture}
Two SICs in dimension 3 can achieve the same three-copy estimation fidelity iff they are unitarily equivalent.
\end{conjecture}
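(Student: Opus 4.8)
The plan is to treat the two implications separately; only the forward direction requires real work. For the \emph{if} direction, suppose the SICs $\scrA=\{A_j\}_j$ and $\scrA'=\{A'_j\}_j$ in dimension $3$ are unitarily equivalent, so that after a relabeling of outcomes $A'_j=UA_jU^\dagger$ for some unitary $U$ on $\caH$. Then $(\scrA')^{\otimes3}$ equals $U^{\otimes3}\scrA^{\otimes3}{U^\dagger}^{\otimes3}$ up to a permutation of outcomes, so the unitary invariance \eqref{eq:fidUnitaryInvariance} of \Lref{lem:FidPOVMbasic} (applied with $N=3$), together with the manifest invariance of \eqref{eq:FidPOVM2} under relabeling, gives $F_\sic(\scrA')=F_\sic(\scrA)$; this part uses nothing specific to dimension $3$ or to SICs.

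For the \emph{only if} direction I would invoke the classification of SICs in dimension $3$: every such SIC is unitarily equivalent to exactly one member $\scrA_\sic(\phi)$ of the Heisenberg--Weyl covariant family with $\phi\in[0,\pi/9]$, and distinct $\phi$ in this range give inequivalent SICs. Combined with the already-established \emph{if} direction, this reduces the conjecture to the statement that $\phi\mapsto F_\sic(\phi)$ is injective on $[0,\pi/9]$, and since this map is continuous, injectivity is equivalent to strict monotonicity. So everything comes down to proving analytically the monotonicity visible numerically in \fref{fig:fidSIC}.

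To do that I would first use Heisenberg--Weyl covariance to collapse the $9^3$ triples $(j,k,l)$ appearing in \eqref{eq:QABCtrace1} to a handful of orbit representatives. For a SIC all pairwise fidelities equal $1/4$, while the normalized triple products $f_{jkl}=\tr(A_jA_kA_l)/[\tr(A_j)\tr(A_k)\tr(A_l)]$ take only finitely many values, each an explicit trigonometric function of $\phi$ computed from the fiducial \eqref{eq:SICfiducial}; restricted to the fundamental domain these become monotone in the relevant angular variable. For each representative the operator $\caQ(A_j\otimes A_k\otimes A_l)$ is positive and, up to an additive multiple of the identity, supported on $\spa\{|\psi_j\>,|\psi_k\>,|\psi_l\>\}$, so $\|\caQ(A_j\otimes A_k\otimes A_l)\|$ is the largest root of a cubic whose coefficients are polynomials in the $f_{jkl}$. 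Summing these roots with the correct orbit multiplicities and the normalization of \eqref{eq:FidPOVM2} yields $F_\sic(\phi)$ in closed form, and then showing $dF_\sic/d\phi>0$ on the open interval $(0,\pi/9)$ finishes the proof.

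The crux — and the reason the statement is posed as a conjecture rather than a theorem — is this last analytic step: $F_\sic$ is a sum of largest-eigenvalue functions of parametrized $3\times3$ matrices, hence only piecewise real-analytic, with possible level crossings of the dominant eigenvalue inside $[0,\pi/9]$, so a rigorous proof of strict monotonicity requires a careful case analysis of the relevant cubics (or, dually, a sign analysis of $dF_\sic/d\phi$ between crossings) rather than a single estimate, plus an argument (e.g.\ via real-analyticity away from crossings) that $F_\sic$ is not locally constant. A secondary, purely combinatorial difficulty is to enumerate the Heisenberg--Weyl orbits of ordered triples of SIC elements in dimension $3$ together with their multiplicities, and to match each orbit to the correct value of $f_{jkl}$, carefully enough that no contribution is double-counted or omitted.
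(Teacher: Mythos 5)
Your \emph{if} direction is correct and uses exactly the tools the paper provides: unitary invariance of the estimation fidelity, \eref{eq:fidUnitaryInvariance} of \lref{lem:FidPOVMbasic} applied with $N=3$, together with the manifest relabeling invariance of \eref{eq:FidPOVM2}, gives $F_{\sic}(\scrA')=F_{\sic}(\scrA)$ for unitarily equivalent SICs. Your reduction of the \emph{only if} direction is also the natural one: by the cited classification, every SIC in dimension 3 is unitarily equivalent to exactly one $\scrA_{\sic}(\phi)$ with $\phi\in[0,\pi/9]$, so the claim is equivalent to injectivity of $\phi\mapsto F_{\sic}(\phi)$ on $[0,\pi/9]$, which by continuity is equivalent to strict monotonicity there.

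The genuine gap is that this last step is never established, and it is the entire content of the statement beyond the easy direction; this is precisely why the paper poses it as a conjecture, supported only by the numerics in \fref{fig:fidSIC}, rather than as a theorem. What you propose for the crux --- collapsing the $9^3$ ordered triples to Heisenberg--Weyl orbit representatives, writing each $\|\caQ(A_j\otimes A_k\otimes A_l)\|$ via \eref{eq:QABCtrace1} as a constant plus the largest root of a cubic whose coefficients depend on the triple products $f_{jkl}(\phi)$, and then proving $\mathrm{d}F_{\sic}/\mathrm{d}\phi>0$ on $(0,\pi/9)$ --- is a plausible program but is not carried out: you do not enumerate the orbits and multiplicities, do not exhibit the trigonometric expressions for the $f_{jkl}$, and do not perform the sign analysis of the derivative; you yourself flag the possible level crossings of the dominant eigenvalue and the need to exclude local constancy as unresolved. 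As written, the proposal therefore proves only the trivial direction and sharpens the hard direction into an explicit analytic claim; it is a reduction of the conjecture, not a proof of it, and should be presented as such.
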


\section{\label{sec:summary}Summary}     We proposed a simple but powerful approach for decoding the characteristics of quantum measurements by virtue of a simple problem in  quantum state estimation. Based on this approach we provided surprisingly simple characterizations of various typical and important quantum measurements, including  rank-1 projective measurements, MUMs, and SICs. Notably, we do not need any assumption on the rank, purity, or the number of POVM elements, and we do not need bases to start with, which seems impossible with all previous approaches. 
Our work demonstrates that all these elementary quantum measurements are uniquely determined by their information-extraction capabilities. In other words, all these elementary quantum measurements  can be defined in purely information theoretic terms, in sharp contrast with traditional algebraic definitions, which lack clear operational meanings.   
In this way, our work  offers a fresh perspective for understanding and exploring quantum measurements from their information-extraction capabilities.

The two-copy estimation fidelity we introduced also offers a new perspective for understanding 
quantum incompatibility as a resource. In addition, this estimation fidelity can be used to construct a universal criterion for detecting incompatibility of two arbitrary POVMs.
Moreover, it has an intimate connection with entropic uncertainty relations. Furthermore, we showed that the three-copy estimation fidelity can be used to distinguish inequivalent MUB and SICs, which cannot be distinguished by pairwise fidelities. Such operational figures of merit are quite rare in the literature and are expected to play an important role in understanding various discrete symmetric structures tied to the quantum state space. In the course of  study, we derived a number of results on quantum measurements and (weighted complex projective) $t$-designs, which are of independent interest.
Our work offers valuable insights not only on  quantum measurements and quantum estimation theory, but also on various related research areas, including geometry of quantum states, $t$-designs and random quantum states, quantum incompatibility, and foundational studies. The implications of these results deserve further explorations in the future.

\bigskip

\acknowledgments
This work is  supported by   the National Natural Science Foundation of China (Grants No.~11875110 and No.~92165109) and  Shanghai Municipal Science and Technology Major Project (Grant No.~2019SHZDZX01).



\appendix
\section{\label{asec:POVMorder}Proofs of \lsref{lem:EquivalentPOVM} and \ref{lem:SimplePOVM}}
\begin{proof}[Proof of \lref{lem:EquivalentPOVM}]
	Without loss of generality we can assume that no POVM element in $\scrA$ or $\scrB$ is equal to the zero operator. The inequality $\wp(\scrA)\leq \wp(\scrB)$ can be proved as follows,
	\begin{align}
	& d\wp(\scrA)=\sum_j \frac{\tr(A_j^2)}{\tr(A_j)}=\sum_j \frac{\sum_{k,l}\Lambda_{jk}\Lambda_{jl}\tr(B_k B_l)}{\tr(A_j)}\nonumber\\
	&\leq  \sum_j \frac{\sum_{k,l}\Lambda_{jk}\Lambda_{jl}\sqrt{\tr(B_k^2)} \sqrt{\tr(B_l^2)}}{\tr(A_j)}\nonumber\\
	&=\sum_j \frac{\bigl[\sum_{k}\Lambda_{jk}\sqrt{\tr(B_k^2)}\,\bigr]^2}{\sum_{k}\Lambda_{jk}\tr(B_k)}\leq \sum_k \frac{\tr(B_k^2)}{\tr(B_k)}=d\wp(\scrB), \label{lem:PurityCGproof}
	\end{align}
	where the first inequality follows from the Cauchy-Schwarz inequality, and the second inequality follows from Lemma~S1 in \rcite{ZhuHC16}. If $\scrA$ is equivalent to $\scrB$, then the opposite inequality $\wp(\scrB)\leq \wp(\scrA)$ holds by the same token, so we have $\wp(\scrA)= \wp(\scrB)$, which confirms the implication $1\imply 2$. 
	
	If $\wp(\scrA)= \wp(\scrB)$, then the two inequalities in \eref{lem:PurityCGproof} are saturated. Note  that $\tr(B_k B_l )\leq \sqrt{\tr(B_k^2)} \sqrt{\tr(B_l^2)}$, and the inequality is saturated iff $B_k$ and $B_l$ are proportional to each other. The saturation of the  first inequality in \eref{lem:PurityCGproof} then
	implies  that $\Lambda_{jk}\Lambda_{jl}=0$ whenever $B_k$ and $B_l$ are  linearly independent, which confirms the implication $2\imply 3$. 
	
	If statement 3 holds, then the product $\Lambda_{jk}\Lambda_{jl}$ can take on a nonzero value only if $B_k$ and $B_l$ are proportional to each other. In this case, $\scrB$ can be realized by data  processing after performing $\scrA$; in other words, $\scrB$ is a coarse graining of $\scrA$. Since $\scrA$ is a coarse graining of $\scrB$ by assumption, it follows that  $\scrA$ is equivalent to $\scrB$, which confirms the implication $3\imply 1$ and completes the proof of \lref{lem:EquivalentPOVM}. 
\end{proof}

\begin{proof}[Proof of \lref{lem:SimplePOVM}]
	Suppose $\scrA=\{A_j\}_j$ and $\scrB=\{B_k\}_k$ are two simple POVMs. Obviously, $\scrA$ and $\scrB$ are equivalent if they are identical up to relabeling.

	Conversely, suppose $\scrA$ and $\scrB$ are equivalent; then $A_j$ can be expressed as $A_j=\sum_k \Lambda_{jk}B_k$, where $\Lambda$ is a stochastic matrix. According to \lref{lem:EquivalentPOVM}, each row of $\Lambda$ has only one nonzero entry, given that the POVM elements in $\scrB$ are pairwise linearly independent, and so are the POVM elements in $\scrA$.
	It follows that $A_j$ is proportional $B_k$ whenever $\Lambda_{jk}>0$. Now the simplicity of $\scrA$ further implies  that each column of $\Lambda$ has only one nonzero entry, which is necessarily equal to 1. Therefore, $\Lambda$ is a permutation matrix, which means $\scrA$ and $\scrB$  are identical up to relabeling, confirming the first statement in \lref{lem:SimplePOVM}.

	To prove the second statement in \lref{lem:SimplePOVM}, let $\scrC=\{C_j\}_j$ be an arbitrary POVM; then an equivalent simple POVM can be constructed by deleting POVM elements that are equal to the zero operator and combining POVM elements that are proportional to each other. According to the first statement in \lref{lem:SimplePOVM}, such a simple POVM is unique up to relabeling. 
\end{proof}

\section{\label{asec:ProjectiveMU}Proofs of \lsref{lem:rank1Projective}-\ref{lem:commutePOVMrank1} and \thref{thm:MUPOVM}}

\begin{proof}[Proof of \lref{lem:rank1Projective}]
	Suppose $\scrA=\{A_j\}_{j=1}^m$, where $A_j$ are rank 1 by assumption. Then we have
	\begin{gather}
\sum_j A_j=1,\quad 	\sum_j \tr(A_j)=d,\\
	d=\sum_{j,k}\tr(A_j A_k)\geq \sum_{j,k}\tr(A_j^2)=\sum_j (\tr A_j)^2\geq \frac{d^2}{m},\label{eq:rank1ProjectiveProof}
	\end{gather}
	which implies that $m\geq d$, so $\scrA$ has at least $d$ POVM elements. Obviously, the lower bound is saturated if $\scrA$ is  a rank-1 projective measurement. 
	
	Conversely, if $\scrA$ has $d$ POVM elements, that is, $m=d$, then the two inequalities in \eref{eq:rank1ProjectiveProof} are saturated, which implies that 
	\begin{align}
	\tr(A_j A_k)=\delta_{jk}.
	\end{align}
	Therefore, all the  POVM elements $A_j$ are mutually orthogonal rank-1 projectors, which means  $\scrA$ is a rank-1 projective measurement, confirming the
	first statement in \lref{lem:rank1Projective}. In the above reasoning it is not necessary to assume that   $\scrA$ is simple.

Next, we turn to the
second statement in \lref{lem:rank1Projective}. 
Let us consider the span of POVM elements in $\scrA$ and its dimension, assuming that $\scrA$ is simple. By assumption each POVM element of $\scrA$ has the form  $A_j=w_j|\psi_j\>\<\psi_j|$ with $0< w_j\leq 1$; in addition, the corresponding set of kets $\{|\psi_j\>\}_j$ spans $\caH$. So we can find $d$ kets, say $|\psi_1\>, |\psi_2\>, \ldots, |\psi_d\>$, that form a basis (not necessarily orthogonal) for $\caH$. Then  the corresponding set of projectors $\{|\psi_j\>\<\psi_j|\}_{j=1}^d$ is necessarily linearly independent, which implies that
\begin{align}\label{eq:SpanDimLB}
\dim(\spa(\scrA))\geq \dim\bigl(\spa \bigl(\{|\psi_j\>\<\psi_j|\}_{j=1}^d\bigr)\bigr)=d.
\end{align}
If $\scrA$ is a rank-1 projective measurement, then the lower bound is saturated. 

Conversely, if the lower bound in \eref{eq:SpanDimLB} is saturated, then each POVM element of $\scrA$ is a linear combination of $|\psi_j\>\<\psi_j|$ for $j=1,2,\ldots, d$. Note that the rank of such a linear combination is equal to the number of nonzero coefficients. Since $\scrA$ is rank 1 by assumption, it follows that each POVM element of $\scrA$ is proportional to $|\psi_j\>\<\psi_j|$ for some $j=1,2,\ldots, d$, which implies that $\scrA$ has $d$ POVM elements given that $\scrA$ is simple. Therefore, $\scrA$ is a rank-1 projective measurement according to the
first statement in \lref{lem:rank1Projective} as proved above.	
\end{proof}

\begin{proof}[Proof of \lref{lem:commutePOVMrank1Proj}]
	Let $\scrA\!=\{A_j\}_{j=1}^m$ and $\scrB=\{B_k\}_{k=1}^n$.  If $\scrB\subseteq\scrA$ and the projectors in $\scrB$ are mutually orthogonal, then $\scrA\setminus\scrB$ is orthogonal to $\scrB$ given that
	\begin{align}\label{eq:APOVM}
	\sum_{j=1}^m A_j =1, \quad A_j\geq 0 \quad \forall j=1,2,\ldots, m,
	\end{align} 
	so $\scrA$ and $\scrB$ commute.

	Conversely, suppose $\scrA$ and $\scrB$ commute. Then each $A_j\in \scrA$ commutes with each $B_k\in \scrB$, which means  $A_j$ is either orthogonal to $B_k$ or proportional to $B_k$. Since $\scrA$ is a simple POVM, it follows that for any given projector $B_k\in \scrB$  there exists a unique POVM element in $\scrA$ that is proportional to $B_k$, and all other POVM elements are orthogonal to $B_k$. By a suitable relabeling if necessary, we can assume that 
	\begin{align}
	A_j\propto B_j \quad \forall j=1,2,\ldots, n.
	\end{align}
	Then the above analysis means
	\begin{gather}
	B_j B_k=B_j \delta_{jk} \quad  \forall  j,k=1,2, \ldots,n;     \label{eq:BjBkorthogonal}\\
	 A_j A_k=A_j^2 \delta_{jk} \quad  \forall j=1,2,\ldots,m;\, k=1,2, \ldots,n. \label{eq:AjAkorthogonal}
	\end{gather} 
	In particular, the rank-1 projectors in $\scrB$
are mutually orthogonal, and so are the POVM elements $A_1, A_2,\ldots, A_n$. Now \ecref{eq:APOVM}{eq:AjAkorthogonal} together further imply that the POVM elements $A_1, A_2,\ldots, A_n$ are mutually orthogonal rank-1 projectors and $A_j=B_j$ for $j=1,2,\ldots, n$, which in turn imply that $\scrB\subseteq \scrA$.	
\end{proof}

\begin{proof}[Proof of \lref{lem:commutePOVMrank1}]
\Lref{lem:commutePOVMrank1}	is a simple corollary of 
	\lref{lem:commutePOVMrank1Proj} and can also be proved directly as follows. Suppose $\scrA=\{A_j\}_j$ and $\scrB=\{B_k\}_k$ are two simple rank-1 POVMs. Then  $\scrA$ and $\scrB$ commute with each other if they are identical projective measurements up to relabeling. 
	
	Conversely, if  $\scrA$ and $\scrB$ commute, then $A_j B_k=B_kA_j$ for any pair $j,k$. Note that two rank-1 positive operators commute with each other iff they are orthogonal  or proportional to each other. In conjunction with the assumption that $\scrA$ and $\scrB$ are simple rank-1 POVMs, 
	we conclude that  each POVM element in $\scrA$ ($\scrB$) is proportional to a unique POVM element in $\scrB$ ($\scrA$) and is orthogonal to all other POVM elements in $\scrB$ ($\scrA$). In this way, POVM elements in $\scrA$ have one-to-one correspondence with POVM elements in $\scrB$; in particular, $\scrA$ and $\scrB$ have the same number of POVM elements. By a suitable relabeling if necessary, we can assume that $A_j$ is proportional to $B_j$ and is orthogonal to $B_k$ with $k\neq j$. So  all POVM elements in $\scrA$ are mutually orthogonal, and so are POVM elements in $\scrB$, which means  both $\scrA$ and $\scrB$ are rank-1 projective measurements.  Moreover, $\scrA$ and $\scrB$ are identical  up to relabeling given the above correspondence. 
\end{proof}

\begin{proof}[Proof of \thref{thm:MUPOVM}]
	Let $\{\scrA_r\}_{r=1}^g$ be an arbitrary set of $g$ MU simple rank-1 POVMs on $\caH$, where $\scrA_r=\{A_{r j}\}_j$. Let $A_{r j}'=A_{r j}-(\tr A_{r j}/d)$ and  $\scrA_r'=\{A_{r j}'\}_j$.  Then 
 $A_{r j}'$ are  traceless and 
	\begin{align}
 \dim (\spa( \scrA_r))= \dim (\spa(\scrA_r'))+1.
	\end{align}
By assumption  $\scrA_r$ and  $\scrA_s$ with $r\neq s$ are MU, which implies that	
\begin{align}
\tr(A_{r j}'A_{s k}')=0 \quad \forall j,k,  
\end{align}	 
so  $A_{r j}'$ and $A_{s k}'$ are orthogonal with respect to the Hilbert-Schmidt inner product.	As a consequence,
 \begin{align}
 d^2&\geq \dim\bigl(\spa \bigl(\cup_{r=1}^g  \scrA_r\bigr)\bigr)=\dim\bigl(\spa \bigl(\cup_{r=1}^g  \scrA_r'\bigr)\bigr)+1\nonumber\\
 &=\sum_{r=1}^g [\dim (\spa( \scrA_r))-1]+1\nonumber\\
 & =\sum_{r=1}^g \dim (\spa( \scrA_r))-g+1\nonumber\\
 &\geq dg-g+1, \label{eq:POVMspanLB}
 \end{align}
where the second inequality follows from \lref{lem:rank1Projective} and is saturated iff each $\scrA_r$ is a rank-1 projective measurement. 
	
	\Eref{eq:POVMspanLB} implies that $g\leq d+1$. If the upper bound is saturated, then the two inequalities in \eref{eq:POVMspanLB} are saturated, which means $\dim (\spa( \scrA_r))=d$ for each $\scrA_r$. 
So  all the POVMs $\scrA_r$ are rank-1 projective measurements by \lref{lem:rank1Projective}, which implies that $\{\scrA_r\}_{r=1}^g$ is a CMUMs. 
\end{proof}


\section{\label{asec:FP}Proofs of \lsref{lem:FPhalfLBUB}-\ref{lem:crossFP2design}}

\subsection{Main proofs}

\begin{proof}[Proof of \lref{lem:FPhalfLBUB}]
By assumption $\caS$ can be expressed as $\caS=\{|\psi_j\>, w_j\}_{j=1}^m$, where
\begin{equation}\label{eq:FPhalfnormalizationProof}
w_j>0,\quad \sum_j w_j=d,\quad
\quad \sum_j w_j |\psi_j\>\<\psi_j|=1.
\end{equation}	
The lower bound in \eref{eq:FPhalfLBUB} can be proved as follows,
	\begin{align}
	\Phi_{1/2}(\caS)&=\sum_{j,k}w_jw_k |\<\psi_j|\psi_k\>|\geq \sum_{j,k}w_jw_k |\<\psi_j|\psi_k\>|^2\nonumber\\
	&=\tr\Biggl(\sum_j w_j |\psi_j\>\<\psi_j|\Biggr)^2=d. \label{eq:FPhalfLB}
	\end{align}	
	If $\caS$ is an orthonormal basis (with uniform weights), which means $m=d$, $w_j=1$, and $\<\psi_j|\psi_k\>=\delta_{jk}$, then  it is straightforward to verify that the lower bound is saturated.

Conversely, if the lower bound in \eref{eq:FPhalfLB}  is saturated and $\caS$ is simple, then $|\<\psi_j|\psi_k\>|$ can take on only two distinct values, namely, 0 and 1, so  we have 
	\begin{align}
	\<\psi_j|\psi_k\>=\delta_{jk}\quad \forall j, k=1,2,\ldots m. 
	\end{align}
	This equation can hold only if  $m\leq d$. On the other hand, \eref{eq:FPhalfnormalizationProof} implies that 
	$m\geq d$ [cf. \lref{lem:rank1Projective} and \eref{eq:designEleNumLB} in the main text]. So   $m=d$ and $\{|\psi_j\>\}_{j=1}^d$ forms  an orthonormal basis. Now \eref{eq:FPhalfnormalizationProof} further implies  that $w_j=1$ for all $j$. Therefore, $\caS$ is an orthonormal basis (with uniform weights).

	Next, to prove the upper bound in \eref{eq:FPhalfLBUB}, 	
	define $p_{jk}:=w_jw_k/d^2$ and $x_{jk}:=|\<\psi_j|\psi_k\>|^2$. Then 	
 \eref{eq:FPhalfnormalizationProof}  implies that
	\begin{align}
  \sum_{j,k} p_{jk}= 1, \quad \sum_{j,k}p_{jk} x_{jk}=\frac{1}{d}. 
	\end{align}
	In addition,  from \eref{eq:FPLB} we can deduce that
	\begin{align}
	&\sum_{j,k}p_{jk} x_{jk}^2=\frac{1}{d^2}\sum_{j,k}w_jw_k |\<\psi_j|\psi_k\>|^4\nonumber\\	
	&=\frac{1}{d^2}\tr\Biggl[\sum_j w_j (|\psi_j\>\<\psi_j|)^{\otimes 2}\Biggr]^2\geq\frac{2}{d(d+1)}, \label{eq:2designFPLB}
	\end{align}
	and  the lower bound is saturated iff $\{|\psi_j\>,w_j\}_{j=1}^m$ forms a  2-design.
	By virtue of \lsref{lem:zetaProperty} and \ref{lem:HalfMomBound} in \aref{asec:moment}, we can now deduce that
	\begin{align}
	&\Phi_{1/2}(\caS)=\sum_{j,k}w_jw_k |\<\psi_j|\psi_k\>|=d^2\sum_{j,k}p_{jk}\sqrt{x_{jk}}\nonumber\\
	&\leq d^2\zeta\left(\frac{1}{d},\frac{2}{d(d+1)}\right)=
	1+(d-1)\sqrt{d+1}, \label{eq:halfdesignProof1}
	\end{align}
	which confirms the upper bound in \eref{eq:FPhalfLBUB}. Here the function $\zeta$ is defined in \eref{eq:zetaab} in \aref{asec:moment}.	
	If $\caS$ is a SIC (with uniform weights), which means  $m=d^2$, $w_j=1/d$, and $|\<\psi_j|\psi_k\>|^2=(d\delta_{jk}+1)/(d+1)$, then it is straightforward to verify that the upper bound is saturated [cf. \eref{eq:FPhalfSIC}]. 
	
	Conversely, if  the upper bound in \eref{eq:FPhalfLBUB} is saturated, then the inequality in \eref{eq:halfdesignProof1} is saturated. According to \lsref{lem:zetaProperty} and \ref{lem:HalfMomBound}, the lower bound in \eref{eq:2designFPLB} must saturate, so  $\{|\psi_j\>,w_j\}_{j=1}^m$ forms a  2-design; 
meanwhile, we have
\begin{align}
\sum_{j,k:|\<\psi_j|\psi_k\>|^2=1/(d+1)} \label{eq:wjwk1} w_jw_k&=d^2-1,\\ \sum_{j,k:|\<\psi_j|\psi_k\>|^2=1} w_jw_k&=1,\label{eq:wjwk2}
\end{align}
given that $w_jw_k=d^2p_{jk}$, 
so $|\<\psi_j|\psi_k\>|^2$ can take on only two distinct values, namely, 1 and $1/(d+1)$. 

If in addition $\caS$ is simple, then $|\<\psi_j|\psi_k\>|^2<1$ whenever $j\neq k$, so \esref{eq:wjwk1}{eq:wjwk2} imply that
\begin{align}
|\<\psi_j|\psi_k\>|^2=\frac{d\delta_{jk}+1}{d+1}\quad \forall j, k=1,2,\ldots m. 
\end{align}
This equation can hold only if  $m\leq d^2$. On the other hand, the opposite inequality  $m\geq d^2$ has to hold given that $\{|\psi_j\>,w_j\}_{j=1}^m$ forms a   2-design [cf. \eref{eq:designEleNumLB} in the main text]. So   $m=d^2$ and the set $\{|\psi_j\>\}_{j=1}^{d^2}$ forms a SIC. Furthermore, from \eref{eq:wjwk2} we can deduce that
\begin{align}
\sum_j w_j^2=\sum_{j,k:|\<\psi_j|\psi_k\>|^2=1} w_jw_k=1.\label{eq:wjsquare2}
\end{align}
\Esref{eq:FPhalfnormalizationProof}{eq:wjsquare2} together   imply that $w_j=1/d$ for all $j$. Therefore, $\caS$ is a SIC (with uniform weights), which completes the proof of \lref{lem:FPhalfLBUB}.	
\end{proof}

\begin{proof}[Proof of \lref{lem:EAL}]
By assumption the weighted set $\caS$ has the form $\caS=\{|\psi_j\>, w_j\}_{j=1}^m$ and satisfies the condition
	\begin{align}\label{eq:normalizationEALproof}
0\leq  w_j\leq 1,\quad 	\sum_j w_j= d,\quad \sum_{j,k}w_jw_k |\<\psi_j|\psi_k\>|^2=d. 
	\end{align}
	Let $h=\sum_j w_j^2$; then the above equation implies that
	\begin{align}\label{eq:hLB}
 \frac{d^2}{m}\leq h\leq d,
	\end{align}
	and the lower bound is saturated iff $w_j=d/m$ for all $j$. Therefore,
	\begin{align}
	\Phi_{1/2}(\caS)&=\sum_{j,k}w_jw_k |\<\psi_j|\psi_k\>|\nonumber\\
	&=\sum_{j}w_j^2 +\sum_{j\neq k}w_jw_k |\<\psi_j|\psi_k\>|\nonumber\\
	&
	\leq \sum_{j}w_j^2 +\sqrt{\sum_{j\neq k}w_jw_k}\sqrt{\sum_{j\neq k}w_jw_k|\<\psi_j|\psi_k\>|^2}\nonumber\\
	&=h+\sqrt{d^2-h}\sqrt{d-h}\nonumber\\
	&\leq  \frac{d^2}{m}+\frac{d}{m}\sqrt{d(m-1)(m-d)}, \label{eq:EALfpProof}
	\end{align}
	which confirms the upper bound in \eref{eq:EALfp}. Here the first inequality follows from the Cauchy inequality. The second inequality  follows from  \eref{eq:hLB} and the fact that the function $h+\sqrt{d^2-h}\sqrt{d-h}$ is strictly decreasing in $h$ for $0\leq h\leq d$ and $d\geq 2$; it is saturated iff $h=d^2/m$.  
	
If $\caS$ is composed of   $m$ equiangular states (with uniform weights), then \eref{eq:EAL} in the main text holds and we have $w_j=d/m$ for $j=1,2,\ldots, m$. 
So both inequalities in \eref{eq:EALfpProof} are saturated, which means the upper bound in
\eref{eq:EALfp} is saturated. 

Conversely, if the upper bound in
\eref{eq:EALfp} is saturated, then both inequalities in \eref{eq:EALfpProof} are saturated. The saturation of the second inequality implies that $h=d^2/m$, which in turn implies that  $w_j=d/m$ for $j=1,2,\ldots, m$. Then the saturation of the first inequality implies \eref{eq:EAL} given \eref{eq:normalizationEALproof}.
Therefore,  $\caS$ is composed of   $m$ equiangular states (with uniform weights).
 Note that any equiangular set  in $\caH$ can contain at most $d^2$ states \cite{LemmS73,Zaun11,ApplFZ15G}, so \eref{eq:EALfp} cannot be saturated when $m>d^2$. This observation completes the proof of \lref{lem:EAL}. 
\end{proof}

\begin{proof}[Proof of \lref{lem:crossFP}]
By assumption $\caS$ and $\caT$ can be expressed as  $\caS=\{|\psi_j\>, w_j\}_{j=1}^m$ and $\caT=\{|\varphi_k\>, w_k'\}_{k=1}^n$, which satisfy 
\begin{equation}
\begin{gathered}\label{eq:crossFPSTnormalization}
w_j, w_k'>0,\quad  \sum_j w_j=\sum_k w_k'=d,\\
\quad \sum_j w_j |\psi_j\>\<\psi_j|=\sum_k w_k' |\varphi_k\>\<\varphi_k|=1.
\end{gathered}
\end{equation}
The upper  bound in \eref{eq:crossFPLBUB} can be proved as follows,
\begin{align}
\Phi_{1/2}(\caS,\caT)&=\sum_{j,k}w_jw_k' |\<\psi_j|\varphi_k\>|\nonumber\\
&\leq \sqrt{\Biggl(\sum_{j,k}w_jw_k'\Biggr) \Biggl(\sum_{j,k}w_jw_k'|\<\psi_j|\varphi_k\>|^2\Biggr)}\nonumber\\
&=\sqrt{d^2\times d}=d^{3/2}.
\end{align}
Here the inequality follows from the Cauchy inequality and is saturated iff $|\<\psi_j|\varphi_k\>|^2=1/d$ for each pair $j,k$. Therefore, the upper  bound in \eref{eq:crossFPLBUB} is saturated iff $\caS$ and $\caT$ are MU.

The lower bound in \eref{eq:crossFPLBUB} can be proved following a similar approach used to prove the lower bound in \eref{eq:FPhalfLBUB},
\begin{align}
&\Phi_{1/2}(\caS,\caT)\!=\!\sum_{j,k}w_jw_k' |\<\psi_j|\varphi_k\>|\geq \sum_{j,k}w_jw_k' |\<\psi_j|\varphi_k\>|^2\nonumber\\
&=\!\tr\Biggl[\Biggl(\sum_j w_j |\psi_j\>\<\psi_j|\Biggr)\Biggl(\sum_k w_k' |\varphi_k\>\<\varphi_k|\Biggr)\Biggr]=d.\label{eq:crossFPhalfLBUBproof} 
\end{align}	
If $\caS$ and $\caT$ are identical orthonormal bases (with uniform weights) up to relabeling, then $n=m=d$, $w_j=w_k'=1$, 
 and $|\<\psi_j|\varphi_k\>|=\delta_{jk}$ after a suitable relabeling if necessary, so the lower bound is saturated.

Conversely, if the lower bound in \eref{eq:crossFPhalfLBUBproof}  is saturated, then $|\<\psi_j|\varphi_k\>|$ can take on only two distinct values, namely, 0 and 1. If, in addition,  $\caS$ and $\caT$ are simple 1-designs, which satisfy the normalization condition in \eref{eq:crossFPSTnormalization}, then for each $j$ there exists a unique $k$ such that $|\<\psi_j|\varphi_k\>|=1$; similarly, for each $k$ there exists a unique $j$ such that $|\<\psi_j|\varphi_k\>|=1$. 
Therefore, $n=m=d$ and 
\begin{align}
\<\psi_j|\psi_k\>=\<\varphi_j|\varphi_k\>=\delta_{jk};
\end{align}
 in addition, we have $|\<\psi_j|\varphi_k\>|=\delta_{jk}$ after a suitable relabeling if necessary. Now \eref{eq:crossFPSTnormalization} further implies that $w_j=w_j'=1$ for $j=1,2,\ldots, d$. Therefore, $\caS$ and $\caT$ are identical orthonormal bases (with uniform weights) up to relabeling. Note that 
 we identify weighted sets that differ only by overall phase factors as mentioned in the main text.
\end{proof}

\begin{proof}[Proof of \lref{lem:crossFP2design}]
	To prove the upper bound in \eref{eq:crossFP2design}  we can apply a similar reasoning 	used to prove the upper bound in \eref{eq:FPhalfLBUB}. Without loss of generality, we can assume that $\caS$ is a 1-design, while $\caT$ is a 2-design. Then  $\caS$ and $\caT$ can be expressed as $\caS=\{|\psi_j\>, w_j\}_{j=1}^m$ and $\caT=\{|\varphi_k\>, w_k'\}_{k=1}^n$, which satisfy
	\begin{equation}\label{eq:crossFP2designNormal}
	\begin{gathered}
w_j, w_k'>0,\quad 	\sum_j w_j=\sum_k w_k'=d, \\ \sum_j w_j |\psi_j\>\<\psi_j|=1, \quad
	\sum_k w_k' (|\varphi_k\>\<\varphi_k|)^{\otimes 2}=\frac{2P_2}{d+1}, 
	\end{gathered}
	\end{equation} 	
	where $P_2$ is the projector onto the symmetric subspace in $\caH^{\otimes 2}$. 
	
	Let $p_{jk}=w_jw_k'/d^2$ and $x_{jk}=|\<\psi_j|\varphi_k\>|^2$. Then the conditions in 	
	\eref{eq:crossFP2designNormal} imply that 
	\begin{align}
	&\sum_{j,k} p_{jk}= 1,\quad \sum_{j,k}p_{jk} x_{jk}=\frac{1}{d}; \label{eq:crossFPpjkxjkProof}
	\end{align}
	in addition,
\begin{align}
&\sum_{j,k}p_{jk}x_{jk}^2=\frac{1}{d^2}\sum_{j,k}w_jw_k' |\<\psi_j|\varphi_k\>|^4\nonumber\\	
&=\frac{1}{d^2}\tr\Biggl\{\Biggl[\sum_j w_j (|\psi_j\>\<\psi_j|)^{\otimes 2}\Biggr]\Biggl[\sum_k w_k' (|\varphi_k\>\<\varphi_k|)^{\otimes 2}\Biggr]\Biggr\}\nonumber\\	
&=\frac{2}{d^2(d+1)}\tr\Biggl[\sum_j w_jP_2 (|\psi_j\>\<\psi_j|)^{\otimes 2}\Biggr]\nonumber\\
&=\frac{2}{d^2(d+1)}\sum_j w_j
=\frac{2}{d(d+1)}. \label{eq:crossFP2designProof}
\end{align}	
	
	According to  \lsref{lem:zetaProperty} and \ref{lem:HalfMomBound} in \aref{asec:moment}, \esref{eq:crossFPpjkxjkProof}{eq:crossFP2designProof} imply that 
	\begin{align}
	&\Phi_{1/2}(\caS,\caT)=\sum_{j,k}w_jw_k' |\<\psi_j|\varphi_k\>|=d^2\sum_{j,k}p_{jk}\sqrt{x_{jk}}\nonumber\\
	&\leq d^2\zeta\left(\frac{1}{d},\frac{2}{d(d+1)}\right)=
	1+(d-1)\sqrt{d+1}, \label{eq:crossFP2designProof1}
	\end{align}
	which confirms the upper bound in \eref{eq:crossFP2design}. If $\caS$ and $\caT$ are identical SICs (with uniform weights) up to relabeling, which means  $n=m=d^2$, $w_k'=w_j=1/d$, and $|\<\psi_j|\varphi_k\>|^2=(d\delta_{jk}+1)/(d+1)$ after a suitable relabeling if necessary, then it is straightforward to verify that the upper bound is saturated [cf. \eref{eq:FPhalfSIC}]. 
	
	Conversely, if  the upper bound in \eref{eq:crossFP2design} is saturated, then the inequality in \eref{eq:crossFP2designProof1} is saturated. According to \lsref{lem:zetaProperty} and \ref{lem:HalfMomBound}, we have
	\begin{align}
	\sum_{j,k:|\<\psi_j|\varphi_k\>|^2=1/(d+1)} w_jw_k'&=d^2-1,\\ \sum_{j,k:|\<\psi_j|\varphi_k\>|^2=1} w_jw_k'&=1,
	\end{align}
	so $|\<\psi_j|\varphi_k\>|^2$ can take on only two distinct values, namely, 1 and $1/(d+1)$. If in addition $\caS$ and $\caT$ are simple 1-designs, which satisfy \eref{eq:crossFP2designNormal}, then for each $|\psi_j\>$ in $\caS$ there exists a unique $|\varphi_k\>$ in $\caT$ such that $|\<\psi_j|\varphi_k\>|^2=1$, and vice versa. It follows that $n=m$ and 
	\begin{align}\label{eq:crossFPSICproof}
	|\<\varphi_j|\varphi_k\>|^2=|\<\psi_j|\psi_k\>|^2=|\<\psi_j|\varphi_k\>|^2=\frac{d\delta_{jk}+1}{d+1}
	\end{align}
	for $ j, k=1,2,\ldots, m$ after a suitable relabeling if necessary. This equation can hold only if  $m\leq d^2$. 
	
\Esref{eq:crossFP2designNormal}{eq:crossFPSICproof} together imply that 
\begin{align}\label{eq:FP1proof}
d=\Phi_1(\caS)&=\frac{d^2}{d+1}+\frac{d}{d+1}\sum_{j=1}^{m} w_j^2\geq \frac{d^2}{d+1}+\frac{d^3}{m(d+1)},
\end{align}
which in turn implies that  $\sum_{j=1}^m w_j^2=1$ and $m\geq d^2$. So we have $m=d^2$  given  the opposite inequality $m\leq d^2$ derived above. In conjunction with the normalization conditions in \eref{eq:crossFP2designNormal}, we can deduce that
$w_j=1/d$ for $j=1,2,\ldots, d^2$. A similar reasoning yields $w_k'=1/d$ for $k=1,2,\ldots, d^2$.
  Therefore, $\caS$ and $\caT$ are identical SICs (with uniform weights) up to relabeling, which completes the proof of \lref{lem:crossFP2design}. 	
\end{proof}

\subsection{\label{asec:moment}Auxiliary results on the $1/2$-moment}
Here we derive several results on the $1/2$-moment of a bounded random variable given the first and second moments. For $0<b\leq a<1$,  define
\begin{align}
\zeta(a,b):=&\frac{b-a^2+(1-a)\sqrt{(1-a)(a-b)}}{1-2a+b}\nonumber\\
=&\frac{2a-a^2-b+(1+a)\sqrt{(1-a)(a-b)}}{(\sqrt{1-a}+\sqrt{a-b}\lsp)^2}. \label{eq:zetaab}
\end{align}
In two special cases, \eref{eq:zetaab} reduces to 
\begin{align}
\zeta(a,a)=a,\quad \zeta(a,a^2)=\sqrt{a}.
\end{align}
When $a=1-(1-b)r$ with $0<r\leq 1$, \eref{eq:zetaab} yields
\begin{align}\label{eq:zetarb}
\!\zeta(1-(1-b)r,b)=1-\frac{r(\sqrt{r(1-r)}-r)}{1-2r}(1-b).
\end{align}

\begin{lem}\label{lem:zetaProperty}
	Suppose  $0< b\leq a< 1$. Then $\zeta(a,b)$ is strictly increasing in $a$ and strictly decreasing in $b$. In addition, $\zeta(a,b)$ is jointly concave in $a$ and $b$. 
\end{lem}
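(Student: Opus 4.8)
The plan is to first replace the somewhat opaque formula \eqref{eq:zetaab} by a transparent closed form, after which all three assertions reduce to elementary one-variable calculus. I would introduce $u=\sqrt{1-a}$ and $v=\sqrt{a-b}$, so that $u>0$, $v\ge 0$, $a=1-u^2$ and $b=1-u^2-v^2$. A short computation gives
\begin{align}
b-a^2+(1-a)\sqrt{(1-a)(a-b)}&=u^2-v^2-u^4+u^3v=(u-v)(u+v-u^3),\\
1-2a+b&=u^2-v^2=(u-v)(u+v),
\end{align}
so that, cancelling the common factor $u-v$,
\begin{equation}\label{eq:zetaSimplified}
\zeta(a,b)=\frac{u+v-u^3}{u+v}=1-\frac{(1-a)^{3/2}}{\sqrt{1-a}+\sqrt{a-b}}.
\end{equation}
This identity is valid on the whole region $0<b\le a<1$; the cancelled factor $u-v$ also removes the apparent $0/0$ of \eqref{eq:zetaab} at $b=2a-1$, so \eqref{eq:zetaSimplified} may be taken as the working definition.

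Granting \eqref{eq:zetaSimplified}, strict monotonicity in $b$ is immediate: for fixed $a<1$ the numerator $(1-a)^{3/2}$ is a positive constant, while $\sqrt{a-b}$ is strictly decreasing in $b$ on $\{b\le a\}$, so the subtracted fraction is strictly increasing in $b$ and $\zeta$ is strictly decreasing. For strict monotonicity in $a$ I would differentiate \eqref{eq:zetaSimplified} directly, using $\mathrm{d}u/\mathrm{d}a=-1/(2u)$ and $\mathrm{d}v/\mathrm{d}a=1/(2v)$ on the interior $b<a<1$; a short calculation gives
\begin{equation}
\frac{\partial\zeta}{\partial a}=\frac{u^2}{(u+v)^2}\Bigl(1+\frac{3v}{2u}+\frac{u}{2v}\Bigr)>0,
\end{equation}
and since $\zeta(\cdot,b)$ is continuous on $[b,1)$ this upgrades to strict monotonicity on the whole interval. (Equivalently, the substitution $u=\sqrt{1-b}\cos\theta$, $v=\sqrt{1-b}\sin\theta$ turns the claim into the monotonicity of $\cos^3\theta/(\cos\theta+\sin\theta)$ on $[0,\pi/2)$, another one-line derivative check.)

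For the joint concavity I would pass to the affine coordinates $s=1-a$, $t=a-b$, which preserves concavity and carries the domain $\{0<b\le a<1\}$ into $\{s>0,\ t\ge 0,\ s+t<1\}\subset\{s>0,\ t\ge 0\}$; it then suffices to prove that $\phi(s,t):=1-\zeta=\dfrac{s^{3/2}}{\sqrt{s}+\sqrt{t}}$ is convex there. The key observation is that $\phi$ is a perspective function: $\phi(s,t)=s\,f(t/s)$ with $f(\tau)=1/(1+\sqrt{\tau})$, and
\begin{equation}
f''(\tau)=\tfrac14(1+\sqrt{\tau})^{-2}\tau^{-3/2}+\tfrac12(1+\sqrt{\tau})^{-3}\tau^{-1}>0 ,
\end{equation}
so $f$ is convex on $[0,\infty)$ (and continuous at $0$); hence its perspective $s\,f(t/s)$ is convex on $\{s>0\}$ and $\zeta=1-\phi$ is jointly concave. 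If one prefers to avoid the perspective lemma, convexity of $\phi$ on the open quadrant can instead be checked by a direct $2\times2$ Hessian computation, which is positive semidefinite with the expected radial degeneracy.

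I expect the only real obstacle to be the first step — guessing and verifying the closed form \eqref{eq:zetaSimplified}, and then recognising that $1-\zeta$ is a perspective function. Once those two structural facts are in hand, the monotonicity statements and the concavity statement are both short and routine.
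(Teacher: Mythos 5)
Your proposal is correct, and it takes a genuinely different route from the paper. The paper proves the lemma by brute force: it differentiates the original expression \eqref{eq:zetaab} once and twice, checks the signs of $\partial\zeta/\partial a$, $\partial\zeta/\partial b$, $\partial^2\zeta/\partial a^2$, $\partial^2\zeta/\partial b^2$, and verifies that the Hessian determinant vanishes identically, handling the boundary $b=a$ by continuity. You instead first establish the closed form $\zeta(a,b)=1-(1-a)^{3/2}\big/\bigl(\sqrt{1-a}+\sqrt{a-b}\,\bigr)$, which I checked agrees with the second expression in \eqref{eq:zetaab} on all of $0<b\le a<1$ (so the cancellation of $u-v$ causes no trouble at $1-2a+b=0$); from this, monotonicity in $b$ is read off by inspection, your derivative in $a$ is correct and in fact coincides with the paper's expression — in the variables $u=\sqrt{1-a}$, $v=\sqrt{a-b}$ both reduce to $u\lsp(u^2+2uv+3v^2)/\bigl[2v(u+v)^2\bigr]>0$ — and joint concavity follows because $1-\zeta=s^{3/2}/(\sqrt{s}+\sqrt{t})$ with $s=1-a$, $t=a-b$ is the perspective $s\,f(t/s)$ of the convex function $f(\tau)=(1+\sqrt{\tau}\,)^{-1}$, composed with an affine change of variables. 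What each approach buys: yours replaces the paper's heaviest step (the second-derivative computations) by a structural observation, and it explains the rank deficiency of the Hessian that the paper only records as a computed identity (a perspective function is positively homogeneous of degree one, hence linear along rays); the paper's computation, on the other hand, is entirely self-contained and needs no simplification step or appeal to the perspective lemma. Your boundary handling (continuity at $b=a$, convexity of $f$ up to $\tau=0$) matches the care taken in the paper's proof, so I see no gap.
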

\begin{proof}[Proof of \lref{lem:zetaProperty}]
	First we assume $0< b< a< 1$. According to the following equations,
	\begin{widetext}
		\begin{align}
		\frac{\partial \zeta(a,b)}{\partial a}
		&=\frac{(1-a)\bigl[(1+4a-5b)\sqrt{1-a}+3(1-b)\sqrt{a-b}\,\bigr]}{2\sqrt{(1-a)(a-b)}(\sqrt{1-a}+\sqrt{a-b}\,)^3}> 0,\\
		\frac{\partial \zeta(a,b)}{\partial b}&=-\frac{(1-a)^{3/2}\bigl[a-b+\sqrt{(1-a)(a-b)}\,\bigr]}{2(a-b)(\sqrt{1-a}+\sqrt{a-b}\,)^3}< 0,
		\end{align}
		$\zeta(a,b)$ is strictly increasing in $a$ and strictly decreasing in $b$. According to the following equations,
		
		\begin{align}
		\frac{\partial ^2\zeta(a,b)}{\partial a^2}&=-\frac{(1-b)^2\bigl[(1+2a-3b)+4\sqrt{(1-a)(a-b)}\,\bigr]}{4(1-a)^{1/2}(a-b)^{3/2}(\sqrt{1-a}+\sqrt{a-b}\,)^4}< 0,\\
		\frac{\partial ^2\zeta(a,b)}{\partial b^2}&=-\frac{(1-a)^{3/2}\bigl[(1+2a-3b)+4\sqrt{(1-a)(a-b)}\,\bigr]}{4(a-b)^{3/2}(\sqrt{1-a}+\sqrt{a-b}\,)^4}< 0,\\
		&\frac{\partial ^2\zeta(a,b)}{\partial a^2}\frac{\partial ^2\zeta(a,b)}{\partial b^2}-\biggl(\frac{\partial ^2\zeta(a,b)}{\partial a\partial b}\biggr)^2=0,
		\end{align}		
	\end{widetext}
	$\zeta(a,b)$ is jointly concave in $a$ and $b$. Incidentally, the equality in the last equation is tied to \eref{eq:zetarb}. 
	
	In the general situation $0< b\leq a< 1$, the conclusions in \lref{lem:zetaProperty} follow from the above analysis and the fact that $\zeta(a,b)$ is continuous in the limit $b\rightarrow a$, 
	\begin{align}
	\lim_{b\rightarrow a}\zeta(a,b)=\zeta(a,a)=a. 
	\end{align}
\end{proof}

The significance of the function $\zeta(a,b)$ is manifested in the following lemma.
\begin{lem}\label{lem:HalfMomBoundX}
	Suppose $X$ is a random variable that satisfies the conditions $0\leq X\leq 1$, $\bbE[X]=a$, and $\bbE[X^2]=b$ with $0<b\leq a<1$; 
	then 
	\begin{equation}\label{eq:halfmomentLBUBX}
	a\sqrt{\frac{a}{b}}\leq \bbE[\sqrt{X}] \leq \zeta(a,b).
	\end{equation}
	The lower bound  is saturated iff 
	\begin{equation}\label{eq:LBconditionX}
	\rmP(X=0)=1-\frac{a^2}{b},\quad  \rmP\Bigl(X=\frac{b}{a}\Bigl)=\frac{a^2}{b},
	\end{equation}
	while the upper bound  is saturated iff
	\begin{equation}\label{eq:UBconditionX}
	\rmP\Bigl(X=\frac{a-b}{1-a}\Bigr)=\frac{(1-a)^2}{1-2a+b}, \quad \rmP(X=1)=\frac{b-a^2}{1-2a+b}.
	\end{equation}
\end{lem}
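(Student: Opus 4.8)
The plan is to read this as a Chebyshev–Markov moment problem for the concave function $\sqrt{x}$ on $[0,1]$, and for each of the two bounds to exhibit an explicit polynomial of degree $\le 2$ that sandwiches $\sqrt{x}$ on $[0,1]$ and agrees with it exactly at the two atoms of the claimed extremal law. Since the expectation of a quadratic is a linear functional of $\bbE[X]$ and $\bbE[X^2]$ only, the bound is then immediate, its numerical value is obtained by evaluating on the two-atom law, and the equality case reduces to asking when $X$ is supported on those two atoms.

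For the lower bound I would set $c:=b/a$ (note $c\in(0,1]$ since $b\le a$, and $c\ge a$ since $b\ge a^2$ by Cauchy–Schwarz) and take the quadratic $p$ fixed by $p(0)=0$, $p(c)=\sqrt c$, $p'(c)=1/(2\sqrt c)$, namely $p(x)=\tfrac{3a}{2\sqrt c}\,x-\tfrac{b}{2c^{3/2}}\,x^{2}$. The pointwise inequality $p(x)\le\sqrt x$ on $[0,1]$ I would verify by the substitution $s=\sqrt{x/c}$, which reduces it to $(3s-s^3)/2\le 1$ for $s\ge 0$, a one-variable fact with equality only at $s=1$, i.e. $x=c$ (and trivially at $x=0$). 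Hence $\bbE[\sqrt X]\ge\bbE[p(X)]$, a linear functional of $a,b$; evaluating it on the two-atom law on $\{0,b/a\}$ with weights $1-a^2/b$ and $a^2/b$ — which one checks has moments $a,b$, and for which $\sqrt X=p(X)$ a.s. — shows this value equals $a\sqrt{a/b}$ and is attained. Equality for general $X$ then forces $\sqrt X=p(X)$ a.s., hence support in $\{0,b/a\}$, and the moment constraint pins down the probabilities; the positivity of the weights uses exactly $a^{2}\le b$.

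For the upper bound I would set $r:=(a-b)/(1-a)$ (here $0\le r<1$, using $a^2\le b\le a<1$) and let $q$ be the quadratic determined by $q(r)=\sqrt r$, $q'(r)=1/(2\sqrt r)$, $q(1)=1$. To get $q(x)\ge\sqrt x$ on $[0,1]$ I would substitute $x=y^2$: the quartic $q(y^2)-y$ has a double root at $y=\sqrt r$ and a simple root at $y=1$ by construction, and, crucially, no cubic term, which forces the factorization $\varepsilon\,(y-\sqrt r)^2(y-1)(y+2\sqrt r+1)$ with $\varepsilon<0$; checking the signs of the four factors on $[0,1]$ shows it is $\ge 0$ there, with equality only at $y=\sqrt r$ and $y=1$. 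Then $\bbE[\sqrt X]\le\bbE[q(X)]$, again a linear functional of $a,b$; evaluating it on the two-atom law on $\{r,1\}$ with weights $(1-a)^2/(1-2a+b)$ and $(b-a^2)/(1-2a+b)$ (which has moments $a,b$, with denominator $1-2a+b\ge(1-a)^2>0$) yields $\bigl[(1-a)^{3/2}\sqrt{a-b}+b-a^2\bigr]/(1-2a+b)$, which is precisely the first form of $\zeta(a,b)$ in \eref{eq:zetaab}; and since this law also satisfies $\sqrt X=q(X)$ a.s., the bound is attained, so the optimum is $\zeta(a,b)$ and equality forces the stated distribution.

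The genuinely delicate step is the global inequality $q(x)\ge\sqrt x$ on all of $[0,1]$ for the upper bound — the observation that the missing cubic term forces the fourth root to be $-(2\sqrt r+1)<0$ is what makes the sign analysis close; everything else (the coefficients of $p$ and $q$, and the moment and $\sqrt{\cdot}$-averages of the two two-atom laws) is routine algebra. I would also check the degenerate regimes $b=a$ and $b=a^2$, where the two atoms coalesce; there the claimed laws collapse consistently to a single point mass at $a$ and the inequalities are trivial.
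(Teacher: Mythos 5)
Your proposal is correct, but it follows a genuinely different route from the paper's. The paper proves the discrete version first (\lref{lem:HalfMomBound}): an exchange argument at an assumed optimum shows the mass must concentrate on two atoms, and the two-atom case (\lref{lem:HalfMomBound2v}) is then settled by parametrizing $x_2,p_1,p_2$ in terms of $x_1$ and proving that $h(x_1,a,b)$ is strictly increasing on $[0,(a-b)/(1-a)]$, so the extremes sit at the endpoints; \lref{lem:HalfMomBoundX} is then read off as a corollary. You instead solve the truncated moment problem by duality in the Chebyshev--Markov style: for each bound you exhibit an explicit quadratic that minorizes (resp.\ majorizes) $\sqrt{x}$ on $[0,1]$ and touches it exactly at the atoms of the claimed extremal law, so the inequality, its value, and the equality characterization all come in one stroke because $\bbE$ of a quadratic depends only on $(a,b)$. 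What your route buys: it applies directly to arbitrary distributions on $[0,1]$ (the paper's passage from the finitely supported \lref{lem:HalfMomBound} to the random-variable statement, and the attainment of the optimum there, are left implicit), and the saturation conditions \eref{eq:LBconditionX} and \eref{eq:UBconditionX} drop out immediately from ``the nonnegative integrand has zero mean''. What the paper's route buys: it is self-contained calculus with no need to guess the dual polynomials, and it yields the monotonicity information about two-point values that is reused elsewhere. Two small repairs to your write-up, neither a real gap: the lower-bound quadratic determined by $p(0)=0$, $p(c)=\sqrt{c}$, $p'(c)=1/(2\sqrt{c})$ is $p(x)=\frac{3}{2\sqrt{c}}x-\frac{1}{2c^{3/2}}x^{2}$ (the coefficients you displayed, with the extra factors $a$ and $b$, are the two terms of $\bbE[p(X)]$, not of $p$; your subsequent reduction to $(3s-s^{3})/2\leq 1$ matches the correct $p$); and the sign $\varepsilon<0$ of the leading coefficient of $q$ needs a one-line justification, e.g.\ $\varepsilon=-(1-\sqrt{r})^{2}/[2\sqrt{r}\lsp(1-r)^{2}]$ from the interpolation conditions, or the observation that $\varepsilon\geq 0$ would place $q$ above the tangent line of the strictly concave $\sqrt{x}$ at $r$ and force $q(1)>1$. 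With these touch-ups and your separate treatment of the degenerate cases $b=a$ (where $r=0$, the tangency at $r$ is unavailable, but $\bbE[X-X^{2}]=0$ already forces $X\in\{0,1\}$) and $b=a^{2}$, the argument is complete.
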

The assumptions in \lref{lem:HalfMomBoundX} imply the inequalities   $0< a^2\leq b\leq a< 1$.
\Eref{eq:LBconditionX} means $X$ can only take on the values of 0 and $b/a$, while \eref{eq:UBconditionX}  means $X$ can only take on the values of 1 and $(a-b)/(1-a)$. \Lref{lem:HalfMomBoundX} is a corollary of \lref{lem:HalfMomBound} below.

\begin{lem}\label{lem:HalfMomBound}
	Suppose $a$ and $b$ are positive constants that satisfy $0< a^2\leq b\leq a< 1$.	Suppose $x_1, x_2, \ldots, x_m$ and  $p_1,p_2,\ldots, p_m$ are nonnegative numbers that satisfy
	\begin{equation}\label{eq:MomentConstraints}
	\begin{gathered}
	0\leq x_j\leq 1, \quad 0\leq p_j\leq 1\quad \forall j,\\		
	\sum_j p_j=1,\quad 	\sum_j p_j x_j=a,\quad \sum_j p_j x_j^2=b;
	\end{gathered}
	\end{equation}
	then 
	\begin{equation}\label{eq:halfmomentLBUB}
	a\sqrt{\frac{a}{b}}\leq \sum_j p_j \sqrt{x_j} \leq \zeta(a,b).
	\end{equation}	
	The lower bound in \eref{eq:halfmomentLBUB} is saturated iff 
	\begin{equation}\label{eq:LBcondition}
	\sum_{j|x_j=0} p_j=1-\frac{a^2}{b}, \quad \sum_{j|x_j=b/a} p_j=\frac{a^2}{b},
	\end{equation}
	while the upper bound  is saturated iff
	\begin{equation}\label{eq:UBcondition}
	\sum_{j|x_j=\frac{a-b}{1-a}} p_j=\frac{(1-a)^2}{1-2a+b}, \quad \sum_{j|x_j=1} p_j=\frac{b-a^2}{1-2a+b}.
	\end{equation}
\end{lem}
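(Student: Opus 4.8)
The plan is to handle the two bounds by independent elementary arguments and then read off the equality cases. First I would dispose of the two degenerate regimes. If $b=a$ then $\sum_j p_j x_j(1-x_j)=a-b=0$, so $x_j\in\{0,1\}$ whenever $p_j>0$; hence $\sum_j p_j\sqrt{x_j}=\sum_{j:x_j=1}p_j=a$, which equals both $a\sqrt{a/b}$ and $\zeta(a,a)=a$, and the claimed extremal data reduce correctly. If $b=a^2$ then the variance vanishes, so $x_j=a$ on the support, $\sum_j p_j\sqrt{x_j}=\sqrt a=a\sqrt{a/b}=\zeta(a,a^2)$, again consistent with \eqref{eq:LBcondition} and \eqref{eq:UBcondition}. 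So from now on I may assume $a^2<b<a$, which in particular makes $u:=(a-b)/(1-a)$ lie strictly in $(0,1)$, using $b>a^2\ge 2a-1$.

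For the lower bound I would use interpolation of moments via Hölder's inequality. Writing $x_j=x_j^{1/3}\cdot x_j^{2/3}$ and applying Hölder with conjugate exponents $3/2$ and $3$ against the weights $p_j$ yields $a=\sum_j p_j x_j\le\bigl(\sum_j p_j\sqrt{x_j}\bigr)^{2/3}\bigl(\sum_j p_j x_j^2\bigr)^{1/3}=\bigl(\sum_j p_j\sqrt{x_j}\bigr)^{2/3}b^{1/3}$, which rearranges to $\sum_j p_j\sqrt{x_j}\ge a^{3/2}b^{-1/2}=a\sqrt{a/b}$. Equality in Hölder forces $x_j^{1/2}\propto x_j^{2}$ on the support, so each $x_j$ with $p_j>0$ is either $0$ or a common value $c$; the two moment conditions then give $c=b/a$ and total mass $a^2/b$ at $c$, which is \eqref{eq:LBcondition} (admissible because $a^2\le b\le a$ forces $c\le1$ and $a^2/b\le1$).

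For the upper bound I would construct an explicit quadratic majorant of the square root. Let $q(x)=\alpha+\beta x+\gamma x^2$ be the unique quadratic tangent to $\sqrt x$ at $x=u$ and passing through $(1,1)$; writing $q$ in the form $q(x)=\sqrt u+\tfrac{1}{2\sqrt u}(x-u)+\gamma(x-u)^2$ and imposing $q(1)=1$ gives $\gamma=-\bigl[2\sqrt u(1+\sqrt u)^2\bigr]^{-1}<0$ and $\alpha,\beta>0$. The key claim is that $q(x)\ge\sqrt x$ for all $x\in[0,1]$. Since $q$ is a downward parabola positive at both endpoints, $q\ge0$ on $[0,1]$, so it suffices to prove $q(x)^2\ge x$ there. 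Now $q(x)^2-x$ is a quartic with positive leading coefficient which, by construction, has a double root at $x=u$ and a simple root at $x=1$; computing the sum of its roots shows the remaining root equals $(1+2\sqrt u)^2\ge1$, and a sign chase then gives $q(x)^2-x\ge0$ on $[0,1]$ with equality only at $x=u$ and $x=1$. Hence $\sum_j p_j\sqrt{x_j}\le\sum_j p_j q(x_j)=\alpha+\beta a+\gamma b$. Because the right-hand side is linear in the moments $(1,a,b)$, I would evaluate it on the two-point measure putting mass $(1-a)^2/(1-2a+b)$ at $u$ and $(b-a^2)/(1-2a+b)$ at $1$, which has exactly those moments, obtaining $\alpha+\beta a+\gamma b=\tfrac{(1-a)^2}{1-2a+b}\sqrt u+\tfrac{b-a^2}{1-2a+b}=\zeta(a,b)$ after using $(1-a)^2\sqrt u=(1-a)\sqrt{(1-a)(a-b)}$. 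Equality in $\sum_j p_j\sqrt{x_j}\le\sum_j p_j q(x_j)$ then forces $x_j\in\{u,1\}$ on the support, and the moment constraints pin down the masses as in \eqref{eq:UBcondition}.

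The main obstacle is the majorization inequality $q\ge\sqrt{\cdot}$ on $[0,1]$ — equivalently the root-location analysis of the quartic $q(x)^2-x$ — since everything else is either a one-line inequality or a bookkeeping check of the extremal two-point distributions; the explicit value of $\gamma$ and the positivity of $\alpha,\beta$ are a secondary, purely computational point handled best through the Taylor form of $q$ above. Conceptually this is the classical picture of moment problems over a Chebyshev system, with the two extremal measures being the lower and upper principal representations of the moment data $(1,a,b)$ on $[0,1]$, which could be cited if desired. Finally I would note that \lref{lem:HalfMomBoundX} follows at once by regarding a general random variable $X$ with the stated moments as a (possibly infinite) weighted family $\{x_j,p_j\}$ and rerunning the argument verbatim, or by a routine approximation by finitely supported variables.
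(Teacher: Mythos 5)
Your proof is correct, but it takes a genuinely different route from the paper's. The paper first establishes the two-point case (\lref{lem:HalfMomBound2v}) by eliminating $x_2,p_1,p_2$ in favor of $x_1$ and showing through explicit derivative computations that the resulting function $h(x_1,a,b)$ is strictly increasing for $0\le x_1\le (a-b)/(1-a)$; the general case is then reduced to the two-point case by an exchange argument at an optimizer (any pair $(x_j,x_k)$ carrying the same three partial moments must itself be extremal, which forces the optimal support onto two points). You instead prove the two bounds directly: the lower bound via H\"older's inequality applied to $x=x^{1/3}\cdot x^{2/3}$ with exponents $3/2$ and $3$, whose equality case immediately yields the two-point structure on $\{0,b/a\}$ and hence \eref{eq:LBcondition}; and the upper bound via the classical polynomial-majorization (principal-representation) technique, with a concave quadratic $q$ tangent to $\sqrt{x}$ at $u=(a-b)/(1-a)$ and passing through $(1,1)$. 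Your key factorization is correct: $q(x)^2-x=\gamma^2(x-u)^2(x-1)\bigl(x-(1+2\sqrt{u}\lsp)^2\bigr)$, the fourth root following from Vieta as you indicate, and since $(1+2\sqrt{u}\lsp)^2>1$ one gets $q\ge\sqrt{\cdot}$ on $[0,1]$ with equality only at $u$ and $1$; evaluating $\alpha+\beta a+\gamma b$ on the two-point measure supported on $\{u,1\}$ indeed gives $\zeta(a,b)$ as defined in \eref{eq:zetaab}, and the equality analysis reproduces \eref{eq:UBcondition}. Your handling of the degenerate cases $b=a$ and $b=a^2$, and of the admissibility facts $u\in(0,1)$, $b/a\le 1$, $a^2/b\le 1$, is also consistent with the statement. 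What your route buys is that it bypasses the calculus-heavy monotonicity analysis of $h(x_1,a,b)$ and the optimizer-exchange step, and it makes the extremal two-point measures appear as the lower and upper principal representations of the moment data, which is conceptually transparent; what the paper's route buys is a self-contained elementary argument in which the two-variable case \lref{lem:HalfMomBound2v} is isolated as a standalone statement and no auxiliary majorant needs to be guessed.
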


\begin{proof}[Proof of \lref{lem:HalfMomBound}]
	When $m=2$, \lref{lem:HalfMomBound} follows from \lref{lem:HalfMomBound2v} below. 	When $b=a^2$, we have
	\begin{align}
	\zeta(a,b)=\sqrt{a},\quad \sum_{j|x_j=a}p_j=1, \quad \sum_{j|x_j\neq a}p_j=0,
	\end{align}
	so \lref{lem:HalfMomBound} also holds.

	It remains to consider the case with $m>2$ and $b>a^2$, which means not all $x_j$ with $p_j>0$ are equal to each other given the constraints in \eref{eq:MomentConstraints}.  Suppose the minimum of $\sum_j p_j \sqrt{x_j}$  is attained when $p_j=p_j^*$ and $x_j=x_j^*$. Without loss of generality, we can assume that 
	\begin{equation}\label{eq:OptSolAssump}
	\begin{gathered}
	p_j^*>0 \quad \forall j\leq l, \quad p_j^*=0\quad \forall j\geq l+1,  \\
	x_j^*< x_k^* \quad \forall 1\leq j<k\leq l,
	\end{gathered}	
	\end{equation}
	where $2\leq l\leq m$. Then we have 
	\begin{align}
	p_j \sqrt{x_j}+p_k \sqrt{x_k} \geq 	p_j^* \sqrt{x_j^*}+p_k^* \sqrt{x_k^*} \quad \forall 1\leq j<k\leq l
	\end{align}
	as long as 
	\begin{equation}
	\begin{gathered}
	p_j+p_k=p_j^*+p_k^*,\;\;  p_j x_j+p_k x_k=p_j^* x_j^*+p_k^* x_k^*,\\
	p_j x_j^2+p_k x_k^2=p_j^* {x_j^*}^2+p_k^* {x_k^*}^2.
\end{gathered}
	\end{equation}
By virtue of \lref{lem:HalfMomBound2v} below applied to the set of parameters $x_j,x_k, p_j/(p_j+p_k), p_k/(p_j+p_k)$, we can now deduce that  $x_j^*=0$, which in turn implies that $l=2$ given the assumptions in \eref{eq:OptSolAssump}. According to \lref{lem:HalfMomBound2v}  again, $\sum_j p_j \sqrt{x_j}$ is bounded from below by $a\sqrt{a/b}$,  which confirms the lower bound in \eref{eq:halfmomentLBUB}, and the bound is saturated iff \eref{eq:LBcondition} holds. The upper bound in \eref{eq:halfmomentLBUB} and the saturation condition can be established by a similar reasoning.
\end{proof}


\begin{lem}\label{lem:HalfMomBound2v}
	Suppose $a$ and $b$ are positive constants that satisfy $0< a^2\leq b\leq a< 1$. 	
	Suppose $p_1,p_2,x_1,x_2$ are nonnegative numbers that satisfy
	\begin{equation}\label{eq:MomentConstraints2v}
	\begin{gathered}
	0\leq x_1\leq x_2\leq 1,\;\;   0\leq p_1,p_2\leq 1,\\
	\!\!	p_1+p_2=1,\;\;      
	p_1x_1+p_2x_2=a,\;\; p_1x_1^2+p_2x_2^2=b;
	\end{gathered}
	\end{equation}
	then 
	\begin{equation}
	a\sqrt{\frac{a}{b}}\leq p_1 \sqrt{x_1}+ p_2 \sqrt{x_2} \leq \zeta(a,b).\label{eq:halfmomentLBUB2v}
	\end{equation}
	If  $b=a^2$, then $p_1 \sqrt{x_1}+ p_2 \sqrt{x_2}=\sqrt{a}=\zeta(a,b)$ and $x_j=a$ whenever $p_j\neq 0$ for $j=1,2$. If  $b>a^2$, then the lower bound in \eref{eq:halfmomentLBUB2v} is saturated iff
	\begin{equation}\label{eq:LBcondition2v}
	x_1=0,\quad p_1=1-\frac{a^2}{b},\quad x_2=\frac{b}{a},\quad  p_2=\frac{a^2}{b},
	\end{equation}	
	while the upper bound  is saturated iff
	\begin{equation}\label{eq:UBcondition2v}
	\begin{aligned}
	\quad x_1&=\frac{a-b}{1-a},&\quad 
	p_1&=\frac{(1-a)^2}{1-2a+b}, \\
	 x_2&=1,&\quad 
	p_2&=\frac{b-a^2}{1-2a+b}.
	\end{aligned}
	\end{equation}	
\end{lem}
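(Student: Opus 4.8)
The plan is to establish the two bounds by Chebyshev--Markov (moment-problem) duality: I will sandwich $\sqrt{x}$ on $[0,1]$ between two quadratics whose expectations against any distribution are affine in the first two moments. Write $X$ for the random variable taking value $x_j$ with probability $p_j$, so $\bbE[X]=a$, $\bbE[X^2]=b$, $\bbE[\sqrt{X}]=p_1\sqrt{x_1}+p_2\sqrt{x_2}$, and recall $0<a^2\le b\le a<1$. First I would dispose of the case $b=a^2$: a two-point distribution has variance $p_1p_2(x_1-x_2)^2=b-a^2=0$, so $p_1p_2=0$ or $x_1=x_2$, and in all cases the atoms of positive weight sit at $x=a$ (forced by $p_1x_1+p_2x_2=a$); hence $\bbE[\sqrt{X}]=\sqrt a$, which equals both $\zeta(a,a^2)=\sqrt a$ and $a\sqrt{a/b}=\sqrt a$, and the equality structure is immediate. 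From now on assume $a^2<b\le a<1$; then $p_1p_2>0$ and $x_1<a<x_2$, and solving \eqref{eq:MomentConstraints2v} gives $p_1=(x_2-a)/(x_2-x_1)$, $p_2=(a-x_1)/(x_2-x_1)$ together with the identity $(a-x_1)(x_2-a)=b-a^2$. In particular $x_1$ ranges over $[0,(a-b)/(1-a)]$, with $x_2=a+(b-a^2)/(a-x_1)$ increasing in $x_1$, equal to $b/a$ at $x_1=0$ and to $1$ at $x_1=(a-b)/(1-a)$.

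For the lower bound I would take $x_0:=b/a\in(0,1]$ and $q_-(x):=\frac{3x}{2\sqrt{x_0}}-\frac{x^2}{2x_0^{3/2}}$, which is tangent to $\sqrt{x}$ at $x_0$ and vanishes at $0$. Substituting $s=\sqrt{x}$ gives the factorization $\sqrt{x}-q_-(x)=\frac{s(s-\sqrt{x_0})^2(s+2\sqrt{x_0})}{2x_0^{3/2}}$, manifestly $\ge0$ on $[0,1]$ and vanishing exactly at $\{0,x_0\}$. Hence $\bbE[\sqrt{X}]\ge\bbE[q_-(X)]$, and since $q_-$ has no constant term, $\bbE[q_-(X)]$ depends only on $(a,b)$; evaluating it on the two-point law on $\{0,b/a\}$ with the prescribed moments (where $q_-$ agrees with $\sqrt{\cdot}$) gives $\bbE[q_-(X)]=a\sqrt{a/b}$, the left inequality of \eqref{eq:halfmomentLBUB2v}. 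Equality forces $x_1,x_2\in\{0,b/a\}$, hence $x_1=0$, $x_2=b/a$, and the moment equations fix the weights as in \eqref{eq:LBcondition2v}. Symmetrically, set $x_3:=(a-b)/(1-a)$; since $b\ge a^2>2a-1$ one gets $0\le x_3<1$, so there is a quadratic $q_+$ tangent to $\sqrt{x}$ at $x_3$ and equal to it at $1$, and the substitution $s=\sqrt{x}$ exhibits $q_+(x)-\sqrt{x}$ as a \emph{positive} constant times $(1-s)(s-\sqrt{x_3})^2(s+1+2\sqrt{x_3})\ge0$ on $[0,1]$, vanishing exactly at $\{x_3,1\}$. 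Then $\bbE[\sqrt{X}]\le\bbE[q_+(X)]$, and evaluating the affine quantity $\bbE[q_+(X)]$ on the two-point law on $\{x_3,1\}$ with moments $(a,b)$ identifies it with $\zeta(a,b)$ of \eqref{eq:zetaab}; equality forces support $\{x_3,1\}$, giving \eqref{eq:UBcondition2v}.

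The main obstacle is the two sign claims: pinning down $q_-$ and $q_+$ from their contact conditions and checking that $\sqrt{x}-q_-$ and $q_+-\sqrt{x}$ keep one sign on $[0,1]$. Both reduce, after $s=\sqrt{x}$, to a quartic in $s$ with a prescribed double root, factoring as a constant times $s$ (resp.\ $1-s$) times a perfect square times a linear factor whose root lies strictly below $0$, so the sign is transparent once the coefficients are computed; this bookkeeping is the only delicate part. An alternative avoiding the dual polynomials is to parametrize the one-parameter family directly by $x_1\in[0,(a-b)/(1-a)]$ and show $x_1\mapsto p_1\sqrt{x_1}+p_2\sqrt{x_2}$ is monotonically increasing by differentiating and clearing denominators and the square roots; there the obstacle shifts to verifying the (messier) sign of the derivative on the parameter range.
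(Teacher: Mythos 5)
Your certificate (Chebyshev--Markov) argument is a genuinely different route from the paper's. The paper's proof eliminates variables: for $b>a^2$ it solves the constraints to express $x_2,p_1,p_2$ as explicit functions of $x_1\in[0,(a-b)/(1-a)]$, and then shows by a derivative computation that $h(x_1,a,b)=p_1\sqrt{x_1}+p_2\sqrt{x_2}$ is strictly increasing on that interval, so the two endpoints give the lower and upper bounds together with the (unique) saturating configurations. You instead sandwich $\sqrt{x}$ between quadratics whose expectations are determined by the first two moments, with contact exactly at the extremal supports $\{0,b/a\}$ and $\{(a-b)/(1-a),1\}$. I checked your factorizations: $\sqrt{x}-q_-(x)=\frac{s(s-\sqrt{x_0})^2(s+2\sqrt{x_0})}{2x_0^{3/2}}$ is correct, and for $u=\sqrt{x_3}>0$ the quartic $(1-s)(s-u)^2(s+1+2u)$ has vanishing $s^3$ coefficient and $s$-coefficient $-2u(1+u)^2$, so $c=\frac{1}{2u(1+u)^2}>0$ indeed yields a quadratic majorant $q_+$ with $q_+(s^2)-s=c(1-s)(s-u)^2(s+1+2u)\geq 0$ on $[0,1]$, vanishing only at $s=u,1$; evaluating $\bbE[q_+(X)]$ on the moments $(a,b)$ gives $\zeta(a,b)$, and equality forces support in the contact set, from which the stated weights follow because the variance $b-a^2>0$ rules out a single atom. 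Your approach buys a moments-only proof that extends verbatim from two-point laws to arbitrary distributions on $[0,1]$ (so it would subsume the paper's Lemma~18 reduction step as well), at the cost of guessing the touching points; the paper's monotonicity proof is more pedestrian but gives, as a bonus, that the fidelity-type quantity interpolates monotonically between the two extremes, which is implicitly reused in the reduction argument of Lemma~18.

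One boundary case in your upper-bound certificate needs a patch: when $b=a$ (allowed by the hypothesis $a^2\leq b\leq a$) the touching point is $x_3=(a-b)/(1-a)=0$, where $\sqrt{x}$ has infinite slope, so no quadratic can be tangent there and your constant $c=\frac{1}{2u(1+u)^2}$ is undefined; in fact no quadratic majorant of $\sqrt{x}$ on $[0,1]$ touches it at $0$, so the certificate genuinely does not exist in this case. The fix is one line: if $b=a$ then $\bbE[X^2]=\bbE[X]$ with $X\in[0,1]$ forces every atom of positive weight to lie in $\{0,1\}$, so $p_2=a$, $p_1=1-a$, $\bbE[\sqrt{X}]=a=\zeta(a,a)$, matching \eref{eq:UBcondition2v}. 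Also, your remark that $q_-$ "has no constant term" is unnecessary (a constant term would contribute a known amount anyway), and your evaluation of $\bbE[q_\pm(X)]$ "on the two-point law" should be read as plugging the moments $(a,b)$ into the affine expression, which is the same algebra; neither point affects correctness.
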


\begin{proof}[Proof of \lref{lem:HalfMomBound2v}]
	If $b=a^2$, then $\zeta(a,b)=\sqrt{a}$. In addition,  \eref{eq:MomentConstraints2v} implies that
	\begin{align}
	p_1p_2(x_1-x_2)^2=0,
	\end{align}
	which means $p_1p_2=0$ or $x_1=x_2$. So  $x_j=a$ whenever $p_j\neq 0$, and we have $p_1 \sqrt{x_1}+ p_2 \sqrt{x_2}=\sqrt{a}=\zeta(a,b)$, in which case \eref{eq:halfmomentLBUB2v} holds automatically.

	If $b>a^2$, then the assumptions in \lref{lem:HalfMomBound2v} imply that $0<p_1,p_2<1$ and $0\leq x_1<a<x_2\leq 1$. In addition, $x_2,p_1,p_2$ are determined by $x_1$ as follows,
	\begin{align}\label{eq:x2p1p2}
	\begin{gathered}
	x_2=\frac{b-ax_1}{a-x_1},\quad p_1=\frac{b-a^2}{x_1^2-2ax_1+b},\\ p_2=\frac{(a-x_1)^2}{x_1^2-2ax_1+b}. 
	\end{gathered}
	\end{align}
	Note that $x_2$  increases monotonically with $x_1$. In addition, the assumption $b>a^2$ implies that
\begin{align}\label{eq:hdenominator}
x_1^2-2ax_1+b>0,
\end{align}	 
so the requirement $0< p_1,p_2< 1$ is automatically guaranteed given that 
	$0\leq x_1<a$.  Together with the requirement $0\leq x_1<x_2\leq 1$, \eref{eq:x2p1p2} also implies that
	\begin{align}\label{eq:x1x2order}
	0\leq x_1\leq \frac{a-b}{1-a}<a<\frac{b}{a}\leq x_2\leq 1.
	\end{align}
	
	By virtue of \eref{eq:x2p1p2} we can further deduce that
	\begin{equation}
	p_1 \sqrt{x_1}+ p_2 \sqrt{x_2}=h(x_1,a,b),
	\end{equation}
	where
	\begin{equation}
	h(x_1,a,b)\!:=\!\frac{(b-a^2)\sqrt{x_1}+(a-x_1)\sqrt{(a-x_1)(b-ax_1)}}{x_1^2-2ax_1+b},
	\end{equation}
which is continuous in $x_1$ for $0\leq  x_1\leq (a-b)/(1-a)$ given the assumption $0< a^2< b\leq a< 1$ together with \esref{eq:hdenominator}{eq:x1x2order}.

When $0< x_1\leq (a-b)/(1-a)$, the derivative of $h(x_1,a,b)$ over $x_1$ reads	
	\begin{align}
&\frac{\partial h(x_1,a,b)}{\partial x_1}=\frac{(b-a^2)(y_2\sqrt{x_2}-y_1\sqrt{x_1}\,)}{2\sqrt{x_1x_2}(x_1^2-2ax_1+b)^2}\nonumber\\
&=\frac{(b-a^2)(x_1^2-2ax_1+b)}{2(a-x_1)\sqrt{x_1x_2}(y_2\sqrt{x_2}+y_1\sqrt{x_1}\,)}>0,\label{eq:hderivative}
\end{align}	
where $x_2$ is given in \eref{eq:x2p1p2} and
\begin{align}
y_1=3b-2ax_1-x_1^2,\quad y_2=b+2ax_1-3x_1^2.
\end{align}
The second equality in \eref{eq:hderivative} follows from the facts that $y_1, y_2>0$ and 
\begin{align}
&(y_2\sqrt{x_2}-y_1\sqrt{x_1}\,)(y_2\sqrt{x_2}+y_1\sqrt{x_1}\,)=(y_2^2x_2-y_1^2x_1)\nonumber\\
&=\frac{(x_1^2-2ax_1+b)^3}{a-x_1}.
\end{align}
\Eref{eq:hderivative} implies that $h(x_1,a,b)$ is strictly monotonically increasing in $x_1$ for $0\leq  x_1\leq (a-b)/(1-a)$ given that $h(x_1,a,b)$ is continuous in $x_1$ in this range. In conjunction with \eref{eq:x1x2order} we can now deduce that
	\begin{align}\label{eq:halfmomentLBUBproof}
	a\sqrt{\frac{a}{b}}&=h(0,a,b)\leq  p_1 \sqrt{x_1}+ p_2 \sqrt{x_2} \nonumber\\
	&\leq h\Bigl(\frac{a-b}{1-a},a,b\Bigr)= \zeta(a,b),
	\end{align}
	which confirms \eref{eq:halfmomentLBUB2v}.

	If \eref{eq:LBcondition2v} holds, then it is easy to verify that the lower bound in \eref{eq:halfmomentLBUB2v} [identical to the lower bound in \eref{eq:halfmomentLBUBproof}] is saturated. Conversely, if the lower bound in \eref{eq:halfmomentLBUB2v}  is saturated, then $x_1=0$ given that $h(x_1,a,b)$ is strictly increasing in $x_1$ for $0\leq x_1\leq (a-b)/(1-a)$, so \eref{eq:LBcondition2v} holds according to \eref{eq:x2p1p2}. 
	By a similar reasoning, the upper bound in \eref{eq:halfmomentLBUB2v} is saturated iff \eref{eq:UBcondition2v} holds. 
\end{proof}

\section{Proof of \lref{lem:FidPOVMbasic}}

\begin{proof}[Proof of \lref{lem:FidPOVMbasic}]
	\Eref{eq:fidUnitaryInvariance} in the lemma follows from \eref{eq:FidPOVM2}  and the following equality,
	\begin{equation}
	\Bigl\|\caQ\Bigl(U^{\otimes N}A {U^\dag}^{\otimes N}\Bigr)\Bigr\|=\|\caQ(A)\|,
	\end{equation}
	which holds for any positive operator $A$ on $\caH^{\otimes N}$. 
	
	\Eref{eq:fidTensorTrivialPOVM} follows from \eref{eq:FidPOVM} and the following equation
	\begin{align}
	&\bigl\|\tilde{\caQ}\bigl(A\otimes 1^{\otimes k}\bigr)\bigr\|=\max_{\rho}\tr\bigl[P_{N+k+1} \bigl(A\otimes 1^{\otimes k}\otimes \rho\bigr)\bigr]\nonumber\\
	&=\frac{D_{N+k+1}}{D_{N+1}}\max_{\rho}\tr[P_{N+1} (A\otimes  \rho)]=\frac{D_{N+k+1}}{D_{N+1}}\|\tilde{\caQ}(A)\|,
	\end{align}
	which holds for any positive operator  $A$ on $\caH^{\otimes N}$. 	
	
	To prove \eref{eq:fidCoarseGraining}, let  $\scrA=\{A_j\}_j$ and $\scrB=\{B_k\}_k$.
	By assumption $\scrA$ is a coarse graining of $\scrB$, which means $A_j=\sum_k\Lambda_{jk}B_k$ for some stochastic matrix $\Lambda$. Therefore,
	\begin{align}
	&\sum_j \|\caQ(A_j)\|=\sum_j \Biggl\|\caQ\Biggl(\sum_k \Lambda_{jk}B_k\Biggr)\Biggr\|\nonumber\\
	&=\sum_j \Biggl\|\sum_k \Lambda_{jk} \caQ(B_k)\Biggr\|\leq \sum_j \sum_k \Lambda_{jk}\| \caQ(B_k)\| \nonumber\\
	&=\sum_k \| \caQ(B_k)\|,
	\end{align}
	which implies \eref{eq:fidCoarseGraining} in view of \eref{eq:FidPOVM2}. 
	
The equality $F(\scrA\otimes \scrC)=F(\scrC\otimes \scrA)$ in 	\eref{eq:fidProductComponent} follows from \eref{eq:FidPOVM2} and the following equality
\begin{align}
\|\caQ(A\otimes C)\|=\|\caQ(C\otimes A)\|,
\end{align}
which holds for any positive operator $A$ on $\caH^{\otimes N}$  and any positive operator  $C$ on $\caH^{\otimes k}$. The inequality  in 	\eref{eq:fidProductComponent} follows from 
\esref{eq:fidTensorTrivialPOVM}{eq:fidCoarseGraining} together with the following facts,
	\begin{equation}
	\scrA\otimes \scrI^{\otimes k}\preceq \scrA\otimes \scrC,\quad \scrI^{\otimes N}\otimes \scrC\preceq \scrA\otimes \scrC.
	\end{equation}
\end{proof}

\section{Proofs of  \lsref{lem:OneCopyFidLBUB}-\ref{lem:FidprodPOVM}}

\begin{proof}[Proof of \lref{lem:OneCopyFidLBUB}]
	According to \eref{eq:FidPOVM2} with $N=1$ and \eref{eq:QAnorm} we have
	\begin{align}
	F(\scrA)&=\sum_j \frac{\|\caQ(A_j)\|}{2D_{2}}= \frac{1}{d(d+1)}\sum_j [\lsp \tr(A_j)+\|A_j\|\lsp]\nonumber\\
	&= \frac{1}{d+1}+\frac{1}{d(d+1)}\sum_j \|A_j\|,
	\end{align}
	which confirms the equality in \eref{eq:OneCopyFidUB} in \lref{lem:OneCopyFidLBUB}. 
	Here the third equality follows from the normalization condition $\sum_j A_j=1$. The lower bound in \eref{eq:OneCopyFidUB} follows from the inequality $\|A_j\|\geq \tr(A_j)/d$, which is saturated iff  $A_j$ is proportional to the identity; so the lower bound is saturated iff $\scrA$ is trivial. The upper bound in \eref{eq:OneCopyFidUB}	 follows from the inequality $\|A_j\|\leq \tr(A_j)$, which is saturated iff $A_j$ is rank 1;
	so the upper bound is saturated iff $\scrA$ is rank 1. 	
\end{proof}

\begin{proof}[Proof of \lref{lem:Sym2ProjNorm}]
	Since both sides in \eref{eq:SymProjNorm} are homogeneous in $A$ and $B$, to prove this equation we can assume that $\tr(A)=\tr(B)=1$ without loss of generality, which means $f=\tr(AB)$. 
	
	If $A$ and $B$ are rank 1, then we can further assume that $\caH$ has dimension 2 without loss of generality. In this case $A$ and $B$ have the form 
	\begin{align}
	A=\frac{1+\vec{a}\cdot \vec{\sigma}}{2},\quad B=\frac{1+\vec{b}\cdot \vec{\sigma}}{2},
	\end{align}
	where  
	$\vec{a}$ and $\vec{b}$ are two real unit vectors in dimension 3, and $\vec{\sigma}=(\sigma_x,\sigma_y,\sigma_z)$ is the vector composed of the three Pauli operators.  So   we have $f=\tr(AB)=(1+\vec{a}\cdot \vec{b})/2$, and  \eref{eq:QABtrace1} in the main text implies that
	\begin{align}
	\caQ(A\otimes B)&=3+\vec{a}\cdot \vec{b}+(\vec{a}+\vec{b})\cdot \vec{\sigma}, \label{eq:QABqubit}\\
	\|\caQ(A\otimes B)\|&=3+\vec{a}\cdot \vec{b}+|\vec{a}+\vec{b}|\nonumber\\
	&=3+\vec{a}\cdot \vec{b}+\sqrt{2+2\vec{a}\cdot \vec{b}}\nonumber\\
	&=2\bigl(1+f+\sqrt{f}\lsp\bigr),\label{eq:QABNormRank1}
	\end{align}
	which confirms \eref{eq:SymProjNorm} with equality.

	In general, suppose $A$ and $B$ have convex decompositions $A=\sum_j \lambda_j |\psi_j\>\<\psi_j|$ and $B=\sum_k\mu_k |\varphi_k\>\<\varphi_k|$, respectively, where $\lambda_j,\mu_k>0$; let  $f_{jk}=|\<\psi_j|\varphi_k\>|^2$. Then  we have 
	\begin{align}\label{eq:MeanOverlap}
	\!\!\sum_j\lambda_j=\sum_k\mu_k=1,\quad \sum_{j,k}\lambda_j\mu_kf_{jk}=\tr(AB)=f. 
	\end{align}
	Therefore,
	\begin{align}
	&\|\caQ(A\otimes B)\|\leq \sum_{j,k}\lambda_j\mu_k
	\|\caQ(|\psi_j\>\<\psi_j|\otimes |\varphi_k\>\<\varphi_k|)\|\nonumber\\
	&= \sum_{j,k}2\lambda_j\mu_k \bigl(1+f_{jk}+\sqrt{f_{jk}}\lsp\bigr)
	\leq 2\bigl(1+f+\sqrt{f}\lsp\bigr),\label{eq:TensorNormUB}
	\end{align}
	which confirms \eref{eq:SymProjNorm}. Here the first inequality follows from the triangle inequality for the operator norm; the second inequality follows from \eref{eq:MeanOverlap} and the (strict) concavity of the square-root function. 
	
	If  both $A$ and $B$ are rank 1, then the upper bound in \eref{eq:SymProjNorm} is saturated according to \eref{eq:QABNormRank1}. If $A$ and $B$ have orthogonal supports and one of them is rank 1, then the upper bound is also  saturated, which can be verified by virtue of \eref{eq:QAB} or \eqref{eq:QABtrace1}.

Conversely, if  the upper bound in \eref{eq:SymProjNorm} is saturated, then the two inequalities in \eref{eq:TensorNormUB} are saturated.  The saturation of the  second inequality implies that $f_{jk}=f$ for all $j,k$. Since this result holds irrespective of the convex decompositions of $A$ and $B$, it follows that $|\<\psi|\varphi\>|^2=f$ for any  pure state $|\psi\>$ in the support of $A$ and any pure state $|\varphi\>$ in the support of $B$. Therefore, $A$ and $B$ are rank~1, or $A$ and $B$ have orthogonal supports. In the former case, the upper bound in \eref{eq:SymProjNorm} is indeed saturated according to \eref{eq:QABNormRank1}. In the later case, we have $f=0$ and the upper bound in \eref{eq:SymProjNorm} is equal to 2. In addition,  from \eref{eq:QAB} or \eqref{eq:QABtrace1} we can deduce that 
	\begin{align}
	\caQ(A\otimes B)=1+A+B,
	\end{align}
	which implies that
	\begin{align}
	\|\caQ(A\otimes B)\|&=1+\max\{\|A\|, \|B\|\}\leq 2\nonumber\\
	&=2\bigl(1+f+\sqrt{f}\lsp\bigr).
	\end{align}
	The upper bound is saturated iff $\|A\|=1$ or $\|B\|=1$, which means $A$ is rank 1 or $B$ is rank 1 given the assumption $\tr(A)=\tr(B)=1$. This observation completes the proof of \lref{lem:Sym2ProjNorm}. 
\end{proof}

\begin{proof}[Proof of \lref{lem:FidprodPOVM}]
In general $\scrA$ and $\scrB$ can be expressed as  $\scrA=\{A_j\}_j$ and $\scrB=\{B_k\}_k$, where $A_j,B_k\neq 0$ and $\sum_j A_j=\sum_k B_k=1$. Let  $a_j=\tr(A_j)$, $b_k=\tr(B_k)$, and $f_{jk}=\tr(A_j B_k)/(a_jb_k)$. Then we have
\begin{equation}\label{eq:abfNormalization}
\begin{gathered}
\sum_j a_j=\sum_j \tr(A_j)=\sum_k b_k=\sum_k \tr(B_k)=d,\\
\sum_{j,k} a_jb_kf_{jk}=\sum_{j,k}\tr(A_j B_k)=d. 
\end{gathered}
\end{equation}
In addition, \lref{lem:Sym2ProjNorm} yields the inequality
	\begin{equation}\label{eq:QjknormUB}
\|\caQ(A_j\otimes B_k)\|\leq 2a_j b_k\bigl(1+f_{jk}+\sqrt{f_{jk}}\lsp\bigr).
\end{equation}
By virtue of \eref{eq:FidPOVM2} now  we can deduce that
	\begin{align}
	&F(\scrA\otimes \scrB)= \frac{1}{d(d+1)(d+2)}\sum_{j,k}\|\caQ(A_j\otimes B_k)\|\nonumber\\
	&\leq\frac{2}{d(d+1)(d+2)}\sum_{j,k} a_j b_k\bigl(1+f_{jk}+\sqrt{f_{jk}}\lsp\bigr)\nonumber\\
	&=\frac{2d(d+1)+2\Phi_{1/2}(\scrA,\scrB)}{d(d+1)(d+2)},	\label{eq:FidProdUBgenProof}
	\end{align}	
	which confirms the upper bound in \eref{eq:FidProdUBgen}. Here the last equality follows from the definition of $\Phi_{1/2}(\scrA,\scrB)$ in \eref{eq:POVMcrossFP} and the normalization conditions in \eref{eq:abfNormalization}.

	If $\scrA$ and $\scrB$ are rank-1 POVMs, then the upper bound in \eref{eq:QjknormUB} is saturated for each pair $j,k$, so the upper bound in \eref{eq:FidProdUBgenProof} [identical to the upper bound in \eref{eq:FidProdUBgen}] is saturated. 
	
	Conversely, if the upper bound in \eref{eq:FidProdUBgen} is saturated, then the upper bound in \eref{eq:QjknormUB} is saturated for each pair $j,k$. Suppose on the contrary that $\scrA$ is not rank~1; then it contains  a POVM element, say $A_1$, of rank at least~2.   According to \lref{lem:Sym2ProjNorm}, all POVM elements in $\scrB$ are rank 1 and are orthogonal to $A_1$, which is impossible. Therefore, $\scrA$ is rank 1, and so is $\scrB$ by the same token. In a word, the upper bound in \eref{eq:FidProdUBgen} is saturated iff $\scrA$ and $\scrB$ are rank 1. 
\end{proof}

\section{Proofs of \lsref{lem:SymProjNormSum}-\ref{lem:FidCoarsegraining}}

\begin{proof}[Proof of \lref{lem:SymProjNormSum}]The inequality in \eref{eq:SymProjNormSum} follows from the triangle inequality for the operator norm given that $\caQ(A\otimes B)=\caQ(A\otimes B_1 )+\caQ(A\otimes B_2)$. The inequality is saturated iff there exists a ket $|\psi\rangle\in \caH$ such that 
	\begin{equation}\label{eq:SymProjNormSumProof}
	\begin{aligned}
	\caQ(A\otimes B_1 )|\psi\rangle &=\|\caQ(A\otimes B_1 )\||\psi\>,  \\
	\caQ(A\otimes B_2 )|\psi\rangle &=\|\caQ(A\otimes B_2 )\||\psi\>. 
	\end{aligned}
	\end{equation}
	
	If condition~1 in \lref{lem:SymProjNormSum} holds, that is, $B_2$ is proportional to $B_1$, then $\caQ(A\otimes B_2 )$ is proportional to $\caQ(A\otimes B_1 )$, so the inequality in \eref{eq:SymProjNormSum} is saturated. If condition~2 or 3 in \lref{lem:SymProjNormSum} holds, then the inequality is also saturated according to \eref{eq:QAB} [cf. \eref{eq:QABqubit}] and the above observation. Note that the eigenspace associated with the maximum eigenvalue of $\caQ(A\otimes B_j)$ is two-fold degenerate when $A$ and $B_j$ are orthogonal, but nondegenerate otherwise.

	Next, we suppose that none of the three conditions in \lref{lem:SymProjNormSum} holds. Then $B_2$ is not proportional to $B_1$, and $A$ is not orthogonal to one of the two operators $B_1, B_2$. 	
	If $A, B_1, B_2$ are not supported in any common two-dimensional subspace of $\caH$, then there does not exist any ket $|\psi\rangle$ that satisfies \eref{eq:SymProjNormSumProof}, so the inequality in \eref{eq:SymProjNormSum} is not saturated. 

	If  $A, B_1, B_2$ are all supported in a common two-dimensional subspace of $\caH$,  then $A$ is orthogonal to neither $B_1$ nor $B_2$. Therefore, the eigenspace of $\caQ(A\otimes B_j)$ associated with the maximum eigenvalue is nondegenerate for $j=1,2$ according to \eref{eq:QAB} [cf. \eref{eq:QABqubit}], and there does not exist any ket $|\psi\rangle$ that satisfies \eref{eq:SymProjNormSumProof} either, so the inequality in \eref{eq:SymProjNormSum} is not saturated. This observation completes the proof of \lref{lem:SymProjNormSum}.
\end{proof}

\begin{proof}[Proof of \lref{lem:SymProjNormSumPOVM}]
	The inequality 	in \eref{eq:SymProjNormSumPOVM} follows from the triangle inequality for the operator norm.  If $B_1, B_2,\ldots, B_n$ are mutually orthogonal
	and they commute with all POVM elements in $\scrA$, then it is easy to verify that the inequality is saturated according to \eref{eq:QAB}.

	To prove the converse, let $C=B-B_1-B_2$; then 
	\begin{align}
	&\sum_{j=1}^m\|\caQ(A_j \otimes B)\|\nonumber\\
	&\leq 
	\sum_{j=1}^m\|\caQ(A_j \otimes(B_1+B_2))\|+\sum_{j=1}^m\|\caQ(A_j \otimes C\|\nonumber\\
	&\leq
	\sum_{j=1}^m \sum_{k=1}^n\|\caQ(A_j\otimes B_k )\|.
	\end{align}
	If the final upper bound is saturated, then 
	\begin{align}
	\!\!\sum_{j=1}^m\|\caQ(A_j \otimes(B_1+B_2))\|
	=\sum_{j=1}^m \sum_{k=1}^2\| \caQ(A_j\otimes B_k )\|,
	\end{align}
	which implies that
	\begin{align}\label{eq:SymProjNormSumPOVMproof} 
	\|\caQ(A_j \otimes(B_1+B_2))\|=
	\sum_{k=1}^2 \|\caQ(A_j\otimes B_k )\|
	\end{align}
	for $j=1,2,\ldots, m$.

	Let $\caV_{12}$ be the two-dimensional subspace of $\caH$ that contains the supports of $B_1$ and $B_2$, and let $\caV_{12}^\bot$ be its orthogonal complement; let $P_{12}$ be the orthogonal projector onto $\caV_{12}$. Then \eref{eq:SymProjNormSumPOVMproof} implies that each $A_j$ is either  supported in $\caV_{12}$ or supported in $\caV_{12}^\bot$ according to \lref{lem:SymProjNormSum}, given that $B_2$ is not proportional to $B_1$. In the first case, each $A_j$ is orthogonal to either $B_1$ or $B_2$; in the second case, each $A_j$ 
	is orthogonal to and commutes with both $B_1$ and $B_2$.   Denote by $\scrA_1$ ($\scrA_2$) the set of POVM elements in $\scrA$ that belong to the first (second) category. Then $\scrA_1$ is a POVM on $\caV_{12}$, while $\scrA_2$ is a POVM on $\caV_{12}^\bot$, that is,
	\begin{align}
	\sum_{A_j\in \scrA_1 }A_j =P_{12},\quad \sum_{A_j\in \scrA_2}A_j =1-P_{12}.
	\end{align}
	If $B_1$ is not orthogonal to $B_2$, then the first equality cannot hold. This contradiction shows that $B_1$ and $B_2$ are orthogonal. Consequently, each $A_j$ in $\scrA_1$ is  proportional to either $B_2$ or $B_1$ and commutes with both $B_1$ and $B_2$. In conjunction with the above conclusion, we conclude that all POVM elements in $\scrA$ commute with $B_1$ and $B_2$.

	The above reasoning is still applicable if $B_1, B_2$ are replaced by $B_j, B_k$ with $1\leq j<k\leq n$.
	In this way we can deduce that $B_1, B_2,\ldots, B_n$ are mutually orthogonal and they commute with all  POVM elements in $\scrA$. According to \lref{lem:commutePOVMrank1Proj},
	$\scrA$ contains $n$ rank-1 projectors that are proportional to $B_1, B_2, \ldots, B_n$, respectively. 	
\end{proof}

\begin{proof}[Proof of \lref{lem:FidCoarsegraining}]
	The inequality in \eref{eq:ABAC} follows from \lref{lem:FidPOVMbasic}, which also implies that the inequality is saturated if $\scrB$ is equivalent to $\scrC$. 
	
	To prove the converse, we can assume that $\scrA$ and $\scrC$ are simple rank-1 POVMs without loss of generality. Suppose  $\scrB=\{B_j\}_{j=1}^n$ and $\scrC=\{C_k\}_{k=1}^o$. By assumption $B_j$ can be expressed as 
	\begin{align}
	B_j=\sum_{k=1}^o\Lambda_{jk} C_k,
	\end{align} 
	where $\Lambda$ is a stochastic matrix, which means $\Lambda_{jk}\geq0$ and $\sum_{j=1}^n \Lambda_{jk}=1$ for $k=1,2,\ldots,o$. Therefore, 
	\begin{align}\label{eq:ABjCk}
	\sum_{A\in \scrA} \|\caQ(A \otimes B_j)\|\leq 
	\sum_{A\in \scrA}\sum_{k=1}^o \Lambda_{jk} \|\caQ(A\otimes C_k )\|
	\end{align}
	for  $j=1,2,\ldots,n$. This equation implies that
	\begin{align}\label{eq:ABjCkSum}
	\sum_{A\in \scrA} \sum_{j=1}^n\|\caQ(A \otimes B_j)\|\leq 
	\sum_{A\in \scrA}\sum_{k=1}^o \|\caQ(A\otimes C_k )\|,
	\end{align}
	which is equivalent to \eref{eq:ABAC} according to \eref{eq:FidPOVM2}. 
	
	If  $\scrB$ is not equivalent to $\scrC$, then $\scrB$ is not rank 1 according to \lref{lem:EquivalentPOVM} and \pref{pro:CoarseGraining} and thus
	contains at least one POVM element, say $B_1$,  with rank at least 2. Consequently, the corresponding inequality in \eref{eq:ABjCk} is strict according to \lref{lem:SymProjNormSumPOVM} given that  $\scrA$ is irreducible or $\scrC$ contains no two POVM elements that are mutually orthogonal
by assumption. Therefore, the inequality in \eref{eq:ABjCkSum} is also strict and we have
	\begin{align}
	F(\scrA\otimes \scrB)< F(\scrA\otimes \scrC)
	\end{align}
according to \eref{eq:FidPOVM2}.	
In conclusion,   the inequality in \eref{eq:ABAC} is saturated iff $\scrB$ is equivalent to $\scrC$. 
\end{proof}

\section{\label{asec:OptCollProof}Proof of \thref{thm:nCopyFidUB}}
\subsection{Main proof}

\begin{proof}[Proof of \thref{thm:nCopyFidUB}]
By virtue of  \eref{eq:FidPOVM} in the main text and \lref{lem:QnormLBUB} below, the lower bound in  \eref{eq:nCopyFidLBUB} can be proved as follows,
	\begin{align}
F(\scrA)&=\frac{1}{D_{N+1}}\sum_j \|\tilde{\caQ}(A_j)\|\nonumber\\
&\geq \frac{N+d}{d(N+1)D_{N+1}} \sum_j \tr(P_N A_j)\nonumber\\
&= \frac{\tr(P_N)}{dD_{N}}
=\frac{D_N}{dD_N}
= \frac{1}{d},
\end{align}	
where the inequality is saturated iff $\tilde{\caQ}(A_j)$ for each $j$ is proportional to the identity. 	
	
The upper bound in  \eref{eq:nCopyFidLBUB} can
 be derived from \eref{eq:FidPOVM}  and \lref{lem:QnormLBUB} as follows,	
	\begin{align}
	F(\scrA)&=\frac{1}{D_{N+1}}\sum_j \|\tilde{\caQ}(A_j)\|\leq \frac{1}{D_{N+1}}\sum_j \tr(P_N A_j)\nonumber\\
	&= \frac{\tr(P_N)}{D_{N+1}}=\frac{D_N}{D_{N+1}}
	= \frac{N+1}{N+d}.
	\end{align}
Here the inequality is saturated iff $P_N A_j P_N$ for each $j$ is  proportional to the $N$th tensor power of a pure state. This observation completes the proof of \thref{thm:nCopyFidUB}. 
\end{proof}

\subsection{Auxiliary lemmas}
\begin{lem}\label{lem:QnormLBUB}
Any  positive operator $A$ on $\caH^{\otimes N}$ satisfies 
\begin{gather}
\frac{N+d}{d(N+1)} \tr(P_N A )\leq 	\|\tilde{\caQ}(A)\|\leq \tr(P_N A),   \label{eq:QtildeNormLBUB}
\end{gather}
and the lower bound
is saturated iff $\tilde{\caQ}(A)$ is proportional to the identity, while the upper  bound is saturated iff $P_N A P_N$ is proportional to the $N$th tensor power of a pure  state. 
\end{lem}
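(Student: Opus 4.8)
The plan is to reduce everything to the positive operator $X:=P_N A P_N$ on the $N$-partite symmetric subspace, equipped with a spectral decomposition $X=\sum_\mu\lambda_\mu|v_\mu\rangle\langle v_\mu|$ with $\lambda_\mu>0$ and $\{|v_\mu\rangle\}$ orthonormal vectors of that subspace. First I would record that $P_{N+1}\le P_N\otimes 1$ (a fully symmetric vector of $N+1$ copies is in particular symmetric in the first $N$), so that $(P_N\otimes 1)P_{N+1}=P_{N+1}=P_{N+1}(P_N\otimes 1)$; combined with cyclicity of the trace this gives, for every unit vector $|\phi\rangle\in\caH$, $\langle\phi|\tilde{\caQ}(A)|\phi\rangle=\tr[P_{N+1}(A\otimes|\phi\rangle\langle\phi|)]=\tr[P_{N+1}(X\otimes|\phi\rangle\langle\phi|)]$, hence in particular $\tilde{\caQ}(A)=\tilde{\caQ}(X)$, $\tr(P_N A)=\tr X$, and $\tilde{\caQ}(A)=\sum_\mu\lambda_\mu\tilde{\caQ}(|v_\mu\rangle\langle v_\mu|)$. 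Throughout I would use the variational formula $\|\tilde{\caQ}(A)\|=\max_{|\phi\rangle}\tr[P_{N+1}(A\otimes|\phi\rangle\langle\phi|)]$ established around \eqref{eq:FidPOVM}.

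The central step is the trace identity $\tr[\tilde{\caQ}(A)]=\frac{D_{N+1}}{D_N}\tr(P_N A)=\frac{N+d}{N+1}\tr(P_N A)$. I would prove it by writing $\tr[\tilde{\caQ}(A)]=\tr(GA)$ with $G:=\tr_{N+1}[P_{N+1}]$, a positive operator on $\caH^{\otimes N}$ that is supported on the symmetric subspace (since $G=P_N G P_N$ by $P_{N+1}\le P_N\otimes1$) and commutes with $U^{\otimes N}$ for every $U\in\mathrm{U}(d)$. Since $\mathrm{U}(d)$ acts irreducibly on the $N$-partite symmetric subspace, Schur's lemma forces $G=\frac{\tr G}{D_N}P_N=\frac{D_{N+1}}{D_N}P_N$, so $\tr[\tilde{\caQ}(A)]=\frac{D_{N+1}}{D_N}\tr(P_N A P_N)=\frac{D_{N+1}}{D_N}\tr(P_N A)$; alternatively one can compute $\tr_{N+1}[P_{N+1}]=\frac{d+N}{N+1}P_N$ by a short permutation-counting argument. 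The lower bound then follows at once: $\tilde{\caQ}(A)$ is a positive operator on the $d$-dimensional space $\caH$, so $\|\tilde{\caQ}(A)\|\ge\frac{1}{d}\tr[\tilde{\caQ}(A)]=\frac{N+d}{d(N+1)}\tr(P_N A)$, with equality exactly when $\tilde{\caQ}(A)$ is proportional to the identity.

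For the upper bound I would estimate, for each unit $|\phi\rangle$, $\langle\phi|\tilde{\caQ}(A)|\phi\rangle=\tr[P_{N+1}(X\otimes|\phi\rangle\langle\phi|)]=\sum_\mu\lambda_\mu\,\|P_{N+1}(|v_\mu\rangle\otimes|\phi\rangle)\|^2\le\sum_\mu\lambda_\mu=\tr(P_N A)$, using that $P_{N+1}$ is a projector and each $|v_\mu\rangle\otimes|\phi\rangle$ is a unit vector; maximizing over $|\phi\rangle$ yields $\|\tilde{\caQ}(A)\|\le\tr(P_N A)$. For the saturation condition I would invoke the elementary fact that, for $|v\rangle$ in the $N$-partite symmetric subspace, the vector $|v\rangle\otimes|\phi\rangle$ lies in the $(N+1)$-partite symmetric subspace if and only if $|v\rangle\propto|\phi\rangle^{\otimes N}$ (proved by a short induction on $N$ after expanding $|v\rangle$ in an orthonormal basis of $\caH$ containing $|\phi\rangle$). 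If the upper bound is attained at some optimal $|\phi\rangle$, then $\|P_{N+1}(|v_\mu\rangle\otimes|\phi\rangle)\|=1$ and hence $|v_\mu\rangle\propto|\phi\rangle^{\otimes N}$ for every $\mu$; orthogonality of the $|v_\mu\rangle$ then leaves at most one nonzero term, so $P_N A P_N=X\propto|\phi\rangle^{\otimes N}\langle\phi|^{\otimes N}$. The converse is a one-line check from $P_{N+1}|\phi\rangle^{\otimes(N+1)}=|\phi\rangle^{\otimes(N+1)}$.

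The main obstacle I anticipate is the bookkeeping in the equality case of the upper bound — in particular confirming that orthogonality of the spectral vectors really does collapse $X$ to rank one — together with a careful statement and proof of the auxiliary "symmetric extension" fact used there; both are routine but need to be handled with some care. The identity $\tr_{N+1}[P_{N+1}]=\frac{D_{N+1}}{D_N}P_N$ is the other load-bearing ingredient, and I would present the Schur-lemma derivation as the clean route, relegating the direct permutation computation to a remark.
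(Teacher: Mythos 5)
Your proposal is correct and takes essentially the same route as the paper: the operator-norm-versus-trace bound $\|\tilde{\caQ}(A)\|\geq \tr[\tilde{\caQ}(A)]/d$ together with the partial-trace identity $\tr_{N+1}[P_{N+1}]=\frac{D_{N+1}}{D_N}P_N$ for the lower bound, and the variational formula $\|\tilde{\caQ}(A)\|=\max_{|\phi\rangle}\tr[P_{N+1}(A\otimes|\phi\rangle\langle\phi|)]$ plus the sandwich $P_{N+1}=(P_N\otimes 1)P_{N+1}(P_N\otimes 1)$, which replaces $A$ by $P_NAP_N$, for the upper bound. The only differences are in the fine print: you justify the partial-trace identity explicitly via Schur's lemma (the paper uses it implicitly), and you handle the upper-bound equality case through the spectral decomposition of $P_NAP_N$ and the fact that $|v\rangle\otimes|\phi\rangle$ is fully symmetric iff $|v\rangle\propto|\phi\rangle^{\otimes N}$, whereas the paper's auxiliary lemma argues via the single-party partial traces of $P_NAP_N\otimes\rho$ — both arguments are sound and rest on the same characterization of symmetric product vectors.
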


\begin{proof}
The lower bound in \eref{eq:QtildeNormLBUB} can be proved as follows,
\begin{align}
\|\tilde{\caQ}(A)\|&\geq\frac{1}{d}\tr[\tilde{\caQ}(A)]=\frac{1}{d}\tr[P_{N+1}(A\otimes 1)]\nonumber\\
&=\frac{D_{N+1}}{d D_N}\tr(P_N A)=\frac{N+d}{d(N+1)}\tr(P_N A).
\end{align}
Here the inequality is saturated iff $\tilde{\caQ}(A)$ is proportional to the identity. 

The upper bound in \eref{eq:QtildeNormLBUB} and the saturation condition follow from  \lref{lem:ArhoSymProj} and the  equation below,
\begin{align}
\|\tilde{\caQ}(A)\|&=\max_{\rho}\,[P_{N+1} (A\otimes \rho)],
\end{align}	
where the maximization is taken over all normalized pure states. 
\end{proof}

\begin{lem}\label{lem:ArhoSymProj}
	Suppose $\rho$ is a pure sate on $\caH$ and 	 $A$ is a positive operator acting on $\caH^{\otimes N}$. Then 
	\begin{align}\label{eq:ArhoSymProj}
	\tr[P_{N+1} (A\otimes \rho)]\leq \tr(P_N A)	,	
	\end{align}
	and the upper bound is saturated iff $P_N A P_N$ is proportional to  the $N$th tensor power of $\rho$. 
\end{lem}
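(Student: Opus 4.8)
The plan is to reduce everything to a single inclusion of subspaces. The totally symmetric subspace of $\caH^{\otimes(N+1)}$, namely the range of $P_{N+1}$, is contained in the range of $P_N\otimes 1$ (the symmetric subspace of the first $N$ factors, tensored with the last copy of $\caH$), so $P_{N+1}$ is a subprojector of $P_N\otimes 1$; in particular $(P_N\otimes 1)P_{N+1}=P_{N+1}=P_{N+1}(P_N\otimes 1)$ and $P_N\otimes 1-P_{N+1}\ge 0$. Granting this, the inequality \eqref{eq:ArhoSymProj} is immediate: since $A\otimes\rho\ge 0$, we have $\tr[(P_N\otimes 1-P_{N+1})(A\otimes\rho)]\ge 0$, while $\tr[(P_N\otimes 1)(A\otimes\rho)]=\tr(P_N A)\,\tr(\rho)=\tr(P_N A)$.

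For the equality condition I would proceed as follows. Write $\rho=\ket{\psi}\bra{\psi}$ and set $\Delta:=P_N\otimes 1-P_{N+1}$, which is itself an orthogonal projector (a difference of nested projectors) whose kernel is $\operatorname{range}(P_{N+1})\oplus\bigl((\operatorname{range}P_N)^{\perp}\otimes\caH\bigr)$. Equality in \eqref{eq:ArhoSymProj} holds iff $\tr[\Delta(A\otimes\rho)]=0$, and since $\Delta\ge 0$ and $A\otimes\rho\ge 0$ this is equivalent to $\Delta(A\otimes\rho)\Delta=0$, that is, $\operatorname{range}(A)\otimes\bbC\ket{\psi}\subseteq\ker\Delta$. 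Decomposing any $\ket{a}\in\operatorname{range}(A)$ as $\ket{a}=P_N\ket{a}+\ket{a_{\perp}}$ with $\ket{a_{\perp}}\perp\operatorname{range}(P_N)$, the summand $\ket{a_{\perp}}\otimes\ket{\psi}$ already lies in $(\operatorname{range}P_N)^{\perp}\otimes\caH\subseteq\ker\Delta$, so membership of $\ket{a}\otimes\ket{\psi}$ in $\ker\Delta$ amounts to $(P_N\ket{a})\otimes\ket{\psi}\in\operatorname{range}(P_{N+1})$.

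The crux is thus the structural fact that a product vector $\ket{u}\otimes\ket{\psi}$ lies in the symmetric subspace of $\caH^{\otimes(N+1)}$ only if $\ket{u}\propto\ket{\psi}^{\otimes N}$. I would prove this by splitting $\caH=\bbC\ket{\psi}\oplus\ket{\psi}^{\perp}$, with rank-one projector $P$ onto $\bbC\ket{\psi}$ and complement $Q=1-P$, and expanding $\ket{u}\otimes\ket{\psi}=\sum_{S\subseteq\{1,\dots,N+1\}}\ket{\Phi_S}$, where $\ket{\Phi_S}=\bigl(\bigotimes_{i\in S}P_i\bigr)\bigl(\bigotimes_{i\notin S}Q_i\bigr)(\ket{u}\otimes\ket{\psi})$. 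Since the last tensor factor equals $\ket{\psi}$, we get $\ket{\Phi_S}=0$ whenever $N+1\notin S$; and since $\ket{u}\otimes\ket{\psi}$ is permutation invariant, $\ket{\Phi_{\sigma(S)}}=U_{\sigma}\ket{\Phi_S}$ for the factor-permuting unitary $U_{\sigma}$, so vanishing of one component of a given cardinality forces all components of that cardinality to vanish. As every cardinality $k\le N$ is realized by some $S$ avoiding $N+1$, only the component $S=\{1,\dots,N+1\}$ survives, and it lies in $\bbC\ket{\psi}^{\otimes(N+1)}$; hence $\ket{u}\otimes\ket{\psi}\in\bbC\ket{\psi}^{\otimes(N+1)}$, so $\ket{u}\propto\ket{\psi}^{\otimes N}$. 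Applying this to each $\ket{a}\in\operatorname{range}(A)$ gives $P_N\bigl(\operatorname{range}A\bigr)\subseteq\bbC\ket{\psi}^{\otimes N}$, so the positive operator $P_N A P_N$ has one-dimensional range and equals $c\,\rho^{\otimes N}$ for some $c\ge 0$. The converse is a one-line check: if $P_N A P_N=c\,\rho^{\otimes N}$, then using $P_{N+1}(P_N\otimes 1)=P_{N+1}$ and $\rho^{\otimes(N+1)}\in\operatorname{range}(P_{N+1})$ one finds $\tr[P_{N+1}(A\otimes\rho)]=\tr[P_{N+1}\bigl((P_N A P_N)\otimes\rho\bigr)]=c\,\tr[P_{N+1}\rho^{\otimes(N+1)}]=c=\tr(P_N A P_N)=\tr(P_N A)$.

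I expect the main obstacle to be precisely this structural lemma — characterizing when a partially product vector is totally symmetric — together with the careful bookkeeping of the kernel of $\Delta$ and of the ranges of $A$ and $P_N A P_N$; by contrast, the inequality itself and the positivity manipulations are routine once the inclusion $\operatorname{range}(P_{N+1})\subseteq\operatorname{range}(P_N\otimes 1)$ is in hand.
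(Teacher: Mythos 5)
Your proof is correct, and although it reaches the same reduction to $P_N A P_N$ as the paper, your treatment of the equality case takes a genuinely different route. The paper sandwiches $A\otimes\rho$ between two copies of $P_{N+1}$, replaces $A$ by $P_N A P_N$ via $P_{N+1}(P_N\otimes 1)=P_{N+1}$, and bounds $\tr[P_{N+1}(P_N A P_N\otimes\rho)P_{N+1}]\leq\tr(P_N A P_N)\tr(\rho)$; saturation then forces $P_N A P_N\otimes\rho$ to be supported in the symmetric subspace, and the paper concludes at the operator level: every single-party marginal of that operator shares the support of $\rho$, so $P_N A P_N$ lives in $(\mathbb{C}\ket{\psi})^{\otimes N}$ and is proportional to $\rho^{\otimes N}$. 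You instead start from the operator inequality $P_{N+1}\leq P_N\otimes 1$, convert saturation into the range inclusion $\operatorname{range}(A)\otimes\mathbb{C}\ket{\psi}\subseteq\ker(P_N\otimes 1-P_{N+1})$, and resolve it with a vector-level structural lemma — a product vector $\ket{u}\otimes\ket{\psi}$ is totally symmetric only if $\ket{u}\propto\ket{\psi}^{\otimes N}$ — proved by the $P/Q$ expansion and transitivity of the permutation action on subsets of fixed cardinality. What your approach buys is self-containedness: it makes fully explicit the step the paper compresses into a single "which implies," and it isolates a reusable fact about symmetric product vectors; the cost is the extra bookkeeping with $\Delta$, its kernel, and the ranges of $A$ and $P_N A P_N$. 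The paper's marginal/support argument is shorter but terser. Both the inequality and both directions of the saturation condition check out in your write-up, including the degenerate case $P_N A P_N=0$, so there is no gap.
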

\begin{proof}
	The upper bound in \eref{eq:ArhoSymProj} can be derived as follows,
	\begin{align}\label{eq:ArhoSymProjProof}
	&\tr[P_{N+1} (A\otimes \rho)]=\tr[P_{N+1} (A\otimes \rho)P_{N+1}]\nonumber\\
	&=\tr[P_{N+1} (P_N AP_N\otimes \rho)P_{N+1}]\nonumber\\
	&\leq \tr(P_N AP_N)\tr(\rho)=\tr(P_N A).
	\end{align}
	If $P_N A P_N$ is proportional to the $N$th tensor power of $\rho$, then $P_N AP_N\otimes \rho$ is supported in the symmetric subspace in $\caH^{\otimes (N+1)}$, so the inequality in \eref{eq:ArhoSymProjProof} is saturated, which means the upper bound in \eref{eq:ArhoSymProj} is saturated.
	
	Conversely, if the inequality in \eref{eq:ArhoSymProjProof} is saturated, then $P_N AP_N\otimes \rho$ is supported in the symmetric subspace in $\caH^{\otimes (N+1)}$. Let $\tr_{\bar{j}}(\cdot)$ denote the partial trace over all the parties except for party $j$. Then all the operators
	$\tr_{\bar{j}}(P_N AP_N)$ for $j=1,2,\ldots, N$ have the same support as $\rho$, which implies that $P_N A P_N$ is proportional to the $N$th tensor power of $\rho$.
\end{proof}

\section{\label{asec:Fid2copyProof}Proofs of \thref{thm:Fid2copyProjSIC} and  \crsref{cor:12copyProj}-\ref{cor:SICidentical}}

\begin{proof}[Proof of \thref{thm:Fid2copyProjSIC}]
\thref{thm:Fid2copyProjSIC} is a simple corollary of \lsref{lem:POVMFP} and \ref{lem:Fid2copyFP}. 	
\end{proof}

\begin{proof}[Proof of \crref{cor:12copyProj}]
If $\scrA$ is a rank-1 projective measurement, then  $F(\scrA^{\otimes 2})=F(\scrA)=2/(d+1)$ according to \eref{eq:FidProjective}. Conversely, if $F(\scrA^{\otimes 2})=F(\scrA)=2/(d+1)$, then $\scrA$ is rank 1 according to \lref{lem:OneCopyFidLBUB}. Furthermore, \thref{thm:Fid2copyProjSIC} implies that $\scrA$ is a rank-1 projective measurement. 
\end{proof}

\begin{proof}[Proof of \crref{cor:rank1Projidentical}]
	If $\scrA$ and $\scrB$ are identical  rank-1 projective measurements up to relabeling, then  \eref{eq:FidProjective} implies  that
	\begin{align}
F(\scrA\otimes \scrB)=F(\scrB)=F(\scrA)= \frac{2}{d+1}.
	\end{align}
Conversely, if this equation holds, then $\scrA$ and $\scrB$ are rank 1 according to \lref{lem:OneCopyFidLBUB}. By virtue of \lref{lem:FidprodPOVM} we can further deduce that	
\begin{align}
&\frac{2}{d+1}=F(\scrA\otimes \scrB)=\frac{2d(d+1)+2\Phi_{1/2}(\scrA,\scrB)}{d(d+1)(d+2)},
\end{align}	
which implies that $\Phi_{1/2}(\scrA,\scrB)=d$. Therefore, $\scrA$ and $\scrB$  are identical rank-1 projective measurements up to relabeling according to \lref{lem:POVMcrossFP}. 
\end{proof}

\begin{proof}[Proof of \crref{cor:Fid2copyUBmix}]
	If $\scrA$ is a rank-1 POVM, then the conclusions in \crref{cor:Fid2copyUBmix} follow from  \thref{thm:Fid2copyProjSIC}.

	If $\scrA$ is not a rank-1 POVM, let $\scrB$ be a rank-1 POVM that refines $\scrA$. Then $\scrB^{\otimes 2}$ is a refinement of $\scrA^{\otimes 2}$, so \lref{lem:FidPOVMbasic} and \thref{thm:Fid2copyProjSIC} imply that
	\begin{align}
	F(\scrA^{\otimes 2})\leq F(\scrB^{\otimes 2})&\leq F_2^{\iid}.
	\end{align}
	To attain  the upper bound $F_2^{\iid}$, any rank-1 refinement of $\scrA$ must  attain  the  bound $F_2^{\iid}$  and is thus equivalent  to a SIC according to \thref{thm:Fid2copyProjSIC}.  However, this condition  is impossible when $\scrA$ is not rank 1. So the upper bound $F_2^{\iid}$
	cannot be attained except for rank-1 POVMs. This observation completes the proof of \crref{cor:Fid2copyUBmix}. 
\end{proof}

\begin{proof}[Proof of \crref{cor:fidgSIC}]
Let
\begin{align}
\scrA=\bigcup_{r=1}^g \frac{\scrA_r}{g};
\end{align}
then $\scrA$ is a POVM on $\caH$. By virtue of  \eref{eq:FidPOVM2} and \crref{cor:Fid2copyUBmix} we can deduce that 
	\begin{align}\label{eq:fidgSICproof}
\sum_{r,s} F(\scrA_r\otimes \scrA_s)=g^2 F(\scrA^{\otimes 2})\leq g^2 F_2^{\iid},
\end{align}
which confirms \eref{eq:fidgSIC}. If $\scrA_1,\scrA_2,\ldots, \scrA_g$ are  identical SICs up to relabeling, then $\scrA$ is equivalent to a SIC, so the inequality is saturated according to \eref{eq:FidSIC}.

Conversely, if the inequality in \eref{eq:fidgSICproof} is saturated, then $\scrA$ is equivalent to a SIC by \crref{cor:Fid2copyUBmix}. Let $\scrA'$ be  a simple POVM that is equivalent to $\scrA$; then $\scrA'$ is a SIC. Suppose $\scrA'$ is composed of the POVM elements $A_1, A_2, \ldots, A_{d^2}$. Then these POVM elements form a basis in the operator space, and a linear combination of them is equal to the identity operator iff all the coefficients are equal to 1. In addition, each POVM element in $\scrA_r$ for $r=1,2,\ldots, g$ is proportional to a POVM element in $\scrA'$. Since $\scrA_r$ is a simple POVM by assumption, it follows that $\scrA_r$ is identical to $\scrA'$ up to relabeling for $r=1,2,\ldots, g$, so $\scrA_1,\scrA_2,\ldots, \scrA_g$ are  identical SICs up to relabeling.
\end{proof}

\begin{proof}[Proof of \crref{cor:SICidentical}]
\Crref{cor:SICidentical} would follow from 
\crref{cor:fidgSIC} if the condition $F(\scrA\otimes \scrB)=F(\scrA^{\otimes 2})= F_2^{\iid}$ is replaced by
\begin{align}\label{eq:SICidenticalProof}
F(\scrA\otimes \scrB)=F(\scrB^{\otimes 2})=F(\scrA^{\otimes 2})= F_2^{\iid}.
\end{align}
 Without this stronger condition we need to devise a different proof.

If $\scrA$ and $\scrB$ are identical SICs up to relabeling, then \eref{eq:SICidenticalProof} holds according to \eref{eq:FidSIC} and \thref{thm:Fid2copyProjSIC}.

Conversely, if $F(\scrA\otimes \scrB)=F(\scrA^{\otimes 2})= F_2^{\iid}$, then $\scrA$ is a SIC by  \crref{cor:Fid2copyUBmix} given that $\scrA$ and $\scrB$  are simple POVMs. So $\scrA$ is constructed from a 2-design. Let $\scrB'$ be any simple rank-1 POVM that refines
$\scrB$; then by virtue of \lref{lem:FidPOVMbasic} we can deduce that
\begin{align}
F(\scrA\otimes \scrB')\geq F(\scrA\otimes \scrB)= F_2^{\iid},
\end{align}
which implies that
\begin{align}
\Phi_{1/2}(\scrA,\scrB')\geq 	1+(d-1)\sqrt{d+1}
\end{align}
by \lref{lem:FidprodPOVM}. Thanks to
 \lref{lem:POVMcrossFP2design}, this inequality is necessarily saturated;
moreover, $\scrB'$ is identical to the SIC $\scrA$ up to relabeling. The last conclusion holds for any  simple rank-1  POVM $\scrB'$ that refines  $\scrB$, which is impossible if $\scrB$ is not rank 1. Therefore, $\scrB$ is rank 1; moreover,
$\scrA$ and $\scrB$ are identical SICs up to relabeling. 
\end{proof}
 
\section{Proofs of \lref{lem:FidSepUB}, \thref{thm:FidProdUB}, and \crsref{cor:MaxFidMinProdPOVM}-\ref{cor:MaxFMUB}}

 \begin{proof}[Proof of \lref{lem:FidSepUB}]
 Thanks to \lref{lem:FidPOVMbasic}, to prove the inequality $F(\scrA)\leq F_2^{\sep}$ in \lref{lem:FidSepUB}, we can assume that $\scrA$ is a rank-1 POVM. Then each POVM element $A_j$ of $\scrA$  is a tensor product of two rank-1 positive operators. 
Let  $a_j=\tr(A_j)$ and $f_j=\tr(WA_j)/a_j$. Then the normalization condition $\sum_j A_j=1^{\otimes 2}$ implies that
 \begin{align}
  \sum_j a_j&=\sum _j\tr(A_j)=d^2,\\
 \sum_j a_j f_j&=\sum_j \tr[(2P_2-1)A_j]=d,
 \end{align} 
from which we can deduce that
\begin{align}
\sum_j a_j \sqrt{f_j}&\leq d\sqrt{d}. \label{eq:wjpjSum}
 \end{align}	
In addition, \lref{lem:Sym2ProjNorm} implies that
 	\begin{align}
 	\|\caQ(A_j)\|=2a_j\bigl(1+f_j+\sqrt{f_j}\lsp\bigr).
 	\end{align}
 	By virtue of \eref{eq:FidPOVM2}  we can now deduce that
 	\begin{align}
 	F(\scrA)&=\frac{1}{d(d+1)(d+2)}\sum_j \|\caQ(A_j)\|
 	\nonumber\\
 	&=\frac{2}{d(d+1)(d+2)}\sum_j a_j\bigl(1+f_j+\sqrt{f_j}\lsp\bigr)\nonumber\\
 	&\leq 
 	\frac{2(d+1+\sqrt{d}\lsp)}{(d+1)(d+2)}=F_2^{\sep}.
 	\end{align}
 	Here the inequality is saturated iff the inequality in \eref{eq:wjpjSum} is saturated, which is the case iff each $f_j$ is equal to $1/d$,  that is, $d\tr(WA_j)=\tr(A_j)$. 
 \end{proof}

\begin{proof}[Proof of \thref{thm:FidProdUB}]
	The inequality $F(\scrA\otimes \scrB)\leq F_2^{\sep} $ in the theorem follows from 	\lref{lem:FidSepUB} (cf. \lsref{lem:POVMcrossFP} and \ref{lem:FidprodPOVM}). When $\scrA$ and $\scrB$ are rank 1,  \lref{lem:FidSepUB} also implies that 
	the upper bound  is saturated iff  $\scrA$ and $\scrB$ are  MU. To complete the proof, it suffices to prove that the upper bound can never be saturated if $\scrA$ or $\scrB$ is not rank 1. 
	
	According to \lsref{lem:FidPOVMbasic} and \ref{lem:FidSepUB}, to saturate the inequality $F(\scrA\otimes \scrB)\leq F_2^{\sep} $, any rank-1 refinement of $\scrA$ and any rank-1 refinement of $\scrB$ are necessarily MU. So the fidelity between any state vector in the support of each POVM element in $\scrA$ and any state vector in the support of  each POVM element in $\scrB$ is equal to $1/d$. However, this condition can never hold if $\scrA$ or $\scrB$ is not rank-1.  This observation completes the proof of \thref{thm:FidProdUB}.   
\end{proof}

\begin{proof}[Proof of \crref{cor:MaxFidMinProdPOVM}]
	According to \thref{thm:FidProdUB}, the assumption $F(\scrA\otimes \scrB)= F_2^{\sep}$ implies that $\scrA$ and $\scrB$ are rank-1 and MU. 
	Therefore, both $\scrA$ and $\scrB$ have at least $d$ POVM elements according to \lref{lem:rank1Projective}, which implies that $\scrA\otimes \scrB$ has at least $d^2$ POVM elements. Obviously, the lower bound is saturated 
	if $\scrA$ and $\scrB$ are MU rank-1 projective measurements. Conversely, if $\scrA\otimes \scrB$ has $d^2$ POVM elements, then  both $\scrA$ and $\scrB$ have  $d$ POVM elements and are thus  rank-1 projective measurements according to \lref{lem:rank1Projective}. In addition, the two measurements are MU according to \thref{thm:FidProdUB} as mentioned above.
\end{proof}

\begin{proof}[Proof of \crref{cor:FidToMUB}]
If $\scrA$ and $\scrB$ 
are MU rank-1 projective measurements, then we have $F(\scrA\otimes \scrB)= F_2^{\sep} $ according to \eref{eq:FidMUB} and 
$F(\scrA^{\otimes 2})=F(\scrB^{\otimes 2})=2/(d+1)$ according to \eref{eq:FidProjective}.

 Conversely, if $F(\scrA\otimes \scrB)= F_2^{\sep} $, then $\scrA$ and $\scrB$ are MU rank-1  POVMs
according to  \thref{thm:FidProdUB}. If in addition  $F(\scrA^{\otimes 2})=F(\scrB^{\otimes 2})=2/(d+1)$, then  $\scrA$ and $\scrB$ are rank-1 projective measurements according to \thref{thm:Fid2copyProjSIC}. Therefore, $\scrA$ and $\scrB$ are MU rank-1 projective measurements if both conditions hold.
\end{proof}

\begin{proof}[Proof of \crref{cor:MaxFMUB}]
	The upper bound in \eref{eq:FidCMUB} follows from the inequality $F(\scrA\otimes \scrB)\leq  F_2^{\sep} $ in  \thref{thm:FidProdUB}. If $\scrA_1,\scrA_2,\ldots, \scrA_g$ are  MU rank-1 POVMs, then the upper bound is saturated according to \thref{thm:FidProdUB} again.

	Conversely, if the upper bound in \eref{eq:FidCMUB} is saturated, then  \thref{thm:FidProdUB} implies that
	\begin{equation}
	F(\scrA_r \otimes \scrA_s)= F_2^{\sep}\quad \forall r\neq s;
	\end{equation}
	moreover, $\scrA_1,\scrA_2,\ldots, \scrA_g$ are $g$ MU rank-1 POVMs. If in addition  these POVMs are simple and $g=d+1$, then
	 $\scrA_1,\scrA_2,\ldots, \scrA_{d+1}$ are rank-1 projective measurements according to \thref{thm:MUPOVM} and thus form a CMUMs. 
\end{proof}

\section{Proof of \thref{thm:FidCommPOVM}}

\begin{proof}[Proof of \thref{thm:FidCommPOVM}] Let $\scrA=\{A_j\}_j$ and $\scrB=\{B_k\}_k$ be two commuting POVMs. Then the estimation fidelity $F(\scrA\otimes \scrB)$ can be computed by virtue of \eref{eq:FidPOVM2}, with the result
\begin{align}
&F(\scrA\otimes \scrB)=\frac{1}{d(d+1)(d+2)}\sum_{j,k} \|\caQ(A_j\otimes B_k)\|\nonumber\\
&\leq \frac{1}{d(d+1)(d+2)}\sum_{j}2(d+2)\tr(A_j)=\frac{2}{d+1},
\end{align}
which confirms \thref{thm:FidCommPOVM}. 
Here the inequality follows from \lref{lem:QABkSumUB} below. 
\end{proof}

\begin{lem}\label{lem:QABkSumUB}
Suppose $A$ is a positive operator on $\caH$ and $\scrB$ is a POVM that commutes with $A$. Then  
\begin{align}\label{eq:QABsumUB}
\sum_{B\in \scrB} \|\caQ(A\otimes B)\|\leq 2(d+2)\tr(A).
\end{align}
\end{lem}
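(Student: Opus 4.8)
The whole proof will be reduced to a \emph{pointwise} operator‑norm estimate for a single commuting pair: I will show that whenever $A$ and $B$ are positive operators on $\caH$ with $AB=BA$,
\begin{align}\label{eq:planpointwise}
\|\caQ(A\otimes B)\|\le 2\tr(A)\tr(B)+4\tr(AB),
\end{align}
and then simply sum \eqref{eq:planpointwise} over $B\in\scrB$. Indeed, since $\scrB$ is a POVM, $\sum_{B\in\scrB}B=\id$, so $\sum_{B}\tr(B)=\tr(\id)=d$ and $\sum_{B}\tr(AB)=\tr\bigl(A\sum_{B}B\bigr)=\tr(A)$, and summing \eqref{eq:planpointwise} yields $\sum_{B\in\scrB}\|\caQ(A\otimes B)\|\le 2d\tr(A)+4\tr(A)=2(d+2)\tr(A)$, which is \eref{eq:QABsumUB}. (Note \eqref{eq:planpointwise} is genuinely stronger than the generic bound $\|\caQ(A\otimes B)\|\le 2\tr(A)\tr(B)(1+f+\sqrt f)$ of \lref{lem:Sym2ProjNorm}, where $f=\tr(AB)/[\tr(A)\tr(B)]\le1$, since $1+2f\le1+f+\sqrt f$; commutativity is what buys the improvement.)

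\textbf{Proof of \eqref{eq:planpointwise}.} Fix $B\in\scrB$ (only $A$ and this one $B$ need commute, so the $B$'s need not form a commuting family). Choose a common orthonormal eigenbasis $\{|i\rangle\}_{i=1}^{d}$ for the commuting positive operators $A$ and $B$, say $A=\sum_i a_i\outer ii$, $B=\sum_i b_i\outer ii$ with $a_i,b_i\ge0$. Because $AB=BA$, the formula \eref{eq:QAB} shows that $\caQ(A\otimes B)$ is diagonal in this basis, with eigenvalue
\begin{align}
\lambda_i=\tr(A)\tr(B)+\tr(AB)+\tr(B)\lsp a_i+\tr(A)\lsp b_i+2a_ib_i
\end{align}
on $|i\rangle$; moreover $\caQ(A\otimes B)\ge0$, so $\|\caQ(A\otimes B)\|=\max_i\lambda_i$. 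It therefore suffices to bound each $\lambda_i$. Using $a_i\le\|A\|\le\tr(A)$, $b_i\le\tr(B)$, and $a_ib_i\le\sum_j a_jb_j=\tr(AB)$, we get
\begin{align}
b_i\bigl(\tr(A)+2a_i\bigr)
&\le b_i\bigl(\tr(A)-a_i\bigr)+3a_ib_i\nonumber\\
&\le \tr(B)\bigl(\tr(A)-a_i\bigr)+3\tr(AB),
\end{align}
which rearranges to $\tr(B)a_i+\tr(A)b_i+2a_ib_i\le\tr(A)\tr(B)+3\tr(AB)$, i.e.\ $\lambda_i\le 2\tr(A)\tr(B)+4\tr(AB)$. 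Taking the maximum over $i$ gives \eqref{eq:planpointwise}, and the lemma follows as described above.

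\textbf{Expected difficulty.} The only non‑routine point is the pointwise inequality \eqref{eq:planpointwise}. A naive term‑by‑term operator‑norm estimate of $\caQ(A\otimes B)$ — bounding $\|A\|$, $\|B\|$, and $\|AB\|$ (or their traces) separately against $\tr(A)$, $\tr(B)$ — loses a factor close to $3$ rather than $2$ and is far too weak; one must exploit the coupling between the spectra of $A$ and $B$, letting $3\tr(AB)$ absorb both the cross term $2a_ib_i$ and the residual $a_ib_i$ left over after $\tr(B)(\tr(A)-a_i)$ pays for $\tr(A)b_i$. Everything after \eqref{eq:planpointwise} is just the two trace identities coming from $\sum_{B\in\scrB}B=\id$, and the final bound is tight — e.g.\ take $A$ a rank‑$1$ projector and $\scrB$ a rank‑$1$ projective measurement containing it, where the sum equals $2(d+2)\tr(A)$ by \eqref{eq:Qjk}.
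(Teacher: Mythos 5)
Your proof is correct, but it takes a genuinely different route from the paper's. You establish the pointwise bound $\|\caQ(A\otimes B)\|\le 2\tr(A)\tr(B)+4\tr(AB)$ for each commuting pair of positive operators by simultaneously diagonalizing $A$ and $B$, reading off the eigenvalues of $\caQ(A\otimes B)$ from \eref{eq:QAB}, and bounding them with an elementary scalar estimate; the POVM normalization $\sum_{B\in\scrB}B=1$ then yields \eref{eq:QABsumUB} via the two trace identities $\sum_B\tr(B)=d$ and $\sum_B\tr(AB)=\tr(A)$. The paper argues instead by reduction: it first treats the case where $A$ is a projector and $\scrB$ is a commuting rank-1 POVM, where each element satisfies $AB=B$ or $AB=0$ and the sum is evaluated exactly, equal to $2(d+2)\tr(A)$; it then handles general $\scrB$ by passing to a rank-1 refinement of $\scrB$ that still commutes with $A$ plus the triangle inequality, and general positive $A$ via its spectral decomposition and another triangle inequality. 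Your route buys a clean, quantitatively sharper per-element inequality (strictly stronger than \lref{lem:Sym2ProjNorm} in the commuting case, as you note) and avoids constructing commuting rank-1 refinements altogether; the paper's reduction buys the exact evaluation of the sum in the rank-1 projective case, which makes the saturation of \eref{eq:QABsumUB} transparent — information you recover separately through your closing tightness example based on \eref{eq:Qjk}.
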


\begin{proof}
To prove \eref{eq:QABsumUB}, first consider the case in which $A$ is a projector of rank $r$. By assumption each POVM element $B$ in $\scrB$ commutes with $A$ and is thus block diagonal with respect to the eigenspaces of $A$. 

If  $\scrB$ is rank 1, then $\|B\|=\tr(B)$ and  $B$ is supported in the support of $A$ or in its orthogonal complement. So  either the condition  $AB=B$ or the condition $AB=0$ holds. According to \eref{eq:QAB} we have

\begin{align}
&\caQ(A\otimes B)=\tr(A)\tr(B)+\tr(AB)
+\tr(B)A\nonumber\\
&\quad +\tr(A)B
+AB+BA\nonumber\\
&=\begin{cases}
(r+1)\tr(B)+\tr(B)A
+(r+2)B &\mbox{if } AB=B,\\
r\tr(B)+\tr(B)A
+rB & \mbox{if } AB=0,
\end{cases}
\end{align}	
which implies that
	\begin{align}
	\!\!\|\caQ(A\otimes B)\|=\begin{cases}
	2(r+2)\tr(B) &\mbox{if } AB=B,\\
	2r\tr(B) & \mbox{if } AB=0.
	\end{cases}
	\end{align}
	Therefore,
	\begin{align}
	&\sum_{B\in \scrB} \|\caQ(A\otimes B)\|=2r\sum_{B\in \scrB}\tr(B)+4\sum_{B\in \scrB|AB=B}\tr(B)
	\nonumber\\
	&=2rd+4r=2(d+2)\tr(A),
	\end{align}
	which confirms \eref{eq:QABsumUB} with equality. Here the second equality follows from the following facts
\begin{align}
\sum_{B\in \scrB} B=1,\quad \sum_{B\in \scrB|AB=B}B=A.
\end{align}

If $\scrB$ is not rank 1, then we can find a rank-1 refinement $\scrB'$ of $\scrB$ that  commutes with $A$ given that $\scrB$ commutes with $A$. 
Therefore,
\begin{align}
\sum_{B\in \scrB} \|\caQ(A\otimes B)\|\leq \sum_{B\in \scrB'} \|\caQ(A\otimes B)\|= 2(d+2)\tr(A),
\end{align}
which confirms \eref{eq:QABsumUB} again. 

Next, we drop the assumption that $A$ is a projector. Let $A=\sum_j\lambda_j P_j$ be the  spectral decomposition of $A$, where $\lambda_j$ are distinct eigenvalues, and $P_j$ are the corresponding eigenprojectors. By assumption each $B\in \scrB$ commutes with $A$ and thus also commutes with all eigenprojectors $P_j$. Therefore,
	\begin{align}
	&\sum_{B\in \scrB} \|\caQ(A\otimes B)\|= \sum_{B\in \scrB} \Biggl\|\sum_j \lambda_j\caQ(P_j\otimes B)\Biggr\|\nonumber\\
	&
	\leq \sum_{j} \sum_{B\in \scrB}\lambda_j \|\caQ(P_j\otimes B)\|
	\leq \sum_{j}2\lambda_j (d+2)\tr(P_j)
	\nonumber\\
	&= 2(d+2)\tr(A),
	\end{align}
	which confirms \eref{eq:QABsumUB} and completes the proof of \lref{lem:QABkSumUB}. 
\end{proof}

\section{Proof of \thref{thm:FidJMPOVM}}
\begin{proof}[Proof of \thref{thm:FidJMPOVM}]
	Let $\scrC$ be a simple POVM that refines both  $\scrA$ and $\scrB$. Then $\scrC\otimes \scrC$ is a refinement of $\scrC\otimes \scrB$, which is in turn a refinement of $\scrA\otimes \scrB$, so
	\begin{align}\label{eq:FidJMPOVMproof}
	F(\scrA\otimes \scrB)\leq F(\scrC\otimes \scrB)\leq F(\scrC\otimes \scrC)\leq  F_2^{\iid},
	\end{align}
	where the  first two inequalities follow from \lref{lem:FidPOVMbasic} and the 
	third inequality follows from \crref{cor:Fid2copyUBmix}. If $\scrA$ and $\scrB$ are identical SICs up to relabeling, then the upper bound is saturated according to \eref{eq:FidSIC} (cf. \thref{thm:Fid2copyProjSIC} and \crref{cor:Fid2copyUBmix}).

	Conversely, if the final upper bound in \eref{eq:FidJMPOVMproof} is saturated, then all  three inequalities are saturated simultaneously, that is, 
	\begin{align}
	F(\scrA\otimes \scrB)= F(\scrC\otimes \scrB)= F(\scrC\otimes \scrC)=  F_2^{\iid}. 
	\end{align}
	Here the third equality implies that  $\scrC$ is a  SIC according to \crref{cor:Fid2copyUBmix} and is thus irreducible. Then the second equality implies that  $\scrB$ is equivalent to $\scrC$ according to \lref{lem:FidCoarsegraining}  and is thus also irreducible. Finally, the first equality implies that 
	$\scrA$ is equivalent to $\scrC$ according to \lref{lem:FidCoarsegraining}  gain.  If both $\scrA$ and $\scrB$ are simple, then they are both identical to $\scrC$ up to relabeling according to \lref{lem:SimplePOVM} and are thus identical SICs up to relabeling.
\end{proof}

\section{\label{sec:thmProdPOVMproof}Proof of \thref{thm:FidProdPOVMLB}}

\begin{proof}[Proof of \thref{thm:FidProdPOVMLB}]
The inequality $F(\scrA\otimes \scrB)\geq  F(\scrA)$ and  equality $F(\scrA)=2/(d+1)$ follow from  \lsref{lem:FidPOVMbasic} and \ref{lem:OneCopyFidLBUB}, respectively, given that $\scrA$ is rank 1 by assumption. If $\scrA$ and $\scrB$ commute, then $F(\scrA\otimes \scrB)\leq 2/(d+1)$ by \thref{thm:FidCommPOVM}. So  the inequality  $F(\scrA\otimes \scrB)\geq  F(\scrA)$ is saturated. 
	
To further clarify the saturation condition of the inequality 	 $F(\scrA\otimes \scrB)\geq  F(\scrA)$, note that  $\scrA$ can be expressed as $\scrA=\{A_j\}_j$, where $A_j=a_j|\psi_j\>\<\psi_j|$ with $a_j=\tr(A_j)$, and $\scrB$ can be expressed as $\scrB=\{B_k\}_k$. By virtue of \eref{eq:FidPOVM2} we can deduce that
\begin{align}
	F(\scrA)&=\frac{1}{d(d+1)}\sum_{j}\|Q_{j}\|=\frac{2}{d+1}, \label{eq:FA}\\
	F(\scrA\otimes \scrB)&=\frac{1}{d(d+1)(d+2)}\sum_{j,k}\|Q_{jk}\|,\label{eq:FAB}
	\end{align}
	where
	\begin{align}
	Q_j&=\caQ(A_j)=
	a_j(|\psi_j\>\<\psi_j|+1),\\
	Q_{jk}&=\caQ(A_j\otimes B_k)
	=a_j\bigl[\lsp\tr(B_k)+\<\psi_j|B_k|\psi_j\>+B_k\nonumber\\
	&\quad +\tr(B_k)|\psi_j\>\<\psi_j|+|\psi_j\>\<\psi_j|B_k+B_k|\psi_j\>\<\psi_j|\lsp\bigr].  \label{eq:QAjBk}
	\end{align}
Note that $|\psi_j\>$ is the eigenstate of $Q_j$ corresponding to the largest eigenvalue, which is nondegenerate. 	In addition, $Q_j$ and $Q_{jk}$ satisfy the following relations
	\begin{align} 
	\sum_k Q_{jk}=(d+2)Q_j, \quad \sum_k \|Q_{jk}\|\geq(d+2)\|Q_j\|. \label{eq:QjkIneq}
	\end{align}
By \esref{eq:FA}{eq:FAB}, the inequality $F(\scrA\otimes \scrB)\geq  F(\scrA)$ is saturated iff the inequality in \eref{eq:QjkIneq} is saturated for each $j$.

	If $\scrA$  commutes with  $\scrB$, then  $|\psi_j\>\<\psi_j|$ commutes with  $B_k$, and  $|\psi_j\>$ is an eigenstate of $B_k$. Consequently, $|\psi_j\>$ is an eigenstate of $Q_{jk}$ associated with the largest eigenvalue $\|Q_{jk}\|$, so that
	\begin{align}
	\sum_{k}\|Q_{jk}\|&=\sum_{k}\<\psi_j|Q_{jk}|\psi_j\>=\<\psi_j|(d+2)Q_j|\psi_j\>\nonumber\\
	&=(d+2)\|Q_j\|,
	\end{align}
	which implies that $F(\scrA\otimes \scrB)= F(\scrA)$ given \esref{eq:FA}{eq:FAB}. This derivation reproduces the conclusion derived above, which is based on \thref{thm:FidCommPOVM}.

	Conversely, if the inequality $F(\scrA\otimes \scrB)\geq  F(\scrA)$ saturates, then  the inequality in \eref{eq:QjkIneq} is saturated for each $j$, which implies that 
	\begin{equation}
	\<\psi_j|Q_{jk}|\psi_j\>=\|Q_{jk}\|\quad \forall j,k,
	\end{equation}
	so $|\psi_j\>$ is an eigenstate of $Q_{jk}$ with eigenvalue  $\|Q_{jk}\|$. By virtue of  \eref{eq:QAjBk}, we can further deduce that $|\psi_j\>$ is an eigenstate of $B_k$.
	Therefore, each $|\psi_j\>\<\psi_j|$ commutes with each $B_k$, which means  $\scrA$ commutes with  $\scrB$. 
	
Next, suppose both $\scrA$ and $\scrB$ are simple rank-1 POVMs. By \lref{lem:commutePOVMrank1} and the first conclusion in \thref{thm:FidProdPOVMLB} as proved above, the inequality $F(\scrA\otimes \scrB)\geq F(\scrA)$ is saturated iff $\scrA$ and $\scrB$ are identical rank-1 projective measurements up to relabeling. 
Alternatively, this conclusion follows from \lsref{lem:POVMcrossFP}, \ref{lem:OneCopyFidLBUB}, and \ref{lem:FidprodPOVM} (cf. the proof of \crref{cor:rank1Projidentical}). 
\end{proof}

\bibliography{all_references}

\end{document}